\documentclass[acmsmall,screen,noacm]{acmart}

\renewcommand\footnotetextcopyrightpermission[1]{}

\AtBeginDocument{}

\usepackage{amsmath}
\usepackage{listings}
\usepackage{hyperref}
\usepackage{mathpartir}
\usepackage{tcolorbox}
\usepackage{mathtools}
\usepackage{stmaryrd}
\usepackage{xspace}
\usepackage{thm-restate}
\usepackage{adjustbox}
\usepackage{subcaption}
\usepackage{wrapfig}
\usepackage{enumitem}
\usepackage{stackengine}
\usepackage{microtype}
\usepackage{wrapfig}
\usepackage{upquote}

\definecolor{bluekeywords}{rgb}{0.13,0.13,1}
\definecolor{greencomments}{rgb}{0,0.5,0}
\definecolor{turqusnumbers}{rgb}{0.17,0.57,0.69}
\definecolor{redstrings}{rgb}{0.5,0,0}

\lstdefinelanguage{scribble}{
  morekeywords={
  	global, protocol, role, from, to, interruptible, with, do, instantiates, par, and, rec, continue, choice, at, initiates, handle, returning, call, local, or
  },
  otherkeywords={ },
  keywordstyle=\color{bluekeywords},
  sensitive=true,
basicstyle=\linespread{0.9}\ttfamily\footnotesize,
	breaklines=true,
  xleftmargin=\parindent,
  belowskip=\bigskipamount,
  aboveskip=\bigskipamount,
  tabsize=4,
  morecomment=[l][\color{greencomments}]{///},
  morecomment=[l][\color{greencomments}]{//},
  morecomment=[s][\color{greencomments}]{{(*}{*)}},
  morestring=[b]",
  showstringspaces=false,
  literate={`}{\`}1,
  frame=none,
  showlines=false,
  frame=single,
  stringstyle=\color{redstrings},
}

 \usepackage{xcolor}

\usepackage{extarrows}  \usepackage{tikz}
\usetikzlibrary{snakes,arrows,shapes,shapes.multipart,decorations.pathreplacing,calc,positioning}
\usepackage{adjustbox}

\definecolor{blueborder}{HTML}{005073}
\definecolor{bluebackground}{HTML}{107dac}

\newtcolorbox[auto counter, number within=section]{stbox}[1][]{colframe=blueborder, colback=white,
coltitle=white, colbacktitle=bluebackground, sharp corners, fonttitle=\footnotesize\sf\bfseries, #1}

\definecolor{greencomments}{rgb}{0,0.5,0}

\AtEndPreamble{\theoremstyle{acmdefinition}
    \newtheorem{remark}[theorem]{Remark}
}

\newcommand{\secref}[1]{\S\ref{#1}}

\newtcolorbox{fixmebox}{colback=black!5!white,
colframe=black!75!black,fonttitle=\small\bfseries,
size=small,sharp corners=all,title=FIXME}

\newtcolorbox{todobox}{colback=red!5!white,
colframe=red!75!black,fonttitle=\small\bfseries,
size=small,sharp corners=all,title=TODO}

\newtcolorbox{todoboxsjf}{colback=red!5!white,
colframe=red!75!black,fonttitle=\small\bfseries,
size=small,sharp corners=all,title=TODO (SJF)}

\newtcolorbox{notebox}{colback=blue!5!white,
colframe=blue!75!black,fonttitle=\small\bfseries,
size=small,sharp corners=all,title=NOTE}

\newcommand{\header}[1]{
  \begin{flushleft}
    \textbf{#1}
  \end{flushleft}
}
\newcommand{\headersig}[2]{
  \begin{flushleft}
    \textbf{#1} \hfill \framebox{#2}
  \end{flushleft}
}

\newcommand{\headertwo}[2]{
  \begin{flushleft}
    \textbf{#1} \hfill #2
  \end{flushleft}
}

\newcommand{\tybool}{\mkwd{Bool}}
\newcommand{\tyint}{\mkwd{Int}}
\newcommand{\typid}{\mkwd{Pid}}
\newcommand{\tma}{M}
\newcommand{\tmb}{N}
\newcommand{\vala}{V}
\newcommand{\valb}{W}
\newcommand{\valc}{U}
\newcommand{\sta}{S}
\newcommand{\stb}{T}
\newcommand{\calcwd}[1]{\textbf{\textsf{#1}}}
\newcommand{\mkwd}[1]{\textsf{#1}}
\newcommand{\spawn}[1]{\calcwd{spawn} \; #1 }
\newcommand{\register}[3]{\calcwd{register} \; #1 \; #2 \; #3}
\newcommand{\registertwo}[2]{\calcwd{register} \; #1 \; #2}

\newcommand{\newap}[1]{\calcwd{newAP}[{#1}]}

\newcommand{\seq}[1]{\overrightarrow{#1}}

\newcommand{\midspace}{\;\mid\;}

\newcommand{\app}{\;}
\newcommand{\typair}[2]{#1 \times #2}
\newcommand{\tylist}[1]{[#1]}

\newcommand{\one}{\mkwd{Unit}}
\newcommand{\fun}[2]{\lambda #1 .\, #2}

\newcommand{\letin}[3]{\calcwd{let} \; #1 = #2 \; \calcwd{in} \; #3}
\newcommand{\letintwo}[2]{\calcwd{let} \; #1 = #2 \; \calcwd{in}}

\newcommand{\cseq}[4][]{#2 {;} #3 \vdash_{#1} #4}

\newcommand{\aname}{a}
\newcommand{\actorsep}{,}
\newcommand{\actor}[4][\aname]{\langle #1 \actorsep #2 \actorsep #3 \actorsep #4 \rangle}
\newcommand{\config}[1]{\mathcal{#1}}
\newcommand{\teval}{\longrightarrow_{\textsf{M}}}
\newcommand{\ceval}{\longrightarrow}
\newcommand{\cevalstar}{\longrightarrow^{*}}
\newcommand{\cevalann}[1]{\xlongrightarrow{#1}}
\newcommand{\cevaltau}{\cevalann{\tau}}
\newcommand{\cevalannstar}[1]{\xlongrightarrow{#1}^{*}}

\newcommand{\threadctx}{\mathcal{M}}
\newcommand{\role}[1]{\ensuremath{{\color{purple} \mathsf{#1}}}}
\newcommand{\efflet}[3]{\calcwd{let} \; #1 = #2 \; \calcwd{in} \; #3}
\newcommand{\effreturn}[1]{\calcwd{return}\; #1}
\newcommand{\localoffer}[2]{\role{#1} \mathop{\&} \{ #2 \}}
\newcommand{\localofferone}[1]{\role{#1} \mathop{\&} \{}
\newcommand{\localoffersingle}[3]{\role{#1} \mathop{\&} \msg{#2}{#3}}
\newcommand{\localselect}[2]{\role{#1} \mathop{\oplus} \{ #2 \}}
\newcommand{\localselectone}[1]{\role{#1} \mathop{\oplus} \{}
\newcommand{\localselectsingle}[3]{\role{#1} \mathop{\oplus} \msg{#2}{#3}}
\newcommand{\localselectnone}[1]{\role{#1} \mathop{\oplus}}

\newcommand{\recty}[2]{\mu \: #1 . #2}
\newcommand{\rectyone}[1]{\mu \: #1 .}
\newcommand{\recvar}{X}
\newcommand{\localend}{\mkwd{end}}
\newcommand{\msg}[2]{#1(#2)}
\newcommand{\msgg}[2]{\mkwd{#1}(#2)}

\newcommand{\tya}{A}
\newcommand{\tyb}{B}
\newcommand{\tyc}{C}
\newcommand{\basety}{D}
\newcommand{\handlerty}[2]{\mkwd{Handler}(#1, #2)}
\newcommand{\handler}[3]{\calcwd{handler} \; \role{#1} \; #2 \; \{ #3 \}}
\newcommand{\handlerone}[2]{\calcwd{handler} \; \role{#1} \; #2 \; \{}
\newcommand{\send}[3]{\role{#1} \mathop{!} \msg{#2}{#3}}
\newcommand{\threadt}{\mathcal{T}}

\newcommand{\smallsep}{\mspace{1.5mu}}
\newcommand{\mtseq}[6][]{#2 \mid #3 \smallsep \triangleright_{#1} \smallsep #4 \mathop{:} #5 \smallsep \triangleleft \smallsep #6}

\newcommand{\mtseqst}[7][]{#2 \mid #3 \mid #4 \, \triangleright_{#1} \, #5 \mathop{:} #6 \,\triangleleft\, #7}

\newcommand{\tyenv}{\Gamma}
\newcommand{\rtenv}{\Delta}
\newcommand{\vseq}[4][]{#2 \vdash_{#1} #3 : #4}
\newcommand{\stout}[1][S]{#1^{!}}
\newcommand{\stin}[1][S]{#1^{?}}

\newcommand{\roleidx}[2]{#1[\role{#2}]}
\newcommand{\roleidxx}[2]{#1[#2]}

\newcommand{\emptyq}{\epsilon}
\newcommand{\prole}{\role{p}}
\newcommand{\qrole}{\role{q}}
\newcommand{\rrole}{\role{r}}
\newcommand{\srole}{\role{s}}

\newcommand{\blt}{\begin{array}[t]{l}}
\newcommand{\bl}{\begin{array}{l}}
\newcommand{\blz}{\begin{array}{@{}l}}
\newcommand{\el}{\end{array}}

\newcommand{\suspend}[2]{\calcwd{suspend}\; #1 \; #2}
\newcommand{\suspendexn}[3]{\calcwd{suspend}\; #1 \; #2 \; #3}
\newcommand{\suspendtimeout}[4]{\calcwd{suspend}\; #1 \; #2 \; #3 \; #4}
\newcommand{\nma}{\alpha}

\newcommand{\sessname}{s}
\newcommand{\apname}{p}

\newcommand{\ap}[2]{#1(#2)}
\newcommand{\apty}[1]{\mkwd{AP}(#1)}

\newcommand{\idle}[1]{\calcwd{idle}(#1)}
\newcommand{\standalone}[1]{#1}
\newcommand{\sessthread}[3]{(#3)^{\roleidx{#1}{#2}}}
\newcommand{\sessthreadd}[3]{(#3)^{\roleidxx{#1}{#2}}}
\newcommand{\hstate}{\sigma}
\newcommand{\storedhandler}[3]{\roleidx{#1}{#2} \mapsto #3 }

\newcommand{\istate}{\rho}
\newcommand{\apstate}{\chi}
\newcommand{\defeq}{\triangleq}
\newcommand{\threadseq}[4]{#1; #2 \mid #3 \vdash #4}
\newcommand{\hstateseq}[4]{#1; #2 \mid #3 \vdash #4}
\newcommand{\istateseq}[4]{#1; #2 \mid #3 \vdash #4}
\newcommand{\swstateseq}[2]{#1 \vdash #2}
\newcommand{\tlthreadctx}{\mathcal{Q}}

\newcommand{\ltslbl}{\gamma}
\newcommand{\synclbl}{\pi}
\newcommand{\lbleval}[1]{\xrightarrow{#1}}
\newcommand{\zaplbleval}[1]{\rsquigarrow{#1}{3}}
\newcommand{\synceval}[1]{\xRightarrow{#1}}

\newcommand{\ite}[3]{\calcwd{if} \: #1 \: \calcwd{then} \: #2 \: \calcwd{else} \: #3}

\newcommand{\set}[1]{\{ #1 \}}
\newcommand{\setlr}[1]{\left\{ #1 \right\}}

\newcommand{\rec}[3]{\calcwd{rec}\: #1(#2).#3}

\newcommand{\sttyfun}[5]{#1 \xrightarrow[#5]{#3, #4} #2}

\newcommand{\then}{\, . \,}

\newcommand{\qproc}[2]{#1 \triangleright #2}
\newcommand{\qcontents}{\delta}
\newcommand{\qentry}[4]{(\role{#1}, \role{#2}, \msg{#3}{#4})}
\newcommand{\qentryy}[4]{(#1, #2, \msg{#3}{#4})}

\newcommand{\valseq}[3]{#1 \vdash #2 : #3}

\newcommand{\fn}[1]{\mkwd{fn}(#1)}

\newcommand{\lblsyncsend}[4]{#1:\role{#2}\uparrow\role{#3}{::}#4}
\newcommand{\lblsyncrecv}[4]{#1:\role{#2}\downarrow\role{#3}{::}#4}

\newcommand{\lblzapmsg}[4]{#1:\role{#2}\lightning\role{#3}{::}#4}
\newcommand{\lblzapdequeue}[3]{#1:\role{#2}\lightning\role{#3}}
\newcommand{\lblzap}[2]{\zap{\roleidx{#1}{#2}}}

\newcommand{\safe}[1]{\mkwd{safe}(#1)}

\newcommand{\inittok}{\iota}
\newcommand{\inittokpol}{\inittok^{\pm}}
\newcommand{\inittokpos}{\inittok^{+}}
\newcommand{\inittokneg}{\inittok^{-}}
\newcommand{\inittoknegprime}{\inittok{'}^{-}}

\newcommand{\maptwo}[2]{#1 \mapsto #2}
\newcommand{\inittokseq}[3]{#1 \vdash #2 : #3}
\newcommand{\jarg}[1]{\{ #1 \} \:\:}
\newcommand{\apseq}[3]{\jarg{#1} #2 \vdash #3}
\newcommand{\qty}{Q}

\newcommand{\equivsynceval}{\Longrightarrow}
\newenvironment{proofcase}[1]{\textbf{Case} #1.\hfill\\}{}

\newcommand{\lblsyncend}[2]{\mkwd{end}(#1, #2)}

\newcommand{\deriv}[1]{\mathbf{#1}}
\newcommand{\derivd}{\deriv{D}}

\newcommand{\dom}[1]{\mkwd{dom}(#1)}

\newcommand{\var}{\mathit}
\newcommand{\langname}{\textsf{Maty}\xspace}
\newcommand{\langnamezap}{{\textsf{Maty}\ensuremath{_\lightning}}\xspace}
\newcommand{\langnamesw}{{\textsf{Maty}\ensuremath{_\rightleftarrows}}\xspace}

\newcommand{\ttrue}{\calcwd{true}}
\newcommand{\ffalse}{\calcwd{false}}
\newcommand{\dfprop}{\mkwd{df}}
\newcommand{\dfpropzap}{\mkwd{df}_{\lightning}}
\newcommand{\df}[1]{\dfprop(#1)}
\newcommand{\dfzap}[1]{\dfpropzap(#1)}

\newcommand{\ctxe}{\mathcal{E}}

\newcommand{\tyenvrt}{\Psi}
\newcommand{\ctx}[1]{\mathcal{#1}}
\newcommand{\confctx}{\ctx{G}}
\newcommand{\ctxg}{\confctx}

\newcommand{\subst}[3]{#1 \{ #2 / #3 \}}
\newcommand{\envactive}[1]{\mkwd{ongoing}(#1)}

\newcommand{\stget}{\calcwd{get}\xspace}

\newcommand{\swactor}[5][a]{\langle #1\actorsep #2\actorsep #3\actorsep #4\actorsep #5 \rangle}
\newcommand{\zapactor}[5][a]{\langle #1\actorsep #2\actorsep #3\actorsep #4\actorsep #5 \rangle}
\newcommand{\swstate}{\theta}

\newcommand{\sname}[1]{\underline{\mkwd{#1}}}
\newcommand{\siglookup}[3]{\Sigma(\sname{#1}) = (#2, #3)}

\newcommand{\sendentry}[2]{(#1, #2)}
\newcommand{\suspendrecvzero}{\calcwd{suspend}_{\mkwd{?}}\xspace}
\newcommand{\suspendrecv}[2]{\calcwd{suspend}_{\mkwd{?}}\: #1 \; #2}
\newcommand{\suspendsend}[3]{\calcwd{suspend}_{\mkwd{!}}\: \sname{#1} \; #2 \; #3}
\newcommand{\become}[2]{\calcwd{become} \: \sname{#1} \: #2}

\newcommand{\ithen}[1]{\calcwd{if} \: #1 \: \calcwd{then}}

\newcommand{\qqquad}{\qquad \quad}
\newcommand{\qqqquad}{\qquad \qquad}
\newcommand{\qqqqquad}{\qquad \qquad \quad}

\newcommand{\qqqqqqquad}{\qquad \qquad \qquad \quad}

\newcommand{\setseq}{\widetilde}

\newcommand{\metapair}[2]{(#1, #2)}
\newcommand{\suspentry}{D}
\newcommand{\monitor}[2]{\calcwd{monitor} \: #1 \: #2}
\newcommand{\monstate}{\omega}
\newcommand{\zap}[1]{\lightning{#1}}
\newcommand{\raiseexn}{\calcwd{raise}}

\newcommand{\messages}[3]{\mkwd{messages}(#1, #2, #3)}

\newcommand{\txtrole}[1]{\textbf{#1}}

\newcommand{\globalsendsingle}[4]{\role{#1} \rightarrow \role{#2} : #3(#4)}

\newcommand{\globalbranch}[2]{\role{#1} \rightarrow \role{#2} : \{}
\newcommand{\globalend}{\localend}

\newcommand{\redlbl}{l}
\newcommand{\activesessions}[1]{\mkwd{ongoingSessions}(#1)}

\newcommand{\rtenvzap}{\Phi}
\newcommand{\zapped}{\lightning\xspace}
\stackMath
\usepackage{ifthen}

\newcommand{\squig}{{\scriptstyle\sim\mkern-3.9mu}}

\newcommand{\rsquigend}{{\scriptstyle\rule{.1ex}{0ex}\rhd}}
\newcounter{sqindex}
\newcommand\squigs[1]{\setcounter{sqindex}{0}\whiledo {\value{sqindex}< #1}{\addtocounter{sqindex}{1}\squig}}
\newcommand\rsquigarrow[2]{\mathbin{\stackon[2pt]{\squigs{#2}\rsquigend}{\scriptscriptstyle\text{#1\,}}}}

\newcommand{\zapenvred}{\Rrightarrow}

\newcommand{\zapenv}[1]{\zap{#1}}

\newcommand{\RED}[1]{\textcolor{red}{#1}}

\newcommand{\BLUE}[1]{\textcolor{blue}{#1}}

\newcommand{\CODE}[1]{{\small\texttt{#1}}}

\newcommand{\CODEss}[1]{{\scriptsize\texttt{#1}}}

\definecolor{shade}{RGB}{215,215,215}
\newcommand{\shade}[1]{\setlength{\fboxsep}{0pt}\colorbox{shade}{\ensuremath{#1}}}

\newcommand{\zaproles}[2]{\mkwd{zap}(#1, #2)}
\newcommand{\roles}[1]{\mkwd{roles}(#1)}

\newcommand{\kp}[1]{\textbf{KP#1}}

\newcommand{\compliant}[1]{\mkwd{comp}(#1)}
\newcommand{\comp}{\compliant}
\newcommand{\compzap}[1]{\mkwd{comp}_{\lightning}(#1)}

\newcommand{\proto}{P}

\newcommand{\snames}[1]{\mkwd{snames}(#1)}

\newcommand{\mterm}[1]{\mkwd{term}_{\mathsf{M}}}
\newcommand{\tterm}[1]{\mkwd{term}_{\mathsf{T}}}
\newcommand{\cterm}[1]{\mkwd{term}_{\mathsf{C}}}

\newcommand{\leave}[1]{\calcwd{leave} \: #1}

\newcommand{\smallmath}[1]{\parbox{\linewidth}{\footnotesize
    \[
      #1
    \]
  }}

\newcommand{\smallermath}[1]{\parbox{\linewidth}{\scriptsize
    \[
      #1
    \]
  }}

\usepackage{etoc}

\begin{document}
\etocdepthtag.toc{mtchapter}
\etocsettagdepth{mtchapter}{subsection}
\etocsettagdepth{mtappendix}{none}

\title{Speak Now}
\subtitle{Safe Actor Programming with Multiparty Session Types (Extended Version)}

\author{Simon Fowler}
\orcid{0000-0001-5143-5475}
\affiliation{\institution{University of Glasgow}
  \city{Glasgow}
  \country{United Kingdom}
}
\email{simon.fowler@glasgow.ac.uk}

\author{Raymond Hu}
\orcid{0000-0003-4361-6772}
\affiliation{\institution{Queen Mary University of London}
  \city{London}
  \country{United Kingdom}
}
\email{r.hu@qmul.ac.uk}

\begin{CCSXML}
<ccs2012>
<concept>
<concept_id>10011007.10011006.10011008.10011009.10011014</concept_id>
<concept_desc>Software and its engineering~Concurrent programming languages</concept_desc>
<concept_significance>500</concept_significance>
</concept>
</ccs2012>
\end{CCSXML}
\ccsdesc[500]{Software and its engineering~Concurrent programming languages}

\keywords{multiparty session types, actor languages, deadlock-freedom}

\begin{abstract}
Actor languages such as Erlang and Elixir are widely used for implementing
    scalable and reliable distributed applications, but the informally-specified
    nature of actor communication patterns leaves systems vulnerable to costly
    errors such as communication mismatches and deadlocks.
\emph{Multiparty session types} (MPSTs) rule out communication errors early
    in the development process, but until now, the many-sender, single-receiver
    nature of actor communication has made it difficult for actor languages to
    benefit from session types.

    This paper introduces \langname, the first actor language design
    supporting both \emph{static} multiparty session typing and the full power
    of actors taking part in \emph{multiple sessions}. \langname therefore
    combines the error prevention mechanism of session types with the
    scalability and fault tolerance of actor languages.
Our main insight is to
    enforce session typing through a flow-sensitive effect system,
    combined with an event-driven programming style and first-class message
    handlers. 
Using MPSTs allows us to guarantee communication safety: a process will
    never send or receive an unexpected message, nor will a session get
    stuck because an actor is waiting for a message that will never be sent.
    We extend \langname to support Erlang-style supervision and cascading
    failure, and show that this preserves \langname's strong metatheory.
    We implement \langname in Scala using an API generation approach, and
    demonstrate the expressiveness of our model by implementing
    a representative sample of the widely-used Savina actor benchmark suite; an
    industry-supplied factory scenario; and a chat server.
\end{abstract}

\maketitle

\section{Introduction}\label{sec:intro}
Modern digital infrastructure depends on distributed software. Unfortunately, writing
distributed software is difficult: developers must reason about a host of issues
such as deadlocks, failures, and adherence to complex communication protocols.
Actor languages such as Erlang and Elixir, and frameworks like Akka,
are popular for writing scalable, resilient systems; 
Erlang in particular powers the servers of WhatsApp, which has billions of users
worldwide.
Actor languages support lightweight processes that communicate through
asynchronous explicit message passing rather than shared memory, and support
robust failure recovery strategies like \emph{supervision hierarchies}.

Nevertheless, actor languages are not a silver bullet: it is still
possible---\emph{easy}, even---to introduce subtle communication errors that are
difficult to detect, debug, and fix. Examples include waiting for a message that
will never arrive, sending a message that cannot be handled, or sending an
incorrect payload.
\emph{Multiparty session types}
(MPSTs)~\cite{HondaYC08,DBLP:conf/birthday/BettiniCDGV08} are \emph{types for
protocols} and allow us to reason about structured interactions between
communicating participants.  If each participant typechecks against its
session type, then the system is statically guaranteed to correctly implement
the associated protocol, in turn catching communication errors before a
program is run.

MPSTs therefore offer a tantalising promise for actor languages:
by combining the fault-tolerance and ease-of-distribution of actor languages
with the correctness guarantees given by MPSTs, users can fearlessly write
robust and scalable distributed code, confident in the absence of communication
mismatches and deadlocks.
Unfortunately, there is a spanner in the works: MPSTs have been primarily
studied for \emph{channel-based} languages, which have a significantly different
communication model, and current session-typing approaches for
actor languages are severely limited in expressiveness: they either overlay a
session-based communication discipline, only describe interactions between two
participants, or only permit actors to be involved in a single session at a time. 
Other behavioural type systems for actors struggle to capture \emph{structured}
interactions and \emph{handle failure} effectively.

In this paper we present \langname, the first actor language that supports
statically-checked multiparty session types and failure handling, combining the
error prevention mechanism of session types and the scalability and fault
tolerance of actor languages. Our key insight is to adopt an \emph{event-driven
programming style} and enforce session typing through a \emph{flow-sensitive
effect system}.

\subsection{Actor Languages}\label{sec:intro:actor-languages}

Actor languages and frameworks are inspired by the \emph{actor
model}~\cite{HewittBS73,Agha90}, where an actor reacts to incoming messages by
spawning new actors, sending a finite number of messages to other actors, and
changing the way it reacts to future messages. Figure~\ref{fig:intro:id-server}
shows an Akka implementation of an ID server that generates a fresh ID for
every client request.

\begin{wrapfigure}[6]{l}{0.45\textwidth}
    \vspace{-1em}
\begin{lstlisting}[language=scala]
def idServer(count: Int): 
    Behavior[IDRequest] = {
  Behaviors.receive { (context, message) =>
    message.replyTo ! IDResponse(count)
    idServer(count + 1)
  }
}
\end{lstlisting}
\vspace{-2em}
\caption{ID Server}
\label{fig:intro:id-server}
\end{wrapfigure}

The \verb+idServer+ function records the current request count as its
state, and responds to an incoming \verb+IDRequest+ by sending
the current \verb+count+ before recursing with an incremented request
counter.

It is straightforward to specify the client-server \emph{protocol} for this
example as a session \emph{type} between these two roles, but there are key
problems implementing and verifying even this simple example in standard MPST
frameworks.
First, actor programming is inherently \emph{reactive}: computation is driven by
the reception of a new message, and actors must be able to respond to requests
from a \emph{statically-unknown} number of clients.
Second, each response depends on some common \emph{state}.

\begin{wrapfigure}[16]{l}{0.45\textwidth}
\vspace{-1.5em}
\begin{lstlisting}
def idServer(count: Int, locked: Boolean):
    Behavior[IDServerRequest] = {
  Behaviors.receive { (context, message) =>
    message match {
      case IDRequest(replyTo) =>
        if (locked) {
            replyTo ! Unavailable()
            idServer(count, locked)
        } else {
            replyTo ! IDResponse(count)
            idServer(count + 1, locked)
        }
      case LockRequest(replyTo) =>
        if (locked) {
            replyTo ! Unavailable()
            idServer(count, locked)
        } else {
            replyTo ! Locked(context.self)
            idServer(count, true)
        }
      case Unlock() =>
        idServer(count, false)
    }
  }
}
\end{lstlisting}
\vspace{-2em}
\caption{ID Server extended with locking}
\label{fig:intro:locking-id-server}
\end{wrapfigure}
 
Classical MPSTs are instead based on session $\pi$-calculus, which is
effectively a model of \emph{proactive multithreading} as opposed to reactive
event handling.
A standard MPST server process relies on \emph{replication} to spawn a
separate ($\pi$-calculus) process to handle each client session concurrently.
For reference, common notations/patterns include
$\mathtt{Server} \mathbin{=} a(x) . (P_\mathtt{thread}(x) \;|\; \mathtt{Server})$
or
$\mathtt{Server} \mathbin{=}{} !\,a(x) . P_\mathtt{thread}(x)$.
This model has no direct support for coordinating a \emph{dynamically varying}
number of separate client-handler processes, and---crucially---key
safety properties of standard MPSTs such as deadlock-freedom \textbf{only hold
when each process engages in a single session and each session can be
conducted fully independently from the others} (i.e., an embarrassingly parallel
situation).
Introducing any method to synchronise shared state between these processes, be
it through an intricate web of additional internal sessions
or some out-of-band (i.e., non-session-typed) method, means deadlock-freedom is
no longer guaranteed.
Besides safety concerns, the $\pi$-calculus based model makes it
difficult to express important patterns such as a \emph{single} process waiting
to reactively receive from senders across multiple sessions, since inputs are
normally blocking operations.

\paragraph*{A locking ID server.}
Figure~\ref{fig:intro:locking-id-server} shows a simple extension of our ID
server, where a participant can choose to \emph{lock} the server to
prevent it from generating fresh IDs until the lock is released.

In this example, replies depend on whether the ID server is locked.
Upon receiving an \verb+IDRequest+ message, if the server is locked, then
it will respond with \verb+Unavailable+; otherwise, it will reply as
before.
If an unlocked server receives a
\verb+LockRequest+ message, it responds with \verb+Locked+ and
sets the \verb+locked+ flag. A subsequent \verb+Unlock+ message resets the
\verb+locked+ flag.

This small extension to the example reveals some intricacies: once a client has
received a lock, it is in a \emph{different state of the protocol} to the
remaining clients. First, there is no straightforward way of guaranteeing that
the client ever sends an \verb+Unlock+ message, nor that the
\verb+Unlock+ message was sent by the same actor that acquired the lock.
Second, the server must \emph{always} be able to handle an \verb+Unlock+
message, \emph{even when it is already unlocked}---permitting an invalid state.
Both of these issues can be straightforwardly solved using session types in
\langname.

\begin{figure}[t]
    \centering
    \begin{subfigure}{0.25\textwidth}
        \centering
        \includegraphics[width=0.74\textwidth]{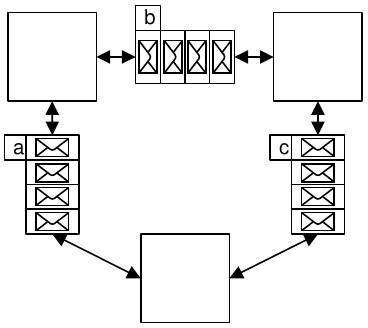}
        \caption{Channels}
        \label{fig:chans-actors:channels}
    \end{subfigure}
    \qquad\quad
    \begin{subfigure}{0.25\textwidth}
        \centering
        \includegraphics[width=0.74\textwidth]{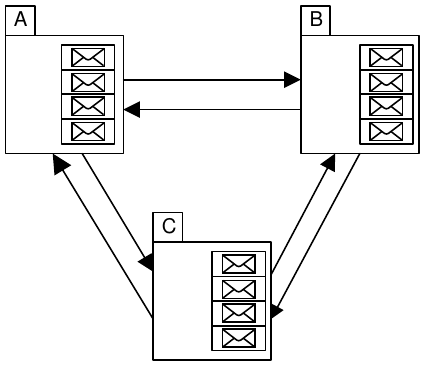}
        \caption{Actors}
        \label{fig:chans-actors:actors}
    \end{subfigure}
\caption{Channel- and actor-based languages~\cite{FowlerLW17}}
    \label{fig:chans-actors}
    \vspace{-1em}
\end{figure}

\subsection{Channels vs.\ Actors}
Session types were originally developed for channel-based languages like Go and
Concurrent ML (Figure~\ref{fig:chans-actors:channels}) and support anonymous
processes that communicate over \emph{channel endpoints} using either
\emph{synchronous} or \emph{asynchronous} communication.  In actor languages
(Figure~\ref{fig:chans-actors:actors}) such as Erlang or Elixir, \emph{named}
processes send messages directly to each others' mailboxes.
The difference in communication models has significant consequences for
distribution and typing.  We can easily give a channel endpoint a precise type,
e.g., $\mkwd{Chan}(\tyint)$ or a \emph{session type} such as
${!}\tyint.{!}\tyint.{?}\tybool.\localend$ to state that the channel should be
used to send two integers and receive a Boolean.
Efficiently implementing channels in a distributed setting requires us to store
buffered data at the same location that it is processed, but difficulties arise
when sending channel endpoints as part of a message (known as \emph{distributed
delegation}).  Furthermore, implementing even basic channel idioms such as
choosing between multiple channels requires complex distributed
algorithms~\cite{Chaudhuri09}. In short, channel-based languages are \emph{easy
to type} but \emph{difficult to distribute}.

In contrast, actor languages are much easier to distribute, since every message
will always be stored at the process that will handle it. But typing an actor is
harder, requiring large variant types, and behavioural typing is difficult since
we can only \emph{send} to process IDs and \emph{receive} from mailboxes.  Thus,
actors are \emph{easy to distribute} but \emph{difficult to type}.

\subsection{Key Principles}
\label{sec:intro:key-principles}

For session types to be useful for real-world actor programs, we argue that a
programming model and session type discipline must satisfy the following
\emph{key principles}:

\begin{description}
    \item[(KP1) Reactivity]
        Following the actor model, frameworks like Akka, and Erlang behaviours
        like \verb+gen_server+, computation should be triggered by incoming
        messages.
    \item[(KP2) No Explicit Channels]
        Channel-based languages impose a significantly different programming
        style, so the programming model should \emph{not} expose explicit
        channels to a developer.
    \item[(KP3) Multiple Sessions]
        Actors must be able to simultaneously take part in an unbounded and
        statically-unknown number of sessions, in order to support server
        applications. It must be possible for different participants to be at
        different states of a protocol.
    \item[(KP4) Interaction Between Sessions]
        Much like our ID server example, interactions in one session should be
        able to affect the behaviour of an actor in other sessions.
    \item[(KP5) Failure Handling and Recovery]
        The programming model and type discipline should support failure
        recovery via supervision hierarchies.
\end{description}

\emph{No previous work that applies session types to actor languages satisfies
all of the key principles above.}
\citet{MostrousV11} investigated session typing for Core Erlang: their
approach emulated session-typed channels by tagging messages with unique
references.  Their approach was unimplemented, not reactive, exposed a
channel-based discipline, and does not support failure, violating \kp{1, 2, 5}.
\citet{FrancalanzaT23} implemented a binary session typing system for Elixir,
but their approach is limited to typing interactions between isolated pairs of
processes and is therefore severely limited in expressiveness, violating
\kp{1, 3, 4, 5}.
\citet{HarveyFDG21} used multiparty session types in an actor language to
support safe runtime adaptation, but each actor can only take part in a \emph{single
session} at a time. It is therefore difficult to write server applications and
so the language \emph{does not support general-purpose actor programming},
violating \kp{1, 3, 4}.

\citet{NeykovaY16} and~\citet{Fowler16} implement programming frameworks closer
to following our key principles: each actor is programmed in a reactive style
and can be involved in multiple sessions, but both works use
\emph{dynamic verification of actors using session types as a notation for
generating runtime monitors}. They do not consider any formalism, session type
system, nor metatheoretical guarantees, and so there is a significant gap
between their conceptual framework and a concrete static programming language
design.

In contrast, \langname supports all of our key principles by reacting to incoming
messages rather than having an explicit receive operation~(\kp{1}); enforcing
session typing through a flow-sensitive effect system rather than explicit
channel handles~(\kp{2}); using the reactive design to support interleaved
handling of messages from different sessions~(\kp{3}); supporting interaction
between sessions using state and self-messages, with our implementation
also supporting an explicit session switching construct~(\kp{4}); and
supporting failures and recovery via supervision hierarchies~(\kp{5}).

\subsection{Contributions}
Concretely, we make four specific contributions:

\begin{enumerate}
    \item We introduce \langname, the first actor language fully supporting 
        multiparty session types (\S\ref{sec:formalism}).
    \item
        We show that \langname enjoys a
        strong metatheory~(\S\ref{sec:metatheory}):
        type preservation (Theorem~\ref{thm:preservation}) guarantees that an actor will
        never send or receive an unexpected message;
        progress (Theorem~\ref{thm:progress}) guarantees that processes never get stuck
        due to waiting for a message that is never sent because of an unsafe
        protocol design or faulty implementation; and global
        progress (Corollary~\ref{cor:global-progress}) guarantees that, in a
        system where all event handlers terminate, communication is always
        eventually possible in any ongoing session.
    \item We show how to extend \langname with support for Erlang-style failure
        handling and process supervision~(\S\ref{sec:failure}), and prove
        that this maintains \langname's strong metatheory.
    \item We detail our implementation of \langname using an API generation
        approach in Scala (\S\ref{sec:implementation}), and demonstrate our
        implementation on a representative selection of benchmarks from the
        Savina actor benchmark suite; a real-world case study from the
        factory domain; and a chat server.
\end{enumerate}

Section~\ref{sec:related} discusses related work, and
Section~\ref{sec:conclusion} concludes.
 \begin{figure}[t]
\begin{minipage}{0.4\textwidth}
    {
    \begin{stbox}[title=Global Type for ID Server,height=5.75cm]
        \scriptsize
        \vspace{-1.2em}
        \[
        \hspace{-1.25em}
        \bl
        {
            \bl
            \mkwd{IDServer} \defeq \\
            \quad \globalbranch{Client}{Server} \\
            \qquad \msgg{\mkwd{IDRequest}}{} \then \\
            \qqquad \globalbranch{Server}{Client} \\
            \qqqquad \msgg{IDResponse}{\tyint} \then \mkwd{IDServer}, \\
            \qqqquad \msgg{Unavailable}{} \then \mkwd{IDServer} \\
            \qqquad \}, \\
\qquad \msgg{\mkwd{LockRequest}}{} \then \\
            \qqquad \globalbranch{Server}{Client} \\
            \qqqquad \msgg{Locked}{} \then  \mkwd{AwaitUnlock}, \\
\qqqquad \msgg{Unavailable}{} \then \mkwd{IDServer} \\
            \qqquad \}, \\
            \qquad \msgg{\mkwd{Quit}}{} \then \globalend \\
            \quad \}
            \el
        }\\
            \mkwd{AwaitUnlock} \defeq \\
            \quad \globalsendsingle{Client}{Server}{\mkwd{Unlock}}{} \then
            \mkwd{IDServer}
        \el
        \]
\vspace{-0.5em}
    \end{stbox}
}
\end{minipage}
\qquad
\begin{minipage}{0.4\textwidth}
    {
    \begin{stbox}[title=Local Type for Server Role,height=5.75cm]
        \scriptsize
        \vspace{-1.2em}
        \[
        \hspace{-1.25em}
            \bl
            \mkwd{ServerTy} \defeq \\
            \quad
            {\bl
            \localofferone{Client} \\
            \quad \msgg{\mkwd{IDRequest}}{} \then \\
            \qquad \localselectone{Client} \\
            \qqquad \msgg{IDResponse}{\tyint} \then \mkwd{ServerTy}, \\
            \qqquad \msgg{Unavailable}{} \then \mkwd{ServerTy} \\
            \qquad \}, \\
\quad \msgg{\mkwd{LockRequest}}{} \then \\
            \qquad \localselectone{Client} \\
            \qqquad \msgg{Locked}{} \then \mkwd{ServerLockTy} \\
            \qqquad \msgg{Unavailable}{} \then \mkwd{ServerTy} \\
            \qquad \}, \\
            \quad \msgg{\mkwd{Quit}}{} \then \globalend \\
            \}
            \el
            } \\
            \mkwd{ServerLockTy} \defeq \\
            \quad \localoffersingle{Client}{\mkwd{Unlock}}{} \then \mkwd{ServerTy}
            \el
        \]
\vspace{-0.5em}
    \end{stbox}
}
\end{minipage}
\vspace{-1em}
\caption{Session types for the ID server example.}
\label{fig:tour:id-server-sts}
\end{figure}

 \section{A Tour of \langname}\label{sec:tour}

In this section we introduce \langname by example, first by considering how to
write our ID server, and then by considering a larger online shop example.

\subsection{The Basics: ID Server}

\paragraph*{Session types.}
Figure~\ref{fig:tour:id-server-sts} shows the session types for the ID server
example.
The \emph{global type} describes the interactions between the ID server and a
client.  For simplicity, we assume a standard encoding of mutually recursive
types and use mutually recursive definitions in our examples. The client starts
by sending one of \mkwd{IDRequest}, \mkwd{LockRequest}, or \mkwd{Quit} to the
server. On receiving \mkwd{IDRequest}, the server replies with \mkwd{IDResponse}
if it is unlocked, or \mkwd{Unavailable} if it is locked; in both cases, the
protocol then repeats.  On receiving \mkwd{LockRequest}, the server replies with
\mkwd{Locked} (if it locks successfully), and the client must then send
\mkwd{Unlock} before repeating. If already locked, the server responds with
\mkwd{Unavailable}.  The protocol ends when the server receives a \mkwd{Quit}
message.

A global type can be \emph{projected} to \emph{local types} that describe the
protocol from the perspective of each participant.  The local type on the right
details the protocol from the server's viewpoint: the $\&$ operator denotes
offering a choice, and the $\oplus$ operator denotes making a selection.  The
(omitted) \mkwd{ClientTy} type is similar, but implements the \emph{dual}
actions: where the server offers a choice, the client makes a selection, and
vice-versa.
We define a \emph{protocol} $P$ as a mapping from role names to local session
types. In our example we define 
$\mkwd{IDServerProtocol} \defeq \set{\role{Client}: \mkwd{ClientTy},
\role{Server} : \mkwd{ServerTy}}$.

\paragraph{Programming model.}
The \langname programming model is as follows:
\begin{itemize}
    \item \langname is faithful to the actor model, which has a single thread of
        execution per actor. This allows access to shared state \emph{without}
        needing concurrency control mechanisms like mutexes.
    \item An actor registers with an \emph{access point} to register to take part
    in a session.
    \item Once a session is established, the actor can send messages according
        to its session type. The actor maintains some \emph{actor-level state}
        and its thread must either return an updated state (if it has completed
        its part in the protocol), or suspend by installing a \emph{message
        handler} (if it is ready to receive a message).
        Suspension acts as a \emph{yield point} to the event loop, and occurs
        at \textbf{precisely the same point as in mainstream actor languages.}
    \item The event loop can then invoke other installed handlers for any
        messages in its mailbox---\textbf{this is the key mechanism that
        allows \langname to support multiple sessions}.
\end{itemize}

\begin{figure}
        \footnotesize
    \begin{minipage}{0.475\textwidth}
\[
\begin{array}{l}
\mkwd{idServer} : (\apty{\mkwd{IDServerProtocol}} \times (\tyint \times
\tybool)) \\
    \quad \rightarrow (\tyint \times \tybool) \\
    \mkwd{idServer} = \lambda (\var{ap}, \var{state}) .\; \mkwd{registerAgain}
    \app \var{ap}; \; \var{state}
\vspace{0.5em}
\\
\mkwd{registerAgain} : (\apty{\mkwd{IDServerProtocol}}) \rightarrow \one \\
\mkwd{registerAgain} = \lambda \var{ap} . \\
\quad \calcwd{register}~\var{ap}~\role{Server} \\
\qquad (\lambda \var{st} .\: \mkwd{registerAgain} \app
    \var{ap};~\suspend{\mkwd{requestHandler}}{\var{st}})
\vspace{0.5em}
\\
\mkwd{unlockHandler} : \handlerty{\mkwd{ServerLockTy}}{(\typair{\tyint}{\tybool})} \\
\mkwd{unlockHandler} = \\
\quad \handlerone{Client}{\var{st}} \: \\
\qquad \mkwd{Unlock()} \mapsto \\
\qqquad \calcwd{let}~(\var{curID}, \var{locked}) = \var{st}~\calcwd{in} \\
\qqquad \suspend{\mkwd{requestHandler}}{(\var{curID}, \ffalse)} \\
\quad \}
\vspace{0.5em}
\\
\mkwd{main} : \one \\
\mkwd{main} = \\
\quad
\letintwo{\var{idServerAP}}{\newap{\mkwd{IDServerProtocol}}} \\
\quad \calcwd{spawn}~(\mkwd{idServer} \app (\mkwd{idServerAP}, (0, \ffalse)); \\
\quad \calcwd{spawn}~(\mkwd{client} \app \mkwd{idServerAP})
\end{array}
\]
\end{minipage}
\hfill
\begin{minipage}{0.475\textwidth}
\[
\begin{array}{l}
\mkwd{requestHandler} : \handlerty{\mkwd{ServerTy}}{(\tyint \times \tybool)} \\
\mkwd{requestHandler} = \\
\quad \handlerone{Client}{\var{st}} \\
\qquad \mkwd{IDRequest}() \mapsto \\
\qqquad \calcwd{let}~(\var{curID}, \var{locked}) = \var{st}~\calcwd{in} \\
\qqquad \calcwd{if}~\mkwd{locked}~\calcwd{then} \\
\qqqquad \role{Client}~!~\mkwd{Unavailable}(); \\
\qqqquad \suspend{\mkwd{requestHandler}}{\var{st}} \\
\qqquad \calcwd{else} \\
\qqqquad \role{Client}~!~\mkwd{IDResponse}~(\var{curID}); \\
\qqqquad \suspend{\mkwd{requestHandler}}{(\var{curID} + 1, \var{locked})}, \\
\qquad \mkwd{LockRequest}() \mapsto \\
\qqquad \calcwd{let}~(\var{curID}, \var{locked}) = \var{st}~\calcwd{in} \\
\qqquad \calcwd{if}~\var{locked}~\calcwd{then} \\
\qqqquad \role{Client}~!~\mkwd{Unavailable}(); \\
\qqqquad \suspend{\mkwd{requestHandler}}{\var{st}} \\
\qqquad \calcwd{else} \\
\qqqquad \role{Client}~!~\mkwd{Locked}(); \\
\qqqquad \suspend{\mkwd{unlockHandler}}{(\var{curID}, \ttrue)}, \\
\qquad \mkwd{Quit}() \mapsto \var{st} \\
\quad \}
\end{array}
\]
\end{minipage}
\vspace{-1em}
    \caption{\langname implementation of ID Server}
    \label{fig:intro:maty-idserver}
\end{figure}
 \paragraph{Implementing the ID server.}
Figure~\ref{fig:intro:maty-idserver} shows an implementation of the ID server in
\langname; we allow ourselves the use of mutually-recursive definitions, taking
advantage of the usual encoding into anonymous recursive functions. Although we
use an effect system that annotates function arrows, we omit effect annotations
where they are not necessary.

The server maintains actor-level state of type $(\tyint \times \tybool)$,
containing the current ID and a flag recording whether the server is locked. The
\mkwd{idServer} function takes an \emph{access point}~\cite{GayV10} and initial
state as an argument, and registers for the \role{Server} role.
An access point can be thought of as a ``matchmaking service'': actors
\emph{register} to play a role in a session, and the access point establishes a
session once actors have registered for each role.
The $\calcwd{register}$ construct takes three arguments: an access point, a role,
and a callback to be invoked when the session is established. 
\emph{Once the callback is invoked, the actor can perform session communication
actions for the given role}: in this case, the actor can communicate according
to the \mkwd{ServerTy} type, namely receiving the initial item request from a
client.
The callback first recursively registers to be involved in future sessions, and
then \emph{suspends} awaiting a message from a client, by installing
$\mkwd{requestHandler}$. 

A \emph{message handler} (or simply \emph{handler}) is a first-class construct
that describes how an actor handles an incoming message: each handler takes the
role to receive from; a variable to which to bind the current actor state; and a
series of branches that detail how each message should be handled.
An actor \emph{installs} a message
handler for the current session by invoking the $\calcwd{suspend}$ construct,
which reverts the actor back to being idle with an updated state, and indicates
that the given handler should be invoked when a message is received from the
\role{Client}.

The \mkwd{requestHandler} has type
$\handlerty{\mkwd{ServerTy}}{(\typair{\tyint}{\tybool})}$: handlers are
parameterised by an input session type and the type of the actor's state.
The handler has three branches, one for each possible incoming message.
\langname uses a flow-sensitive effect system~\cite{FosterTA02,
Atkey09, Gordon17}
to enforce session typing using pre- and post-conditions on expressions.
In the \mkwd{IDRequest} branch, the pre-condition
$\localselect{Client}{ \msgg{IDResponse}{\tyint}\then\mkwd{ServerTy},
\msgg{Unavailable}{}\then\mkwd{ServerTy}}$ means that the actor can \emph{only
send \mkwd{IDResponse} or \mkwd{Unavailable} messages}; all other communication
actions are rejected statically. After either message is sent, the
session type advances to \mkwd{ServerTy}, allowing the handler to suspend
recursively. The \mkwd{LockRequest} branch works similarly;
since \calcwd{suspend} aborts the current evaluation context, both branches can
be given type $(\typair{\tyint}{\tybool})$ with post-condition $\localend$ to
match the \mkwd{Quit} branch.
The \mkwd{unlockHandler} handles \mkwd{Unlock} by updating the state,
reinstalling \mkwd{requestHandler}, and suspending.
Implementing a client is similar (details omitted).
Finally, \lstinline+main+ sets up the access point (associating \role{Client}
with \mkwd{ClientTy} and \role{Server} with \mkwd{ServerTy}), then spawns
$\mkwd{idServer}$ with a pair of arguments $\var{idServerAP}$ (to allow the server to
register for sessions) and $(0, \ffalse)$, its initial state.
The \lstinline+main+ function also spawns the client, passing the same access
point.

\begin{figure}
    {\small \header{Scenario Description}}
    \begin{minipage}{0.425\textwidth}
    \includegraphics[width=\linewidth]{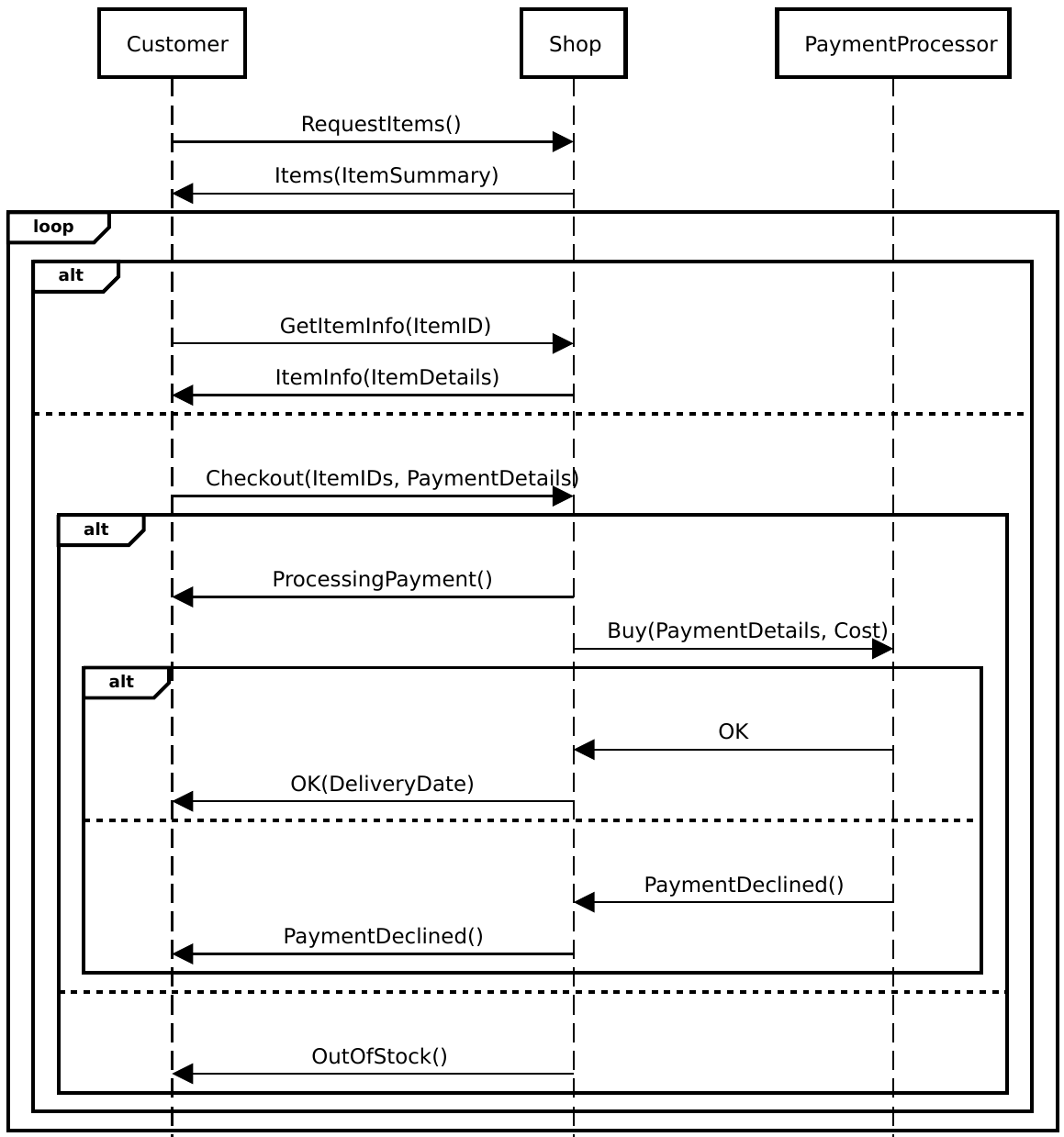}
\end{minipage}
\hfill
\small
\begin{minipage}{0.55\textwidth}
\begin{itemize}
  \setlength{\itemindent}{-1.75em}
    \item A \txtrole{Shop} can serve many \txtrole{Customer}s at once.
    \item The \txtrole{Customer} begins by requesting a list of items from the \txtrole{Shop}, which sends back a list of pairs of an item's identifier and name.
    \item The \txtrole{Customer} can then repeatedly either request full details (including description and cost) of an item, or proceed to checkout.
    \item To check out, the \txtrole{Customer} sends their payment details and a list of item IDs to the \txtrole{Shop}.
    \item If any items are out of stock, then the \txtrole{Shop} notifies the
        customer who can then try again. Otherwise, the \txtrole{Shop} notifies
        the \txtrole{Customer} that it is processing the payment, and forwards
        the payment details and total cost to the \txtrole{Payment Processor}.
    \item The \txtrole{Payment Processor} responds to the \txtrole{Shop}
        with whether the payment was successful.
    \item The \txtrole{Shop} relays the result to the \txtrole{Customer},
        with a delivery date if the purchase was successful.
\end{itemize}
\end{minipage}

\vspace{0.5em}
{\small \header{Local Types for $\role{Shop}$ role}}
\vspace{-1.1em}
{\footnotesize
\begin{minipage}{0.475\textwidth}
\[
\bl
\mkwd{ShopTy} \defeq \\
    \quad
    \bl
        \localoffersingle{Customer}{\mkwd{requestItems}}{} \then \\
        \localselectsingle{Customer}{\mkwd{items}}{[(\mkwd{ItemID} \times \mkwd{ItemName})]} \then \\
        \mkwd{ReceiveCommand}
    \el
\\ \\
\mkwd{PaymentResponse} \defeq \\
\quad \localofferone{PaymentProcessor} \\
        \qquad
        \blz
        \msg{\mkwd{ok}}{} \then \\
        \quad 
        \localselectsingle{Customer}{\mkwd{ok}}{\mkwd{DeliveryDate}} \then \\
        \quad 
        \mkwd{ReceiveCommand}, \\
        \msg{\mkwd{paymentDeclined}}{} \then \\
        \quad \localselectsingle{Customer}{\mkwd{paymentDeclined}}{} \then \\
        \quad
        \mkwd{ReceiveCommand} \\
        \el \\
        \quad \}
\el
\]
\end{minipage}
\hfill
\begin{minipage}{0.475\textwidth}
\[
\bl
    \mkwd{ReceiveCommand} \defeq \\
    \quad
        \blz
            \localofferone{Customer} \\
            \quad
            \blz
                \msg{\mkwd{getItemInfo}}{\mkwd{ItemID}} \then  \\
                \quad
                \localselectsingle{Customer}{\mkwd{itemInfo}}{\mkwd{Description}}
                \then \\
                \quad \mkwd{ReceiveCommand},
                \\
                \msg{\mkwd{checkout}}{([\mkwd{ItemID}] \times \mkwd{PaymentDetails})} \then \\
                \quad
                    \blz
                        \localselectone{Customer} \\
                        \quad
                        \msg{\mkwd{paymentProcessing}}{} \then \\
                        \qquad
                        \blz
                        \localselectnone{PaymentProcessor} \\
                        \quad \msg{\mkwd{buy}}{(\mkwd{PaymentDetails} \times \mkwd{Price})} \then \\
                        \quad \mkwd{PaymentResponse}, \\
                    \el \\
                    \quad \msg{\mkwd{outOfStock}}{} \then \\
                        \qquad \localselectsingle{Customer}{\mkwd{outOfStock}}{}
                        \then \\
                        \qquad \mkwd{ReceiveCommand} \\
                    \} \\
                \el
            \el \\
            \}
        \el
    \el
\]
\end{minipage}
}
\vspace{-1.5em}
\caption{Online Shop Scenario}
    \label{fig:intro:seqdiag}
\end{figure}

 \subsection{A Larger Example: A Shop}

Our ID server example illustrates key parts of \langname, but involves only two
roles. We now consider a larger online shop example
(Figure~\ref{fig:intro:seqdiag}) to serve as a running example. Multiple clients
interact with a single shop, which connects to an external payment processor.
Figure~\ref{fig:intro:seqdiag} shows the local types for \role{Shop}; we omit
$\mkwd{ClientTy}$ and $\mkwd{PPTy}$ for \role{Client} and
\role{PaymentProcessor}, which follow the same pattern. The global type closely
follows the sequence diagram.

\paragraph{Shop message handlers.}
Figure~\ref{fig:intro:shop-impl} shows the shop's message handlers. After
spawning, the shop suspends with \mkwd{itemReqHandler}, awaiting a
\mkwd{requestItems} message. On receipt, it retrieves the current stock from its
state, sends a summary to the customer, and installs \mkwd{custReqHandler}.

The \mkwd{custReqHandler} handles the \mkwd{getItemInfo} and \mkwd{checkout}
messages. For \mkwd{getItemInfo}, the shop sends the item details and suspends
recursively. For \mkwd{checkout}, it checks availability: if all items are
in stock, it notifies the customer, updates the stock, sends \mkwd{buy} to the
payment processor, and installs \mkwd{paymentHandler}; otherwise, it sends
\mkwd{outOfStock} and reinstalls \mkwd{custReqHandler}.

The \mkwd{paymentHandler} waits for the processor's reply: if it receives
\mkwd{ok}, it sends the delivery date; if it instead receives
\mkwd{paymentDeclined}, it restores the previous stock. Both branches reinstall
\mkwd{custReqHandler} to handle future requests.

\begin{figure}[t]
{\footnotesize
    \begin{minipage}{0.45\textwidth}
\[
    \bl
    \mkwd{itemReqHandler} : \handlerty{\mkwd{ShopTy}}{\tylist{\mkwd{Item}}} \\
    \mkwd{itemReqHandler} \defeq \\
        \quad
        \bl
        \handlerone{Customer}{\var{stock}} \\
        \quad \msg{\mkwd{requestItems}}{} \mapsto \\
\qquad
        \send{Customer}{\mkwd{itemSummary}}{\mkwd{summary}(\var{stock})}; \\
        \qquad \suspend{\mkwd{custReqHandler}}{\var{stock}} \:\: \}
        \vspace{0.5em}
        \el \\ 
\mkwd{paymentHandler} : [\mkwd{ItemID}] \to \\
    \qquad \handlerty{\mkwd{PaymentResponse}}{\tylist{\mkwd{Item}}} \\
    \mkwd{paymentHandler} \defeq \lambda \var{itemIDs} . \\
    \quad
        \bl
        \handlerone{PaymentProcessor}{\var{stock}} \\
        \quad \msg{\mkwd{ok}}{} \mapsto \\
        \qquad
        \bl
            \send{Customer}{\mkwd{ok}}{\mkwd{deliveryDate}(\var{itemIDs})}; \\
            \suspend{\mkwd{custReqHandler}}{\var{stock}}
            \el \\
        \quad \msg{\mkwd{paymentDeclined}}{} \mapsto \\
        \qquad
        \bl
            \send{Customer}{\mkwd{paymentDeclined}}{}; \\
            \letintwo{\var{newStock}}{\mkwd{increaseStock}(\var{itemIDs}, \var{stock})} \\
            \suspend{\mkwd{custReqHandler}}{\var{newStock}}
        \: \: \}
        \el
        \el
    \el
\]
\end{minipage}
\hfill
\begin{minipage}{0.5\textwidth}
\[
    \bl
    \mkwd{custReqHandler} : \handlerty{\mkwd{ReceiveCommand}}{\tylist{\mkwd{Item}}} \\
    \mkwd{custReqHandler} \defeq \\
    \quad
        \bl
        \handlerone{Customer}{\var{stock}} \\
        \quad \msg{\mkwd{getItemInfo}}{\var{itemID}} \mapsto \\
        \qquad \send{Customer}{\mkwd{itemInfo}}{\mkwd{lookupItem}(\var{itemID},
        \var{stock})}; \\
        \qquad \suspend{\mkwd{custReqHandler}}{\var{stock}} \\
        \quad \msg{\mkwd{checkout}}{(\var{itemIDs}, \var{details})} \mapsto \\
        \qquad \ithen{\mkwd{inStock}(\var{itemIDs}, \var{stock})} \\
        \qqquad \send{Customer}{\mkwd{paymentProcessing}}{}; \\
        \qqquad \letintwo{\var{total}}{\mkwd{cost}(\var{itemIDs}, \var{stock})} \\
        \qqquad \letintwo{\var{newStock}}{\mkwd{decreaseStock}(\var{itemIDs},
        \var{stock})} \\
        \qqquad \send{PaymentProcessor}{\mkwd{buy}}{(\var{details}, \var{total})}; \\
        \qqquad
        \suspend{(\mkwd{paymentHandler} \app \var{itemIDs})}{\var{newStock}} \\
        \qquad \calcwd{else} \\
        \qqquad \send{Customer}{\mkwd{outOfStock}}{}; \\
        \qqquad \suspend{\mkwd{custReqHandler}}{\var{stock}} \\
        \}
        \el 
    \el
\]
\end{minipage}
}
\caption{Implementation of Shop message handlers in \langname}
\label{fig:intro:shop-impl}
\end{figure}
 
\paragraph*{Tying the example together.}

Finally, we can show how to establish a session using the Shop actors.
Let $\mkwd{CustomerProtocol} = \set{\role{Shop} : \mkwd{ShopTy},
\role{Client} : \mkwd{ClientTy},
\role{PaymentProcessor} : \mkwd{PPTy}}$.

\smallmath{
\begin{minipage}{0.375\textwidth}
\[
    \hspace{-2em}
    \bl
    \mkwd{main} \defeq \\
    \: \letintwo{\var{custAP}}{\newap{\mkwd{CustomerProtocol}}} \\
\: \spawn{(\mkwd{shop} \app (\var{custAP}, \mkwd{initialStock}))}; \\
    \: \spawn{(\mkwd{paymentProcessor} \app \var{custAP})}; \\
    \: \spawn{(\mkwd{customer} \app \var{custAP})} \\
\el
\]
\end{minipage}
\hfill
\begin{minipage}{0.6\textwidth}
\[
    \bl
\mkwd{shop} \defeq \lambda (\var{custAP}, \var{stock}) . \;
        \mkwd{registerAgain} \app \var{custAP}; \; \var{stock}
\\\\
        \mkwd{registerAgain} \defeq \lambda \var{custAP} . \\
        \quad
            \registertwo{\var{custAP}}{\role{Shop}} \\
            \qquad (\fun{\var{st}}{\mkwd{registerAgain} \: \var{custAP};
            \suspend{\mkwd{itemReqHandler}}{\var{st}}} )
    \el
\]
\end{minipage}
}

The \mkwd{shop} definition takes the access point and then proceeds to
\emph{register} to take part in a session to interact with customers.
After each session has been established, the session type for the shop states
that it needs to receive a message from a client, so the shop suspends with
$\mkwd{itemReqHandler}$.

 \section{\langname: A Core Actor Language with Multiparty Session Types}
\label{sec:formalism}
In this section we introduce \langname formally, giving its syntax, typing rules, and semantics.

\subsection{Syntax}
\begin{figure}
 {\footnotesize
\header{Syntax of terms}
\[
    \begin{array}{lrcl}
        \text{Roles} & \prole, \qrole \\
        \text{Variables} & x, y, z, f \\
    \text{Values} & V, W & ::= & x \midspace \fun{x}{M} \midspace \rec{f}{x}{M} \midspace c 
        \midspace (\vala, \valb) \midspace \handler{p}{\var{st}}{\seq{H}} \\
        \text{Handler clauses} & H & ::= & \msg{\ell}{x} \mapsto M \\
        \text{Computations} & M, N & ::= &
            \efflet{x}{M}{N} \midspace \effreturn{V} \midspace V \app W \\
                            & & \midspace & \ite{V}{M}{N} \midspace \letin{(x, y)}{\vala}{\tma} \\
                            & & \midspace &  \spawn{M} \midspace \send{p}{\ell}{V}  \midspace \suspend{\vala}{\valb} \\
                            & & \midspace &  \newap{\proto} \midspace \register{\vala}{\prole}{\valb} \\
    \end{array}
\]

 \header{Syntax of types and type environments}
 \[
     \begin{array}{lrcl}
         \text{Output session types} & \stout & ::= & \localselect{p}{\msg{\ell_i}{\tya_i}.S_i}_{i \in I} \\
         \text{Input session types} & \stin & ::= & \localoffer{p}{\msg{\ell_i}{\tya_i}.S_i}_{i \in I} \\
         \text{Session types} & S, T & ::= & \stout \midspace \stin
              \midspace
             \recty{X}{S} \midspace
             \recvar \midspace
             \localend
             \\
         \text{Protocols} & \proto & ::= & \set{\prole_i : \sta_i}_{i \in I} \\
         \text{Types} & \tya, \tyb, \tyc & ::= & \basety \midspace \sttyfun{\tya}{\tyb}{S}{T}{\tyc} \midspace
         (\tya \times \tyb) \midspace 
         \apty{(\prole_i:  \sta_i)_{i \in I}} \midspace \handlerty{\stin}{\tyc} \\
         \text{Base types} & \basety & ::= & \one \midspace \tybool \midspace \tyint \midspace \cdots \\
         \text{Type environments} & \tyenv & ::= & \cdot \midspace \tyenv, x : \tya
     \end{array}
 \]
 }
     \caption{Syntax of terms, types, and type environments}
     \label{fig:syntax}
 \end{figure}
 
Figure~\ref{fig:typing} shows the syntax of \langname.
We let $\prole, \qrole$ range over roles, and $x, y, z, f$
range over variables.
We stratify the calculus into values $\vala, \valb$ and computations $\tma,
\tmb$ in the style of \emph{fine-grain call-by-value}~\cite{LPT03}, with
different typing judgements for each.

\paragraph*{Session types.}
Although global types are convenient for describing protocols, we instead
follow~\citet{ScalasY19} and base our formalism around local types
(\emph{projection} of global types onto roles is standard~\cite{HondaYC08,
ScalasDHY17}; the local types resulting from projecting a global type satisfy
the properties that we will see in~\secref{sec:metatheory}~\cite{ScalasY19}).
\emph{Selection} session types
$\localselect{\prole}{\msg{\ell_i}{\tya_i}\then \sta_i}_{i \in I}$ indicate
that a process can choose to send a message with label $\ell_j$ and payload type
$\tya_j$ to role $\prole$, and continue as session type $\sta_j$ (assuming $j \in I$).
\emph{Branching} session types
$\localoffer{\prole}{\msg{\ell_i}{\tya_i} \then \sta_i}_{i \in I}$ indicate that
a process must \emph{receive} a message. We let $\stout$ range over selection
(or \emph{output}) session types, and let $\stin$ range over branching (or
\emph{input}) session types.
Session type $\recty{\recvar}{\sta}$ indicates a recursive session type 
that binds variable $\recvar$ in $\sta$; we take an equi-recursive
view of session types and identify each recursive session type with its
unfolding. Finally, $\localend$ denotes a session type that has finished.

\paragraph*{Protocols.}
A \emph{protocol} $\proto$ is a collection of roles and their associated session
types. Protocols are used when defining access points, and in
Section~\ref{sec:metatheory} when describing behavioural properties.

\paragraph*{Types.}
Base types $\basety$ are standard. Since our type system enforces session typing
by pre- and post-conditions,
a function type $\sttyfun{\tya}{\tyb}{\sta}{\stb}{\tyc}$ states that the function takes
an argument of type $\tya$ where the current session type is $\sta$, and
produces a result of type $\tyb$ with resulting session type $\stb$, to be run
on an actor with state of type $\tyc$.
An access point has type $\apty{(\prole_i: \sta_i)_{i \in 1..n}}$, mapping each role to a
local type. Finally, a message handler has type $\handlerty{\stin}{\tya}$ where
$\stin$ is an \emph{input} session type and $\tya$ is the type of the actor state.

\subsection{Typing Rules}
\begin{figure}
{\footnotesize
\headersig{Value typing}{$\vseq[]{\tyenv}{V}{\tya}$}
\vspace{-1em}

\begin{mathpar}
    \inferrule
    [TV-Var]
    { x : \tya \in \tyenv }
    { \vseq{\tyenv}{x}{\tya}}

    \inferrule
    [TV-Const]
    { c \text{ has base type } \basety }
    { \vseq{\tyenv}{c}{\basety}}

    \inferrule
    [TV-Lam]
    {
        \mtseqst{\tyenv, x : \tya}{\tyc}{S}{M}{\tyb}{T}
    }
    { \vseq{\tyenv}{\fun{x}{M}}{\sttyfun{\tya}{\tyb}{S}{T}{\tyc}} }

    \inferrule
    [TV-Rec]
    {
    \mtseqst
        {\tyenv, x: A, f : \sttyfun{\tya}{\tyb}{S}{T}{\tyc}}
        {\tyc}
        {S}
        {M}
        {\tyb}
        {T}
    }
    {\vseq{\tyenv}{\rec{f}{x}{M}}{\sttyfun{\tya}{\tyb}{S}{T}{\tyc}} }

    \inferrule
    [TV-Pair]
    {
        \vseq{\tyenv}{\vala}{\tya}
        \\
        \vseq{\tyenv}{\valb}{\tyb}
    }
    { \vseq{\tyenv}{(\vala, \valb)}{(\tya \times \tyb)} }

    \inferrule
    [TV-Handler]
    {
        (\mtseqst{\tyenv, x_i : \tya_i, \var{st} : \tyc}{\tyc}{S_i}{M_i}{\tyc}{\localend} )_{i \in I}
    }
    { \vseq
        {\tyenv}
        {\handler{p}{\var{st}}{\msg{\ell_i}{x_i} \mapsto {M_i}}_{i \in I}}
        {\handlerty{\localoffer{p}{\msg{\ell_i}{\tya_i}.S_i}_{i \in I}}{\tyc}}
    }
\end{mathpar}

\headersig{Computation typing}{$\mtseqst[]{\tyenv}{\tyc}{S}{M}{\tya}{T}$}
\vspace{-1em}

\begin{mathpar} 
    \inferrule
    [T-Let]
    {
        \mtseqst{\tyenv}{\tyc}{S_1}{M}{\tya}{S_2} \\\\
        \mtseqst{\tyenv, x : \tya}{\tyc}{S_2}{N}{\tyb}{S_3}
    }
    { \mtseqst{\tyenv}{\tyc}{S_1}{\efflet{x}{M}{N}}{\tyb}{S_3} }

    \inferrule
    [T-Return]
    { \vseq{\tyenv}{V}{\tya} }
    { \mtseqst{\tyenv}{\tyc}{S}{\effreturn{V}}{\tya}{S} }

    \inferrule
    [T-App]
    {\vseq{\tyenv}{V}{\sttyfun{\tya}{\tyb}{S}{T}{\tyc}} \\
     \vseq{\tyenv}{W}{\tya}}
     {\mtseqst{\tyenv}{\tyc}{S}{V \app W}{\tyb}{T} }

    \inferrule
    [T-LetPair]
    {
        \vseq{\tyenv}{\vala}{(\tya_1 \times \tya_2)}
        \\\\
        \mtseqst{\tyenv, x : \tya_1, y: \tya_2}{\tyc}{\sta_1}{\tma}{\tyb}{\sta_2}
    }
    { \mtseqst{\tyenv}{\tyc}{\sta_1}{\letin{(x, y)}{\vala}{\tma}}{\tyb}{\sta_2} }
\quad
    \inferrule
    [T-If]
    {
        \vseq{\tyenv}{V}{\tybool} \\\\
        \mtseqst{\tyenv}{\tyc}{S_1}{M}{\tya}{S_2} \\\\
        \mtseqst{\tyenv}{\tyc}{S_1}{N}{\tya}{S_2}
    }
    {\mtseqst{\tyenv}{\tyc}{S_1}{\ite{V}{M}{N}}{A}{S_2}}
\quad
    \inferrule
    [T-Spawn]
    {
        \mtseqst{\tyenv}{\tya}{\localend}{\tma}{\tya}{\localend} \\
    }
    { \mtseqst{\tyenv}{\tyc}{\sta}{\spawn{\tma}}{\one}{\sta} }

    \inferrule
    [T-Send]
    {
        j \in I \\
        \vseq{\tyenv}{V}{\tya_j}
    }
    {
        \mtseqst
        {\tyenv}
        {\tyc}
        {\localselect{p}{\msg{\ell_i}{\tya_i}.{S_i}}_{i \in I}}
        {\send{p}{\ell_j}{V}}
        {\one}
        {S_j}
    }

    \inferrule
    [T-Suspend]
    {
        \vseq{\tyenv}{\vala}{\handlerty{\stin}{\tyc} } \\
        \vseq{\tyenv}{\valb}{\tyc}
    }
    { \mtseqst{\tyenv}{\tyc}{\stin}{\suspend{\vala}{\valb}}{\tya}{S'} }

    \inferrule
    [T-NewAP]
    {  \compliant{(\role{p}_i : T_i)_{i \in I})}  }
    { \mtseqst
        {\tyenv}
        {\tyc}
        { S }
        {\newap{(\role{p}_i : T_i )_{i \in I}}}
        {\apty{(\role{p}_i : T_i )_{i \in I}}}
        { S }
    }
\quad
\inferrule
    [T-Register]
    {
        j \in I \\
        \vseq{\tyenv}{V}{\apty{(\role{p}_i : T_i)_{i \in I}}} \\
        \valseq{\tyenv}{\valb}{\sttyfun{\tya}{\tya}{\stb_j}{\localend}{\tya}}
    }
    {
        \mtseqst
        { \tyenv }
        { \tya }
        { \sta }
        { \register{\vala}{\prole_j}{\valb} }
        { \one }
        { \sta }
    }
\end{mathpar}
}

    \caption{\langname Typing Rules}
    \label{fig:typing}
\end{figure}

 \paragraph*{Values.}
Fig.~\ref{fig:typing} gives the typing rules for \langname.
Value typing $\vseq{\tyenv}{\vala}{\tya}$ states
that value $\vala$ has type $\tya$ under environment $\tyenv$.
Session typing is enforced by flow-sensitive effect typing
(following~\citet{HarveyFDG21}) and so we do not need linear types.
Typing for variables and constants is standard (we assume constants
include the unit value $()$ of type $\one$), and typing rules for
functions and recursive functions are adapted to include session types.
Message handlers specify how to handle incoming messages:
\textsc{TV-Handler} states that
$\handler{\prole}{\var{st}}{\msg{\ell_i}{x_i} \mapsto \tma_i}_i$ is typable with
type $\handlerty{\localoffer{\prole}{\msg{\ell_i}{\tya_i}\then\sta_i}_i}{\tyc}$
if each $\tma_i$ is typable with precondition $\sta_i$ where the environment is
extended with $x_i$ of type $\tya_i$ and $\var{st}$ of type $\tyc$, and all
branches have postcondition $\localend$.

\paragraph*{Computations.}
The computation typing judgement has the form
$\mtseqst{\tyenv}{\tyc}{\sta}{\tma}{\tya}{\stb}$, read as ``under type
environment $\tyenv$ and evaluating in an actor with state of type $\tyc$,
given session precondition $\sta$, term $\tma$ has type
$\tya$ and postcondition $\stb$''.
A let-binding $\efflet{x}{\tma}{\tmb}$ evaluates $\tma$ and binds its result to
$x$ in $\tmb$, with the session postcondition from typing $\tma$ used as the
precondition when typing $\tmb$ (\textsc{T-Let}); note that this is the
only evaluation context in the system. The $\effreturn{\vala}$ expression
is a trivial computation returning value $\vala$ and has type $\tya$ if $\vala$
also has type $\tya$ (\textsc{T-Return}).  
A function application $\vala \app
\valb$ is typable by \textsc{T-App} provided that the precondition in the
function type matches the current precondition, and advances the postcondition
to that of the function type.
Rule \textsc{T-LetPair} types a pair deconstruction by binding both pair
elements in the continuation $\tma$.
Rule \textsc{T-If} types a conditional if its
condition is of type $\tybool$ and both continuations have the same return type
and postcondition;
this design is in keeping with analogous session type
systems~\cite{HarveyFDG21,FrancalanzaT23}, but by
treating $\calcwd{suspend}$ as a control
operator (with an arbitrary return type and postcondition) we can maintain
expressiveness by allowing each branch
to finish at a different session type.

The $\spawn{\tma}$ construct spawns a new actor that evaluates term $\tma$;
rule \textsc{T-Spawn} states that if the spawned actor supports state type
$\tyc$, then $\tma$ must also have type $\tyc$ to return the initial state.
It must also have pre- and postconditions $\localend$ because the spawned
computation is not yet in a session and so cannot communicate.
Rule \textsc{T-Send} types a send computation $\send{\prole}{\ell}{\vala}$ if
$\ell$ is contained within the selection session precondition, and if $\vala$
has the corresponding type; the postcondition is the session continuation for
the specified branch.
There is \emph{no $\calcwd{receive}$ construct}, since receiving messages is
handled by the event loop. Instead, when an actor wishes to receive a message,
it must \emph{suspend} itself with updated state $\valb$ and install a message handler using
$\suspend{\vala}{\valb}$. The \textsc{T-Suspend} rule states that
$\suspend{\vala}{\valb}$ is
typable if the handler is compatible with the current session type precondition
and state type; since the computation does not return, it can be given an
arbitrary return type and postcondition.

Sessions are initiated using \emph{access points}: we create an access point
for a session with roles and types $(\prole_i : \sta_i)_i$
using $\newap{(\prole_i : \sta_i)_i}$, which must be
annotated with the set of roles and local types to be involved in the
session (\textsc{T-NewAP}).
The rule ensures that the protocol supported by the access point is
\emph{compliant}; will describe this further in~\secref{sec:metatheory}, but at
a high level, if a protocol is compliant then it is free of communication
mismatches and deadlocks.

An actor can \emph{register} to
take part in a session as role $\prole$ on access point $\vala$ using
$\register{\vala}{\prole}{\valb}$; function $\valb$ is a callback to be
invoked once the session is established. Rule \textsc{T-Register} ensures that
the access point must contain a session type $\stb$ associated with
role $\prole$, and since the initiation callback will be evaluated when the
session is established, $\tma$ must be typable under session type $\stb$.
Since neither $\calcwd{newAP}$ nor $\calcwd{register}$ perform any
communication, the session types are unaltered.

The typing rules are not syntax-directed due to the rule for
\textsc{T-Suspend}, however they can be made algorithmic through type
annotations or standard type inference techniques.

\begin{figure}[t]
    {\footnotesize
    \header{Runtime syntax}
    \vspace{-1em}
    \begin{minipage}{0.35\textwidth}
    \[
        \begin{array}{lrcl}
            \text{Actor names} & a, b \\
            \text{Session names} & s \\
            \text{AP names} & \apname \\
            \text{Init.\ tokens} & \inittok \\
            \text{Runtime names} & \nma & ::= & \aname \midspace \sessname \midspace \apname \midspace \inittok
                                 \\
            \text{Values} & \hspace{-2em} \valc, \vala, \valb & ::= & \cdots \midspace \apname \\
            \text{Type env.} & \tyenv & ::= & \cdots  \\
                                     & & \midspace & \tyenv, \apname :
                                     \apty{(\prole_i : \sta_i)_i} \\
           \text{Reduction labels} & \redlbl & ::= & s \midspace \tau
        \end{array}
    \]
    \end{minipage}
    \hfill
    \begin{minipage}{0.55\textwidth}
        \[
            \begin{array}{lrcl}
                \text{Configurations} & \config{C}, \config{D} & ::= &
                    (\nu \nma) \config{C} \midspace
                    \config{C} \parallel \config{D} \\
                                      & & \midspace &
                                      \actor{\threadt}{\hstate}{\istate} \midspace
                    \ap{\apname}{\apstate} \midspace \qproc{s}{\qcontents} \\
                \text{Message queues} & \qcontents & ::= & \epsilon \midspace \qentry{p}{q}{\ell}{\vala} \cdot \qcontents \\
                \text{Stored handlers} & \hstate & ::= & \epsilon \midspace \hstate,
                \storedhandler{s}{p}{V} \\
                \text{Initialisation states} & \istate & ::= & \epsilon
                \midspace \istate, \maptwo{\inittok}{\vala} \\
                \text{Thread states} & \threadt & ::= &
                \idle{\vala} \midspace \sessthread{s}{p}{M} \midspace \standalone{M} \\
                \text{Access point states} & \apstate & ::= & (\prole_i \mapsto
                \setseq{\inittok_i})_{i} \\
\text{Evaluation contexts} & \ctxe & ::= & [~] \midspace \efflet{x}{\ctxe}{M} \\
                \text{Thread contexts} & \threadctx & ::= & \ctxe \midspace \sessthread{s}{p}{\ctxe} \\
                \text{Top-level contexts} & \tlthreadctx & ::= &
                    [~] \midspace \sessthread{s}{p}{[~]}
            \end{array}
        \]
    \end{minipage}

\headersig{Structural congruence (configurations)}{$\config{C} \equiv \config{D}$}
\begin{mathpar}
    \inferrule
    {}
    { \config{C} \parallel \config{D} \equiv \config{D} \parallel \config{C} }

    \inferrule
    {}
    { \config{C} \parallel (\config{D} \parallel \config{D}') \equiv
      (\config{C} \parallel \config{D}) \parallel \config{D}'
    }

    \inferrule
    {}
    {(\nu \nma_1)(\nu \nma_2)\config{C} \equiv (\nu \nma_2)(\nu \nma_1)\config{C}}

    \inferrule
    {}
    { (\nu \sessname)(\qproc{\sessname}{\epsilon}) \parallel \config{C} \equiv \config{C} }

    \inferrule
    { \nma \not\in \fn{\config{C}}}
    { \config{C} \parallel (\nu \nma) \config{D} \equiv (\nu \nma)( \config{C} \parallel \config{D}) }
\quad 
    \inferrule
    {
        \role{p}_1 \ne \role{p}_2 \vee
        \role{q}_1 \ne \role{q}_2
    }
    {
        \qproc{s}{ \sigma_1 {\cdot} \qentryy{\role{p}_1}{\role{q}_1}{\ell_1}{\vala_1} 
        {\cdot} \qentryy{\role{p}_2}{\role{q}_2}{\ell_2}{\vala_2} {\cdot} \sigma_2}
        \equiv
    \qproc{s}{ \sigma_1 {\cdot} \qentryy{\role{p}_2}{\role{q}_2}{\ell_2}{\vala_2}
    {\cdot} \qentryy{\role{p}_1}{\role{q}_1}{\ell_1}{\vala_1} {\cdot} \sigma_2}
    }
\end{mathpar}
}
    \vspace{-1.1em}
    \caption{Operational semantics (1)}
    \label{fig:semantics}
    \vspace{-1em}
\end{figure}

\subsection{Operational Semantics}
\paragraph*{Runtime syntax.}
Modelling the concurrent behaviour of \langname processes, requires additional
runtime syntax (Fig.~\ref{fig:semantics}).
Runtime names are identifiers for runtime entities: actor names
$\aname$ identify actors; session names $\sessname$ identify established
sessions; access point names $\apname$ identify access points; and
\emph{initialisation tokens} $\inittok$ associate entries in an
access point with registered initialisation continuations.

We model communication and concurrency through a language of
\emph{configurations} (reminiscent of $\pi$-calculus processes).
A \emph{name restriction} $(\nu \nma) \config{C}$ binds runtime name $\nma$ in
configuration $\config{C}$, and the right-associative parallel composition
$\config{C} \parallel \config{D}$ denotes configurations $\config{C}$ and
$\config{D}$ running in parallel.

An actor is represented as a 4-tuple $\actor{\threadt}{\hstate}{\istate}$, where
$\threadt$ is a thread that can either be idle with state $\vala$
($\idle{\vala}$);
a term $\tma$ that is not involved in a session;
or $\sessthread{s}{\prole}{\tma}$ denoting
that the actor is evaluating term $\tma$ playing role $\prole$ in session $s$. 
An actor is \emph{active} if its thread is $\tma$ or
$\sessthread{s}{\prole}{\tma}$ (for some $s$, $\prole$, and $\tma$), and
\emph{idle} otherwise.
A handler state $\hstate$ maps
endpoints to handlers, which are invoked when an incoming message is
received and the actor is idle. The initialisation state $\istate$
maps initialisation tokens to callbacks to be invoked whenever a
session is established.
Our reduction rules (Figure~\ref{fig:semantics-2}) make use of indexing notation as syntactic sugar
for parallel composition: for example,
$\actor[a_i]{\threadt_i}{\hstate_i}{\istate_i}_{i \in 1..n}$ is syntactic sugar
for the configuration
$\actor[a_1]{\threadt_1}{\hstate_1}{\istate_1} \parallel \cdots
\parallel \actor[a_n]{\threadt_n}{\hstate_n}{\istate_n}$.

An access point $\ap{\apname}{\apstate}$ has name $\apname$ and state
$\apstate$, where the state maps roles to sets of initialisation tokens for
actors that have registered to take part in the session.
Finally, each session $s$ is associated with a queue $\qproc{s}{\qcontents}$,
where $\qcontents$ is a list of entries $\qentry{\prole}{\qrole}{\ell}{\vala}$
denoting a message $\msg{\ell}{\vala}$ sent from $\prole$ to $\qrole$.

\paragraph*{Initial configurations.}
A program $\tma$ is run by placing it in an \emph{initial configuration}
$(\nu a)(\actor{\tma}{\epsilon}{\epsilon})$.

\paragraph*{Structural congruence and term reduction.}
Structural congruence is the smallest congruence relation defined by the axioms
in Figure~\ref{fig:semantics}. As with the $\pi$-calculus, parallel
composition is associative and commutative, and we have the usual scope
extrusion rule; we write $\fn{\config{C}}$ to refer to the set of free names in a
configuration $\config{C}$. We also include a structural congruence rule on
queues that allows us to reorder unrelated messages; notably this rule maintains
message ordering between pairs of participants.  Consequently, the session-level
queue representation is isomorphic to a set of queues between each pair of
roles.  Term reduction $\tma \teval \tmb$ is standard $\beta$-reduction
(omitted).

\begin{figure}[t]
    {\footnotesize
\headersig{Configuration reduction}{$\config{C} \cevalann{\redlbl} \config{D}$}
    \begin{mathpar}
    \inferrule
    [E-Send]
    { }
    {
        \actor
        { \sessthread{s}{p}{\ctxe[\send{q}{\ell}{\vala}]} }
        { \hstate }
        { \istate }
        \parallel
        \qproc{s}{\qcontents}
\cevalann{s} \\\\
\actor
        { \sessthread{s}{p}{\ctxe[\effreturn{()}]} }
        { \hstate }
        { \istate }
        \parallel
        \qproc{s}{\qcontents {\cdot} \qentry{p}{q}{\ell}{\vala}}
    }
\quad
\inferrule
    [E-React]
    {
        (\ell(x) \mapsto \tma) \in \seq{H}
    }
    {
        \actor
        { \idle{\valb} }
        { \hstate[\storedhandler{s}{p}{\handler{q}{\var{st}}{\seq{H}}}] }
        { \istate }
        \parallel
        \qproc{s}{\qentry{q}{p}{\ell}{\vala} {\cdot} \qcontents} \\\\
        \cevalann{s}
\actor
        { \sessthread{s}{\prole}{\tma \{ \vala / x, \valb / \var{st} \}} }
        { \hstate }
        { \istate }
        \parallel
        \qproc{s}{\qcontents}
    }
\\
\inferrule
    [E-Suspend]
    { }
    { \actor
        {\sessthread{s}{\prole}{\ctxe[\suspend{\vala}{\valb}]}}
        {\hstate}
        {\istate}
\cevalann{\tau}\\\\
\actor
      {\idle{\valb}}
        {\hstate[\storedhandler{s}{\prole}{V}]}
        {\istate}
    }
\quad
\inferrule
    [E-Spawn]
    { }
    { \actor
        { \threadctx[\spawn{\tma}] }
        { \hstate }
        { \istate }
\cevalann{\tau}\\\\
(\nu b)(
      \actor
        { \threadctx[\effreturn{()}] }
        { \hstate }
        { \istate }
        {\,\parallel\,}
      \actor
        [b]
        { \tma }
        { \epsilon }
        { \epsilon }
        )
    }
\quad
\inferrule
    [E-Reset]
    { }
    {
        \actor
          { \tlthreadctx[\effreturn{\vala}] }
          { \hstate }
          { \istate }
\cevalann{\tau}\\\\
\actor
        { \idle{\vala} }
          { \hstate }
          { \istate }
    }

\inferrule
    [E-NewAP]
    { \apname \text{ fresh}}
    {
      \actor
        { \threadctx[\newap{(\role{p}_i : \sta_i )_{i \in I}}] }
        { \hstate }
        { \istate }
\cevalann{\tau}\\\\
(\nu \apname)(
          \actor
            { \threadctx[\effreturn{\apname}] }
            { \hstate }
            { \istate }
\parallel
\ap{\apname}{(\prole_i \mapsto \emptyset)_{i \in I}}
        )
    }
\quad
\inferrule
    [E-Register]
    { \inittok \text{ fresh} }
    { \actor
        {\threadctx[\register{\apname}{\prole}{\vala}]}
        {\hstate}
        {\istate}
        \parallel \ap{\apname}{\apstate[\prole \mapsto \setseq{\inittok'}]}
        \cevalann{\tau}\\\\
        (\nu \inittok)
        (\actor
            {\threadctx[\effreturn{()}]}
            {\hstate}
            {\istate[\maptwo{\inittok}{\vala}]}
            \parallel
        \ap{\apname}{\apstate[\prole \mapsto \setseq{\inittok'} \cup \set{\inittok}]})
    }

    \inferrule
    [E-Init]
    { \sessname \text{ fresh} }
    {
        (\nu \inittok_{\prole_i})_{i \in 1..n}(
        \ap
            {\apname}
            {(\prole_i \mapsto \setseq{\inittok'_{\prole_i}} \cup \set{\inittok_{\prole_i}})_{i \in 1..n} }
        \parallel
        \actor
            [a_i]
            {\idle{\valb_i}}
            {\hstate_i}
            {{\istate_i[\inittok_{\prole_i} \mapsto \vala_i]}}_{i \in 1..n}
            )
\cevalann{\tau}\\\\
(\nu \sessname)(
            \ap{\apname}{(\prole_i \mapsto \setseq{\inittok'_{\prole_i}})_{i \in 1..n}}
            \parallel
            \qproc{s}{\epsilon}
            \parallel
            \actor
                [a_i]
                {\sessthreadd{\sessname}{\prole_i}{\vala_i \app \valb_i}}
                {\hstate_i}
                {\istate_i}_{i \in 1..n}
                )
    }

    \inferrule
    [E-Par]
    { \config{C} \cevalann{l} \config{C}' }
    { \config{C} \parallel \config{D} \cevalann{l} \config{C}' \parallel \config{D} }

\inferrule
    [E-Lift]
    { M \teval N }
    { \actor
        {\threadctx[M]}
        {\hstate}
        {\istate}
\cevalann{\tau}
\actor
        {\threadctx[N]}
        {\hstate}
        {\istate}
    }

    \inferrule
    [E-Nu]
    { \config{C} \cevalann{l} \config{D} }
    { (\nu \nma) \config{C} \cevalann{l - \nma} (\nu \nma) \config{D} }

    \inferrule
    [E-Struct]
    { \config{C} \equiv \config{C}' \\
        \config{C}' \cevalann{l} \config{D}' \\
        \config{D}' \equiv \config{D}
    }
    { \config{C} \cevalann{l} \config{D} }
\end{mathpar}
\[
    \text{where } l - \nma = \tau \text{ if } l = \nma, \text{ and } l \text{ otherwise}
\]

    }
    \caption{Operational semantics (2)}
    \label{fig:semantics-2}
\end{figure}

 \paragraph*{Communication and concurrency.}
It is convenient for our metatheory to annotate each communication reduction
with the name of the session in which the communication occurs, although we
sometimes omit the label where it is not relevant.
Rule
\textsc{E-Send} describes an actor playing role $\prole$ in session $s$ sending
a message $\msg{\ell}{\vala}$ to role $\qrole$: the message is appended to the
session queue and the operation reduces to $\effreturn{()}$.
The \textsc{E-React} rule captures the event-driven nature of the system: if an
actor is idle with state $\vala$, and has a stored handler for
$\roleidx{s}{\prole}$, and there exists a matching message in the
session queue, then the message is dequeued and the message handler is
evaluated with the message payload and state.
If an actor is currently evaluating a computation in the context of a session
$\roleidx{s}{\prole}$, rule \textsc{E-Suspend} evaluates $\suspend{\vala}{\valb}$ by
installing handler $\vala$ for $\roleidx{s}{\prole}$ and returning the actor to
the $\idle{\valb}$ state.
Rule \textsc{E-Spawn} spawns a fresh actor with empty handler and initialisation
state, and \textsc{E-Reset} returns an actor to the $\idle{\vala}$ state once it
has finished evaluating to an updated state $\vala$.

\paragraph*{Session initialisation.}
Rule \textsc{E-NewAp} creates an access point with a fresh name $\apname$ and
empty mappings for each role.
Rule \textsc{E-Register} evaluates
$\register{\apname}{\prole}{\vala}$ 
by creating an initialisation token $\inittok$, storing a mapping from
$\inittok$ to the callback $\vala$ in the requesting actor's initialisation
state, and appending $\inittok$ to the participant set for $\prole$ in
$\apname$.
Finally, \textsc{E-Init} establishes
a session when idle participants are registered for all roles: the rule discards
all initialisation tokens, creates a session name restriction and empty session
queue, and invokes all initialisation callbacks.

\begin{example}
    Consider a simple Ping-Pong example.  We can describe the protocol as:

    \smallmath{
        \mkwd{PingPong} = \setlr{
        \bl
        \role{Pinger}:
            \localselectsingle{Ponger}{\mkwd{Ping}}{\one} \then 
            \localoffersingle{Ponger}{\mkwd{Pong}}{\one} \then \localend,\\
        \role{Ponger}:
            \localoffersingle{Pinger}{\mkwd{Ping}}{\one} \then 
            \localselectsingle{Pinger}{\mkwd{Pong}}{\one} \then \localend
            \el
        }}

        We will abbreviate \role{Pinger} as \role{Pi} and \role{Ponger} as
        \role{Po}.
        The \mkwd{main} function and the initialisation functions for the Pinger
        and Ponger are described as:

        \smallmath{
            \mkwd{main} \defeq
                \letin{\var{ap}}{\newap{\mkwd{PingPong}}} \spawn{\mkwd{pinger}
                \app \var{ap}}; \; {\spawn{\mkwd{ponger} \app \var{ap}}}
                \vspace{-1em}
        }
        
    \begin{minipage}[t]{0.475\textwidth}
        {\scriptsize
        \[
            \blt
        \mkwd{ponger} \defeq \lambda \var{ap} \then
                \register{\var{ap}}{\role{Po}}{\mkwd{pongerCallback}} \\
            \mkwd{pongerCallback} \defeq \lambda () . \\
            {\blt
                            \suspend
                                {
                                    (\handler
                                        {Pi}
                                        {\var{st}}
                                        {\mkwd{Ping} \mapsto
                                        \send{Pi}{\mkwd{Pong}}{()}})
                                }
                                {()}
            \el
            }
            \el
        \] 
    }
    \end{minipage}
    \hfill
    \begin{minipage}[t]{0.475\textwidth}
        {\scriptsize
        \[
            \bl
            \mkwd{pinger} \defeq \lambda \var{ap} \then \register{\var{ap}}{\role{Pi}}{\mkwd{pingerCallback}} \\
\mkwd{pingerCallback} \defeq \lambda () . \\
                {
                \blt
                    \send{Po}{\mkwd{Ping}}{()}; \\
                    \suspend
                        {
                            (\handler
                                {Po}
                                {\var{st}}
                                {\mkwd{Pong} \mapsto ()})
                        }
                        {()}
                \el
                } 
                \el
        \]
        }
    \end{minipage}

    With these defined, we place the \mkwd{main} function in an initial
    configuration, which creates a new access point $p$ (\textsc{E-NewAP}) and
    spawns the Pinger and Ponger actors (\textsc{E-Spawn}):

    \smallermath{
        (\nu a)(\actor{\mkwd{main}}{\epsilon}{\epsilon})
            \:\: \ceval^+ \:\:
            (\nu \var{ping})(\nu \var{pong})(\nu p)(\nu a)
\left(
\bl
    \actor{\idle{()}}{\epsilon}{\epsilon}
    \parallel \actor[\var{ping}]{\mkwd{pinger}}{\epsilon}{\epsilon} 
    \parallel \actor[\var{pong}]{\mkwd{ponger}}{\epsilon}{\epsilon} \\
    \parallel \ap{p}{\role{Pi} \mapsto \emptyset, \role{Po} \mapsto \emptyset}
\el
\right)
}

    At this point, both of the actors can register with the access point
    (\textsc{E-Register}). By registering, the access points generate
    \emph{initialisation tokens} $\iota_1, \iota_2$, which are stored both in
    the access point and also as keys in the actors' initialisation states. 
    The actors then revert to being idle (\textsc{E-Reset}).

\smallermath{
        \ceval^+
        (\nu \inittok_1)(\nu \inittok_2)(\nu \var{ping})(\nu \var{pong})(\nu p)(\nu a)
\left(
\bl
    \actor{\idle{()}}{\epsilon}{\epsilon} \\
    \parallel \actor[\var{ping}]{\idle{()}}{\epsilon}{\inittok_1 \mapsto
    \mkwd{pingerCallback}} \\
    \parallel \actor[\var{pong}]{\idle{()}}{\epsilon}{\inittok_2 \mapsto
    \mkwd{pongerCallback}} \\
    \parallel \ap{p}{\role{Pi} \mapsto \set{\inittok_1},
        \role{Po} \mapsto \set{\inittok_2}}
\el
\right)
}

    Since the access point now has idle registered actors for each role, it
    establishes a session and removes the initialisation tokens
    (\textsc{E-Init}).
    Both actors evaluate their initialisation callbacks in the context of the
    newly-created session:

\smallermath{
        \ceval^+
        (\nu s)(\nu \var{ping})(\nu \var{pong})(\nu p)(\nu a)
\left(
\bl
    \actor{\idle{()}}{\epsilon}{\epsilon} \\
    \parallel \actor[\var{ping}]{\sessthread{s}{Pi}{\mkwd{pingerCallback} \app ()}}{\epsilon}{\epsilon}
    \parallel \actor[\var{pong}]{\sessthread{s}{Po}{\mkwd{pongerCallback} \app ()}}{\epsilon}{\epsilon} \\
    \parallel \ap{p}{\role{Pi}\mapsto \emptyset, \role{Po} \mapsto \emptyset}
    \parallel \qproc{s}{\epsilon}
\el
\right)
    }    

    Following the behaviour in the callbacks, the Ponger suspends awaiting a
    message (\textsc{E-Suspend}), and the Pinger sends a message to the Ponger,
    which is stored in the session queue (\textsc{E-Send}).

    \smallermath{
        \ceval^+
        (\nu s)(\nu \var{ping})(\nu \var{pong})(\nu p)(\nu a)
\left(
\bl
    \actor{\idle{()}}{\epsilon}{\epsilon}
    \parallel \actor[\var{ping}]{ 
        \sessthread{s}{Pi}{
        \suspend
            { (\handler
                    {Po}
                    {\var{st}}
                    {\mkwd{Pong} \mapsto ()}) 
            }
            {()}}
    }{\epsilon}{\epsilon} \\
    \parallel 
        \actor
            [\var{pong}]
            {\idle{()}}
            {\roleidx{s}{Po} \mapsto
            \handler{Pi}{\var{st}}{\mkwd{Ping} \mapsto
        \send{Pi}{\mkwd{Pong}}{()}}}
            {\epsilon} \\
    \parallel \ap{p}{\role{Pi}\mapsto \emptyset, \role{Po} \mapsto \emptyset}
    \parallel \qproc{s}{\qentry{Pi}{Po}{\mkwd{Ping}}{()}}
\el
\right)
    }

    The Pinger can now suspend, awaiting a message from the Ponger
    (\textsc{E-Suspend}).  Since there is a queued message for the idle Ponger,
    we can re-activate the suspended handler (\textsc{E-React}):

    \smallermath{
        \ceval^+
        (\nu s)(\nu \var{ping})(\nu \var{pong})(\nu p)(\nu a)
\left(
\bl
    \actor{\idle{()}}{\epsilon}{\epsilon} \\
    \parallel
        \actor
            [\var{ping}]
            {\idle{()}}
            {\roleidx{s}{Pi} \mapsto \handler
                    {Po}
                    {\var{st}}
                    {\mkwd{Pong} \mapsto ()}}
            {\epsilon} \\
    \parallel 
        \actor
            [\var{pong}]
            { \sessthread{s}{Po}{\send{Pi}{\mkwd{Pong}}{()}}}
            {\epsilon}
            {\epsilon} \\
    \parallel \ap{p}{\role{Pi}\mapsto \emptyset, \role{Po} \mapsto \emptyset}
    \parallel \qproc{s}{\epsilon}
\el
\right)
}

Finally, the Ponger can send a $\mkwd{Pong}$ back to the Pinger, which
activates the stored handler:

    \smallermath{
        \ceval^+
        (\nu s)(\nu \var{ping})(\nu \var{pong})(\nu p)(\nu a)
\left(
\bl
    \actor{\idle{()}}{\epsilon}{\epsilon}
    \parallel
        \actor
            [\var{ping}]
            {\sessthread{s}{Pi}{()}}
            {\epsilon}
            {\epsilon} 
    \parallel 
        \actor
            [\var{pong}]
            { \sessthread{s}{Po}{()}}
            {\epsilon}
            {\epsilon} \\
    \parallel \ap{p}{\role{Pi} \mapsto \emptyset, \role{Po} \mapsto \emptyset}
    \parallel \qproc{s}{\epsilon}
\el
\right)
}
    Both actors have now finished the session and therefore revert to
    being idle (\textsc{E-Reset}).\end{example}

 \section{Metatheory}\label{sec:metatheory}
In order to prove properties about \langname, we define an
extrinsic~\cite{Reynolds00} type system for \langname configurations. This is
only used for our proofs; we do not need to implement it in a typechecker.

Following~\citet{ScalasY19} we begin by showing a type semantics for sets of
local types. Using this semantics we can ensure that collections of local types
are \emph{compliant}, meaning that communicated messages are always compatible
and that communication is deadlock-free, and use this to prove type
preservation, progress, and global progress for \langname configurations.

\begin{figure}

{\footnotesize
\header{Runtime types, environments, and labels}
    \[
        \begin{array}{lrcl}
        \text{Polarised initialisation tokens} & \inittokpol & ::= & \inittokpos \midspace \inittokneg \\
\text{Queue types} & \qty & ::= & \epsilon \midspace \qentry{p}{q}{\ell}{\tya} \cdot \qty \\
        \text{Runtime type environments} & \rtenv & ::= & \cdot \midspace
        \rtenv, \aname \midspace \rtenv, \apname \midspace
        \rtenv, \inittokpol : S \midspace \rtenv, \roleidx{s}{p}: S \midspace \rtenv, s : \qty
        \\
        \text{Labels} & \ltslbl & ::= & \lblsyncsend{s}{\role{p}}{\role{q}}{\ell} \midspace
        \lblsyncrecv{s}{\role{p}}{\role{q}}{\ell} \midspace
        \lblsyncend{s}{\prole}
        \end{array}
    \]

\headersig{Structural congruence (queue types)}{$\qty \equiv \qty'$}
\begin{mathpar}
    \inferrule
    { \role{p}_1 \ne \role{p}_2 \vee \role{q}_1 \ne \role{q}_2 }
    { Q_1 \cdot \qentryy{\role{p}_1}{\role{q}_1}{\ell_1}{\tya_1} \cdot \qentryy{\role{p}_2}{\role{q}_2}{\ell_2}{\tya_2} \cdot Q_2
        \equiv
        Q_1 \cdot \qentryy{\role{p}_2}{\role{q}_2}{\ell_2}{\tya_2} \cdot \qentryy{\role{p}_1}{\role{q}_1}{\ell_1}{\tya_1} \cdot Q_2
    }
\end{mathpar}

\headersig{Runtime type environment reduction}{$\rtenv\lbleval{\ltslbl} \rtenv'$}
\[
    \begin{array}{lrcl}
        \textsc{Lbl-Send} &
        \rtenv, \roleidx{s}{\prole} : \localselect{q}{\msg{\ell_i}{\tya_i}.\sta_i}_{i \in I}, s : \qty
        & \lbleval{\lblsyncsend{s}{p}{q}{\ell_j}} &
        \rtenv, \roleidx{s}{\prole} : \sta_j, s : Q \cdot \qentry{p}{q}{\ell_j}{\tya_j}  \quad (j \in I)
\\
\textsc{Lbl-Recv} &
        \rtenv, \roleidx{s}{p} : \localoffer{q}{\msg{\ell_i}{\tya_i}.\sta_i}_{i \in I},
         s : \qentry{q}{p}{\ell_j}{\tya_j} \cdot Q
                               &
         \lbleval{\lblsyncrecv{s}{q}{p}{\ell_j}}
                               &
        \rtenv, \roleidx{s}{p} : \sta_j, s : Q
        \quad (\text{if } j \in I)
\\
\textsc{Lbl-End} &
        \rtenv, \roleidx{s}{\prole} : \localend & \lbleval{\lblsyncend{s}{\prole}} & \rtenv
\\
\textsc{Lbl-Rec} &
        \rtenv, \roleidx{s}{\prole} : \recty{X}{S} & \lbleval{\ltslbl} & \rtenv' \quad (\text{if }
        \rtenv, \roleidx{s}{\prole} : S \{ \recty{X}{S} / X \} \lbleval{\ltslbl} \rtenv')
    \end{array}
\]
}
    \caption{Labelled transition system on runtime type environments}
    \label{fig:lts}
\end{figure}

\paragraph*{Relations.} We write $\mathcal{R}^?$, $\mathcal{R}^+$, and
$\mathcal{R}^*$ for the reflexive, transitive, and reflexive-transitive closures
of a relation $\mathcal{R}$ respectively. We write $\mathcal{R}_1 \mathcal{R}_2$
for the composition of relations $\mathcal{R}_1$ and $\mathcal{R}_2$.

\paragraph*{Runtime types and environments.}
Runtime environments are used to type configurations and to define behavioural
properties on sets of local types.
Unlike type environments $\tyenv$, runtime type environments $\rtenv$ are
\emph{linear} to ensure safe use of session endpoints, and also to
ensure that there is precisely one instance of each actor and access point.
Runtime type environments can contain actor names $a$; access point names $\apname$;
\emph{polarised} initialisation tokens $\inittokpol : S$ (since each
initialisation token is used twice: once in the access point and one inside an
actor's initialisation state); session endpoints
$\roleidx{s}{\prole} : S$; and finally session queue types $s : \qty$. Queue
types mirror the structure of queue entries and consist of a series of triples
$\qentry{\prole}{\qrole}{\ell}{\tya}$. We include structural congruence on queue
types to match structural congruence on queues, and extend this to runtime
environments.

\paragraph*{Labelled transition system on environments.}
Figure~\ref{fig:lts} shows the LTS on runtime type environments.
The \textsc{Lbl-Send} reduction gives the behaviour of an output session type
interacting with a queue: supposing we send a message with some label
$\ell_j$ from $\prole$ to $\qrole$, we advance the session type for
$\prole$ to the continuation $\sta_j$ and add the message to the end of the
queue. The \textsc{Lbl-Recv} rule handles receiving and works similarly, instead
\emph{consuming} the message from the queue. Rule \textsc{Lbl-End} allows
us to discard a session endpoint from the environment if it does not support
any further communication, and \textsc{Lbl-Rec} allows reduction of
recursive session types by considering their unrolling.
We write $\rtenv \equivsynceval \rtenv'$ if $\rtenv \equiv \lbleval{\ltslbl}
\equiv \rtenv'$ for some synchronisation label $\ltslbl$, and conversely write
$\rtenv \not\equivsynceval{}$ if there exists no $\rtenv'$ such that $\rtenv
\equivsynceval \rtenv'$.

\paragraph*{Protocol Properties}

In order to prove type preservation and progress properties on \langname
configurations, we need to ensure each protocol in the system is
\emph{compliant}, meaning that it is safe and deadlock-free.
\emph{Safety} is the minimum we can expect from a protocol in order for us to
prove type preservation: a safe runtime type environment ensures that
communication does not introduce type errors. Intuitively, safety ensures that a
message received from a queue is of the expected type, thereby ruling out
communication mismatches; safety properties must also hold under unfoldings of
recursive session types and safety must be preserved by environment reduction.
Deadlock-freedom on runtime type environments requires that every message that
is sent in a protocol can eventually be received, and that a participant will never
wait for a message that will never arrive.

\begin{definition}[Compliance]
    A runtime environment $\rtenv$ is \emph{compliant}, written
    $\compliant{\rtenv}$, if it is \emph{safe} and \emph{deadlock-free}:

    \begin{description}[leftmargin=0.1em]
        \item[Safe]
            An environment $\rtenv$ is \emph{safe}, written $\safe{\rtenv}$, if:
            \begin{itemize}[leftmargin=*]
                \item $\rtenv = \rtenv', \roleidx{s}{\prole} :
                    \localoffer{q}{\msg{\ell_i}{\tya_i}.\sta_i}_{i \in I}, s : \qty$
                    with $\qty \equiv \qentry{\qrole}{\prole}{\ell_j}{\tyb_j} \cdot \qty'$
                    implies $j \in I$ and $\tyb_j = \tya_j$; and
                \item $\rtenv = \rtenv', \roleidx{s}{\prole} : \recty{\recvar}{\sta}$
                    implies $\safe{\rtenv', \roleidx{s}{\prole} : \sta \{
                    \recty{\recvar}{\sta} / \recvar \}}$; and
                \item $\safe{\rtenv}$ and $\rtenv \equivsynceval \rtenv'$ implies
                    $\safe{\rtenv'}$.
            \end{itemize}
        \item[Deadlock-free]
            An environment $\rtenv$ is \emph{deadlock-free}, written
            $\df{\rtenv}$, if
            ${\rtenv} \mathop{\equivsynceval{}^{\!\!\!\!*}} {\rtenv'}
            {\not\equivsynceval{}}\!\!$ implies
                $\rtenv' = s : \emptyq$.
    \end{description}
A \emph{protocol} $\set{\prole_i : \sta_i}_{i \in 1..n}$ is compliant if
    $\compliant{\roleidxx{s}{\prole_1} : \sta_1, \ldots, \roleidxx{s}{\prole_n}
    : \sta_n}$ for an arbitrary $s$.
\end{definition}

Checking compliance for an \emph{asynchronous} protocol is undecidable in
general~\cite{ScalasY19}, but various sound and tractable mechanisms can ensure
it in practice. For example, syntactic projections from global types produce
safe and deadlock-free sets of local types~\cite{ScalasY19}.  Furthermore,
multiparty compatibility~\cite{DenielouY13} allows safety to be verified by
bounded model checking; this is the core approach implemented in
Scribble~\cite{HuY17}, used by our implementation.

We have therefore designed our type system to be agnostic to any specific
implementation method for validating compliance, as common in recent MPST language
design papers~(e.g.,~\cite{HarveyFDG21, LagaillardieNY22}).

\begin{figure}
    {\footnotesize 
    \headersig{Configuration typing rules}{$\cseq[]{\tyenv}{\rtenv}{\config{C}}$}
    \begin{mathpar}
    \inferrule
    [T-APName]
    { \cseq
        {\tyenv, \apname : \apty{(\prole_i : \sta_i)_{i \in I}}}
        {\rtenv, \apname}
        {\config{C}}
    }
    { \cseq{\tyenv}{\rtenv}{(\nu \apname)\config{C}} }
\quad
    \inferrule
    [T-InitName]
    { \cseq{\tyenv}{\rtenv, \inittokpos : S, \inittokneg : S}{\config{C}} }
    { \cseq{\tyenv}{\rtenv}{(\nu \inittok) \config{C}} }
\quad
    \inferrule
    [T-SessionName]
    { \rtenv' = \{ \roleidxx{s}{\role{p}_i} : S_{\role{p}_{i}} \}_{i \in 1..n},
        s : \qty \\\\
        \compliant{\rtenv'} \\\\
        s \not\in \snames{\rtenv} \\
      \cseq{\tyenv}{\rtenv, \rtenv'}{\config{C}} \\
    }
    { \cseq{\tyenv}{\rtenv}{(\nu s)\config{C}} }
\quad
    \inferrule
    [T-ActorName]
    { \cseq{\tyenv}{\rtenv, a}{\config{C}}}
    { \cseq{\tyenv}{\rtenv}{(\nu \aname)\config{C}}}

    \inferrule
    [T-Par]
    {
        \cseq{\tyenv}{\rtenv_1}{\config{C}} \\
        \cseq{\tyenv}{\rtenv_2}{\config{D}}
    }
    {\cseq{\tyenv}{\rtenv_1, \rtenv_2}{\config{C} \parallel \config{D}}}

    \inferrule
    [T-AP]
    {
        \apname : \apty{(\prole_i: \sta_i)_{i \in I}} \in \tyenv \\\\
        \apseq{\set{\prole_i : \sta_i}_{i \in I}}{\rtenv}{\apstate} \\
        \compliant{\set{\prole_i: S_i}_{i \in I}} \\
    }
    { \cseq
        {\tyenv}
        {\rtenv, \apname}
        {\ap{\apname}{\apstate}}
    }

    \inferrule
    [T-Actor]
    {
        \threadseq{\tyenv}{\rtenv_1}{\tya}{\threadt} \\\\
        \hstateseq{\tyenv}{\rtenv_2}{\tya}{\hstate} \\
        \istateseq{\tyenv}{\rtenv_3}{\tya}{\istate} \\
    }
    { \cseq
        { \tyenv }
        { \rtenv_1, \rtenv_2, \rtenv_3, \aname }
        { \actor
            {\threadt}
            {\hstate}
            {\istate}
        }
    }

    \inferrule*
    [left=T-EmptyQueue]
    { } { \cseq{\tyenv}{s : \epsilon}{\qproc{s}{\epsilon}} }

    \inferrule*
    [left=T-ConsQueue]
    { \vseq{\tyenv}{V}{\tya} \\ \cseq{\tyenv}{s : Q}{\qproc{s}{\sigma}} }
    { \cseq{\tyenv}{s : (\qentry{p}{q}{\ell}{\tya} \cdot Q)}{\qproc{s}{\qentry{p}{q}{\ell}{\vala} \cdot \sigma}} }
\end{mathpar}
\vspace{1em}

\begin{minipage}{0.49\textwidth}
\headersig{Access point typing}{$\apseq{(\prole_i : \sta_i)_{i}}{\rtenv}{\apstate}$}
\begin{mathpar}
    \inferrule*
    [left=TA-Empty]
    { }
    { \apseq{(\prole_i : \sta_i)_{i \in 1..n}}{\cdot}{\cdot}}

    \inferrule
    [TA-Entry]
    { j \in I \\ \apseq{(\prole_i : \sta_i)_{i \in I}}{\rtenv}{\apstate} }
    { \apseq
        {(\prole_i : \sta_i)_{i \in I}}
        {\rtenv, \setseq{\inittokneg : \sta_j}}
        {\apstate[\prole_j \mapsto \setseq{\inittok}]}
    }
\end{mathpar}
\end{minipage}
\hfill
\begin{minipage}{0.49\textwidth}
\headersig{Thread state typing}{$\threadseq{\tyenv}{\rtenv}{\tya}{\threadt}$}
\begin{mathpar}
    \inferrule*
    [left=TT-Idle]
    { \vseq{\tyenv}{\vala}{\tya} }
    { \threadseq{\tyenv}{\cdot}{\tya}{\idle{\vala}}}

    \inferrule
    [TT-Sess]
    {
        \mtseqst{\tyenv}{\tya}{S}{M}{\tya}{\localend}
    }
    { \threadseq{\tyenv}{\roleidx{s}{\prole} : S}{\tya}{\sessthread{s}{p}{M}}}
\quad
\inferrule
    [TT-NoSess]
    { \mtseqst{\tyenv}{\tya}{\localend}{M}{\tya}{\localend} }
    { \threadseq{\tyenv}{\cdot}{\tya}{M} }
\end{mathpar}
\end{minipage}
\vspace{1em}

\begin{minipage}{0.425\textwidth}
    \headersig{Handler state typing}{$\hstateseq{\tyenv}{\rtenv}{\tya}{\hstate}$}
\begin{mathpar}
    \inferrule
    [TH-Empty]
    { }
    { \hstateseq{\tyenv}{\cdot}{\tya}{\epsilon} }
\quad
    \inferrule
    [TH-Handler]
    {
        \vseq{\tyenv}{\vala}{\handlerty{\stin}{\tya}} \\
\hstateseq{\tyenv}{\rtenv}{\tya}{\hstate}
    }
    { \hstateseq
       {\tyenv}
       {\rtenv, \roleidx{s}{\prole}: \stin}
       {\tya}
       {\hstate[\storedhandler{s}{\prole}{\vala}]}
   }
\end{mathpar}
\end{minipage}
\hfill
\begin{minipage}{0.525\textwidth}
    \headersig{Initialisation state typing}{$\istateseq{\tyenv}{\rtenv}{\tya}{\istate}$}
\begin{mathpar}
    \inferrule
    [TI-Empty]
    { }
    { \istateseq{\tyenv}{\cdot}{\tya}{\epsilon} }
\quad
\inferrule
    [TI-Callback]
    {
\valseq{\tyenv}{\vala}{\sttyfun{\tya}{\tya}{\sta}{\localend}{\tya}} \\
        \istateseq{\tyenv}{\rtenv}{\tya}{\istate}
    }
    { \istateseq{\tyenv}{\rtenv, \inittokpos : \sta}{\tya}{\istate[\maptwo{\inittok}{\vala}]} }
\end{mathpar}
\end{minipage}
\[
    \begin{array}{rcl}
        \snames{\rtenv} & = & \set{ s \mid s : Q \in \rtenv \vee
            \exists \prole . (\roleidx{s}{\prole} \in \dom{\rtenv}) }
    \end{array}
\]
    }
    \vspace{-1.5em}
    \caption{Typing of Configurations}
    \label{fig:rt-typing}
    \vspace{-1.4em}
\end{figure}

\subsection{Configuration Typing}

Figure~\ref{fig:rt-typing} shows the typing rules for \langname configurations.
The configuration typing judgement $\cseq{\tyenv}{\rtenv}{\config{C}}$ can be
read, ``under type environment $\tyenv$ and runtime type environment $\rtenv$,
configuration $\config{C}$ is well typed''. 
We have three rules for name restrictions: read bottom-up, \textsc{T-APName}
adds $\apname$ to both environments, and
rule \textsc{T-InitName} adds tokens of both polarities to the runtime type
environment.
Rule \textsc{T-SessionName} is key to the generalised multiparty session typing
approach introduced by~\citet{ScalasY19}: to type a name restriction
$(\nu s) \config{C}$,
the type environment $\rtenv'$
consists of a set of session endpoints $\set{\roleidxx{s}{\prole_i}}_i$
with session types $\sta_{\prole_i}$, along with a session queue $s : \qty$.
We call $\rtenv'$ the \emph{session environment for $s$},
which must be compliant.
The condition $s \not\in \snames{\rtenv}$ ensures that no other endpoint or
queue with session name $s$ may be present in the initial environment.

Rule \textsc{T-Par} types two parallel subconfigurations under disjoint runtime
environments. Rule \textsc{T-AP} types an access point: it requires that the
access point reference is included in $\tyenv$ and through the auxiliary
judgement $\apseq{(\prole_i : \sta_i)_i}{\rtenv}{\apstate}$ ensures that each
initialisation token in the access point has a compatible type. We also
require that the protocol supported by the access point is compliant.

Rule \textsc{T-Actor} types an actor $\actor{\threadt}{\hstate}{\istate}$
using three auxiliary judgements.
The thread state typing judgement
$\threadseq{\tyenv}{\rtenv}{\tyc}{\threadt}$ 
ensures that a thread either performs all pending communication
actions, or it suspends.
The handler typing judgement $\hstateseq{\tyenv}{\rtenv}{\tyc}{\hstate}$
ensures that the stored handlers match the types in the runtime environments,
and the initialisation state typing judgement 
$\istateseq{\tyenv}{\rtenv}{\tyc}{\istate}$ ensures that all initialisation
callbacks match the session type of the initialisation token.
Finally, \textsc{T-EmptyQueue} and \textsc{T-ConsQueue} ensure that queued
messages match the queue type.

\subsection{Properties}\label{sec:metatheory:properties}

With configuration typing defined, we can begin to describe the properties
enjoyed by $\langname$.

\subsubsection{Preservation}

Typing is preserved by reduction; consequently we know that communication actions
must match those specified by the session type. Proofs can be found in the extended version.

\begin{restatable}[Preservation]{theorem}{preservation}\label{thm:preservation}
    Typability is preserved by structural congruence and reduction.\hfill
    \begin{itemize}
        \item[($\equiv$)] If $\cseq{\tyenv}{\rtenv}{\config{C}}$ and $\config{C} \equiv \config{D}$
            then there exists some $\rtenv' \equiv \rtenv$ such that $\cseq{\tyenv}{\rtenv'}{\config{D}}$.
        \item[($\rightarrow$)] If $\cseq{\tyenv}{\rtenv}{\config{C}}$ with
            $\safe{\rtenv}$ and $\config{C} {\rightarrow} \config{D}$, then
            there exists some $\rtenv'$ such that
            $\rtenv \equivsynceval^?  \rtenv'$
            where $\safe{\rtenv'}$ and $\cseq{\tyenv}{\rtenv'}{\config{D}}$.
    \end{itemize}
\end{restatable}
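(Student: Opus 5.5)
We prove the two parts separately, first establishing a handful of standard auxiliary lemmas about the term language. Specifically, we first prove \emph{weakening} for value and computation typing (adding unused variables to $\tyenv$). Next comes \emph{substitution}: if $\vseq{\tyenv}{\vala}{\tya}$ and $\mtseqst{\tyenv, x : \tya}{\tyc}{S}{M}{\tyb}{T}$ then $\mtseqst{\tyenv}{\tyc}{S}{\subst{M}{\vala}{x}}{\tyb}{T}$, and similarly for values. Third is \emph{subject reduction for term reduction}: if $\mtseqst{\tyenv}{\tyc}{S}{M}{\tya}{T}$ and $M \teval N$ then $N$ has the same typing, by case analysis on $\beta$-rules using substitution. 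Finally we need a \emph{replacement lemma} for evaluation contexts. If $\ctxe[M]$ is typable with precondition $S$, then $M$ is typable with precondition $S$ and some type $\tyb$ and postcondition $S'$. Moreover, for any $N$ typable with the same pre- and postcondition and type, $\ctxe[N]$ is typable with the original typing. This follows by induction on $\ctxe$ using \textsc{T-Let}, the only context former. The same lemma lifts to thread contexts $\threadctx$ via \textsc{TT-Sess} and \textsc{TT-NoSess}.

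For the ($\equiv$) part we proceed by induction on the derivation of $\config{C} \equiv \config{D}$, checking each axiom in both directions. Commutativity and associativity of $\parallel$ follow because runtime environments are unordered and \textsc{T-Par} splits them disjointly. Swapping restrictions and scope extrusion hold because the name $\nma \notin \fn{\config{C}}$, so by a strengthening lemma $\config{C}$ cannot use $\nma$ in its environment; for session names the side condition $s \notin \snames{\rtenv}$ is preserved. Garbage collection of $(\nu s)(\qproc{s}{\epsilon})$ holds because the session environment for $s$ is then just $s : \epsilon$, which is trivially compliant. The queue-reordering axiom is exactly why $\rtenv' \equiv \rtenv$ rather than equality: we reorder the queue type by the matching congruence on queue types, using \textsc{T-ConsQueue}.

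For the ($\rightarrow$) part we proceed by induction on the derivation of $\config{C} \cevalann{l} \config{D}$. \textsc{E-Par} and \textsc{E-Nu} are handled by the inductive hypothesis. In the \textsc{E-Nu} case for a session name, we must show that if the session environment $\rtenv'$ for $s$ takes a step, it is still compliant. Safety is preserved by the closure clause of $\safe{\cdot}$. Deadlock-freedom is also preserved, since $\df{\cdot}$ quantifies over all reduction sequences from $\rtenv'$, hence from its successor. Since the label becomes $\tau$ outside, the external environment is unchanged. \textsc{E-Struct} uses part ($\equiv$) together with the fact that safety is closed under $\equiv$. The $\tau$-rules with no session effect need only the auxiliary lemmas and no environment step:
\begin{itemize}
\item \textsc{E-Lift}, \textsc{E-Spawn} (the new actor is typed by \textsc{TT-NoSess} and \textsc{T-ActorName}), \textsc{E-Reset} (return value typed at state type $\tya$), and \textsc{E-NewAP} (compliance from \textsc{T-NewAP}) are straightforward.
\item \textsc{E-Register} introduces both tokens $\inittokpos, \inittokneg$ via \textsc{T-InitName}, matching \textsc{TA-Entry} and \textsc{TI-Callback}.
\item \textsc{E-Suspend} moves $\roleidx{s}{\prole} : \stin$ from the thread typing to the handler typing (\textsc{TH-Handler}), by inverting \textsc{T-Suspend} under the replacement lemma.
\end{itemize}

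The session-related rules are the core cases. \textsc{E-Send}: inverting \textsc{T-Send} gives $\roleidx{s}{\prole}$ a selection type (after unfolding recursion). We take $\rtenv \lbleval{\lblsyncsend{s}{p}{q}{\ell_j}} \rtenv'$ via \textsc{Lbl-Send}, and retype the queue with \textsc{T-ConsQueue}. \textsc{E-React}: the handler typing gives $\roleidx{s}{\prole} : \stin$ with $\stin = \localoffer{q}{\ldots}$, and the queue type begins (up to $\equiv$) with an entry from $\qrole$ to $\prole$. Safety of $\rtenv$ then guarantees that the label is among those offered and the payload type matches. This is exactly where $\safe{\rtenv}$ is used. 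We then step by \textsc{Lbl-Recv} and type the body by substitution from \textsc{TV-Handler}. \textsc{E-Init} is the \textbf{main obstacle}. We must collect the linear token pairs $\inittokpos : \sta_i, \inittokneg : \sta_i$ spread across the access point and the $n$ actors, and consume them. We then build a fresh session environment $\{\roleidxx{s}{\prole_i} : \sta_i\}_i, s : \epsilon$ whose compliance comes from the compliance premise of \textsc{T-AP}. After that we type each $\vala_i \app \valb_i$ via \textsc{TT-Sess} from \textsc{TI-Callback}, and reassemble the configuration with \textsc{T-SessionName}. Because the rule is stated under binders, this step requires careful manipulation with scope extrusion and the ($\equiv$) part. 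No environment step is needed, since the new session environment sits under $(\nu s)$.
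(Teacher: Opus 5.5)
Your plan follows the paper's proof closely: induction on $\equiv$ and on $\ceval$ with substitution, replacement, weakening and term preservation; \textsc{Lbl-Send}/\textsc{Lbl-Recv} for \textsc{E-Send}/\textsc{E-React}, with safety used in \textsc{E-React}; and \textsc{T-AP}'s compliance premise feeding \textsc{T-SessionName} in \textsc{E-Init}. However, two cases fail as you state them.

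The first is \textsc{E-Reset}, which you list as needing no environment step. When $\tlthreadctx = \sessthread{s}{p}{[~]}$, \textsc{TT-Sess} and \textsc{T-Return} (whose pre- and postcondition coincide) force the actor's runtime environment to contain $\roleidx{s}{p} : \localend$. The reduct $\idle{\vala}$, however, is typed by \textsc{TT-Idle} under the empty runtime environment. Runtime environments are linear with no weakening, so the reduct is not typable under the original $\rtenv$. You must step $\rtenv, \roleidx{s}{p} : \localend \lbleval{\lblsyncend{s}{\prole}} \rtenv$ by \textsc{Lbl-End}, as the paper does; this is one reason the statement permits $\rtenv \equivsynceval^? \rtenv'$ even for a $\tau$-reduction.

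This also means your \textsc{E-Nu} argument should not be keyed on labels (``the label becomes $\tau$ outside''), since a $\tau$-step can still move the session environment of $s$. In \textsc{E-Nu} you first need $\safe{\rtenv, \rtenv_s}$, from $\safe{\rtenv}$, compliance of $\rtenv_s$ and disjointness of session names, just to invoke the IH. You then need to split the IH's environment step into a step of $\rtenv$ alone or of $\rtenv_s$ alone, since each LTS rule touches one session (Lemmas~\ref{lem:separate-envs-safe} and~\ref{lem:env-reductions-separate}). \textsc{E-Par} similarly needs Lemmas~\ref{lem:split-safe} and~\ref{lem:split-safety} to pass safety down to the subconfiguration and back up.

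The second problem is your replacement lemma, which only allows a replacement with the same pre- and postcondition. In \textsc{E-Send} the hole changes from $\send{q}{\ell_j}{\vala}$, with precondition $\localselect{q}{\ldots}$, to $\effreturn{()}$, with precondition $S_j$, so your lemma does not apply. You need the paper's form (Lemma~\ref{lem:replacement}). Suppose the hole's subderivation concludes precondition $\sta$, type $\tyb$ and postcondition $\stb'$, and $\tmb$ has some precondition $\sta'$ with the same $\tyb$ and $\stb'$. Then $\ctxe[\tmb]$ has precondition $\sta'$ and the original type and postcondition. This holds because the hole of $\ctxe$ is always in leftmost position, so its precondition is that of the whole term. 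Likewise, \textsc{T-ConsQueue} conses at the front whereas \textsc{E-Send} appends, so you need a small append lemma (Lemma~\ref{lem:queue}). With these repairs your argument goes through along the paper's lines.
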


\begin{remark}[Session Fidelity]
    Traditionally, \emph{session fidelity} is presented as a property that all
    communication in a system conforms to its associated session
    type, i.e., that if a process performs a communication action then there
    is a corresponding (meta-theoretical) type reduction~\cite{HondaYC08, CoppoDYP16}.
Fidelity is often an implicit corollary of type preservation in works on
    functional session types~(e.g., \cite{GayV10, LindleyM15, FowlerLMD19,
    HarveyFDG21}).
Alternatively, session fidelity in~\cite{ScalasY19} (and derived works)
    refer to session fidelity as the property that at least one typing context
    reduction can be reflected by a process.  
We follow the former definition and account for fidelity through our
    preservation theorem.
\end{remark}

\subsubsection{Progress}
In general, just because two protocols are individually deadlock-free
\emph{does not} mean that the system as a whole is deadlock-free, due to the
possibility of inter-session deadlocks. For example, consider the following two
trivially deadlock-free protocols:

\raisebox{0.2em}{
    \smash{
    \smallmath{
    \begin{minipage}{0.45\textwidth}
        \[
            \bl
            \{
            \prole: \localoffer{\qrole}{\msg{\ell_1}{\one} \then \localend }, \:\:
        \qrole: \localselect{\prole}{\msg{\ell_1}{\one} \then \localend} \}
            \el
        \]
    \end{minipage}
    \hfill
    \begin{minipage}{0.45\textwidth}
        \[
            \bl
            \{
            \rrole: \localoffer{\srole}{\msg{\ell_2}{\one} \then \localend }, \:\:
            \srole: \localselect{\rrole}{\msg{\ell_2}{\one} \then \localend}
            \}
            \el
        \]
    \end{minipage}
    }
}
}

Even with an asynchronous semantics, a standard multiparty process calculus
would admit the following deadlocking process, since each send is blocked by a
receive:

    \raisebox{0em}{
        \smash{
            {\footnotesize
                \begin{mathpar}
                    s_1[\prole][\qrole] \& \msg{\ell_1}{x} \then s_2[\rrole][\srole] \oplus
                    \msg{\ell_2}{y} \then \mathbf{0}
                    \;
                    \parallel
                    \;
                    s_2[\srole][\rrole] \& \msg{\ell_2}{a} \then s_1[\qrole][\prole] \oplus
                    \msg{\ell_1}{b} \then \mathbf{0}
                \end{mathpar}
            }
        }
    }

There are various approaches to ruling out inter-session deadlocks: some
approaches restrict each subprocess to only play a single role in a single
session~(e.g.,~\cite{ScalasY19}); this would rule out the above example but is
too restrictive for our setting. Other approaches (e.g.,~\cite{CoppoDYP16})
overlay additional interaction type systems to rule out inter-process deadlocks,
again at the cost of expressiveness and type system complexity. Finally,
logically-inspired approaches to multiparty session typing
(e.g.,~\cite{CarboneLMSW16}) treat sessions as monolithic processes $(\nu s)(P_1
\parallel \cdots \parallel P_n)$ that mean that such cycles cannot arise.
Programming with such processes requires ``multi-fork'' style session
initiations that combine channel- and process creation, and therefore are
inapplicable to our programming model.

\langname \emph{does not} suffer from inter-process deadlocks because of our
event-driven programming style where
although an actor is involved in many sessions at a time, only one is
\emph{running} at once, and code within handlers does not block.
Since an actor yields and installs a handler whenever it needs to receive a
message, the actor can then schedule any handler that has a waiting message.

To show this intuition formally, we start by classifying a \emph{canonical form}
for configurations.

\begin{definition}[Canonical form]
    A configuration $\config{C}$ is in \emph{canonical form} if it can be written:

    \footnotesize{
        \raisebox{-0.5em}{
        \smash{
            \begin{mathpar}
            (\nu \tilde{\inittok})
            (\nu \apname_{i \in 1..l})(\nu \sessname_{j \in 1..m})
            (\nu \aname_{k \in 1..n})(
            \ap{\apname_i}{\apstate_i}_{i \in 1..l} \parallel 
            (\qproc{s_j}{\qcontents_j})_{j \in 1..m} \parallel
            \actor[a_k]{\threadt_k}{\hstate_k}{\istate_k}_{k \in 1..n}
            )
            \end{mathpar}
        }
    }
    }
\end{definition}

Every well typed configuration can be written in canonical form; the result
follows from the structural congruence rules and Theorem~\ref{thm:preservation}.

Compliance requires the session \emph{types} in every session to satisfy
progress. Due to our event-driven design, the property transfers to
\emph{configurations}: a non-reducing closed configuration cannot be blocked on
any session communication and so cannot contain any sessions.

\emph{Progress} states that since (by compliance) all protocols are
deadlock-free, a configuration can either reduce, or it contains no
sessions and no further sessions can be established.
\begin{restatable}[Progress]{theorem}{progress}\label{thm:progress}
If $\cseq{\cdot}{\cdot}{\config{C}}$, then either there
exists some $\config{D}$ such that $\config{C} \ceval \config{D}$, or
$\config{C}$ is structurally congruent to the following canonical form:

\smash{
{\small
\begin{mathpar}
    (\nu \tilde{\inittok})(\nu \apname_{i \in 1..m})(\nu \aname_{j \in 1..n})
    (
    \ap{\apname_1}{\apstate_1}_{i \in 1..m} \parallel
    \actor[a_j]{\idle{\vala_j}}{\epsilon}{\istate_j}_{j \in 1..n}
    )
\end{mathpar}
}
}
\end{restatable}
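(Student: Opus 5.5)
We start from the canonical form: since $\cseq{\cdot}{\cdot}{\config{C}}$, by Theorem~\ref{thm:preservation} ($\equiv$) and the structural congruence rules we may assume $\config{C}$ is written as
$(\nu \tilde{\inittok})(\nu \apname_{i})(\nu \sessname_{j})(\nu \aname_{k})(\ap{\apname_i}{\apstate_i}_{i} \parallel (\qproc{s_j}{\qcontents_j})_{j} \parallel \actor[a_k]{\threadt_k}{\hstate_k}{\istate_k}_{k})$.
We assume $\config{C}$ does not reduce and show that it has the stated idle form. This takes three steps: show that every thread is idle, then that no sessions remain, and finally that handler states are empty.

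\textbf{Step 1: all threads are idle.} First we need a standard term-level progress lemma. If $\mtseqst{\tyenv}{\tyc}{S}{M}{\tya}{T}$ where $\tyenv$ contains only access point names, then either $M = \effreturn{\vala}$, or $M \teval N$, or $M = \ctxe[N]$ with $N$ one of $\spawn{\tma}$, $\send{p}{\ell}{\vala}$, $\suspend{\vala}{\valb}$, $\newap{\proto}$ or $\register{\vala}{\prole}{\valb}$. This follows by induction on the typing derivation, using canonical forms for values; the only evaluation context is $\calcwd{let}$.

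Now take an active actor. In each case it can step:
\begin{itemize}
\item a return at top level steps by \textsc{E-Reset};
\item a $\beta$-step steps by \textsc{E-Lift};
\item spawn and newAP step unconditionally;
\item register steps because typing (\textsc{T-Register}) places $\apname$ in $\tyenv$, so by linearity of $\rtenv$ and closedness the access point $\ap{\apname}{\apstate}$ must be present in the configuration;
\item a send can only be typed with a selection precondition, so by \textsc{TT-NoSess} it must occur inside $\sessthread{s}{p}{-}$, and then the queue for $s$ is present by \textsc{T-SessionName}, so \textsc{E-Send} fires;
\item suspend likewise needs an input precondition, so it occurs in a session thread and \textsc{E-Suspend} fires.
\end{itemize}
Hence every $\threadt_k$ is $\idle{\vala_k}$, typed with empty runtime environment by \textsc{TT-Idle}.

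\textbf{Step 2: no sessions remain.} This is the main obstacle. Fix a session $s_j$ and its session environment $\rtenv_s$, which is compliant by \textsc{T-SessionName}. Since all threads are idle, each endpoint $\roleidx{s}{p}$ in $\rtenv_s$ is owned by exactly one handler state via \textsc{TH-Handler}. Its type is therefore an input type $\localoffer{q}{\ldots}$ (which also rules out $\localend$ and selection types for endpoints held in handler states). Deadlock-freedom of $\rtenv_s$ (with zero reductions) says that if $\rtenv_s \not\equivsynceval$ then $\rtenv_s = s : \emptyq$.

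So suppose instead that $\rtenv_s \equivsynceval$. The only possible transitions are \textsc{Lbl-Recv}, because there are no selection types and no $\localend$. A recv transition means that, up to queue congruence, the queue has some head entry $\qentry{q}{p}{\ell}{\tya}$ where $\roleidx{s}{p}$ awaits $\qrole$. Safety gives $\ell$ among the handler's labels, and by \textsc{TV-Handler} the stored handler is syntactically $\handler{q}{\var{st}}{\seq{H}}$ with a clause for $\ell$. The owning actor is idle, so \textsc{E-React}, applied modulo structural congruence on queues, fires. This contradicts non-reduction.

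Therefore every session environment is $s : \emptyq$, meaning no endpoints remain and the queue is empty. Each $(\nu s)(\qproc{s}{\epsilon})$ is then garbage-collected by the structural congruence axiom. The delicate points are matching the type-level queue congruence to the term-level one, and using the equi-recursive view to unfold $\mu$-types to input types.

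\textbf{Step 3: handler states are empty.} Once no session names remain, the typing judgement for each handler state has no endpoints to draw on, so $\hstate_k = \epsilon$. Initialisation states $\istate_k$ and access point states may be arbitrary. We do not need to argue that \textsc{E-Init} is inapplicable, since the theorem only requires the displayed form, which tolerates registered tokens. This yields exactly the canonical form in the statement.
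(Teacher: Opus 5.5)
Your proposal is correct and follows essentially the same route as the paper's proof. Both go through the canonical form and thread progress to reach an all-idle configuration; linearity then places every endpoint in a stored handler at an input type, and compliance (deadlock-freedom at zero steps) forces a \textsc{Lbl-Recv} transition that \textsc{E-React} realises, so only empty sessions can remain. You are in places more careful than the paper: you note that send and suspend can only occur inside a session thread, you transfer queue congruence from queue types to queues, and you explicitly argue that the handler states are empty, which the statement requires but the paper's final form leaves as $\hstate_j$.
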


\subsubsection{Global Progress}
\emph{Global progress} extends Theorem~\ref{thm:progress} to show that if all event
handlers terminate (a usual assumption in event-driven systems), then
communication is eventually possible for any session with pending communication
actions.
The technical development shows that thread reduction in one actor does not
block another (Lemma~\ref{lem:gp:independence}); that if all threads terminate,
the configuration can reach an idle state
(Corollary~\ref{cor:gp:reduction-idle}); and that from any idle configuration,
we can invoke a handler for any session with pending communication actions
(Lemma~\ref{lem:gp:idle-reduces}).

We begin by defining configuration contexts
${\small \confctx ::= [~] \midspace (\nu \nma)\confctx \midspace \confctx \parallel
\config{C}} \midspace \config{C} \parallel \confctx$ that allow us to focus on a subconfiguration.
We say that $\actor{\tma}{\hstate}{\istate}$ is an \emph{actor subconfiguration}
of a configuration $\config{C}$ if $\config{C} =
\confctx[\actor{\tma}{\hstate}{\istate}]$ for some configuration context
$\confctx$.

\begin{definition}[Ongoing environment / session]
We say that a runtime type environment $\rtenv$ is \emph{ongoing}, written
$\envactive{\rtenv}$, if it contains at least one entry of the form
$\roleidx{s}{\prole} : \sta$ where $\sta \ne \localend$.
A \emph{session} is ongoing if its session environment is ongoing.
\end{definition}

Given a derivation $\cseq{\tyenv}{\rtenv}{\config{C}}$
we can annotate each name restriction $(\nu s) \mathcal{D}$ appearing in
$\mathcal{C}$ with its session environment (i.e., to $(\nu s: \rtenv')
\config{D}$ if session $s$ has session environment $\rtenv'$).
We define $\activesessions{\config{C}}$ as the set of session names in
$\config{C}$ with ongoing session environments, with the remaining cases defined
recursively:

\vspace{-1em}
    {\scriptsize
    \begin{mathpar}
        \activesessions{(\nu s: \rtenv)\config{C}} =
        {
        \begin{cases}
            \set{s} \cup \activesessions{\config{C}} & \text{if } \envactive{\rtenv} \\
            \activesessions{\config{C}} & \text{otherwise}
        \end{cases}
        }
    \end{mathpar}
    }

We now classify \emph{thread-terminating} actors: an actor is thread-terminating
if it will eventually either suspend with a handler or fully evaluate to a
value. A \emph{thread reduction} for an actor $a$ is a configuration reduction
that affects the ongoing thread of $a$.

\begin{definition}[Thread Reduction]
    A reduction
    $\config{C} \ceval \config{D}$ 
    where $\config{C} = \confctx_1[\actor{\threadctx[\tma]}{\hstate}{\istate}]$
    and $\config{D} = \confctx_2[\actor{\threadctx[\tmb]}{\hstate'}{\istate'}]$ is a
    \emph{thread reduction for $a$} if 
    $\actor{\threadctx[\tma]}{\hstate}{\istate}$ is a subconfiguration of the
    fired redex of $\config{C}$, and
    $\actor{\threadctx[\tmb]}{\hstate'}{\istate'}$ is a
    subconfiguration of its contractum.
\end{definition}

\begin{definition}[Thread-Terminating]
    An actor subconfiguration $\actor{\threadt}{\hstate}{\istate}$ of a
    configuration $\config{C} = \confctx[\actor{\threadt}{\hstate}{\istate}]$
    is \emph{thread-terminating} if either $\threadt = \idle{\vala}$ for some
    $\vala$, or
    $\threadt = \threadctx[\tma]$ such that there exists no infinite thread
    reduction for $a$ from $\config{C}$.
\end{definition}

We deliberately design our metatheory to be agnostic to the precise method
used to ensure termination. Concretely, to ensure that actors are always
thread-terminating, we could for example use straightforward
type system restrictions like requiring all callbacks and handlers to be total or use
primitive recursion. We could also use effect-based analyses (e.g.\ those used
for ensuring safe database programming~\cite{LindleyC12}).
We conjecture we could also adapt type systems designed to enforce \emph{fair
termination}~\cite{DBLP:journals/jlap/CicconeDP24, DBLP:conf/ecoop/PadovaniZ25};
we discuss this further in Section~\ref{sec:related}.
The additional power of exceptions described in Section~\ref{sec:failure} would also
allow the smooth integration of run-time termination analyses.
All of the example callbacks and handlers discussed in the paper would preserve
thread-termination.

Next, we show that reduction in one actor will never inhibit reduction in
another. The result follows because all communication is asynchronous and (in
part due to Theorem~\ref{thm:progress}),
given a well-typed configuration,
all constructs occurring in an actor's thread can always reduce immediately.

\begin{restatable}[Independence of Thread Reductions]{lemma}{independence}\label{lem:gp:independence}
    If $\cseq{\cdot}{\cdot}{\config{C}}$ where
    $\config{C} = \confctx_1[\actor{\threadctx[\tma]}{\hstate}{\istate}]$
    and $\config{C} \ceval
    \confctx_2[\actor{\threadctx[\tmb]}{\hstate'}{\istate'}]$ is a thread
    reduction for $a$, then
    for every $\config{D}$ and $\confctx_3$ such that
    $\config{C} \ceval \config{D}$ and
    $\config{D} = \confctx_3[\actor{\threadctx[\tma]}{\hstate}{\istate}]$
    it follows that $\config{D} \ceval
    \confctx_4[\actor{\threadctx[\tmb]}{\hstate'}{\istate'}]$ for some
    $\confctx_4$.
\end{restatable}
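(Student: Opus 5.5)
The plan is a case analysis on which reduction rule justifies the thread reduction for $a$, followed by a second case analysis on the arbitrary reduction $\config{C} \ceval \config{D}$. In each case I will show that the redex for $a$ is still present in $\config{D}$. First I will put $\config{C}$ in canonical form, which Theorem~\ref{thm:preservation} allows, so that every redex is a small set of parallel components under restrictions. Inspecting Figure~\ref{fig:semantics-2}, a reduction whose redex contains an actor with an active thread $\threadctx[\tma]$ and whose contractum contains $\threadctx[\tmb]$ must come from one of the following rules: \textsc{E-Lift}, \textsc{E-Spawn}, \textsc{E-NewAP}, \textsc{E-Register} or \textsc{E-Send}. \textsc{E-Suspend} and \textsc{E-Reset} leave the thread idle, so they fall outside the statement. \textsc{E-Spawn}, \textsc{E-NewAP} and \textsc{E-Lift} involve only the actor $a$ itself. \textsc{E-Register} additionally involves an access point $\ap{\apname}{\apstate}$, and \textsc{E-Send} additionally involves a session queue $\qproc{s}{\qcontents}$.

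Next I will fix an arbitrary step $\config{C} \ceval \config{D}$ with $\config{D} = \confctx_3[\actor{\threadctx[\tma]}{\hstate}{\istate}]$. This hypothesis says the other step did not touch $a$'s thread or its stores $\hstate$ and $\istate$. Every rule that modifies $\hstate$ or $\istate$ also involves $a$'s thread, or requires $a$ to be idle, as \textsc{E-React} and \textsc{E-Init} do. So the other step acts only on other actors, access points and queues. In the three local cases the redex for $a$ therefore survives verbatim in $\config{D}$, and I reapply the same rule under \textsc{E-Par}, \textsc{E-Nu} and \textsc{E-Struct}. Freshness side conditions on the new names can always be met by $\alpha$-renaming.

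The interesting cases are those where $a$'s redex shares a component with the other step.

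For \textsc{E-Register}, the other step may be another \textsc{E-Register} or an \textsc{E-Init} on the same access point $\apname$. Neither removes $\apname$; they only add or remove tokens in $\apstate$. \textsc{E-Register} has no precondition on the token set for role $\prole$, so it still fires, with possibly a different $\apstate$ and hence a different $\confctx_4$. The access point is still in scope because typing via \textsc{T-APName} gives it a unique binder, and reduction never removes it.

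For \textsc{E-Send}, the other step may append to or dequeue from $\qproc{s}{\qcontents}$. The queue component still exists in $\config{D}$: no rule deletes a queue, except the congruence garbage-collecting $(\nu s)(\qproc{s}{\epsilon})$. That congruence cannot apply here, because $s$ occurs free in $a$'s thread $\sessthread{s}{\prole}{\cdot}$. \textsc{E-Send} places no constraint on $\qcontents$, so it fires and gives a contractum of the required shape.

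I expect the main obstacle to be the precise formal reading of ``the same'' actor. The difficulty is matching $\confctx_1$, $\confctx_2$, $\confctx_3$ and $\confctx_4$ up to structural congruence and renaming of restricted names. A particular worry is that the other step may itself be \textsc{E-Spawn} or \textsc{E-NewAP} in another actor, which introduces new restrictions that must be threaded through the context. I would handle this by working with canonical forms throughout and treating the contexts modulo $\equiv$, invoking Theorem~\ref{thm:preservation}($\equiv$) to retain typability. Typability of $\config{C}$ is used only to guarantee that, when $\threadctx[\tma]$ is a send or a register, the required queue or access point actually exists as a component. For a send this follows from \textsc{T-SessionName}, which puts $s : \qty$ in the environment; for a register it follows from \textsc{T-APName} and \textsc{T-AP}.
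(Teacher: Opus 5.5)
Your proposal is correct and follows essentially the same route as the paper: a case analysis on the rule underlying $a$'s thread reduction, noting that \textsc{E-Lift}, \textsc{E-Spawn} and \textsc{E-NewAP} involve only $a$, while \textsc{E-Send} and \textsc{E-Register} place no precondition on the contents of the shared queue or access-point state. Your version is somewhat more careful than the paper's in three ways: you exclude \textsc{E-Reset} (which the paper lists) because the lemma's contractum requires a non-idle thread; you consider arbitrary competing steps, such as \textsc{E-React} or \textsc{E-Init} on the shared component, rather than only pairs of thread reductions; and you address fresh names and persistence of the queue or access point. The core argument is the same.
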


\begin{definition}[Idle Configuration]
    An actor subconfiguration $\actor[a]{\threadt}{\hstate}{\istate}$ of a
    configuration $\config{C}$ is \emph{idle} if $\threadt = \idle{\vala}$ for
    some $\vala$.
    Configuration $\config{C}$ is idle if all of its actor subconfigurations
    are idle.
\end{definition}

It follows by typing and from Lemma~\ref{lem:gp:independence} that every
thread-terminating actor subconfiguration of a configuration $\config{C}$
eventually evaluates to either $\effreturn{\vala}$ or
$\suspend{\vala}{\valb}$ and that
(via \textsc{E-Suspend} or \textsc{E-Return}) $\config{C}$ 
reverts to being idle.

\begin{corollary}\label{cor:gp:reduction-idle}
    If $\cseq{\cdot}{\cdot}{\config{C}}$ and $\config{C}$ is
    thread-terminating, then $\config{C} \cevalstar \config{D}$ where
    $\config{D}$ is idle.
\end{corollary}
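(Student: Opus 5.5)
The plan is to argue by induction on the number of active (non-idle) actor subconfigurations of $\config{C}$, driving each to idleness in turn. Since $\config{C}$ is well typed, by Theorem~\ref{thm:preservation} ($\equiv$) we may assume it is in canonical form, so the actors are $\actor[a_k]{\threadt_k}{\hstate_k}{\istate_k}_{k \in 1..n}$. If every $\threadt_k$ is of the form $\idle{\vala}$, take $\config{D} = \config{C}$. Otherwise pick an active actor $a$ with $\threadt = \threadctx[\tma]$; by assumption it is thread-terminating, so no infinite thread reduction for $a$ exists from $\config{C}$.

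The key step is to show that, as long as $a$ is active, $\config{C}$ admits a thread reduction for $a$. By \textsc{TT-Sess}/\textsc{TT-NoSess}, $\tma$ is a well-typed computation, so it is a $\calcwd{let}$ around a redex or a value. Each construct can reduce immediately. Pure redexes use \textsc{E-Lift}. $\calcwd{spawn}$, $\calcwd{newAP}$, $\calcwd{suspend}$ and $\effreturn{\vala}$ at top level use \textsc{E-Spawn}, \textsc{E-NewAP}, \textsc{E-Suspend} and \textsc{E-Reset}. A send uses \textsc{E-Send}: typing by \textsc{T-Send} forces the thread to be of the form $\sessthread{s}{p}{-}$, and \textsc{T-SessionName} guarantees the queue for $s$ is present. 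A register uses \textsc{E-Register}: the closed-value argument must be an access point name bound by $(\nu\apname)$, whose $\ap{\apname}{\apstate}$ component is present by \textsc{T-AP} and linearity of $\rtenv$. This is essentially the actor-local part of the proof of Theorem~\ref{thm:progress}, and I would extract it as a small lemma. Typing is preserved along the way by Theorem~\ref{thm:preservation} (safety holds trivially for the empty top-level environment and is maintained).

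Iterating thread reductions for $a$ must therefore stop, by thread-termination. Each step keeps the configuration typed, and we take only thread reductions for $a$, so the other actors' threads are untouched. Lemma~\ref{lem:gp:independence} is what lets us interleave freely if other steps occur, but here we simply do not fire them. Hence the process ends only when $a$ has become idle. Since the only rules that leave $\threadctx$-form are \textsc{E-Suspend} and \textsc{E-Reset}, the reached state has $a$ idle, and no other actor changed from idle to active, with one exception: \textsc{E-Spawn} creates new active actors.

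This exception is the main obstacle. Spawned actors increase the count, so the naive induction measure fails. I would handle it by observing that the newly spawned actors are themselves thread-terminating by the hypothesis that $\config{C}$ is thread-terminating. This assumes the hypothesis is read as applying to all actors reachable from $\config{C}$, which is the intended reading behind thread-termination of the whole configuration. I would then use a measure such as a multiset ordering on the lengths of the maximal thread-reduction sequences of active actors (finite by König's lemma, since reduction is finitely branching up to $\equiv$). Alternatively, I would do a simple fair round-robin argument: repeatedly pick an active actor and run it to idleness. E-React is never fired, so no idle actor becomes active again. Finally, \textsc{E-Init} is also not fired, so the process yields an idle $\config{D}$ with $\config{C} \cevalstar \config{D}$.
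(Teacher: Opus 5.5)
Your thread-level ingredients are sound. The case analysis you propose to extract is exactly Lemma~\ref{lem:thread-progress}, preservation keeps the configuration typed under the empty environment, and never firing \textsc{E-React} or \textsc{E-Init} keeps idle actors idle.

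The gap is in the spawn case, and it is not only a matter of choosing a better measure. Your multiset measure does not decrease. If $a$ has $n$ remaining thread steps and spawns $b$, whose maximal thread-reduction sequence has length $m$, the multiset goes from $\{n\} \uplus R$ to $\{n-1, m\} \uplus R$. That is a decrease only when $m < n$, and nothing forces this: a two-step parent can spawn a child that runs for a hundred steps. The round-robin variant terminates only if finitely many actors are created in total, which per-actor termination does not give.

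In fact, under your reading (every actor reachable from $\config{C}$ is thread-terminating) the statement is false. Take $F = \rec{f}{x}{\spawn{(f\ x)}}$, which is typable at $\sttyfun{\one}{\one}{\localend}{\localend}{\one}$ via \textsc{TV-Rec}, \textsc{T-Spawn} and \textsc{T-App}, and let $\config{C} = (\nu a)(\actor{F\ ()}{\epsilon}{\epsilon})$, so $\cseq{\cdot}{\cdot}{\config{C}}$. Every actor that ever exists performs exactly three thread reductions (\textsc{E-Lift}, \textsc{E-Spawn}, \textsc{E-Reset}) and becomes idle, so each one is thread-terminating. Yet every reachable configuration contains an active actor, namely the most recent child, so no idle $\config{D}$ is reachable.

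The hypothesis therefore has to be read globally: there is no infinite sequence of thread reductions from $\config{C}$, for any actors, including those spawned along the way. This is what a totality restriction on all code would provide. Equivalently, keep your per-actor reading and add that there is no infinite chain of spawns; each actor spawns finitely many children, so by K\"onig's lemma the spawn tree is finite. The paper's one-line justification (typing plus Lemma~\ref{lem:gp:independence}: each actor reaches $\effreturn{\vala}$ or $\calcwd{suspend}$, then fires \textsc{E-Reset} or \textsc{E-Suspend}) tacitly relies on this too, since it never mentions spawned actors.

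With the global reading your proof becomes short, with no induction on the number of active actors and no measure:
\begin{enumerate}
\item While some actor is active, Lemma~\ref{lem:thread-progress} supplies a thread reduction.
\item Firing only thread reductions never fires \textsc{E-React} or \textsc{E-Init}, so no idle actor is reactivated.
\item By the hypothesis every such sequence is finite.
\item At its end, Lemma~\ref{lem:thread-progress} forces every actor to be idle.
\end{enumerate}
Your one-actor-at-a-time scheduling is a legitimate alternative to the paper's appeal to independence. It is only needed, though, if you keep the per-actor hypothesis, and then only together with the finite-spawn-tree assumption above.
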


Finally, due to session typing and compliance, every ongoing session in a
well-typed idle configuration can reduce. The result is as a special case
of Theorem~\ref{thm:progress}.

\begin{restatable}{lemma}{idlereduces}\label{lem:gp:idle-reduces}
If $\cseq{{\cdot}}{{\cdot}\!}{\!\config{C}}$ where $\config{C}$ is idle, then for
    each $s {\in} \activesessions{\config{C}}$,
    $\config{C} {\equiv} (\nu s) \config{D}$ and $\config{D} \cevalann{s}$. 
\end{restatable}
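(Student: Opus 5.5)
Fix a session $s \in \activesessions{\config{C}}$. I will first put $\config{C}$ in canonical form (possible by the structural congruence rules and Theorem~\ref{thm:preservation}). Scope extrusion then lets me bring the restriction $(\nu s)$ to the outside, so $\config{C} \equiv (\nu s)\config{D}$, where $\config{D}$ contains $\qproc{s}{\qcontents}$ together with all actors, access points and other restrictions. Inverting the typing derivation at \textsc{T-SessionName} yields a compliant session environment $\rtenv_s = \{\roleidx{s}{\prole_i} : S_i\}_i, s : \qty$ with $\envactive{\rtenv_s}$. Inverting \textsc{T-Par}, \textsc{T-Actor} and the queue typing rules then shows that each endpoint $\roleidx{s}{\prole_i}$ is owned by exactly one actor $a_i$. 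Since $\config{C}$ is idle, $a_i$ has thread $\idle{\vala}$ and therefore owns no endpoint via its thread (\textsc{TT-Idle} gives an empty environment). Initialisation states carry only tokens. So the endpoint must come from the handler state via \textsc{TH-Handler}: $S_i$ is an input type $\stin_i$ and $a_i$ stores a handler of type $\handlerty{\stin_i}{\tya}$ for $\roleidx{s}{\prole_i}$. In addition, $\qty$ is the type of the actual queue $\qcontents$.

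Next I use deadlock-freedom of $\rtenv_s$. Because every entry of $\rtenv_s$ is an input type, \textsc{Lbl-Send} cannot fire, and \textsc{Lbl-End} is not applicable to an input type (after unfolding any recursion). Suppose $\rtenv_s$ could not perform a \textsc{Lbl-Recv} step. Then $\rtenv_s \not\equivsynceval$, and $\df{\rtenv_s}$ would force $\rtenv_s = s : \emptyq$. This contradicts $\envactive{\rtenv_s}$. Hence some transition $\lblsyncrecv{s}{q}{p}{\ell_j}$ is possible. Concretely, some endpoint $\roleidx{s}{\prole} : \localoffer{q}{\msg{\ell_i}{\tya_i}.\sta_i}_{i\in I}$ exists with $\qty \equiv \qentry{q}{p}{\ell_j}{\tya_j} \cdot \qty'$ and $j \in I$ (by safety).

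To transfer this back to the configuration, I reorder the actual queue using the queue congruence, mirroring the congruence used on $\qty$. This is justified because the queue congruence preserves per-pair order in both the value and the type level, and \textsc{T-ConsQueue} types the queue entry by entry. After reordering, $\qcontents \equiv \qentry{q}{p}{\ell_j}{\vala} \cdot \qcontents'$. The stored handler for $\roleidx{s}{\prole}$ has type $\handlerty{\localoffer{q}{\ldots}}{\tya}$. I will invoke a canonical-forms lemma for values, which follows from inversion of the value typing rules, where \textsc{TV-Handler} is the only rule producing a handler type. This gives that the stored handler is syntactically $\handler{q}{\var{st}}{\seq{H}}$ with a clause for $\ell_j$. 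Therefore \textsc{E-React} applies in $\config{D}$ with label $s$, closed under \textsc{E-Par} and \textsc{E-Struct}. The label remains $s$ because $(\nu s)$ sits outside $\config{D}$.

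The main obstacle is the bookkeeping step of matching the type-level queue congruence and unfolding with the runtime queue, and showing that equi-recursive unfolding does not hide an $\localend$. For the latter I would note that handler types are syntactically input types, so \textsc{Lbl-End} never applies. I expect this to be a routine induction over the queue typing.
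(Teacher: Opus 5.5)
Your proof is correct and follows the paper's route. The paper obtains the lemma from the argument for Theorem~\ref{thm:progress}: in an idle configuration every endpoint of an ongoing session must be held by a stored handler and so has an input type, deadlock-freedom of the non-empty session environment forces some \textsc{Lbl-Recv} step, and \textsc{E-React} realises that step with label $s$. You make explicit the bookkeeping the paper leaves implicit, namely transporting the queue-type reordering to the runtime queue and the canonical-forms step for handler values; for the latter, note that \textsc{TV-Var} cannot produce a handler type because $\tyenv$ only ever binds access-point names.
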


Since (by Theorem~\ref{thm:preservation}) we can always use the structural
congruence rules to hoist a session name restriction to the topmost level,
global progress follows as an immediate corollary.

\begin{restatable}[Global Progress]{corollary}{globalprogress}\label{cor:global-progress}
    If $\cseq{\cdot}{\cdot}{\config{C}}$ where $\config{C}$ is
    thread-terminating, then for every $s \in \activesessions{\config{C}}$,
    $\config{C} \equiv (\nu s)\config{D}$ for some $\config{D}$, and
    $\config{D}\cevalannstar{\tau}\cevalann{s}$.
\end{restatable}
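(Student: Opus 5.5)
The plan is to compose the three results that precede the statement. Starting from $\cseq{\cdot}{\cdot}{\config{C}}$ with $\config{C}$ thread-terminating, Corollary~\ref{cor:gp:reduction-idle} yields a reduction sequence $\config{C} \cevalstar \config{C}'$ with $\config{C}'$ idle. By the ($\rightarrow$) part of Theorem~\ref{thm:preservation}, applied at each step, $\config{C}'$ is again well typed under the empty environments. The empty environment is trivially safe, and restricted sessions carry compliant, hence safe, session environments.

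Next I will check that every $s \in \activesessions{\config{C}}$ is still ongoing in $\config{C}'$. For this I want the reductions to $\config{C}'$ to be $\tau$-labelled with respect to $s$. The cleanest route is to first hoist $(\nu s)$ to the top using scope extrusion, via the ($\equiv$) part of Theorem~\ref{thm:preservation}, so that $\config{C} \equiv (\nu s)\config{D}$. I then work inside the restriction, where any step that would fire an $s$-communication is exactly the desired conclusion.

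Concretely, I split on whether the path $\config{D} \cevalstar \config{D}'$ to idleness (obtained by running Corollary~\ref{cor:gp:reduction-idle} under the binder, via \textsc{E-Nu}) contains an $s$-labelled step.
\begin{itemize}
\item If it does, take its shortest prefix. All earlier steps are $\tau$ or labelled by other sessions; other session names are bound inside $\config{D}$ after hoisting, so \textsc{E-Nu} relabels those steps to $\tau$. This gives $\config{D}\cevalannstar{\tau}\cevalann{s}$.
\item If it does not, the session environment for $s$ is unchanged along the path: by preservation, the environment for $s$ only advances via \textsc{Lbl-Send}/\textsc{Lbl-Recv}, which correspond to $s$-labelled reductions. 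So $s$ remains ongoing in the idle configuration $(\nu s)\config{D}'$. Lemma~\ref{lem:gp:idle-reduces} then gives $(\nu s)\config{D}' \equiv (\nu s)\config{D}''$ with $\config{D}'' \cevalann{s}$. Composing with $\config{D} \cevalannstar{\tau} \config{D}'$ (absorbing $\equiv$ via \textsc{E-Struct}) finishes this case.
\end{itemize}

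The main obstacle is the bookkeeping between labels and session environments. I need that a reduction step not labelled $s$ leaves $s$'s session environment unchanged, or at worst only unfolds recursion or discards $\localend$ entries. Without this, the ongoing status of $s$ could silently disappear.

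I will extract this from the proof of Theorem~\ref{thm:preservation} by inspecting \textsc{E-Send} and \textsc{E-React}, the only rules touching queues, both labelled $s$. \textsc{E-Suspend} and \textsc{E-Reset} only move an endpoint between the thread and the handler state, or drop an $\localend$ endpoint. If \textsc{Lbl-End} could drop non-$\localend$ endpoints this would fail, but it cannot, so ongoingness is stable under $\tau$ and other-session steps.

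A secondary subtlety is that the $\alpha$-renaming and annotation of $(\nu s:\rtenv)$ must be tracked consistently through structural congruence. I will handle this by fixing the annotation from the typing derivation obtained after hoisting.
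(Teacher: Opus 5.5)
Your proposal is correct and follows the paper's own proof. The paper likewise hoists $(\nu s)$ to the top, reduces to an idle configuration by Corollary~\ref{cor:gp:reduction-idle}, and either finds an $s$-step along the way or obtains one from Lemma~\ref{lem:gp:idle-reduces}; your extra bookkeeping makes explicit what it leaves implicit, namely that \textsc{E-Nu} relabels other sessions' steps to $\tau$ and that ongoingness of $s$ survives $\tau$-steps because \textsc{Lbl-End} only discards $\localend$ entries.
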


 \section{Failure Handling and Supervision}\label{sec:failure}

\begin{figure}[t]
    {\footnotesize
    \header{Syntax}
    \vspace{-1.2em}
    \begin{minipage}[t]{0.4\textwidth}
    \[
        \begin{array}{lrcl}
        \text{Types} & \tya, \tyb & ::= & \cdots \midspace \typid \\
        \text{Values} & \vala, \valb & ::= & \cdots \midspace \aname \\
        \text{Computations} & \tma, \tmb & ::= & \cdots \midspace
        \suspendexn{\valc}{\vala}{\valb} \\
                            & & \midspace & \monitor{\vala}{\valb} \midspace \raiseexn 
        \end{array}
    \]
\end{minipage}
\hfill
\begin{minipage}[t]{0.5\textwidth}
    \[
        \begin{array}{lrcl}
        \text{Monitored processes} & \monstate & ::= &
        \setseq{\metapair{\aname}{\vala}} \\
\text{Configurations} & \config{C}, \config{D} & ::= & \cdots \midspace
\zapactor{\threadt}{\hstate}{\istate}{\monstate} \\ 
                      & & \midspace & \zap{\aname} \midspace \zap{\roleidx{s}{p}} \midspace \zap{\inittok}
        \end{array}
    \]
\end{minipage}

    \headertwo
        {Modified typing rules for computations}
        {\framebox{$\mtseqst{\tyenv}{\tyc}{\sta}{\tma}{\tya}{\stb}$}}
         \vspace{-1em}

    \begin{mathpar}
        \inferrule
        [T-Spawn]
        { \mtseqst{\tyenv}{\tya}{\localend}{\tma}{\tya}{\localend} }
        { \mtseqst{\tyenv}{\tyc}{\sta}{\spawn{\tma}}{\typid}{\sta} }

        \inferrule
        [T-Suspend]
        {
            \vseq{\tyenv}{\valc}{\handlerty{\stin}{\tyc}} \\
            \vseq{\tyenv}{\vala}{\tyc} \\
            \vseq{\tyenv}{\valb}{\sttyfun{\tyc}{\tyc}{\localend}{\localend}{\tyc}}
}
        { \mtseqst{\tyenv}{\tyc}{\stin}{\suspendexn{\valc}{\vala}{\valb}}{\tya}{\stb} }

        \inferrule*
        [left=T-Monitor]
        {
            \vseq{\tyenv}{\vala}{\typid} \\
            \vseq{\tyenv}{\valb}{\sttyfun{\tyc}{\tyc}{\localend}{\localend}{\tyc}}
        }
        { \mtseqst{\tyenv}{\tyc}{\sta}{\monitor{\vala}{\valb}}{\one}{\sta} }

        \inferrule*
        [left=T-Raise]
        { }
        { \mtseqst{\tyenv}{\tyc}{\sta}{\raiseexn}{\tya}{\stb} }
    \end{mathpar}
    \vspace{-1.5em}

    \headersig{Modified configuration reduction rules}{$\config{C} \cevalann{l} \config{D}$}
         \[
             \begin{array}{ll}
                \textsc{E-React} \qqqqqqquad &
                \zapactor
                { \idle{\valb} }
                {
                \hstate[\storedhandler{s}{p}{\metapair{\handler{q}{\var{st}}{\seq{H}}}{\valc}}] }
                { \istate }
                { \monstate }
                \parallel
                \qproc{s}{\qentry{q}{p}{\ell}{\vala} {\cdot} \qcontents} \\
                                      &
                \qqqqquad \cevalann{s}
                \zapactor
                { \sessthread{s}{\prole}{\tma \{ \vala / x, \valb / \var{st} \}} }
                { \hstate }
                { \istate }
                { \monstate }
                \parallel
                \qproc{s}{\qcontents}
                                      \quad \text{if } (\msg{\ell}{\vala} \mapsto \tma) \in \seq{H}
            \end{array}
        \]
\[
         \begin{array}{lrcl}
             \textsc{E-Spawn} &
             \zapactor{\threadctx[\spawn{\tma}]}{\hstate}{\istate}{\monstate} & \cevaltau &
             (\nu
             b)(\zapactor{\threadctx[\effreturn{b}]}{\hstate}{\istate}{\monstate}
             \parallel \zapactor[b]{\tma}{\epsilon}{\epsilon}{\emptyset})
             \\
             \textsc{E-Suspend} &
             \zapactor
             {\sessthread{s}{p}{\ctxe[\suspendexn{\valc}{\vala}{\valb}]}}
             {\hstate}
             {\istate}
             {\monstate}
                                  & \cevaltau &
             \zapactor
             {\idle{\vala}}
             {\hstate[\roleidx{s}{p} \mapsto \metapair{\valc}{\valb}]}
             {\istate}
             {\monstate}
                                  \\
                          \textsc{E-Monitor} &
                            \zapactor
                            {\threadctx[\monitor{b}{\vala}]}
                            {\hstate}
                            {\istate}
                            {\monstate}
                                & \cevaltau &
                                \zapactor
                                {\threadctx[\effreturn{()}]}
                                {\hstate}
                                {\istate}
                                {\monstate \cup \set{\metapair{b}{\vala}}} \\
\textsc{E-InvokeM} & 
             \zapactor
             {\idle{\valb}}
             {\hstate}
             {\istate}
             {\monstate \cup \set{\metapair{b}{\vala}}} \parallel \zap{b}
                                & \cevaltau &
            \zapactor
            {(\vala \app \valb)}
            {\hstate}
            {\istate}
            {\monstate} \parallel \zap{b}  \\
\textsc{E-Raise} &
             \zapactor
             {\ctxe[\raiseexn]}
             {\hstate}
             {\istate}
             {\monstate} & \cevaltau &
             \zap{\aname} \parallel \zap{\hstate} \parallel \zap{\istate} \\
\textsc{E-RaiseS} & 
             \zapactor
             {\sessthread{s}{p}{\ctxe[\raiseexn]}}
             {\hstate}
             {\istate}
             {\monstate} & \cevaltau &
             \zap{\aname} \parallel \zap{\roleidx{s}{p}} \parallel \zap{\hstate} \parallel \zap{\istate} \\
\textsc{E-CancelMsg} & \qproc{s}{\qentry{p}{q}{\ell}{\vala} \cdot \qcontents} \parallel \zap{\roleidx{s}{q}}
                                  & \cevaltau & \qproc{s}{\qcontents} \parallel \zap{\roleidx{s}{q}} \\
\textsc{E-CancelAP} & (\nu \inittok)(\ap{p}{\apstate[\prole \mapsto \setseq{\inittok'} \cup \set{\inittok}]} \parallel \zap{\inittok})
                                 & \cevaltau &
                                    \ap{p}{\apstate[\prole \mapsto \setseq{\inittok'}]}
         \end{array}
         \]
        \[
        \begin{array}{ll}
             \textsc{E-CancelH}   \qqqqqqquad & \zapactor
                                        {\idle{\valb}}
                                        {\hstate[\roleidx{s}{p} \mapsto
                                        \metapair{\handler{q}{\var{st}}{\seq{H}}}{\vala}] }
                                        {\istate}
                                        {\monstate}
                                        \parallel
                                        \qproc{s}{\qcontents}
                                        \parallel \zap{\roleidx{s}{q}}
                                        \\
                                  & \qqqqquad \cevaltau \:
                                    \zapactor
                                        {(\vala \app \valb)}
                                        {\hstate}
                                        {\istate}
                                        {\monstate}
                                        \parallel
                                        \qproc{s}{\qcontents}
                                        \parallel
                                        \zap{\roleidx{s}{q}} \parallel \zap{\roleidx{s}{p}}
\quad \text{if }
                                  \messages{\qrole}{\prole}{\qcontents} =
                                  \emptyset
         \end{array}
     \]
     \[
         \text{where }
\messages{\prole}{\qrole}{\qcontents} =
       \set{\msg{\ell}{\vala} \mid \qentry{r}{s}{\ell}{\vala} \in \qcontents \wedge
       \prole = \rrole \wedge \qrole = \srole }
 \]

     \begin{minipage}{0.32\textwidth}
     \headersig{Structural congruence}{$\config{C} \equiv \config{D}$}
     \[
     \begin{array}{c}
         (\nu s)(\zap{\roleidxx{s}{\prole_i}}_{i \in 1..n} \parallel \qproc{s}{\epsilon}) \parallel \config{C} \equiv \config{C}
\\
         (\nu a)(\zap{a}) \parallel \config{C} \equiv \config{C}
     \end{array}
     \]
     \end{minipage}
\hfill
     \begin{minipage}{0.62\textwidth}
     \header{Syntactic sugar}
     \[
         \begin{array}{rcll}
             \zap{\hstate} & \defeq & \zap{\roleidxx{s_1}{\prole_1}} \parallel \cdots \parallel \zap{\roleidxx{s_n}{\prole_n}}  & (\text{where } \dom{\hstate} = \set{\roleidxx{s_i}{\prole_i}}_{i \in 1..n}) \\
             \zap{\istate} & \defeq & \zap{\inittok_1} \parallel \cdots \parallel \zap{\inittok_n} & (\text{where } \dom{\istate} = \set{\inittok_{i}}_{i \in 1..n})\\
         \end{array}
     \]
    \end{minipage}
    }
    \vspace{-1em}
    \caption{\langnamezap: Modified syntax and reduction rules}
    \vspace{-1em}
 \label{fig:extensions:supervision-failure-1}
\end{figure}

Actor languages support the \emph{let-it-crash} philosophy, where actors fail fast
and are restarted by supervisors.  A crashed actor cannot send messages, so we
need a mechanism to prevent sessions from getting stuck.  We use the
\emph{affine sessions} approach~\cite{MostrousV18, HarveyFDG21,
LagaillardieNY22, FowlerLMD19}, where participants can be marked as cancelled;
receiving from a cancelled participant with an empty queue raises an exception,
triggering a crash and propagating the failure.
Figure~\ref{fig:extensions:supervision-failure-1} shows the additional syntax,
typing rules, and reduction rules needed for supervision and cascading failure;
we call this extension \langnamezap.
We make actors \emph{addressable}, so $\calcwd{spawn}$ returns a process
identifier (PID) of type $\typid$.  The $\monitor{\vala}{\valb}$ construct
installs a callback function $\valb$ to be evaluated should the actor referred
to by $\vala$ crash. 
The $\raiseexn$ construct signifies a user-level error has
occurred, for example
$\ite{\mkwd{fileExists}(\var{path}))}{\mkwd{processFile}(\var{path})}{\raiseexn}$.
Like $\calcwd{suspend}$, $\raiseexn$ can be given an arbitrary type and
post-condition since it does not return.
We also modify $\calcwd{suspend}$ to take an additional callback to be run if
the sender fails; a sensible piece of syntactic sugar is $\suspend{\vala}{\valb}
\defeq \suspendexn{\vala}{\valb}{(\fun{\var{st}}{\raiseexn})}$ to propagate the
failure.

We can make our $\mkwd{shop}$ actor robust by using a \mkwd{shopSup}
actor that restarts it upon failure:

\smallmath{
        \mkwd{shopSup} \defeq \lambda \var{custAP} .
        {\monitor
            {(\spawn{\mkwd{shop}} \app (\var{custAP}, \mkwd{initialStock}))}
            {(\mkwd{shopSup} \app \var{custAP})}}
}
The $\mkwd{shopSup}$ actor spawns a $\mkwd{shop}$ actor and monitors the
resulting PID. Any failure of the $\mkwd{shop}$ actor will be detected by the
\mkwd{shopSup} which will restart the actor and monitor it again. The restarted
\mkwd{shop} actor will re-register with the access points and can then take
part in subsequent sessions.

\paragraph*{Configurations.}
To capture the additional runtime behaviour we modify the actor configuration to
$\zapactor{\threadt}{\hstate}{\istate}{\monstate}$, where $\monstate$ pairs
monitored PIDs with callbacks to be evaluated should the actor crash. We also
introduce three kinds of ``zapper thread'', $\zap{a}$, $\zap{\roleidx{s}{p}}$,
$\zap{\inittok}$ to indicate the cancellation of an actor, role, and
initialisation token respectively.

\paragraph*{Reduction rules by example.}
Consider the supervised Shop example after the
\role{Customer} has sent a \mkwd{Checkout} request and is awaiting a response.
Instead of suspending, \role{Shop} raises an exception.
This scenario
can be represented by the following configuration, where $\var{shop}$,
$\var{cust}$, and $\var{pp}$ are
actors playing the \role{Shop}, \role{Customer}, and \role{PaymentProcessor} in
session $\var{s}$, and $\var{sup}$ is monitoring $\var{shop}$:

\smallermath{
    (\nu\var{sup})(\nu\var{shop})(\nu\var{cust})(\nu\var{pp})(\nu\var{s})
    \left(
        \bl 
        \zapactor
            [\var{shop}]
            { \sessthread{s}{Shop}{\raiseexn}}
            { \epsilon }
            { \epsilon }
            { \epsilon }
            \\
        \parallel
        \zapactor
            [\var{cust}]
            { \idle{()} }
            { \storedhandler{s}{Customer}{\metapair{\mkwd{checkoutHandler}}{\raiseexn}}}
            { \epsilon}
            { \epsilon}
            \\
            \parallel
        \zapactor
            [\var{pp}]
            { \idle{()} }
            { \storedhandler{s}{PaymentProcessor}{\metapair{\mkwd{buyHandler}}{\raiseexn}}}
            { \epsilon}
            { \epsilon}
            \\ 
            \parallel \qproc{s}{\qentry{Customer}{Shop}{\mkwd{checkout}}{([123], 510)}}
            \\
        \parallel
        \zapactor
            [\var{sup}]
            {\idle{()}}
            {\epsilon}
            {\epsilon}
            {\metapair{\var{shop}}{\mkwd{shopSup} \app \var{cAP}}}
        \el
    \right)
}

For brevity we shorten $\role{Shop}$, $\role{Customer}$, and
$\role{PaymentProcessor}$ to $\role{S}$, $\role{C}$, and $\role{PP}$ respectively.
We let context
{\small
    $
    \confctx =
    (\nu\var{sup})(\nu\var{shop})(\nu\var{cust})(\nu\var{pp})(\nu\var{s})([~]
    \parallel
    \zapactor
            [\var{sup}]
            {\idle{()}}
            {\epsilon}
            {\epsilon}
            {\metapair{\var{shop}}{\mkwd{shopSup} \app \var{cAP}}})
$.}

Since the $\var{shop}$ actor is playing role $\roleidx{s}{S}$ and
raising an exception, by \textsc{E-RaiseS} the actor is replaced with zapper
threads $\zap{\var{shop}}$ and $\zap{\roleidx{s}{S}}$.

\smallermath{
    \hspace{-2.2em}
    \begin{tabular}{rcl}
        $
    \confctx
    \left[
\!\!
        \bl 
        \zapactor
            [\var{shop}]
            { \sessthread{s}{S}{\raiseexn}}
            { \epsilon }
            { \epsilon }
            { \epsilon }
            \\
        \parallel
        \zapactor
            [\var{cust}]
            { \idle{()} }
            { \storedhandler{s}{C}{\metapair{\mkwd{checkoutHandler}}{\raiseexn}}}
            { \epsilon}
            { \epsilon}
            \\
            \parallel
        \zapactor
            [\var{pp}]
            { \idle{()} }
            { \storedhandler{s}{PP}{\metapair{\mkwd{buyHandler}}{\raiseexn}}}
            { \epsilon}
            { \epsilon}
            \\ 
            \parallel \qproc{s}{\qentry{C}{S}{\mkwd{checkout}}{([123], 510)}}
        \el
\!\!
        \right]
        $
& \!\!\!\!\!\!${\ceval}$\!\!\!\!\!\! &
$
    \confctx
    \left[
\!\!
        \bl 
        \zap{\var{shop}}
        \parallel \zap{\roleidx{s}{S}}
        \\
        \parallel
        \zapactor
            [\var{cust}]
            { \idle{()} }
            { \storedhandler{s}{C}{\metapair{\mkwd{checkoutHandler}}{\raiseexn}}}
            { \epsilon}
            { \epsilon}
            \\
            \parallel
        \zapactor
            [\var{pp}]
            { \idle{()} }
            { \storedhandler{s}{PP}{\metapair{\mkwd{buyHandler}}{\raiseexn}}}
            { \epsilon}
            { \epsilon}
            \\ 
            \parallel \qproc{s}{\qentry{C}{S}{\mkwd{checkout}}{([123], 510)}}
        \el
        \!\!
        \right]
        $
\end{tabular}
}

Next, since $\roleidx{s}{S}$ has been cancelled, the $\mkwd{checkout}$ message
can never be received and so is removed from the queue (\textsc{E-CancelMsg}).
Similarly since both $\role{C}$ and $\role{PP}$ are waiting for messages from
cancelled role \role{S}, they both evaluate their failure computations,
$\raiseexn$ (\textsc{E-CancelH}).
In turn this results in the cancellation of the $\var{cust}$ and $\var{pp}$
actors, and the $\roleidx{s}{C}$ and $\roleidx{s}{PP}$ endpoints
(\textsc{E-RaiseS}).

\smallermath{
    \hspace{-1em}
\begin{tabular}{rrcl}
    $\ceval^+$ & 
$
\confctx
    \left[
        \bl 
        \zap{\var{shop}}
        \parallel \zap{\roleidx{s}{S}}
        \\
        \parallel
        \zapactor
            [\var{cust}]
            { \idle{()} }
            { \sessthread{s}{C}{\raiseexn}}
            { \epsilon}
            { \epsilon}
            \\
            \parallel
        \zapactor
            [\var{pp}]
            { \idle{()} }
            { \sessthread{s}{PP}{\raiseexn}}
            { \epsilon}
            { \epsilon}
            \\ 
            \parallel \qproc{s}{\epsilon}
        \el
        \right]
        $
 & $\ceval^+$ & 
$
\confctx
    \left[
        \bl 
        \zap{\var{shop}}
        \parallel
        \zap{\roleidx{s}{S}}
        \parallel
        \zap{\var{cust}}
        \parallel \zap{\roleidx{s}{C}}
        \parallel
        \zap{\var{pp}}
        \parallel \zap{\roleidx{s}{PP}}
        \parallel \qproc{s}{\epsilon}
        \el
        \right]
        $
\end{tabular}
}

At this point the session has failed and can be garbage collected, leaving the
supervisor actor and the zapper thread for $\var{shop}$. Since the supervisor was
monitoring $\var{shop}$, which has crashed, the monitor callback is invoked
(\textsc{E-InvokeM}) which finally re-spawns and monitors the Shop actor.

\smallermath{
    \hspace{-2.5em}
    \begin{tabular}{rrcl}
        $\!\!\!\!\ceval\!\!\!\!$
        &
        $
        (\nu \var{shop})(\nu\var{sup})\left(
            \bl
            \zap{\var{shop}} \\ \parallel 
        \zapactor
            [\var{sup}]
            {\mkwd{shopSup} \app \var{cAP} \app ()}
            {\epsilon}
            {\epsilon}
            {\epsilon}
            \el
            \right)
            $
        &
        $\!\!\!\!\!{\ceval^+}\!\!\!\!\!$
        &
        $
        (\nu \var{shop}')(\nu\var{sup})
        \left(
            \bl
        \zapactor
            [\var{shop}']
            {\mkwd{shop} \: (\var{cAP}, \mkwd{initialStock})}
            {\epsilon}
            {\epsilon}
            {\epsilon} \\
        \parallel 
        \zapactor
            [\var{sup}]
            {\idle{()}}
            {\epsilon}
            {\epsilon}
            {\metapair{\var{shop}'}{\mkwd{shopSup} \: \var{cAP}}}
            \el
            \right)
            $
    \end{tabular}
}

\begin{figure}[t]
    {\footnotesize
        \header{Runtime syntax}
        \[
        \begin{array}{lrcl}
        \text{Cancellation-aware runtime envs.} & \rtenvzap & ::= & \cdot \midspace
        \rtenvzap, \apname \midspace \rtenvzap, \inittokpol : S \midspace
        \rtenvzap, \roleidx{s}{p}: S \midspace \rtenvzap, \roleidx{s}{p} : \zapped \midspace
        \rtenvzap, s : \qty  \\
        \text{Labels} & \ltslbl & ::= & \cdots \midspace \lblzap{s}{p} \midspace
        \lblzapmsg{s}{p}{q}{\ell} \midspace \lblzapdequeue{s}{p}{q}
        \end{array}
        \]

    \headertwo
        {Modified typing rules for configurations}
        {\framebox{$\cseq{\tyenv}{\rtenvzap}{\config{C}}$}~
            \framebox{$\hstateseq{\tyenv}{\rtenvzap}{\tyc}{\hstate}$}}
         \vspace{-0.75em}

        \begin{mathpar}
            \inferrule
            [T-ActorName]
            { \cseq{\tyenv, a : \typid}{\rtenvzap, a}{\config{C} } }
            { \cseq{\tyenv}{\rtenvzap}{(\nu \aname) \config{C} } }

            \inferrule
            [T-ZapActor]
            { }
            { \cseq{\tyenv}{\aname}{\zap{\aname}} }

            \inferrule
            [T-ZapRole]
            { }
            { \cseq{\tyenv}{\roleidx{s}{p} : \zapped}{\zap{\roleidx{s}{p}}} }

            \inferrule
            [T-ZapTok]
            { }
            { \cseq{\tyenv}{\inittokpos : \sta}{\zap{\inittok}} }

            \inferrule
            [T-Actor]
            {
                \threadseq{\tyenv}{\rtenvzap_1}{\tyc}{\threadt} \\
                \hstateseq{\tyenv}{\rtenvzap_2}{\tyc}{\hstate} \\
                \istateseq{\tyenv}{\rtenvzap_3}{\tyc}{\istate} \\\\
                \forall \metapair{b}{\vala} \in \monstate .\; \vseq{\tyenv}{b}{\typid} \,\wedge\,
                \vseq{\tyenv}{\vala}{\sttyfun{\tyc}{\tyc}{\localend}{\localend}{\tyc}}
            }
            { \cseq
                { \tyenv }
                { \rtenvzap_1, \rtenvzap_2, \rtenvzap_3, \aname }
                { \zapactor
                    {\threadt}
                    {\hstate}
                    {\istate}
                    {\monstate}
                }
            }

            \inferrule
            [TH-Handler]
            {
                \vseq{\tyenv}{\vala}{\handlerty{\stin}{\tyc}} \\\\
\vseq{\tyenv}{\valb}{\sttyfun{\tyc}{\tyc}{\localend}{\localend}{\tyc}} \\
                \hstateseq{\tyenv}{\rtenvzap}{\tyc}{\hstate}
            }
            { \hstateseq
               {\tyenv}
               {\rtenvzap, \roleidx{s}{\prole}: \stin}
               {\tyc}
               {\hstate[\storedhandler{s}{\prole}{\metapair{\vala}{\valb}}]}
            }
        \end{mathpar}

        \headertwo
            {Additional LTS rules}
            {\framebox{$\rtenvzap \lbleval{\ltslbl}
            \rtenvzap'$}~\framebox{$\rtenvzap \zaplbleval{\roleidx{s}{p}} \rtenvzap$}}
        \[
            \hspace{-1em}
        \begin{array}{lrcl}
            \textsc{Lbl-ZapMsg} & \rtenvzap, \roleidx{s}{q} : \zapped,
            s : \qentry{p}{q}{\ell}{\tya} \cdot \qty
                                & \lbleval{\lblzapmsg{s}{p}{q}{\ell}} &
            \rtenvzap, \roleidx{s}{q} : \zapped, s : \qty \\
\textsc{Lbl-ZapRecv} \!\!\!\!\!\! &
        \rtenvzap, \roleidx{s}{p} {:} \localoffer{q}{\msg{\ell_i}{\tya_i}.\sta_i}_{i \in I},
        \roleidx{s}{q} {:} \zapped, s {:} Q
                               &
                               \lbleval{\lblzapdequeue{s}{p}{q}}
                               &
                               \rtenvzap, \roleidx{s}{p} {:} \zapped, \roleidx{s}{q}
                               {:} \zapped, s {:} Q \quad (\text{if } \messages{\qrole}{\prole}{\qty} = \emptyset) \\
            \textsc{Lbl-Zap} &
            \rtenvzap, \roleidx{s}{p} : S & \zaplbleval{\roleidx{s}{p}} &
            \rtenvzap, \roleidx{s}{p} : \zapped
        \end{array}
        \]
 }
 \vspace{-1em}
 \caption{\langnamezap: Modified configuration typing rules and type LTS}
 \label{fig:extensions:supervision-failure-2}
\end{figure}

\subsection{Metatheory}
All metatheoretical results continue to hold.
Figure~\ref{fig:extensions:supervision-failure-2} shows the necessary
modifications to the configuration typing rules and type LTS.
We extend runtime type environments to \emph{cancellation-aware} environments
$\rtenvzap$ that include an additional entry of the form $\roleidx{s}{p} :
\zapped$, denoting that endpoint $\roleidx{s}{p}$ has been cancelled.
We also extend the type LTS to account for failure propagation~\cite{BarwellSY022}. 
Rule \textsc{Lbl-Zap} handles role cancellation (e.g.\  due to
\textsc{E-RaiseS}), but is defined separately since it is not needed for
determining behavioural properties of types.

\subsubsection{Preservation}
We need to modify the preservation theorem to include role cancellation in
environment reduction relation: specifically we write $\zapenvred$ for the
relation $\equivsynceval^{?}\!\!\zaplbleval{}^{*}$.
The safety property is unchanged for cancellation-aware environments.

\begin{restatable}[Preservation ($\ceval$, \langnamezap)]{theorem}{zappres}\label{thm:zap-pres}
If $\cseq{\tyenv}{\rtenvzap}{\config{C}}$ with $\safe{\rtenvzap}$
and $\config{C} \ceval \config{D}$,
then there exists some $\rtenvzap'$ such that $\rtenvzap \zapenvred \rtenvzap'$
and $\safe{\rtenvzap'}$
and $\cseq{\tyenv}{\rtenvzap'}{\config{D}}$.
\end{restatable}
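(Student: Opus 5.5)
We prove the statement by induction on the derivation of $\config{C} \ceval \config{D}$, reusing the proof of Theorem~\ref{thm:preservation} for every rule that is unchanged in \langnamezap. First we re-establish the supporting lemmas for the extended language: substitution for values and computations, and replacement for evaluation and thread contexts. We also need the ($\equiv$) half of preservation, now including the two new structural congruence axioms. The axiom $(\nu a)(\zap{a}) \parallel \config{C} \equiv \config{C}$ is immediate from \textsc{T-ZapActor} and \textsc{T-ActorName}. The garbage-collection axiom for a session all of whose endpoints are zapped and whose queue is empty is typed using the environment $\{\roleidx{s}{\prole_i} : \zapped\}_i, s : \epsilon$. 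For this we must check that such an environment is compliant: it is trivially safe, and it cannot reduce except by removing entries. We therefore extend $\df{\cdot}$ so that fully-cancelled environments with empty queues count as terminal, mirroring how $\localend$ entries are discarded by \textsc{Lbl-End}.

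For the congruence cases \textsc{E-Par}, \textsc{E-Nu} and \textsc{E-Struct} we use the inductive hypothesis. The key point is that $\zapenvred$ is compatible with environment extension. In \textsc{E-Nu} over a session name, compliance of the session environment must be preserved under both $\equivsynceval$ and $\zaplbleval{}$. For $\equivsynceval$ this holds by definition of safety and deadlock-freedom. For $\zaplbleval{}$ we prove a separate lemma that zapping an endpoint of a compliant environment yields a compliant cancellation-aware environment. The argument is that the new rules \textsc{Lbl-ZapMsg} and \textsc{Lbl-ZapRecv} drain every message addressed to a zapped role and cascade cancellation to every role waiting on it. As a result, any stuck state reached consists only of zapped entries and $\localend$ entries, with an empty queue.

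The axiom cases are the following. For \textsc{E-Spawn} and \textsc{E-Monitor}, the runtime environment is unchanged, $\tyenv$ is extended with $b : \typid$, and \textsc{T-Actor} receives an extra well-typed monitor entry. For \textsc{E-Suspend}, the handler state gains $\roleidx{s}{\prole} : \stin$, exactly the thread's endpoint type, so the environment does not change. For \textsc{E-React}, the step is one \textsc{Lbl-Recv} step, as before. For \textsc{E-InvokeM}, the callback has type $\sttyfun{\tyc}{\tyc}{\localend}{\localend}{\tyc}$, which gives \textsc{TT-NoSess}. For \textsc{E-Raise}, the actor's handler entries $\roleidx{s_i}{\prole_i} : \stin_i$ become $\zapped$ by $\zaplbleval{}$ steps. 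Its tokens $\inittokpos : \sta$ are retyped by \textsc{T-ZapTok}, and $a$ is retained by \textsc{T-ZapActor}. \textsc{E-RaiseS} is the same but additionally zaps the thread's endpoint. For \textsc{E-CancelMsg}, the queue loses an entry via \textsc{Lbl-ZapMsg}. For \textsc{E-CancelAP}, the pair $\inittokpos, \inittokneg$ is consumed together with the restriction. For \textsc{E-CancelH}, we take one \textsc{Lbl-ZapRecv} step, justified by the side condition $\messages{\qrole}{\prole}{\qcontents} = \emptyset$, and the failure callback is typed by \textsc{TT-NoSess}. In each case safety of the result follows from the third clause of safety for $\equivsynceval$ steps, together with a lemma that $\zaplbleval{}$ preserves safety: zapping only removes input obligations, so the first clause of safety can only become easier to satisfy.

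The main obstacle is the interaction between compliance, which \textsc{T-SessionName} requires of every restricted session, and the $\zaplbleval{}$ steps, because deadlock-freedom was stated without cancellation. I would first redefine deadlock-freedom for cancellation-aware environments so that final states may contain $\zapped$ entries. Then I would prove, by induction on the length of reduction sequences, that any reduction sequence of a zapped environment can be simulated by a reduction sequence of the original compliant environment with the zapped roles' actions elided. This shows that no new stuck configurations arise.
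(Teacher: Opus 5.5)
Your case analysis matches the paper's. You have the same new cases with the same environment steps: \textsc{Lbl-ZapMsg} for \textsc{E-CancelMsg}, \textsc{Lbl-ZapRecv} for \textsc{E-CancelH}, \textsc{Lbl-Zap} steps for \textsc{E-Raise} and \textsc{E-RaiseS}, and no change for \textsc{E-Monitor}, \textsc{E-InvokeM} and \textsc{E-CancelAP}. You also use the same lemma that zapping preserves safety, and your extended $\dfprop$ is exactly the paper's $\dfpropzap$.

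You go beyond the paper in one place. You notice that the \textsc{E-Nu} case for a session name must re-establish the compliance premise of \textsc{T-SessionName} after a \textsc{Lbl-Zap} step. The paper treats \textsc{E-Nu} as a routine adaptation, and its only related statement is the progress-section lemma that $\dfpropzap$ survives zapping, which has a one-line sketch.

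You are right that this is the crux, but your argument for it fails. The simulation sends a stuck state of the zapped environment to a reachable state of the original. That state is \emph{not} stuck, because the zapped roles can still act in it. Deadlock-freedom only constrains terminal states, so it says nothing when those roles can act forever. Zapping such a role can therefore expose a deadlock among roles that never depended on it. This is also why draining and cascading need not leave only $\zapped$ entries.

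Concretely, take the session environment
\[
\begin{array}{l}
\roleidx{s}{p} : \recty{X}{\localselectsingle{q}{\mkwd{m}}{\one} \then X}, \quad
\roleidx{s}{q} : \recty{Y}{\localoffersingle{p}{\mkwd{m}}{\one} \then Y}, \\
\roleidx{s}{r} : \localoffersingle{t}{\mkwd{m}}{\one} \then \localend, \quad
\roleidx{s}{t} : \localoffersingle{r}{\mkwd{m}}{\one} \then \localend, \quad
s : \epsilon
\end{array}
\]
\begin{itemize}
\item It is safe, and it is deadlock-free vacuously: $\role{p}$ can always send, so no reachable environment is stuck.
\item Restrict $s$ over an actor running $\sessthread{s}{p}{\raiseexn}$ and idle actors holding stored handlers for the other three endpoints (all branches and failure callbacks can be $\raiseexn$). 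This configuration is well typed.
\item After \textsc{E-RaiseS}, linearity forces $\roleidx{s}{p} : \zapped$. One \textsc{Lbl-ZapRecv} step then gives $\roleidx{s}{q} : \zapped$.
\item The result is stuck, with $\role{r}$ and $\role{t}$ waiting on each other over an empty queue. $\dfpropzap$ forbids this.
\end{itemize}

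So the lemma you need is false under the paper's definitions, and the reduct is untypable. The \textsc{E-Nu} case, and with it the theorem as formalised, cannot be proved; the paper's sketch has the same hole.

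To repair it you must strengthen compliance. One option is to require deadlock-freedom of every environment reachable by both $\equivsynceval$ and \textsc{Lbl-Zap} steps, which is closed under zapping by construction. Another is to assume a liveness-style property, in which every pending input is eventually served. Under liveness your simulation does yield $\dfpropzap$ for the zapped environment: the roles left blocked there wait on one another, so they stay blocked in every continuation of the simulated original state, contradicting liveness.
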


\subsubsection{Progress}
\langnamezap enjoys progress since \textsc{E-CancelMsg} discards messages that
cannot be received, and \textsc{E-CancelMsg} invokes the failure continuation
whenever a message will never be sent due to a failure. Monitoring is
orthogonal. The one change is that zapper threads for actors may remain if the
actor name is free in an existing monitoring or initialisation callback.
We require a slightly-adjusted deadlock-freedom property and canonical form to
account for session failure.

\begin{definition}[Deadlock-freedom and compliance (\langnamezap)]
    A runtime environment $\rtenvzap$ is \emph{deadlock-free}, written
    $\dfzap{\rtenvzap}$,
    if
    $\rtenvzap \mathop{\equivsynceval{}^*} \rtenvzap' \not\equivsynceval{}$
    implies that either $\rtenvzap' = s : \emptyq$ or
    $\rtenvzap' = (\roleidxx{s}{\prole_i} : \zapped)_{i \in I}, s : \emptyq$.

    A runtime environment $\rtenvzap$ is \emph{compliant}, written
    $\compzap{\rtenvzap}$, if $\safe{\rtenvzap}$ and $\dfzap{\rtenvzap}$.
\end{definition}

\begin{definition}A \langnamezap configuration $\config{C}$ is in \emph{canonical form} if it can be written:

\smallmath{
        (\nu \tilde{\inittok})
        (\nu \apname_{i \in 1..l})(\nu \sessname_{j \in 1..m})
        (\nu \aname_{k \in 1..n})(
        \ap{\apname_i}{\apstate_i}_{i \in 1..l} \parallel 
        (\qproc{s_j}{\qcontents_j})_{j \in 1..m} \parallel
        \zapactor[a_k]{\threadt_k}{\hstate_k}{\istate_k}{\monstate_k}_{k \in 1..{n' - 1}} \parallel
        \setseq{\zap{\nma}}
        )
    }
    with $(\zap{a_k})_{k \in n' .. n}$ contained in $\setseq{\zap{\nma}}$.
\end{definition}

\begin{restatable}[Progress (\langnamezap)]{theorem}{zapprog}
If $\cseq{\cdot}{\cdot}{\config{C}}$, then either there
exists some $\config{D}$ such that $\config{C} \ceval \config{D}$, or
$\config{C}$ is structurally congruent to the following canonical form:

\smallmath{
    (\nu \tilde{\inittok})(\nu \apname_{i \in 1..m})(\nu \aname_{j \in 1..n})
    (
    \ap{\apname_1}{\apstate_1}_{i \in 1..m} \parallel
    \zapactor[a_j]{\idle{\valc_j}}{\epsilon}{\istate_j}{\monstate_j}_{j \in 1..{n'-1}} \parallel
    (\zap{a_j})_{j \in {n'..n}}
    )
}
\end{restatable}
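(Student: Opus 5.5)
We prove progress for \langnamezap by following the proof of Theorem~\ref{thm:progress}, with extra cases for zapper threads and cancelled endpoints. Assume $\cseq{\cdot}{\cdot}{\config{C}}$ and that $\config{C}$ does not reduce; we show that $\config{C}$ has the stated shape.

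\emph{Canonical form.} First, using the $\equiv$ part of preservation (extended to the new congruence rules), we put $\config{C}$ into the \langnamezap canonical form. Typing under empty environments makes every free name bound. Zapper threads $\zap{a}$ stay where their restriction is, because rule \textsc{T-ZapActor} consumes the entry $a$ linearly, exactly as \textsc{T-Actor} does.

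\emph{Actor threads.} Every actor thread must be idle. If a thread is $\threadctx[\tma]$, we do a case analysis on $\tma$ via a standard lemma on term progress for well-typed closed terms. Either $\tma$ $\beta$-reduces (\textsc{E-Lift}), or it is one of the following, each of which reduces:
\begin{itemize}
\item a value return (\textsc{E-Reset});
\item $\calcwd{spawn}$, $\calcwd{newAP}$, $\calcwd{monitor}$ or $\raiseexn$ (\textsc{E-Spawn}, \textsc{E-NewAP}, \textsc{E-Monitor}, \textsc{E-Raise}/\textsc{E-RaiseS});
\item $\calcwd{register}$ on some $\apname$. Typing places $\ap{\apname}{\apstate}$ in the configuration, so \textsc{E-Register} fires;
\item a send. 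The thread must be inside a session, because of the $\localend$ precondition in \textsc{TT-NoSess}, and the queue for that session is present, so \textsc{E-Send} fires;
\item a suspend. This can only occur inside a session, so \textsc{E-Suspend} fires.
\end{itemize}

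\emph{Monitors and tokens.} If an idle actor monitors $b$ and $\zap{b}$ is present, \textsc{E-InvokeM} fires; in the normal form such monitors simply remain as part of $\monstate_j$. A zapper $\zap{\inittok}$ pairs, by linearity of $\inittokpos$, with a token in some access point, so \textsc{E-CancelAP} fires. Hence no $\zap{\inittok}$ remains.

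\emph{Sessions (main obstacle).} The hardest step is showing that no session restriction survives. Take any remaining $(\nu s)$ with session environment $\rtenvzap_s$; by typing, $\compzap{\rtenvzap_s}$ holds. All threads are idle, so each endpoint $\roleidx{s}{p}$ in $\rtenvzap_s$ is one of three kinds:
\begin{itemize}
\item an input type held as a stored handler (\textsc{TH-Handler});
\item $\zapped$, held by a $\zap{\roleidx{s}{p}}$;
\item $\localend$, which \textsc{Lbl-End} lets us drop.
\end{itemize}
Output types cannot occur, since they would have to be held by a running thread. Apply $\dfzap{\rtenvzap_s}$.

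Suppose $\rtenvzap_s$ can step. Then some transition is available, and we case on it:
\begin{itemize}
\item \textsc{Lbl-Recv}: the corresponding actor is idle with a handler and the matching message heads the $(\qrole,\prole)$ subqueue. Using queue congruence, \textsc{E-React} fires.
\item \textsc{Lbl-ZapMsg}: \textsc{E-CancelMsg} fires.
\item \textsc{Lbl-ZapRecv}: \textsc{E-CancelH} fires, its side condition matching exactly.
\end{itemize}
Each case contradicts irreducibility.

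Otherwise $\rtenvzap_s$ is stuck, so by $\dfzap{}$ it is either $s:\emptyq$ or consists only of zapped endpoints with an empty queue. In both cases the new structural congruence rule garbage-collects $(\nu s)(\zap{\roleidxx{s}{\prole_i}}_i \parallel \qproc{s}{\epsilon})$. Stored handlers for $s$ cannot exist, since they would contribute input types, so every $\hstate_j = \epsilon$.

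\emph{Conclusion.} The delicate bookkeeping throughout is matching each linear environment entry to the configuration component that holds it. After removing sessions, what remains is access points, idle actors with empty handler states, and actor zappers $\zap{a_j}$, each retained only when $a_j$ is still referenced (otherwise $(\nu a)(\zap{a})$ is removed by congruence). This is precisely the stated canonical form.
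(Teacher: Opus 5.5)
Your proof is correct and follows essentially the same route as the paper's. Both arguments go through the same stages:
\begin{itemize}
\item canonical form, then thread progress to make every actor idle;
\item compliance ($\dfpropzap$) of each session environment, turning an enabled type transition into \textsc{E-React}, \textsc{E-CancelMsg} or \textsc{E-CancelH};
\item discharging actor and token zappers via \textsc{E-InvokeM}, \textsc{E-CancelAP} and garbage collection;
\item garbage-collecting fully cancelled sessions.
\end{itemize}
Your single case split on whichever type transition is enabled, with the stuck case settled by $\dfpropzap$ directly, is a slightly tidier presentation than the paper's split into ``some endpoint held by an actor'' versus ``all endpoints zapped''. The ideas are the same.
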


\subsubsection{Global Progress}

A modified version of global progress holds: in every ongoing session, after a
number of reductions, each session will either be cancelled or perform a
communication action.

\begin{restatable}[Global Progress (\langnamezap)]{theorem}{zapgprog}
    If $\cseq{\cdot}{\cdot}{\config{C}}$ where $\config{C}$ is
    thread-terminating, then
    for every $s \in \activesessions{\config{C}}$,
    then there exist $\config{D}$ and $\config{D}_1$ such that
    $\config{C} \equiv (\nu s)\config{D}$ where
    $\config{D}\cevalannstar{\tau}\config{D}_1$ and either
    $\config{D}_1\cevalann{s}$, or $\config{D}_1 \equiv \config{D}_2$ for some $\config{D}_2$
    where $s \not\in \activesessions{\config{D}_2}$.
\end{restatable}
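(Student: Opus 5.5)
The plan follows the base case (Corollary~\ref{cor:global-progress}), with three ingredients adapted to \langnamezap: independence of thread reductions, reachability of an idle configuration, and reduction of every ongoing session from an idle configuration. Fix $s \in \activesessions{\config{C}}$. By the ($\equiv$) part of preservation, extended to \langnamezap, and the scope-extrusion rules, we hoist the restriction to obtain $\config{C} \equiv (\nu s)\config{D}$. The annotated session environment $\rtenvzap_s$ is compliant in the sense of $\compzap{\cdot}$ and ongoing.

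\textbf{Step 1: reaching an idle configuration.} We first re-establish Lemma~\ref{lem:gp:independence} for \langnamezap. The new thread constructs ($\calcwd{monitor}$, $\raiseexn$, the three-argument $\calcwd{suspend}$, and $\calcwd{spawn}$ returning a PID) all reduce immediately whenever they reach an evaluation context. The new non-thread rules (\textsc{E-CancelMsg}, \textsc{E-CancelAP}, \textsc{E-CancelH}, \textsc{E-InvokeM}) never consume an actor's active thread. Hence, as in Corollary~\ref{cor:gp:reduction-idle}, thread-termination gives $\config{D} \cevalannstar{\tau} \config{D}'$ in which every remaining actor is idle. Some actors may instead have become zappers via \textsc{E-Raise} or \textsc{E-RaiseS}, and those reductions only produce $\tau$ labels. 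The one subtlety is that \textsc{E-CancelH} and \textsc{E-InvokeM} restart threads; we will only fire $\tau$ steps that do not do so, or note that their callbacks are themselves thread-terminating. Preservation (Theorem~\ref{thm:zap-pres}) gives a safe $\rtenvzap'$ for $\config{D}'$ with $\rtenvzap_s \zapenvred \rtenvzap'_s$. If every endpoint in $\rtenvzap'_s$ is now $\zapped$ or $\localend$, then $s$ is no longer ongoing and we take $\config{D}_2 = \config{D}_1 = \config{D}'$, which gives the second disjunct.

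\textbf{Step 2: an idle configuration with $s$ still ongoing reduces on $s$ or cancels it.} This is the \langnamezap analogue of Lemma~\ref{lem:gp:idle-reduces}, and the main obstacle. We have to show that compliance survives cancellation: $\zaplbleval{}$ steps are not among the $\equivsynceval$ steps in $\dfzap{\cdot}$, so first we prove that $\dfzap{\cdot}$ is preserved by $\zaplbleval{}$. The argument goes by simulation. Any $\equivsynceval$-trace of the zapped environment either matches a trace of the original environment, or uses \textsc{Lbl-ZapMsg} and \textsc{Lbl-ZapRecv}, which only move further endpoints to $\zapped$ and empty queues. So a stuck state is still $s:\epsilon$ or all-$\zapped$ with an empty queue. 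I expect this to be the delicate step; I would try it first by induction on the length of the trace.

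Given $\dfzap{\rtenvzap'_s}$ with $\rtenvzap'_s$ ongoing, the environment $\rtenvzap'_s$ is not stuck, so it takes one of the following steps.
\begin{itemize}
\item \textsc{Lbl-Recv}. The receiving endpoint $\roleidx{s}{p}$ has an input type, so by idleness and thread-state typing it is held by a stored handler. \textsc{E-React} then fires with label $s$.
\item \textsc{Lbl-ZapMsg}. Then \textsc{E-CancelMsg} fires. It is labelled $\tau$, so we append it to the $\tau$-prefix and repeat the argument.
\item \textsc{Lbl-ZapRecv}. Then \textsc{E-CancelH} fires, also as a $\tau$ step, making $\roleidx{s}{p}$ zapped.
\item \textsc{Lbl-Send}. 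This cannot occur, since an output endpoint would require an active thread, contradicting idleness.
\end{itemize}

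\textbf{Step 3: termination of the loop.} Each \textsc{E-CancelMsg} or \textsc{E-CancelH} step strictly decreases the measure (queue length, number of non-$\zapped$ endpoints of $s$), ordered lexicographically. Threads restarted by \textsc{E-CancelH} are thread-terminating, so Step 1 re-idles them. Iterating, we eventually either fire \textsc{E-React}, giving $\config{D}_1 \cevalann{s}$, or reach a state in which all endpoints of $s$ are $\zapped$ with an empty queue. In the latter case the structural congruence for cancelled sessions garbage-collects $s$, giving $\config{D}_2$ with $s \notin \activesessions{\config{D}_2}$.
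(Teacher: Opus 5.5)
Your proposal takes the same route as the paper. The paper's proof is only a sketch of this structure: hoist $s$, run to an idle configuration as in Corollary~\ref{cor:global-progress}, then either fire \textsc{E-React} or propagate failure via \textsc{E-CancelH} and garbage-collect the fully cancelled session. Your measure argument, and your appeal to $\dfzap{\cdot}$ surviving cancellation (a lemma the paper states separately in its progress development), make explicit what the paper leaves implicit.

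Two small repairs are needed.
\begin{enumerate}
\item In Step~1 the run to idleness need not be all-$\tau$. A thread playing a role in $s$ may fire an $s$-labelled \textsc{E-Send}. In that case you stop there with the first disjunct, as the paper's base-case proof does.
\item In Step~3 you cannot assume the failure callback restarted by \textsc{E-CancelH} terminates, since thread-termination only constrains threads active in $\config{C}$. You do not need to. That actor has just lost its only endpoint of $s$ (\textsc{E-Init} assigns the roles of a session to distinct actors). So the invariant ``every actor holding a non-cancelled endpoint of $s$ is idle'' is preserved by \textsc{E-CancelMsg} and \textsc{E-CancelH}, and this invariant is all that Step~2 actually uses.
\end{enumerate}
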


\subsection{Discussion}
The semantics of $\raiseexn$ follows the Erlang ``let it crash'' methodology
that favours crashing upon errors over defensive programming:
the cascading failure approach, where a crashed actor propagates its failure
to others in a session, is common in affine-session-based
works~(e.g.,~\cite{LagaillardieNY22, FowlerLMD19}) and follows the strategy of
previous dynamically-checked Erlang implementations of session typing~\cite{Fowler16}.

Cancellation is also flexible enough to support other failure-handling
strategies: we can for example implement a $\leave{\vala}$ construct
that allows an actor to exit a session and update its state to $\vala$
\emph{without} terminating, using the following reduction rule:

\smallmath{
    \zapactor{\sessthread{s}{p}{\ctxe[\leave{\vala}]}}{\hstate}{\istate}{\monstate}
    \ceval
    \zapactor{\idle{\vala}}{\hstate}{\istate}{\monstate} \parallel
    \zap{\roleidx{s}{p}}
}
\langname's
combination of event-driven concurrency and cancellation also makes handling
timeouts straightforward.
We could for example extend the $\suspendexn{\valc}{\vala}{\valb}$ construct to
$\suspendtimeout{\valc}{\vala}{\mathsf{t}}{\valb}$, where $\mathsf{t}$ is some
deadline and invoke the failure-handling computation if the deadline is missed.
The failure-handling callback could then e.g.\ either retry or raise an
exception.
Indeed,~\citet{HouLY24} show how session cancellation can be used to enable
flexible timed session types and we expect that their results could be
incorporated into our design.
 \section{Implementation and Evaluation}\label{sec:implementation}

\subsection{Implementation}

Based on our formal design, we have implemented a toolchain for
\langname-style event-driven actor programming in Scala.
It adopts the state machine based API generation approach of
Scribble~\cite{HuY16}: \begin{enumerate}[leftmargin=*]
\item
The user specifies \emph{global types} in the Scribble protocol description
language~\cite{YoshidaHNN13}.
\item
Our toolchain internally uses Scribble to validate global types
according to the MPST-based safety conditions, \emph{project} them to local
types for each role, and construct a representation of each local type based on
\emph{communicating finite state machines} (CFSM)~\cite{10.1145/322374.322380}.
\item
From each CFSM, the toolchain generates a typed,
\emph{protocol-and-role-specific} API for the user to implement that role as
an event-driven \langname actor in native Scala.
\end{enumerate}

\paragraph{Typed APIs for \langname actor programming.}
Consider the \role{Shop} role in our running example
(Fig.~\ref{fig:intro:seqdiag}).
Fig.~\ref{fig:apigen} (top) shows the CFSM for \role{Shop} (with abbreviated message
labels) and a summary of the main generated types and operations (omitting the
type annotations for the \CODE{sid} and \CODE{pay} parameters).
The toolchain generates Scala types for each CFSM state: non-blocking states
(sends or suspends) are coloured \BLUE{blue}, whereas blocking states
(inputs) are \RED{red}.

\begin{figure}[t]
    {\footnotesize
    \header{CFSM (left) and API (right) for \role{Shop} Role (left)}
\begin{tabular}{cc}
\adjustbox{valign=t}{
    \scalebox{0.8}{
    \begin{tikzpicture}
[
    >=stealth',
    NODE/.style={draw,circle,minimum width=3mm,inner sep=0pt,fill=lightgray,font=\footnotesize},
    LAB/.style={font=\footnotesize\ttfamily}
]
\node[NODE] (S1) {$1$};
\node[NODE,below=of S1] (S2) {$2$};
\node[NODE,right=of S2] (S3) {$3$};
\node[NODE,above =of S3] (S4) {$4$};
\node[NODE,below=of S3] (S5) {$5$};
\node[NODE,below=of S5] (S6) {$6$};
\node[NODE,below=of S6] (S7) {$7$};
\node[NODE,left=of S7,yshift=13.3mm,xshift=6mm] (S8) {$8$};
\node[NODE,left=of S7] (S9) {$9$};
\draw[->] (S1) -- node[LAB,above,rotate=90]{C?RI} (S2);
\draw[->] (S2) -- node[LAB,above]{C!Is} (S3);
\path[->] (S3) edge[bend left] node[LAB,above,rotate=90,pos=0.6]{C?GI} (S4);
\path[->] (S4) edge[bend left] node[LAB,below,rotate=90,pos=0.4]{C!II} (S3);
\draw[->] (S3) edge[bend left] node[LAB,below,rotate=90]{C?CO} (S5);
\path[->] (S5) edge[bend left] node[LAB,above,rotate=90,pos=0.2]{C!OOS} (S3);
\draw[->] (S5) -- node[LAB,below,rotate=90]{C!PP} (S6);
\draw[->] (S6) -- node[LAB,below,rotate=90]{P!Buy} (S7);
\draw[->] (S7) -- node[LAB,left]{P?OK} (S8);
\draw[->] (S7) -- node[LAB,above]{P?IF} (S9);
\path[->] (S8) edge[bend left] node[LAB,below,rotate=90,pos=0.2]{C!OK} (S3);
\path[->] (S9) edge[bend left] node[LAB,above,rotate=90,pos=0.1]{C!IF} (S3);
\end{tikzpicture}}}
&
\adjustbox{valign=t}{\scriptsize\ttfamily
\begin{tabular}{l|l|ll}
\textnormal{\bf State}
& \textnormal{\bf State types}
& \textnormal{\textbf{\BLUE{Methods}} (\texttt{send}, \texttt{suspend}) or \textbf{\RED{Input cases}} (extends state type trait)}
\\
\hline
1     & \BLUE{S1Suspend}   &                  suspend[D](d: D, f: (D, \RED{S1}) => Done.\textbf{type}): Done.\textbf{type}
\\
      & \RED{S1}          & \textbf{case class} RequestItems(sid, pay, succ: \BLUE{S2}) \textbf{extends} \RED{S1}
\\
2     & \BLUE{S2}          &                  Customer\_sendItems(pay: ItemList): \BLUE{S3Suspend}
\\
3     & \BLUE{S3Suspend}   &                  suspend[D](d: D, f: (D, \RED{S3}) => Done.\textbf{type}): Done.\textbf{type}
\\
      & \RED{S3}          & \textbf{case class} GetItemInfo(sid, pay, succ: \BLUE{S4}) \textbf{extends} \RED{S3}
\\
      &             & \textbf{case class} Checkout(sid, pay, succ: \BLUE{S5}) \textbf{extends} \RED{S3}
\\
4     & \BLUE{S4}          &                  Customer\_sendItemInfo(pay): \BLUE{S3Suspend}
\\
5     & \BLUE{S5}          &                  Customer\_sendProcessingPayment(): \BLUE{S6}
\\
      &             &                  Customer\_sendOutOfStock(): \BLUE{S3Suspend}
\\
6     & \BLUE{S6}          &                  PaymentProcessor\_sendBuy(pay): \BLUE{S7Suspend}
\\
7     & \BLUE{S7Suspend}   &                  suspend[D](d: D, f: (D, \RED{S7}) => Done.\textbf{type}): Done.\textbf{type}
\\
      & \RED{S7}          & \textbf{case class} OK(sid, pay, succ: \BLUE{S8}) \textbf{extends} \RED{S7}
\\
      &             & \textbf{case class} InsufficientFunds(sid, pay, succ: \BLUE{S9}) \textbf{extends} \RED{S7}
\\
8     & \BLUE{S8}          &                  Customer\_sendOK(pay): \BLUE{S3Suspend}
\\
9     & \BLUE{S9}          &                  Customer\_sendInsufficientFunds(pay): \BLUE{S3Suspend}
\\
\end{tabular}
}
\end{tabular}

\vspace{1em}
\header{Implementation of Customer Request Handler for \role{Shop} Role}
\begin{tabular}{p{0.55\textwidth}|p{0.43\textwidth}}
{\begin{minipage}{0.55\textwidth}
\begin{lstlisting}[columns=flexible]
// d can be used for internal, _session-specific_ actor data
def custReqHandler[T: $\CODEss{\RED{S1orS3}}$](d: DataS, s: $\CODEss{\RED{T}}$): Done.type = {
  s match {
    case c: $\CODEss{\RED{S1}}$ => c match {
      // pay is message payload; succ is successor state
      case RequestItems(sid, pay, succ) =>
        succ.Customer_sendItems(d.summary())
            .suspend(d, custReqHandler[$\CODEss{\RED{S3}}$]) }
    case c: $\CODEss{\RED{S3}}$ => c match {
      case GetItemInfo(sid, pay, succ) =>
        succ.Customer_sendItemInfo(d.lookupItem(pay))
            .suspend(d, custReqHandler[$\CODEss{\RED{S3}}$])
      case Checkout(sid, pay, succ) =>
        if (d.inStock(pay)) {
          succ.Customer_sendProcessingPayment()
              .PaymentProcessor_sendBuy(d.total(pay))
              .suspend(d, paymentResponseHandler)
\end{lstlisting}
\end{minipage}}
&
{\begin{minipage}{0.43\textwidth}
\begin{lstlisting}[columns=flexible]

  // ...continuing on from the left column
  } else {
    val sus = succ.Customer_sendOutOfStock()
    // d.staff: LOption[R1] -- this is a..
    // .."frozen" instance of state type R1
    d.staff match {
      // R1 is the Restock protocol state type
      case x: LSome[R1] =>
        ibecome(d, x, restockHndlr)
      case _: LNone =>
        // Error handling
        throw new RuntimeException
    }
    sus.suspend(d, custReqHandler[$\CODEss{\RED{S3}}$])
  }
}}}
\end{lstlisting}
\end{minipage}}
\end{tabular}
}
\caption{API Generation for Customer-Shop-PaymentProcessor protocol}
\vspace{-1em}
\label{fig:apigen}
\end{figure}

Non-blocking state types provide methods for outputs and
suspend actions, with types specific to each state.
The return type corresponds to the successor state type,
enabling chaining of session actions: e.g.,\ state type \CODE{\BLUE{S2}}
has method \CODE{Customer\_sendItems} for the transition \CODE{C!Is}.
The successor state type \CODE{\BLUE{S3Suspend}} includes a \CODE{suspend}
method to install a handler for the input event of state 3, and to
yield control back to the event loop.
The \CODE{Done.type} type ensures that each handler must either complete the
protocol or perform a \CODE{suspend}.
Input state types are traits implemented by case classes generated
for each input message.
The event loop calls the user-specified handler with the corresponding case
class upon each input event, with each case class carrying an instance of the
successor state type.
For example, \CODE{\RED{S3}} (state 3) is implemented by case classes
\CODE{GetItemInfo} and \CODE{Checkout} for its input transitions, which
respectively carry instances of successor states \CODE{\BLUE{S4}} and
\CODE{\BLUE{S5}}.

The API guides the user to build a \langname actor with handlers for every input
event. Fig.~\ref{fig:apigen} (bottom) handles \CODE{\RED{S1}} and can be passed to
\CODE{\BLUE{S1Suspend}}
right after a session starts. It also handles \CODE{\RED{S3}}
(for \CODE{\BLUE{S3Suspend}}), where the shop
receives \CODE{GetItemInfo} or \CODE{Checkout}.
The runtime for our APIs executes sessions over TCP and uses the Java NIO
library to run the actor event loops.
It supports \emph{fully distributed} sessions between remote \langname actors.

\paragraph*{Switching between sessions.}
As well as supporting the core features and failure handling capabilities of
\langname, our implementation also includes the ability to proactively
\emph{switch} between sessions.
Figure~\ref{fig:apigen} (bottom) shows how this functionality can be used to switch into a
long-running \lstinline+Restock+ session when more stock is needed.
For this purpose, the API allows the user to ``freeze'' unused state type
instances as a type \CODE{LOption[S]} and resume them later by an inline
\CODE{ibecome}.
It allows the callback for a session switching behaviour to be performed
inline with the currently active handler.

\paragraph{Discussion.}

Following our formal model, our generated APIs support a conventional style of
actor programming
where non-blocking operations are programmed in direct-style, in contrast to
approaches that invert both input \emph{and} output
actions~\cite{ZhouHNY20,Thiemann23} through the event loop.

Static Scala typing ensures that handlers safely handle all possible input
events at every stage (by exhaustive matching of case classes), and that state
types offer only the permitted operations at each state (by method typing).
Our API design requires \emph{linear} usage of state type objects
(e.g., \CODE{s} and \CODE{succ}) and frozen session instances. Following other
works~\cite{HuY16,DBLP:conf/ecoop/ScalasY16,DBLP:journals/jfp/Padovani17,DBLP:journals/pacmpl/CastroHJNY19,VieringHEZ21},
we check linearity in a hybrid fashion: the \CODE{Done} return types in
Fig.~\ref{fig:apigen} statically require \CODE{suspend} to be invoked at least
once, but our APIs rule out multiple uses \emph{dynamically}.
We exploit our formal support for failure handling (Sec.~\ref{sec:failure}) to
treat dynamic linearity errors as failures and retain safety and progress.

In summary, our toolchain enables Scala programming of \langname actors that
support concurrent handling of multiple sessions, and ensures their safe
execution.
A statically well-typed actor will \emph{never} select an unavailable branch or
send/receive an incompatible payload type, and an actor system will \emph{never}
become stuck due to mismatching I/O actions.
As in the theory, the system without \CODE{ibecome} will enjoy global progress
provided every handler is terminating (e.g., by avoiding general
recursion/infinite iteration). 
Although we make no formal claims about the system \emph{with}
\CODE{ibecome}, we conjecture that it will also enjoy progress
up-to re-invocation of frozen sessions.

\begin{table}[t]
\footnotesize
\caption{Selected case studies, examples from Savina, and key features of their \langname programs.}
\label{tab:examples}
\vspace{-1em}
\renewcommand{\arraystretch}{0.8}
\begin{tabular}{|l|ccc|ccccccccc|}
\hline
& \multicolumn{3}{c|}{MPST(s)}
& \multicolumn{9}{c|}{\langname actor programs}
\\
& $\oplus/\&$ & $\mu$ & C/P
& mSA & mRA & PP & dSp & dTo & mAP & dAP & be & self
\\
\hline
\hline
Shop (Fig.~\ref{fig:intro:shop-impl})
& $\checkmark$ & $\checkmark$ &
& $\checkmark$ & $\checkmark$ & $\checkmark$ &              &                & $\checkmark$ &              &                &
\\
ShopRestock (Fig.~\ref{fig:apigen})
& $\checkmark$ & $\checkmark$ &
& $\checkmark$ & $\checkmark$ & $\checkmark$ &              &                & $\checkmark$ &              & $\checkmark$   &
\\
Robot~\cite{FowlerASGT23}
& $\checkmark$ & &
& $\checkmark$ &              & $\checkmark$ & $\checkmark$ & ($\checkmark$) &              &              &               &
\\
Chat~\cite{Fowler16}
& $\checkmark$ & $\checkmark$ & $\checkmark$
& $\checkmark$ & $\checkmark$ & $\checkmark$ & $\checkmark$ & $\checkmark$   & $\checkmark$ & $\checkmark$ & $\checkmark$   &
\\
\hline
\hline
Ping-self~\cite{ImamS14a}
& $\checkmark$ & $\checkmark$ & $\checkmark$
& $\checkmark$ & $\checkmark$ &              &              &                & $\checkmark$ &              & $\checkmark$   & $\checkmark$
\\
Ping~\cite{ImamS14a}
& $\checkmark$ & $\checkmark$ &
&              &              &              &              &                &              &              &                &
\\
Fib~\cite{savina}
& & &
& $\checkmark$ & $\checkmark$ & $\checkmark$ & $\checkmark$ & $\checkmark$ & $\checkmark$ & $\checkmark$   & $\checkmark$     &
\\
Dining-self~\cite{ImamS14a}
& $\checkmark$ & $\checkmark$ & $\checkmark$
& $\checkmark$ & $\checkmark$ & $\checkmark$ & $\checkmark$ & ($\checkmark$) & $\checkmark$ &              & $\checkmark$   & $\checkmark$
\\
Dining~\cite{ImamS14a}
& $\checkmark$ & $\checkmark$ &
& $\checkmark$ & $\checkmark$ & $\checkmark$ & $\checkmark$ & ($\checkmark$) & $\checkmark$ &              &                &
\\
Sieve~\cite{ImamS14a}
& $\checkmark$ & $\checkmark$ &
& $\checkmark$ & $\checkmark$ & $\checkmark$ & $\checkmark$ & $\checkmark$   & $\checkmark$ & $\checkmark$ & $\checkmark$   &
\\
\hline
\end{tabular}

\medskip

{\scriptsize
\setlength{\tabcolsep}{1pt}
\begin{tabular}{rclrclrclrclrcl}
$\oplus/\&$ &=& Branch type(s)
&
$\mu$ &=& Recursive type(s)\phantom{aa}
&
C/P &=& \multicolumn{4}{l}{Concurrent/Parallel types}
&
mSA &=& Multiple sessions/actor
\\
mRA &=& Multiple roles/actor\phantom{aa}
&
PP &=& \multicolumn{4}{l}{Parameterised number of actors}
&
dSp &=& \multicolumn{4}{l}{Dynamic actor spawning}
\\
dTo &=& Dynamic topology
&
mAP &=& Multiple APs
&
dAP &=& Dynamic AP creation\phantom{a}
&
be &=& \lstinline+ibecome+\phantom{aa}
&
self &=& Self communication
\end{tabular}
}
\vspace{-2em}
\end{table}

\subsection{Evaluation}
Table~\ref{tab:examples} summarises selected examples from the
Savina~\cite{ImamS14a} benchmark suite (lower) and larger case studies (upper);
the extended version contains sequence diagrams for the larger examples.
Notably, key design features of \langname, e.g.\ support for handling multiple
sessions per actor (mSA) and implementing multiple protocols/roles within actors
(mRA), are crucial to expressing many concurrency patterns.

The ``-self'' versions of Ping and Dining are versions faithful to the original
Akka programs that involve internal coordination using \CODE{self}~\CODE{!}~\CODE{msg}
operations, but our APIs can express equivalent behaviour more simply without
needing self-communication.

The {\small$(\checkmark)$} distinguishes simpler forms of dynamic topologies
(dTo) due to a parameterised number of clients dynamically connecting to a
central server, from richer structures such as the parent-children tree
topology dynamically created in Fib and the user-driven
dynamic connections between clients and chat rooms in
Chat; note both the latter involve dynamic access point creation (dAP).

The \textbf{Robot Coordination} use case (from Actyx AG~\cite{actyx}, originally
described in~\cite{FowlerASGT23}) describes multiple Robots accessing a
Warehouse with a single Door, where only one Robot is allowed in the Warehouse
at a time.
\langname allows the Door and Warehouse to safely handle the concurrent
interleavings of events across \emph{any number} (PP, dSP) of separate Robot
sessions (mSA).

The \textbf{Chat Server} use case~\cite{Fowler16} involves an arbitrary number
of Clients (PP) using a Registry to create new chat Rooms. Each Client can
dynamically join and leave any existing Room.
Rooms are created by spawning new Room actors (dSp) with fresh access points
(dAP, mAP), and we allow any Client to establish sessions with the Registry or
any Room asynchronously (dTo).
We decompose the Client-Registry and the Client-Room interactions into separate
protocols (C/P, mAP), and use \CODE{ibecome} (be) in the Room actor to broadcast
chat messages to all Clients currently in that Room.

 \section{Related Work}\label{sec:related}
We have given an overview of previous work on session-typed actors
in~\secref{sec:intro:key-principles}.
Several works investigate event-driven session typing.
\citet{ZhouHNY20} introduce a multiparty session type discipline with
statically-checked refinement types in F$\star$, using callbacks for each send and receive to avoid reasoning about linearity.
\citet{Miu0Y021} and~\citet{Thiemann23} adopt this approach for web applications
and Agda~\cite{Norell08} respectively.
Our approach only yields control to the event loop on actor \emph{receives}, as
in idiomatic actor programming.
\citet{DBLP:conf/ecoop/HuKPYH10} and~\citet{KouzapasYHH16} introduce a binary
session $\pi$-calculus with primitives for event loops; we
instead encode the event loop directly in the semantics.
\citet{VieringHEZ21} support fault-tolerant session-typed distributed
programming with inversion of control on both input and outputs.
They establish a version of global progress for trees of
\emph{subsessions}~\cite{DemangeonH12}; we instead establish global progress for
every session in the system.
These works all focus on process calculi rather than language design.

\citet{DBLP:journals/jlap/CicconeDP24} developed \emph{fair termination} for
\emph{synchronous} multiparty sessions, a strong property that subsumes our
global progress: it implies every \emph{role} fairly terminates, whereas our
coarser-grained property is per \emph{session}.
\citet{DBLP:conf/ecoop/PadovaniZ25} developed fair termination for
asynchronous \emph{binary} sessions and show that fair termination implies
orphan message freedom~\cite{DBLP:journals/lmcs/ChenDSY17}. Our system ensures
orphan message freedom for terminated multiparty sessions (as in
\cite{DBLP:conf/esop/DenielouY12,DBLP:conf/icalp/DenielouY13}), but we do
not aim to restrict \langname to terminating sessions.
We may be able to strengthen our formal results by adapting the fairness
conditions discussed
in~\cite{DBLP:journals/jlap/CicconeDP24,DBLP:conf/ecoop/PadovaniZ25} (developed
for session $\pi$-calculi) to our event-driven actor setting; however, features
such as combining session creation and parallel composition into one term (based
on linear logic) are more restrictive than in our model.

\emph{Mailbox types}~\cite{deLiguoroP18,FowlerASGT23} capture the expected
contents of a mailbox as a commutative regular expression, and ensure
that processes do not receive unexpected messages. Mailbox and session types
both aim to ensure safe communication but address different problems: session
types suit \emph{structured} interactions among known participants, whereas
mailbox types are better when participants are unknown and message ordering is
unimportant.  Mailbox types cannot yet handle failure.

\emph{Internal delegation}~\cite{DBLP:journals/tcs/CastellaniDGH20}
allows channels to migrate within a session, and may provide a way to
relate our model to session $\pi$-calculi via encodings (cf.\
\cite{KouzapasYHH16}).
\citet{DBLP:conf/ppdp/BarbaneraDGY23,DBLP:journals/jlap/BarbaneraBD25a}
emphasise simplified, \emph{compositional} multiparty session models, and
it would be interesting to formally compare their approach (based on parallel
composition and forwarding) to our single-threaded model.

Effpi~\cite{ScalasYB19} uses Scala's dependent function types to allow functions
to be checked against interactions written in a type-level DSL, supporting
verification of properties such as liveness and termination. This is different
to session typing (e.g., supporting parameterised interactions but not
branching), but it is unclear how their actor API would scale to multiple
session-style interactions.

 \section{Conclusion and Future Work}\label{sec:conclusion}
This paper introduces \langname, an actor language that rules out communication
mismatches and deadlocks using \emph{multiparty session types}.  Key to our
approach is a novel combination of a flow-sensitive effect system and
first-class message handlers. \langname scales to Erlang-style
failure handling.
Finally, we have shown an implementation of \langname in Scala using an API
generation approach, and evaluated our implementation on two larger case studies
and a selection of examples from the Savina benchmark suite.
In future it would be interesting to investigate path-dependent types in our
implementation, and to investigate finer-grained models for failure recovery
(e.g.,~\cite{NeykovaY17}).

\clearpage
\section*{Acknowledgements}
We thank the anonymous OOPSLA reviewers for their thorough comments that greatly
improved the quality of the paper. Thanks to Phil Trinder for his helpful
comments on an early draft, and to Matthew Alan Le Brun and Alceste Scalas for
discussions about the asynchronous semantics of multiparty session calculi.
This work was supported by EPSRC Grant EP/T014628/1 (STARDUST).

\section*{Data Availability Statement}
Our implementation is available on Zenodo~\cite{artifact}.

\bibliographystyle{ACM-Reference-Format}
\bibliography{eventactors}


\begin{thebibliography}{60}


\ifx \showCODEN    \undefined \def \showCODEN     #1{\unskip}     \fi
\ifx \showISBNx    \undefined \def \showISBNx     #1{\unskip}     \fi
\ifx \showISBNxiii \undefined \def \showISBNxiii  #1{\unskip}     \fi
\ifx \showISSN     \undefined \def \showISSN      #1{\unskip}     \fi
\ifx \showLCCN     \undefined \def \showLCCN      #1{\unskip}     \fi
\ifx \shownote     \undefined \def \shownote      #1{#1}          \fi
\ifx \showarticletitle \undefined \def \showarticletitle #1{#1}   \fi
\ifx \showURL      \undefined \def \showURL       {\relax}        \fi
\providecommand\bibfield[2]{#2}
\providecommand\bibinfo[2]{#2}
\providecommand\natexlab[1]{#1}
\providecommand\showeprint[2][]{arXiv:#2}

\bibitem[act(2023)]%
        {actyx}
 \bibinfo{year}{2023}\natexlab{}.
\newblock \bibinfo{booktitle}{\emph{Actyx AG}}.
\newblock
\urldef\tempurl%
\url{https://actyx.io}
\showURL{%
\tempurl}


\bibitem[Agha(1990)]%
        {Agha90}
\bibfield{author}{\bibinfo{person}{Gul~A. Agha}.}
  \bibinfo{year}{1990}\natexlab{}.
\newblock \bibinfo{booktitle}{\emph{{ACTORS} - a model of concurrent
  computation in distributed systems}}.
\newblock \bibinfo{publisher}{{MIT} Press}.
\newblock


\bibitem[Atkey(2009)]%
        {Atkey09}
\bibfield{author}{\bibinfo{person}{Robert Atkey}.}
  \bibinfo{year}{2009}\natexlab{}.
\newblock \showarticletitle{Parameterised notions of computation}.
\newblock \bibinfo{journal}{\emph{J. Funct. Program.}} \bibinfo{volume}{19},
  \bibinfo{number}{3-4} (\bibinfo{year}{2009}), \bibinfo{pages}{335--376}.
\newblock


\bibitem[Barbanera et~al\mbox{.}(2025)]%
        {DBLP:journals/jlap/BarbaneraBD25a}
\bibfield{author}{\bibinfo{person}{Franco Barbanera}, \bibinfo{person}{Viviana
  Bono}, {and} \bibinfo{person}{Mariangiola Dezani{-}Ciancaglini}.}
  \bibinfo{year}{2025}\natexlab{}.
\newblock \showarticletitle{Open compliance in multiparty sessions with partial
  typing}.
\newblock \bibinfo{journal}{\emph{J. Log. Algebraic Methods Program.}}
  \bibinfo{volume}{144} (\bibinfo{year}{2025}), \bibinfo{pages}{101046}.
\newblock
\href{https://doi.org/10.1016/J.JLAMP.2025.101046}{doi:\nolinkurl{10.1016/J.JLAMP.2025.101046}}


\bibitem[Barbanera et~al\mbox{.}(2023)]%
        {DBLP:conf/ppdp/BarbaneraDGY23}
\bibfield{author}{\bibinfo{person}{Franco Barbanera},
  \bibinfo{person}{Mariangiola Dezani{-}Ciancaglini}, \bibinfo{person}{Lorenzo
  Gheri}, {and} \bibinfo{person}{Nobuko Yoshida}.}
  \bibinfo{year}{2023}\natexlab{}.
\newblock \showarticletitle{Multicompatibility for Multiparty-Session
  Composition}. In \bibinfo{booktitle}{\emph{International Symposium on
  Principles and Practice of Declarative Programming, {PPDP} 2023, Lisboa,
  Portugal, October 22-23, 2023}}, \bibfield{editor}{\bibinfo{person}{Santiago
  Escobar} {and} \bibinfo{person}{Vasco~T. Vasconcelos}} (Eds.).
  \bibinfo{publisher}{{ACM}}, \bibinfo{pages}{2:1--2:15}.
\newblock
\href{https://doi.org/10.1145/3610612.3610614}{doi:\nolinkurl{10.1145/3610612.3610614}}


\bibitem[Barwell et~al\mbox{.}(2022)]%
        {BarwellSY022}
\bibfield{author}{\bibinfo{person}{Adam~D. Barwell}, \bibinfo{person}{Alceste
  Scalas}, \bibinfo{person}{Nobuko Yoshida}, {and} \bibinfo{person}{Fangyi
  Zhou}.} \bibinfo{year}{2022}\natexlab{}.
\newblock \showarticletitle{Generalised Multiparty Session Types with
  Crash-Stop Failures}. In \bibinfo{booktitle}{\emph{{CONCUR}}}
  \emph{(\bibinfo{series}{LIPIcs}, Vol.~\bibinfo{volume}{243})}.
  \bibinfo{publisher}{Schloss Dagstuhl - Leibniz-Zentrum f{\"{u}}r Informatik},
  \bibinfo{pages}{35:1--35:25}.
\newblock


\bibitem[Bettini et~al\mbox{.}(2008)]%
        {DBLP:conf/birthday/BettiniCDGV08}
\bibfield{author}{\bibinfo{person}{Lorenzo Bettini}, \bibinfo{person}{Sara
  Capecchi}, \bibinfo{person}{Mariangiola Dezani{-}Ciancaglini},
  \bibinfo{person}{Elena Giachino}, {and} \bibinfo{person}{Betti Venneri}.}
  \bibinfo{year}{2008}\natexlab{}.
\newblock \showarticletitle{Session and Union Types for Object Oriented
  Programming}. In \bibinfo{booktitle}{\emph{Concurrency, Graphs and Models,
  Essays Dedicated to Ugo Montanari on the Occasion of His 65th Birthday}}
  \emph{(\bibinfo{series}{Lecture Notes in Computer Science},
  Vol.~\bibinfo{volume}{5065})}, \bibfield{editor}{\bibinfo{person}{Pierpaolo
  Degano}, \bibinfo{person}{Rocco~De Nicola}, {and} \bibinfo{person}{Jos{\'{e}}
  Meseguer}} (Eds.). \bibinfo{publisher}{Springer}, \bibinfo{pages}{659--680}.
\newblock
\href{https://doi.org/10.1007/978-3-540-68679-8\_41}{doi:\nolinkurl{10.1007/978-3-540-68679-8\_41}}


\bibitem[Brand and Zafiropulo(1983)]%
        {10.1145/322374.322380}
\bibfield{author}{\bibinfo{person}{Daniel Brand} {and} \bibinfo{person}{Pitro
  Zafiropulo}.} \bibinfo{year}{1983}\natexlab{}.
\newblock \showarticletitle{On Communicating Finite-State Machines}.
\newblock \bibinfo{journal}{\emph{J. ACM}} \bibinfo{volume}{30},
  \bibinfo{number}{2} (\bibinfo{date}{apr} \bibinfo{year}{1983}),
  \bibinfo{pages}{323?342}.
\newblock
\showISSN{0004-5411}
\href{https://doi.org/10.1145/322374.322380}{doi:\nolinkurl{10.1145/322374.322380}}


\bibitem[Carbone et~al\mbox{.}(2016)]%
        {CarboneLMSW16}
\bibfield{author}{\bibinfo{person}{Marco Carbone}, \bibinfo{person}{Sam
  Lindley}, \bibinfo{person}{Fabrizio Montesi}, \bibinfo{person}{Carsten
  Sch{\"{u}}rmann}, {and} \bibinfo{person}{Philip Wadler}.}
  \bibinfo{year}{2016}\natexlab{}.
\newblock \showarticletitle{Coherence Generalises Duality: {A} Logical
  Explanation of Multiparty Session Types}. In
  \bibinfo{booktitle}{\emph{{CONCUR}}} \emph{(\bibinfo{series}{LIPIcs},
  Vol.~\bibinfo{volume}{59})}. \bibinfo{publisher}{Schloss Dagstuhl -
  Leibniz-Zentrum f{\"{u}}r Informatik}, \bibinfo{pages}{33:1--33:15}.
\newblock


\bibitem[Castellani et~al\mbox{.}(2020)]%
        {DBLP:journals/tcs/CastellaniDGH20}
\bibfield{author}{\bibinfo{person}{Ilaria Castellani},
  \bibinfo{person}{Mariangiola Dezani{-}Ciancaglini}, \bibinfo{person}{Paola
  Giannini}, {and} \bibinfo{person}{Ross Horne}.}
  \bibinfo{year}{2020}\natexlab{}.
\newblock \showarticletitle{Global types with internal delegation}.
\newblock \bibinfo{journal}{\emph{Theor. Comput. Sci.}}  \bibinfo{volume}{807}
  (\bibinfo{year}{2020}), \bibinfo{pages}{128--153}.
\newblock
\href{https://doi.org/10.1016/J.TCS.2019.09.027}{doi:\nolinkurl{10.1016/J.TCS.2019.09.027}}


\bibitem[Castro{-}Perez et~al\mbox{.}(2019)]%
        {DBLP:journals/pacmpl/CastroHJNY19}
\bibfield{author}{\bibinfo{person}{David Castro{-}Perez},
  \bibinfo{person}{Raymond Hu}, \bibinfo{person}{Sung{-}Shik Jongmans},
  \bibinfo{person}{Nicholas Ng}, {and} \bibinfo{person}{Nobuko Yoshida}.}
  \bibinfo{year}{2019}\natexlab{}.
\newblock \showarticletitle{Distributed programming using role-parametric
  session types in go: statically-typed endpoint APIs for
  dynamically-instantiated communication structures}.
\newblock \bibinfo{journal}{\emph{Proc. {ACM} Program. Lang.}}
  \bibinfo{volume}{3}, \bibinfo{number}{{POPL}} (\bibinfo{year}{2019}),
  \bibinfo{pages}{29:1--29:30}.
\newblock
\href{https://doi.org/10.1145/3290342}{doi:\nolinkurl{10.1145/3290342}}


\bibitem[Chaudhuri(2009)]%
        {Chaudhuri09}
\bibfield{author}{\bibinfo{person}{Avik Chaudhuri}.}
  \bibinfo{year}{2009}\natexlab{}.
\newblock \showarticletitle{A {C}oncurrent {ML} Library in {C}oncurrent
  {H}askell}. In \bibinfo{booktitle}{\emph{{ICFP}}}.
  \bibinfo{publisher}{{ACM}}.
\newblock


\bibitem[Chen et~al\mbox{.}(2017)]%
        {DBLP:journals/lmcs/ChenDSY17}
\bibfield{author}{\bibinfo{person}{Tzu{-}Chun Chen},
  \bibinfo{person}{Mariangiola Dezani{-}Ciancaglini}, \bibinfo{person}{Alceste
  Scalas}, {and} \bibinfo{person}{Nobuko Yoshida}.}
  \bibinfo{year}{2017}\natexlab{}.
\newblock \showarticletitle{On the Preciseness of Subtyping in Session Types}.
\newblock \bibinfo{journal}{\emph{Log. Methods Comput. Sci.}}
  \bibinfo{volume}{13}, \bibinfo{number}{2} (\bibinfo{year}{2017}).
\newblock
\href{https://doi.org/10.23638/LMCS-13(2:12)2017}{doi:\nolinkurl{10.23638/LMCS-13(2:12)2017}}


\bibitem[Ciccone et~al\mbox{.}(2024)]%
        {DBLP:journals/jlap/CicconeDP24}
\bibfield{author}{\bibinfo{person}{Luca Ciccone}, \bibinfo{person}{Francesco
  Dagnino}, {and} \bibinfo{person}{Luca Padovani}.}
  \bibinfo{year}{2024}\natexlab{}.
\newblock \showarticletitle{Fair termination of multiparty sessions}.
\newblock \bibinfo{journal}{\emph{J. Log. Algebraic Methods Program.}}
  \bibinfo{volume}{139} (\bibinfo{year}{2024}), \bibinfo{pages}{100964}.
\newblock
\href{https://doi.org/10.1016/J.JLAMP.2024.100964}{doi:\nolinkurl{10.1016/J.JLAMP.2024.100964}}


\bibitem[Coppo et~al\mbox{.}(2016)]%
        {CoppoDYP16}
\bibfield{author}{\bibinfo{person}{Mario Coppo}, \bibinfo{person}{Mariangiola
  Dezani{-}Ciancaglini}, \bibinfo{person}{Nobuko Yoshida}, {and}
  \bibinfo{person}{Luca Padovani}.} \bibinfo{year}{2016}\natexlab{}.
\newblock \showarticletitle{Global progress for dynamically interleaved
  multiparty sessions}.
\newblock \bibinfo{journal}{\emph{Math. Struct. Comput. Sci.}}
  \bibinfo{volume}{26}, \bibinfo{number}{2} (\bibinfo{year}{2016}),
  \bibinfo{pages}{238--302}.
\newblock


\bibitem[de'Liguoro and Padovani(2018)]%
        {deLiguoroP18}
\bibfield{author}{\bibinfo{person}{Ugo de'Liguoro} {and} \bibinfo{person}{Luca
  Padovani}.} \bibinfo{year}{2018}\natexlab{}.
\newblock \showarticletitle{Mailbox Types for Unordered Interactions}. In
  \bibinfo{booktitle}{\emph{{ECOOP}}} \emph{(\bibinfo{series}{LIPIcs},
  Vol.~\bibinfo{volume}{109})}. \bibinfo{publisher}{Schloss Dagstuhl -
  Leibniz-Zentrum f{\"{u}}r Informatik}, \bibinfo{pages}{15:1--15:28}.
\newblock


\bibitem[Demangeon and Honda(2012)]%
        {DemangeonH12}
\bibfield{author}{\bibinfo{person}{Romain Demangeon} {and}
  \bibinfo{person}{Kohei Honda}.} \bibinfo{year}{2012}\natexlab{}.
\newblock \showarticletitle{Nested Protocols in Session Types}. In
  \bibinfo{booktitle}{\emph{{CONCUR}}} \emph{(\bibinfo{series}{Lecture Notes in
  Computer Science}, Vol.~\bibinfo{volume}{7454})}.
  \bibinfo{publisher}{Springer}, \bibinfo{pages}{272--286}.
\newblock


\bibitem[Deni{\'{e}}lou and Yoshida(2012)]%
        {DBLP:conf/esop/DenielouY12}
\bibfield{author}{\bibinfo{person}{Pierre{-}Malo Deni{\'{e}}lou} {and}
  \bibinfo{person}{Nobuko Yoshida}.} \bibinfo{year}{2012}\natexlab{}.
\newblock \showarticletitle{Multiparty Session Types Meet Communicating
  Automata}. In \bibinfo{booktitle}{\emph{Programming Languages and Systems -
  21st European Symposium on Programming, {ESOP} 2012, Held as Part of the
  European Joint Conferences on Theory and Practice of Software, {ETAPS} 2012,
  Tallinn, Estonia, March 24 - April 1, 2012. Proceedings}}
  \emph{(\bibinfo{series}{Lecture Notes in Computer Science},
  Vol.~\bibinfo{volume}{7211})}, \bibfield{editor}{\bibinfo{person}{Helmut
  Seidl}} (Ed.). \bibinfo{publisher}{Springer}, \bibinfo{pages}{194--213}.
\newblock
\href{https://doi.org/10.1007/978-3-642-28869-2\_10}{doi:\nolinkurl{10.1007/978-3-642-28869-2\_10}}


\bibitem[Deni{\'{e}}lou and Yoshida(2013a)]%
        {DenielouY13}
\bibfield{author}{\bibinfo{person}{Pierre{-}Malo Deni{\'{e}}lou} {and}
  \bibinfo{person}{Nobuko Yoshida}.} \bibinfo{year}{2013}\natexlab{a}.
\newblock \showarticletitle{Multiparty Compatibility in Communicating Automata:
  Characterisation and Synthesis of Global Session Types}. In
  \bibinfo{booktitle}{\emph{{ICALP} {(2)}}} \emph{(\bibinfo{series}{Lecture
  Notes in Computer Science}, Vol.~\bibinfo{volume}{7966})}.
  \bibinfo{publisher}{Springer}, \bibinfo{pages}{174--186}.
\newblock


\bibitem[Deni{\'{e}}lou and Yoshida(2013b)]%
        {DBLP:conf/icalp/DenielouY13}
\bibfield{author}{\bibinfo{person}{Pierre{-}Malo Deni{\'{e}}lou} {and}
  \bibinfo{person}{Nobuko Yoshida}.} \bibinfo{year}{2013}\natexlab{b}.
\newblock \showarticletitle{Multiparty Compatibility in Communicating Automata:
  Characterisation and Synthesis of Global Session Types}. In
  \bibinfo{booktitle}{\emph{Automata, Languages, and Programming - 40th
  International Colloquium, {ICALP} 2013, Riga, Latvia, July 8-12, 2013,
  Proceedings, Part {II}}} \emph{(\bibinfo{series}{Lecture Notes in Computer
  Science}, Vol.~\bibinfo{volume}{7966})},
  \bibfield{editor}{\bibinfo{person}{Fedor~V. Fomin}, \bibinfo{person}{Rusins
  Freivalds}, \bibinfo{person}{Marta~Z. Kwiatkowska}, {and}
  \bibinfo{person}{David Peleg}} (Eds.). \bibinfo{publisher}{Springer},
  \bibinfo{pages}{174--186}.
\newblock
\href{https://doi.org/10.1007/978-3-642-39212-2\_18}{doi:\nolinkurl{10.1007/978-3-642-39212-2\_18}}


\bibitem[Foster et~al\mbox{.}(2002)]%
        {FosterTA02}
\bibfield{author}{\bibinfo{person}{Jeffrey~S. Foster}, \bibinfo{person}{Tachio
  Terauchi}, {and} \bibinfo{person}{Alex Aiken}.}
  \bibinfo{year}{2002}\natexlab{}.
\newblock \showarticletitle{Flow-Sensitive Type Qualifiers}. In
  \bibinfo{booktitle}{\emph{{PLDI}}}. \bibinfo{publisher}{{ACM}},
  \bibinfo{pages}{1--12}.
\newblock


\bibitem[Fowler(2016)]%
        {Fowler16}
\bibfield{author}{\bibinfo{person}{Simon Fowler}.}
  \bibinfo{year}{2016}\natexlab{}.
\newblock \showarticletitle{An {E}rlang Implementation of Multiparty Session
  Actors}. In \bibinfo{booktitle}{\emph{{ICE}}}
  \emph{(\bibinfo{series}{{EPTCS}}, Vol.~\bibinfo{volume}{223})}.
  \bibinfo{pages}{36--50}.
\newblock


\bibitem[Fowler et~al\mbox{.}(2023)]%
        {FowlerASGT23}
\bibfield{author}{\bibinfo{person}{Simon Fowler}, \bibinfo{person}{Duncan~Paul
  Attard}, \bibinfo{person}{Franciszek Sowul}, \bibinfo{person}{Simon~J. Gay},
  {and} \bibinfo{person}{Phil Trinder}.} \bibinfo{year}{2023}\natexlab{}.
\newblock \showarticletitle{Special Delivery: Programming with Mailbox Types}.
\newblock \bibinfo{journal}{\emph{Proc. {ACM} Program. Lang.}}
  \bibinfo{volume}{7}, \bibinfo{number}{{ICFP}} (\bibinfo{year}{2023}),
  \bibinfo{pages}{78--107}.
\newblock


\bibitem[Fowler and Hu(2026)]%
        {artifact}
\bibfield{author}{\bibinfo{person}{Simon Fowler} {and} \bibinfo{person}{Raymond
  Hu}.} \bibinfo{year}{2026}\natexlab{}.
\newblock \bibinfo{booktitle}{\emph{Speak Now: Safe Actor Programming with
  Multiparty Session Types (Artifact)}}.
\newblock
\urldef\tempurl%
\url{https://doi.org/10.5281/zenodo.18792000}
\showURL{%
\tempurl}


\bibitem[Fowler et~al\mbox{.}(2019)]%
        {FowlerLMD19}
\bibfield{author}{\bibinfo{person}{Simon Fowler}, \bibinfo{person}{Sam
  Lindley}, \bibinfo{person}{J.~Garrett Morris}, {and}
  \bibinfo{person}{S{\'{a}}ra Decova}.} \bibinfo{year}{2019}\natexlab{}.
\newblock \showarticletitle{Exceptional asynchronous session types: session
  types without tiers}.
\newblock \bibinfo{journal}{\emph{Proc. {ACM} Program. Lang.}}
  \bibinfo{volume}{3}, \bibinfo{number}{{POPL}} (\bibinfo{year}{2019}),
  \bibinfo{pages}{28:1--28:29}.
\newblock


\bibitem[Fowler et~al\mbox{.}(2017)]%
        {FowlerLW17}
\bibfield{author}{\bibinfo{person}{Simon Fowler}, \bibinfo{person}{Sam
  Lindley}, {and} \bibinfo{person}{Philip Wadler}.}
  \bibinfo{year}{2017}\natexlab{}.
\newblock \showarticletitle{Mixing Metaphors: Actors as Channels and Channels
  as Actors}. In \bibinfo{booktitle}{\emph{{ECOOP}}}
  \emph{(\bibinfo{series}{LIPIcs}, Vol.~\bibinfo{volume}{74})}.
  \bibinfo{publisher}{Schloss Dagstuhl - Leibniz-Zentrum f{\"{u}}r Informatik},
  \bibinfo{pages}{11:1--11:28}.
\newblock


\bibitem[Francalanza and Tabone(2023)]%
        {FrancalanzaT23}
\bibfield{author}{\bibinfo{person}{Adrian Francalanza} {and}
  \bibinfo{person}{Gerard Tabone}.} \bibinfo{year}{2023}\natexlab{}.
\newblock \showarticletitle{{ElixirST}: {A} session-based type system for
  {E}lixir modules}.
\newblock \bibinfo{journal}{\emph{J. Log. Algebraic Methods Program.}}
  \bibinfo{volume}{135} (\bibinfo{year}{2023}), \bibinfo{pages}{100891}.
\newblock


\bibitem[Gay and Vasconcelos(2010)]%
        {GayV10}
\bibfield{author}{\bibinfo{person}{Simon~J. Gay} {and}
  \bibinfo{person}{Vasco~Thudichum Vasconcelos}.}
  \bibinfo{year}{2010}\natexlab{}.
\newblock \showarticletitle{Linear type theory for asynchronous session types}.
\newblock \bibinfo{journal}{\emph{J. Funct. Program.}} \bibinfo{volume}{20},
  \bibinfo{number}{1} (\bibinfo{year}{2010}), \bibinfo{pages}{19--50}.
\newblock


\bibitem[Gordon(2017)]%
        {Gordon17}
\bibfield{author}{\bibinfo{person}{Colin~S. Gordon}.}
  \bibinfo{year}{2017}\natexlab{}.
\newblock \showarticletitle{A Generic Approach to Flow-Sensitive Polymorphic
  Effects}. In \bibinfo{booktitle}{\emph{{ECOOP}}}
  \emph{(\bibinfo{series}{LIPIcs}, Vol.~\bibinfo{volume}{74})}.
  \bibinfo{publisher}{Schloss Dagstuhl - Leibniz-Zentrum f{\"{u}}r Informatik},
  \bibinfo{pages}{13:1--13:31}.
\newblock


\bibitem[Harvey et~al\mbox{.}(2021)]%
        {HarveyFDG21}
\bibfield{author}{\bibinfo{person}{Paul Harvey}, \bibinfo{person}{Simon
  Fowler}, \bibinfo{person}{Ornela Dardha}, {and} \bibinfo{person}{Simon~J.
  Gay}.} \bibinfo{year}{2021}\natexlab{}.
\newblock \showarticletitle{Multiparty Session Types for Safe Runtime
  Adaptation in an Actor Language}. In \bibinfo{booktitle}{\emph{{ECOOP}}}
  \emph{(\bibinfo{series}{LIPIcs}, Vol.~\bibinfo{volume}{194})}.
  \bibinfo{publisher}{Schloss Dagstuhl - Leibniz-Zentrum f{\"{u}}r Informatik},
  \bibinfo{pages}{10:1--10:30}.
\newblock


\bibitem[Hewitt et~al\mbox{.}(1973)]%
        {HewittBS73}
\bibfield{author}{\bibinfo{person}{Carl Hewitt}, \bibinfo{person}{Peter~Boehler
  Bishop}, {and} \bibinfo{person}{Richard Steiger}.}
  \bibinfo{year}{1973}\natexlab{}.
\newblock \showarticletitle{A Universal Modular {ACTOR} Formalism for
  Artificial Intelligence}. In \bibinfo{booktitle}{\emph{{IJCAI}}}.
  \bibinfo{publisher}{William Kaufmann}, \bibinfo{pages}{235--245}.
\newblock


\bibitem[Honda et~al\mbox{.}(2008)]%
        {HondaYC08}
\bibfield{author}{\bibinfo{person}{Kohei Honda}, \bibinfo{person}{Nobuko
  Yoshida}, {and} \bibinfo{person}{Marco Carbone}.}
  \bibinfo{year}{2008}\natexlab{}.
\newblock \showarticletitle{Multiparty asynchronous session types}. In
  \bibinfo{booktitle}{\emph{{POPL}}}. \bibinfo{publisher}{{ACM}},
  \bibinfo{pages}{273--284}.
\newblock


\bibitem[Hou et~al\mbox{.}(2024)]%
        {HouLY24}
\bibfield{author}{\bibinfo{person}{Ping Hou}, \bibinfo{person}{Nicolas
  Lagaillardie}, {and} \bibinfo{person}{Nobuko Yoshida}.}
  \bibinfo{year}{2024}\natexlab{}.
\newblock \showarticletitle{Fearless Asynchronous Communications with Timed
  Multiparty Session Protocols}. In \bibinfo{booktitle}{\emph{{ECOOP}}}
  \emph{(\bibinfo{series}{LIPIcs}, Vol.~\bibinfo{volume}{313})}.
  \bibinfo{publisher}{Schloss Dagstuhl - Leibniz-Zentrum f{\"{u}}r Informatik},
  \bibinfo{pages}{19:1--19:30}.
\newblock


\bibitem[Hu et~al\mbox{.}(2010)]%
        {DBLP:conf/ecoop/HuKPYH10}
\bibfield{author}{\bibinfo{person}{Raymond Hu}, \bibinfo{person}{Dimitrios
  Kouzapas}, \bibinfo{person}{Olivier Pernet}, \bibinfo{person}{Nobuko
  Yoshida}, {and} \bibinfo{person}{Kohei Honda}.}
  \bibinfo{year}{2010}\natexlab{}.
\newblock \showarticletitle{Type-Safe Eventful Sessions in Java}. In
  \bibinfo{booktitle}{\emph{{ECOOP} 2010 - Object-Oriented Programming, 24th
  European Conference, Maribor, Slovenia, June 21-25, 2010. Proceedings}}
  \emph{(\bibinfo{series}{Lecture Notes in Computer Science},
  Vol.~\bibinfo{volume}{6183})}, \bibfield{editor}{\bibinfo{person}{Theo
  D'Hondt}} (Ed.). \bibinfo{publisher}{Springer}, \bibinfo{pages}{329--353}.
\newblock
\href{https://doi.org/10.1007/978-3-642-14107-2\_16}{doi:\nolinkurl{10.1007/978-3-642-14107-2\_16}}


\bibitem[Hu and Yoshida(2016)]%
        {HuY16}
\bibfield{author}{\bibinfo{person}{Raymond Hu} {and} \bibinfo{person}{Nobuko
  Yoshida}.} \bibinfo{year}{2016}\natexlab{}.
\newblock \showarticletitle{Hybrid Session Verification Through Endpoint {API}
  Generation}. In \bibinfo{booktitle}{\emph{{FASE}}}
  \emph{(\bibinfo{series}{Lecture Notes in Computer Science},
  Vol.~\bibinfo{volume}{9633})}. \bibinfo{publisher}{Springer},
  \bibinfo{pages}{401--418}.
\newblock


\bibitem[Hu and Yoshida(2017)]%
        {HuY17}
\bibfield{author}{\bibinfo{person}{Raymond Hu} {and} \bibinfo{person}{Nobuko
  Yoshida}.} \bibinfo{year}{2017}\natexlab{}.
\newblock \showarticletitle{Explicit Connection Actions in Multiparty Session
  Types}. In \bibinfo{booktitle}{\emph{{FASE}}} \emph{(\bibinfo{series}{Lecture
  Notes in Computer Science}, Vol.~\bibinfo{volume}{10202})}.
  \bibinfo{publisher}{Springer}, \bibinfo{pages}{116--133}.
\newblock


\bibitem[Imam({[n.\,d.]})]%
        {savina}
\bibfield{author}{\bibinfo{person}{Shams Imam}.}
  \bibinfo{year}{[n.\,d.]}\natexlab{}.
\newblock \bibinfo{title}{{Savina Actor Benchmark Suite}}.
\newblock \bibinfo{howpublished}{\url{https://github.com/shamsimam/savina}}.
\newblock
\newblock
\shownote{Accessed: 2024-11-13}.


\bibitem[Imam and Sarkar(2014)]%
        {ImamS14a}
\bibfield{author}{\bibinfo{person}{Shams~Mahmood Imam} {and}
  \bibinfo{person}{Vivek Sarkar}.} \bibinfo{year}{2014}\natexlab{}.
\newblock \showarticletitle{Savina - An Actor Benchmark Suite: Enabling
  Empirical Evaluation of Actor Libraries}. In
  \bibinfo{booktitle}{\emph{AGERE!@SPLASH}}. \bibinfo{publisher}{{ACM}},
  \bibinfo{pages}{67--80}.
\newblock


\bibitem[Kouzapas et~al\mbox{.}(2016)]%
        {KouzapasYHH16}
\bibfield{author}{\bibinfo{person}{Dimitrios Kouzapas}, \bibinfo{person}{Nobuko
  Yoshida}, \bibinfo{person}{Raymond Hu}, {and} \bibinfo{person}{Kohei Honda}.}
  \bibinfo{year}{2016}\natexlab{}.
\newblock \showarticletitle{On asynchronous eventful session semantics}.
\newblock \bibinfo{journal}{\emph{Math. Struct. Comput. Sci.}}
  \bibinfo{volume}{26}, \bibinfo{number}{2} (\bibinfo{year}{2016}),
  \bibinfo{pages}{303--364}.
\newblock


\bibitem[Lagaillardie et~al\mbox{.}(2022)]%
        {LagaillardieNY22}
\bibfield{author}{\bibinfo{person}{Nicolas Lagaillardie},
  \bibinfo{person}{Rumyana Neykova}, {and} \bibinfo{person}{Nobuko Yoshida}.}
  \bibinfo{year}{2022}\natexlab{}.
\newblock \showarticletitle{Stay Safe Under Panic: Affine Rust Programming with
  Multiparty Session Types}. In \bibinfo{booktitle}{\emph{{ECOOP}}}
  \emph{(\bibinfo{series}{LIPIcs}, Vol.~\bibinfo{volume}{222})}.
  \bibinfo{publisher}{Schloss Dagstuhl - Leibniz-Zentrum f{\"{u}}r Informatik},
  \bibinfo{pages}{4:1--4:29}.
\newblock


\bibitem[Levy et~al\mbox{.}(2003)]%
        {LPT03}
\bibfield{author}{\bibinfo{person}{Paul~Blain Levy}, \bibinfo{person}{John
  Power}, {and} \bibinfo{person}{Hayo Thielecke}.}
  \bibinfo{year}{2003}\natexlab{}.
\newblock \showarticletitle{Modelling environments in call-by-value programming
  languages}.
\newblock \bibinfo{journal}{\emph{Information and Computation}}
  \bibinfo{volume}{185}, \bibinfo{number}{2} (\bibinfo{year}{2003}),
  \bibinfo{pages}{182--210}.
\newblock


\bibitem[Lindley and Cheney(2012)]%
        {LindleyC12}
\bibfield{author}{\bibinfo{person}{Sam Lindley} {and} \bibinfo{person}{James
  Cheney}.} \bibinfo{year}{2012}\natexlab{}.
\newblock \showarticletitle{Row-based effect types for database integration}.
  In \bibinfo{booktitle}{\emph{{TLDI}}}. \bibinfo{publisher}{{ACM}},
  \bibinfo{pages}{91--102}.
\newblock


\bibitem[Lindley and Morris(2015)]%
        {LindleyM15}
\bibfield{author}{\bibinfo{person}{Sam Lindley} {and}
  \bibinfo{person}{J.~Garrett Morris}.} \bibinfo{year}{2015}\natexlab{}.
\newblock \showarticletitle{A Semantics for Propositions as Sessions}. In
  \bibinfo{booktitle}{\emph{{ESOP}}} \emph{(\bibinfo{series}{Lecture Notes in
  Computer Science}, Vol.~\bibinfo{volume}{9032})}.
  \bibinfo{publisher}{Springer}, \bibinfo{pages}{560--584}.
\newblock


\bibitem[Miu et~al\mbox{.}(2021)]%
        {Miu0Y021}
\bibfield{author}{\bibinfo{person}{Anson Miu}, \bibinfo{person}{Francisco
  Ferreira}, \bibinfo{person}{Nobuko Yoshida}, {and} \bibinfo{person}{Fangyi
  Zhou}.} \bibinfo{year}{2021}\natexlab{}.
\newblock \showarticletitle{Communication-safe web programming in {TypeScript}
  with routed multiparty session types}. In \bibinfo{booktitle}{\emph{{CC}}}.
  \bibinfo{publisher}{{ACM}}, \bibinfo{pages}{94--106}.
\newblock


\bibitem[Mostrous and Vasconcelos(2011)]%
        {MostrousV11}
\bibfield{author}{\bibinfo{person}{Dimitris Mostrous} {and}
  \bibinfo{person}{Vasco~Thudichum Vasconcelos}.}
  \bibinfo{year}{2011}\natexlab{}.
\newblock \showarticletitle{Session Typing for a Featherweight {E}rlang}. In
  \bibinfo{booktitle}{\emph{{COORDINATION}}} \emph{(\bibinfo{series}{Lecture
  Notes in Computer Science}, Vol.~\bibinfo{volume}{6721})}.
  \bibinfo{publisher}{Springer}, \bibinfo{pages}{95--109}.
\newblock


\bibitem[Mostrous and Vasconcelos(2018)]%
        {MostrousV18}
\bibfield{author}{\bibinfo{person}{Dimitris Mostrous} {and}
  \bibinfo{person}{Vasco~T. Vasconcelos}.} \bibinfo{year}{2018}\natexlab{}.
\newblock \showarticletitle{Affine Sessions}.
\newblock \bibinfo{journal}{\emph{Log. Methods Comput. Sci.}}
  \bibinfo{volume}{14}, \bibinfo{number}{4} (\bibinfo{year}{2018}).
\newblock


\bibitem[Neykova and Yoshida(2017a)]%
        {NeykovaY17}
\bibfield{author}{\bibinfo{person}{Rumyana Neykova} {and}
  \bibinfo{person}{Nobuko Yoshida}.} \bibinfo{year}{2017}\natexlab{a}.
\newblock \showarticletitle{Let it recover: multiparty protocol-induced
  recovery}. In \bibinfo{booktitle}{\emph{{CC}}}. \bibinfo{publisher}{{ACM}},
  \bibinfo{pages}{98--108}.
\newblock


\bibitem[Neykova and Yoshida(2017b)]%
        {NeykovaY16}
\bibfield{author}{\bibinfo{person}{Rumyana Neykova} {and}
  \bibinfo{person}{Nobuko Yoshida}.} \bibinfo{year}{2017}\natexlab{b}.
\newblock \showarticletitle{Multiparty Session Actors}.
\newblock \bibinfo{journal}{\emph{Log. Methods Comput. Sci.}}
  \bibinfo{volume}{13}, \bibinfo{number}{1} (\bibinfo{year}{2017}).
\newblock


\bibitem[Norell(2008)]%
        {Norell08}
\bibfield{author}{\bibinfo{person}{Ulf Norell}.}
  \bibinfo{year}{2008}\natexlab{}.
\newblock \showarticletitle{Dependently Typed Programming in {A}gda}. In
  \bibinfo{booktitle}{\emph{Advanced Functional Programming}}
  \emph{(\bibinfo{series}{Lecture Notes in Computer Science},
  Vol.~\bibinfo{volume}{5832})}. \bibinfo{publisher}{Springer},
  \bibinfo{pages}{230--266}.
\newblock


\bibitem[Padovani(2017)]%
        {DBLP:journals/jfp/Padovani17}
\bibfield{author}{\bibinfo{person}{Luca Padovani}.}
  \bibinfo{year}{2017}\natexlab{}.
\newblock \showarticletitle{A simple library implementation of binary
  sessions}.
\newblock \bibinfo{journal}{\emph{J. Funct. Program.}}  \bibinfo{volume}{27}
  (\bibinfo{year}{2017}), \bibinfo{pages}{e4}.
\newblock
\href{https://doi.org/10.1017/S0956796816000289}{doi:\nolinkurl{10.1017/S0956796816000289}}


\bibitem[Padovani and Zavattaro(2025)]%
        {DBLP:conf/ecoop/PadovaniZ25}
\bibfield{author}{\bibinfo{person}{Luca Padovani} {and}
  \bibinfo{person}{Gianluigi Zavattaro}.} \bibinfo{year}{2025}\natexlab{}.
\newblock \showarticletitle{Fair Termination of Asynchronous Binary Sessions}.
  In \bibinfo{booktitle}{\emph{39th European Conference on Object-Oriented
  Programming, {ECOOP} 2025, June 30 to July 2, 2025, Bergen, Norway}}
  \emph{(\bibinfo{series}{LIPIcs}, Vol.~\bibinfo{volume}{333})},
  \bibfield{editor}{\bibinfo{person}{Jonathan Aldrich} {and}
  \bibinfo{person}{Alexandra Silva}} (Eds.). \bibinfo{publisher}{Schloss
  Dagstuhl - Leibniz-Zentrum f{\"{u}}r Informatik},
  \bibinfo{pages}{24:1--24:29}.
\newblock
\href{https://doi.org/10.4230/LIPICS.ECOOP.2025.24}{doi:\nolinkurl{10.4230/LIPICS.ECOOP.2025.24}}


\bibitem[Reynolds(2000)]%
        {Reynolds00}
\bibfield{author}{\bibinfo{person}{John~C. Reynolds}.}
  \bibinfo{year}{2000}\natexlab{}.
\newblock \bibinfo{booktitle}{\emph{The Meaning of Types---From Intrinsic to
  Extrinsic Semantics}}.
\newblock \bibinfo{type}{{T}echnical {R}eport} RS-00-32.
  \bibinfo{institution}{BRICS}.
\newblock


\bibitem[Scalas et~al\mbox{.}(2017)]%
        {ScalasDHY17}
\bibfield{author}{\bibinfo{person}{Alceste Scalas}, \bibinfo{person}{Ornela
  Dardha}, \bibinfo{person}{Raymond Hu}, {and} \bibinfo{person}{Nobuko
  Yoshida}.} \bibinfo{year}{2017}\natexlab{}.
\newblock \showarticletitle{A Linear Decomposition of Multiparty Sessions for
  Safe Distributed Programming}. In \bibinfo{booktitle}{\emph{{ECOOP}}}
  \emph{(\bibinfo{series}{LIPIcs}, Vol.~\bibinfo{volume}{74})}.
  \bibinfo{publisher}{Schloss Dagstuhl - Leibniz-Zentrum f{\"{u}}r Informatik},
  \bibinfo{pages}{24:1--24:31}.
\newblock


\bibitem[Scalas and Yoshida(2016)]%
        {DBLP:conf/ecoop/ScalasY16}
\bibfield{author}{\bibinfo{person}{Alceste Scalas} {and}
  \bibinfo{person}{Nobuko Yoshida}.} \bibinfo{year}{2016}\natexlab{}.
\newblock \showarticletitle{Lightweight Session Programming in Scala}. In
  \bibinfo{booktitle}{\emph{30th European Conference on Object-Oriented
  Programming, {ECOOP} 2016, July 18-22, 2016, Rome, Italy}}
  \emph{(\bibinfo{series}{LIPIcs}, Vol.~\bibinfo{volume}{56})},
  \bibfield{editor}{\bibinfo{person}{Shriram Krishnamurthi} {and}
  \bibinfo{person}{Benjamin~S. Lerner}} (Eds.). \bibinfo{publisher}{Schloss
  Dagstuhl - Leibniz-Zentrum f{\"{u}}r Informatik},
  \bibinfo{pages}{21:1--21:28}.
\newblock
\href{https://doi.org/10.4230/LIPICS.ECOOP.2016.21}{doi:\nolinkurl{10.4230/LIPICS.ECOOP.2016.21}}


\bibitem[Scalas and Yoshida(2019)]%
        {ScalasY19}
\bibfield{author}{\bibinfo{person}{Alceste Scalas} {and}
  \bibinfo{person}{Nobuko Yoshida}.} \bibinfo{year}{2019}\natexlab{}.
\newblock \showarticletitle{Less is more: multiparty session types revisited}.
\newblock \bibinfo{journal}{\emph{Proc. {ACM} Program. Lang.}}
  \bibinfo{volume}{3}, \bibinfo{number}{{POPL}} (\bibinfo{year}{2019}),
  \bibinfo{pages}{30:1--30:29}.
\newblock


\bibitem[Scalas et~al\mbox{.}(2019)]%
        {ScalasYB19}
\bibfield{author}{\bibinfo{person}{Alceste Scalas}, \bibinfo{person}{Nobuko
  Yoshida}, {and} \bibinfo{person}{Elias Benussi}.}
  \bibinfo{year}{2019}\natexlab{}.
\newblock \showarticletitle{Verifying message-passing programs with dependent
  behavioural types}. In \bibinfo{booktitle}{\emph{{PLDI}}}.
  \bibinfo{publisher}{{ACM}}, \bibinfo{pages}{502--516}.
\newblock


\bibitem[Thiemann(2023)]%
        {Thiemann23}
\bibfield{author}{\bibinfo{person}{Peter Thiemann}.}
  \bibinfo{year}{2023}\natexlab{}.
\newblock \showarticletitle{Intrinsically Typed Sessions with Callbacks
  (Functional Pearl)}.
\newblock \bibinfo{journal}{\emph{Proc. {ACM} Program. Lang.}}
  \bibinfo{volume}{7}, \bibinfo{number}{{ICFP}} (\bibinfo{year}{2023}),
  \bibinfo{pages}{711--739}.
\newblock


\bibitem[Viering et~al\mbox{.}(2021)]%
        {VieringHEZ21}
\bibfield{author}{\bibinfo{person}{Malte Viering}, \bibinfo{person}{Raymond
  Hu}, \bibinfo{person}{Patrick Eugster}, {and} \bibinfo{person}{Lukasz
  Ziarek}.} \bibinfo{year}{2021}\natexlab{}.
\newblock \showarticletitle{A multiparty session typing discipline for
  fault-tolerant event-driven distributed programming}.
\newblock \bibinfo{journal}{\emph{Proc. {ACM} Program. Lang.}}
  \bibinfo{volume}{5}, \bibinfo{number}{{OOPSLA}} (\bibinfo{year}{2021}),
  \bibinfo{pages}{1--30}.
\newblock


\bibitem[Yoshida et~al\mbox{.}(2013)]%
        {YoshidaHNN13}
\bibfield{author}{\bibinfo{person}{Nobuko Yoshida}, \bibinfo{person}{Raymond
  Hu}, \bibinfo{person}{Rumyana Neykova}, {and} \bibinfo{person}{Nicholas Ng}.}
  \bibinfo{year}{2013}\natexlab{}.
\newblock \showarticletitle{The {S}cribble Protocol Language}. In
  \bibinfo{booktitle}{\emph{{TGC}}} \emph{(\bibinfo{series}{Lecture Notes in
  Computer Science}, Vol.~\bibinfo{volume}{8358})}.
  \bibinfo{publisher}{Springer}, \bibinfo{pages}{22--41}.
\newblock


\bibitem[Zhou et~al\mbox{.}(2020)]%
        {ZhouHNY20}
\bibfield{author}{\bibinfo{person}{Fangyi Zhou}, \bibinfo{person}{Francisco
  Ferreira}, \bibinfo{person}{Raymond Hu}, \bibinfo{person}{Rumyana Neykova},
  {and} \bibinfo{person}{Nobuko Yoshida}.} \bibinfo{year}{2020}\natexlab{}.
\newblock \showarticletitle{Statically verified refinements for multiparty
  protocols}.
\newblock \bibinfo{journal}{\emph{Proc. {ACM} Program. Lang.}}
  \bibinfo{volume}{4}, \bibinfo{number}{{OOPSLA}} (\bibinfo{year}{2020}),
  \bibinfo{pages}{148:1--148:30}.
\newblock


\end{thebibliography}

\clearpage
\clearpage
{
\appendix

    \begin{center}
        {\Huge Appendices}
    \end{center}

     \etoctocstyle{1}{Appendix Contents}
     \etocdepthtag.toc{mtappendix}
     \etocsettagdepth{mtchapter}{none}
     \etocsettagdepth{mtappendix}{subsection}
     \tableofcontents

\newpage

\appendix
\section{Details of Case Study Protocols}\label{ap:case-studies}

In this section we detail the protocols and sequence diagrams for the two case
studies.

\subsection{Robots}

The robots protocol can be found below, both as a Scribble global type and a
sequence diagram. Role \lstinline+R+ stands for Robot, \lstinline+D+ stands for
Door, and \lstinline+W+ stands for Warehouse.

\begin{minipage}{0.6\textwidth}
\begin{lstlisting}[language=scribble]
global protocol Robot(role R, role D, role W) {
  Want(PartNum) from R to D;
  choice at D {
    Busy() from D to R;
    Cancel() from D to W;
  } or {
    GoIn() from D to R;
    Prepare(PartNum) from D to W;
    Inside() from R to D;
    Prepared() from W to D;
    Deliver() from D to W;
    Delivered() from W to R;
    PartTaken() from R to W;
    WantLeave() from R to D;
    GoOut() from D to R;
    Outside() from R to D;
    TableIdle() from W to D;
  }
}
\end{lstlisting}
\end{minipage}
\hfill
\begin{minipage}{0.35\textwidth}
    \includegraphics[width=\textwidth]{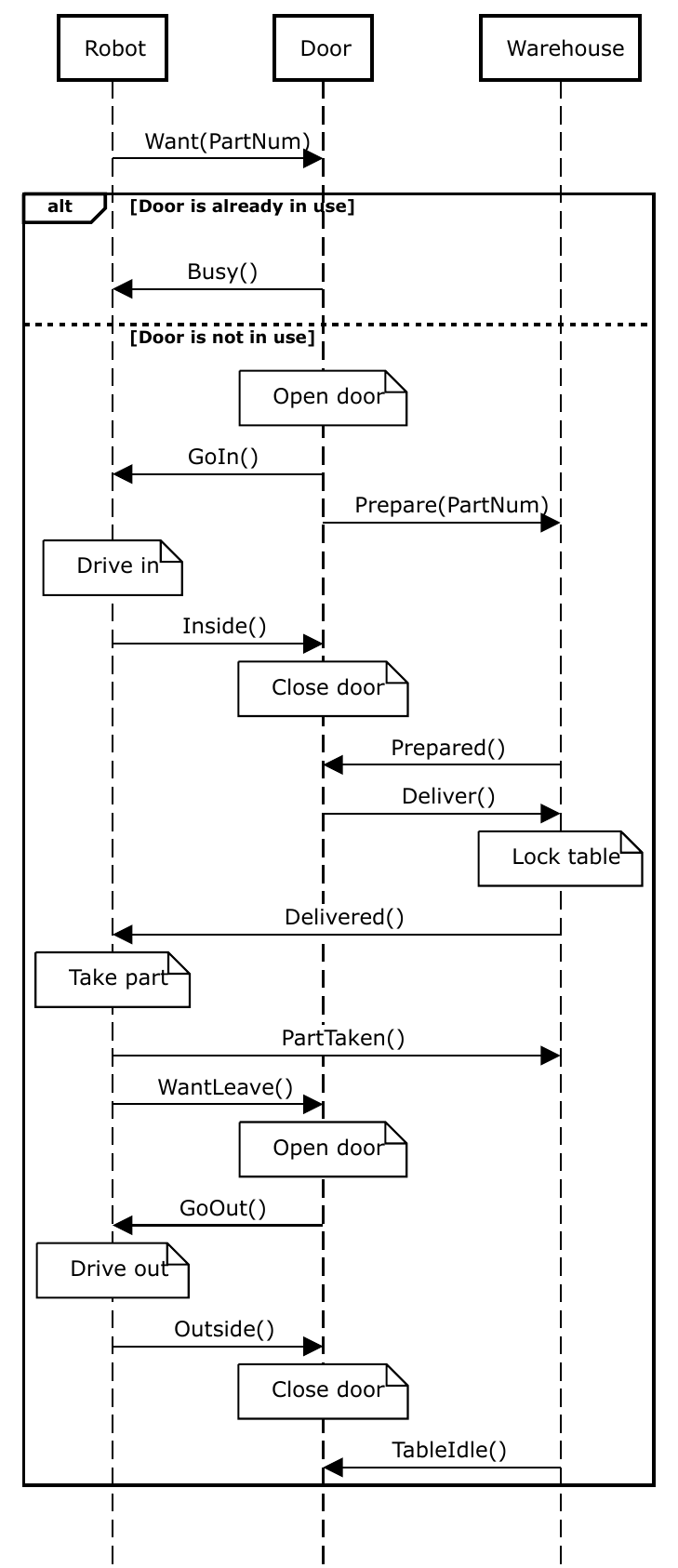}
\end{minipage}

Below is the straightforward user code for a Door actor to
repeatedly register for an unbounded number of Robot sessions.
The Door actor will safely handle all Robot sessions concurrently, coordinated
by its encapsulated state (e.g., \CODE{isBusy}).
The generated \CODE{ActorDoor} API provides a \CODE{register} method for the
formal $\calcwd{register}$ operation, and \CODE{d1Suspend} is a user-defined
handler that registers once more after every session initiation.

{
\begin{lstlisting}[numberstyle=\tiny,numbers=left,columns=flexible]
class Door(pid: Pid, port: Int, apHost: Host, apPort: Int) extends ActorDoor(pid) {
  private var isBusy = false  // Shared state -- n.b. every actor is a single-threaded event loop
  def spawn(): Unit = { super.spawn(this.port); regForInit(new DataD(...)) }
  def regForInit(d: DataD) = register(this.port, apHost, apPort, d, d1Suspend)
  def d1Suspend(d: DataD, s: D1Suspend): Done.type = { regForInit(new DataD(...)); s.suspend(d, d1) }
  ...  // def d1(d: DataD, s: D1): Done.type ... etc.
\end{lstlisting}}
\vspace{-1em}

\subsection{Chat Server}

\begin{minipage}{0.6\textwidth}
    \begin{lstlisting}[language=scribble]
global protocol ChatServer(role C, role S) {
  choice at C {
    LookupRoom(RoomName) from C to S;
    choice at S {
      RoomPort(RoomName, Port) from S to C;
    } or {
      RoomNotFound(RoomName) from S to C;
    }
    do ChatServer(C, S);
  } or {
    CreateRoom(RoomName) from C to S;
    choice at S {
      CreateRoomSuccess(RoomName) from S to C;
    } or {
      RoomExists(RoomName) from S to C;
    }
    do ChatServer(C, S);
  } or {
    ListRooms() from C to S;
    RoomList(StringList) from S to C;
    do ChatServer(C, S);
  } or {
    Bye(String) from C to S;
  }
}

global protocol ChatSessionCtoR(role C, role R) {
  choice at C {
    OutgoingChatMessage(String) from C to R;
    do ChatSessionCtoR(C, R);
  } or {
    LeaveRoom() from C to R;
  }
}

global protocol ChatSessionRtoC(role R, role C){
  choice at R {
    IncomingChatMessage(String) from R to C;
    do ChatSessionRtoC(R, C);
  } or {
    Bye() from R to C;
  }
}
\end{lstlisting}
\end{minipage}
\hfill
\begin{minipage}{0.35\textwidth}
    \includegraphics[width=\textwidth]{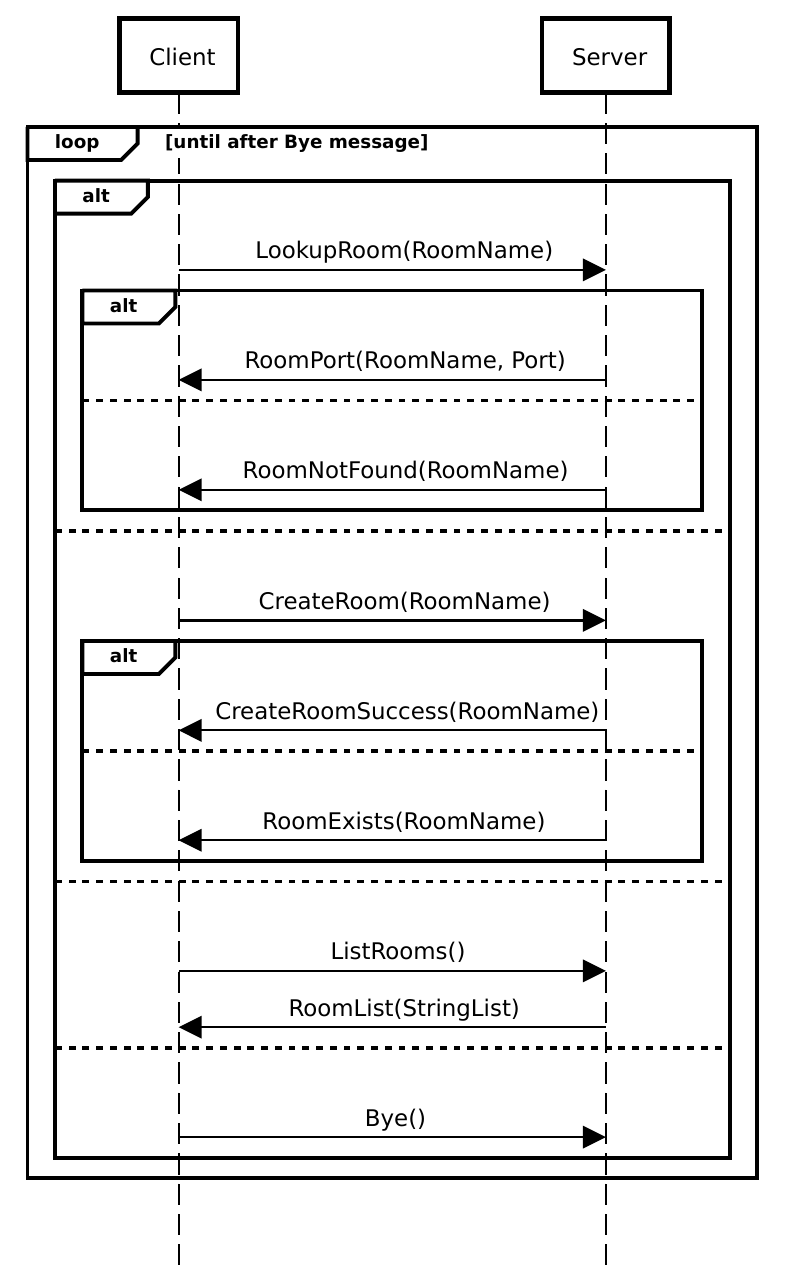}

    \includegraphics[width=\textwidth]{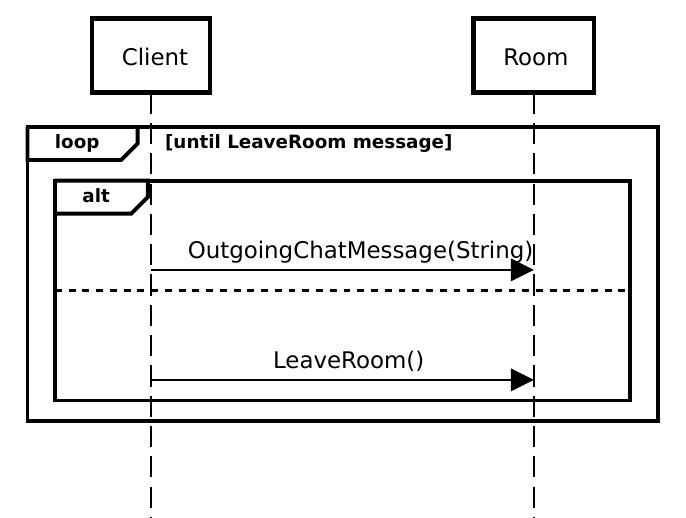}

    \includegraphics[width=\textwidth]{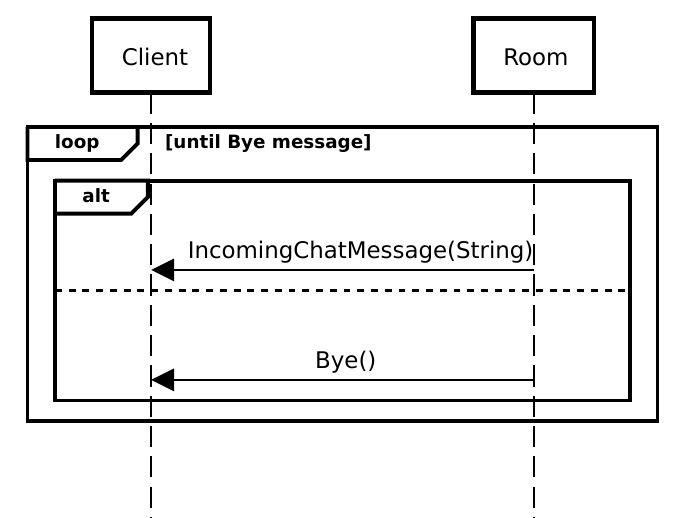}
\end{minipage}

 \clearpage
\section{Supplement to Section~\ref{sec:metatheory}}\label{ap:proofs}

\subsection{Omitted Definitions}

Term reduction $\tma \teval \tmb$ is standard $\beta$-reduction:

\headersig{Term reduction rules}{$\tma \teval \tmb$}
    \[
        \begin{array}{rcl}
            \efflet{x}{\effreturn{\vala}}{\tma} & \teval & \tma \{ \vala / x \} \\
            (\fun{x}{\tma}) \app \vala & \teval & \tma \{ \vala / x \} \\
            (\rec{f}{x}{\tma}) \app \vala & \teval & \tma \{ \rec{f}{x}{\tma} / f, \vala / x \}  \\
            \ite{\ttrue}{\tma}{\tmb} & \teval & \tma \\
            \ite{\ffalse}{\tma}{\tmb} & \teval & \tmb \\
            \quad \ctxe[\tma]  & \teval & \ctxe[\tmb]
            \quad (\text{if } \tma \teval \tmb)
        \end{array}
    \]

\subsection{Preservation}

We begin with some unsurprising auxiliary lemmas.

\begin{lemma}[Substitution]\label{lem:substitution}
    If $\mtseqst{\tyenv, x : \tyb}{\tyc}{\sta}{\tma}{\tya}{\stb}$ and
    $\vseq{\tyenv}{V}{\tyb}$, then
    $\mtseqst{\tyenv}{\sta}{\tyc}{\tma \{ V / x \}}{\tya}{\stb}$.
\end{lemma}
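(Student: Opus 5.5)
We prove the substitution lemma by simultaneous induction on typing derivations, strengthening it to cover values and computations together. The value part states: if $\vseq{\tyenv, x : \tyb}{\valb}{\tya}$ and $\vseq{\tyenv}{V}{\tyb}$, then $\vseq{\tyenv}{\valb\{V/x\}}{\tya}$. The computation part is the statement as given, where the precondition $\sta$, postcondition $\stb$ and state type $\tyc$ are left unchanged by the substitution. The two parts are mutually recursive. Values contain computations through \textsc{TV-Lam}, \textsc{TV-Rec} and \textsc{TV-Handler}, and computations contain values through every rule that mentions $V$ or $W$. So we argue by induction on the size of the derivation of the judgement being substituted into.

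We will use a weakening lemma (if $\vseq{\tyenv}{V}{\tyb}$ then $\vseq{\tyenv, y : \tyc'}{V}{\tyb}$ for fresh $y$) and exchange. Both follow by a routine induction, because $\tyenv$ is an ordinary intuitionistic environment. Session typing lives only in the pre/postconditions and not in $\tyenv$, so there is no linearity to track.

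The value cases are as follows.
\begin{itemize}
\item \textsc{TV-Var}: either the variable is $x$, in which case $\tya = \tyb$ and the hypothesis on $V$ gives the result, or it is some other $y$, which is untouched.
\item \textsc{TV-Const}: trivial.
\item \textsc{TV-Pair}: immediate from the induction hypothesis.
\item \textsc{TV-Lam}, \textsc{TV-Rec}, \textsc{TV-Handler}: we $\alpha$-rename the bound variables ($x$, $f$, $x_i$, $\var{st}$) away from $x$ and the free variables of $V$. We then use exchange to bring $x$ to the end of the environment, weaken $V$'s typing over the binders, and apply the induction hypothesis to the body.
\end{itemize}

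The computation cases are as follows.
\begin{itemize}
\item \textsc{T-Let}, \textsc{T-LetPair}, \textsc{T-If}: apply the induction hypothesis to each premise; the binders are handled as in the value case. The session types $S_1, S_2, S_3$ pass through unchanged, because substitution does not affect them.
\item \textsc{T-Return}, \textsc{T-App}, \textsc{T-Send}, \textsc{T-Suspend}, \textsc{T-Register}: apply the value induction hypothesis to the value premises, and rebuild with the same rule and the same session annotations.
\item \textsc{T-Spawn}: apply the induction hypothesis to the premise. Its state type is $\tya$ rather than $\tyc$, which is fine because the induction hypothesis is generic in the state type.
\item \textsc{T-NewAP}: contains no variables, so it is unaffected.
\end{itemize}

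Note that the statement's conclusion writes the judgement with $\sta$ and $\tyc$ in swapped positions. We read this as the intended $\mtseqst{\tyenv}{\tyc}{\sta}{\tma\{V/x\}}{\tya}{\stb}$, which is what the induction establishes.

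The only real care point is the binder cases, in particular \textsc{TV-Handler} and \textsc{TV-Rec}, which bind several variables at once. There we must ensure capture avoidance and that weakening is applied to the typing of $V$ rather than to the substituted term. The remaining cases are bookkeeping. Since no typing rule inspects the syntactic shape of a value beyond its type, and \textsc{T-Suspend} and \textsc{T-Raise} (in the extended calculus) allow arbitrary results that are preserved verbatim, no case requires reasoning about session types.
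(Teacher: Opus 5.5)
Your proposal is correct and takes the same route as the paper, which simply says the result follows by induction on the computation typing derivation. Your version spells out what that induction actually needs: the simultaneous value-typing substitution lemma, weakening and exchange for the binder cases, and capture-avoiding renaming. You are also right that the swapped $\sta$ and $\tyc$ in the conclusion is a typo for $\mtseqst{\tyenv}{\tyc}{\sta}{\tma\{V/x\}}{\tya}{\stb}$.
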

\begin{proof}
    By induction on the derivation of $\mtseqst{\tyenv, x : \tya}{\tyc}{\sta}{\tma}{\tyb}{\stb}$.
\end{proof}

\begin{lemma}[Subterm typability]\label{lem:subterm-typability}
    Suppose $\derivd$ is a derivation of $\mtseqst{\tyenv}{\tyc}{\sta}{\ctxe[\tma]}{\tya}{\stb}$.
    Then there exists some subderivation $\derivd'$ of $\derivd$ concluding
    $\mtseqst{\tyenv}{\tyc}{\sta}{\tma}{\tyb}{\sta'}$ for some type $\tyb$ and session
    type $\sta'$, where the position of $\derivd'$ in $\derivd$ corresponds to
    that of the hole in $\ctxe$.
\end{lemma}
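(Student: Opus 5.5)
The plan is to proceed by induction on the structure of the evaluation context $\ctxe$, which by the grammar is either $[~]$ or $\efflet{x}{\ctxe'}{\tmb}$. The real subtlety is that the computation typing rules are not syntax-directed, because \textsc{T-Suspend} can assign an arbitrary type and postcondition. It is therefore more convenient to phrase the induction as an induction on the derivation $\derivd$, followed by case analysis on the final rule. In both formulations the claimed subderivation is built by following the spine of let-bindings.

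In the base case $\ctxe = [~]$ we have $\ctxe[\tma] = \tma$. We take $\derivd' = \derivd$ itself, with $\tyb = \tya$ and $\sta' = \stb$; the hole is at the root, so the positions agree trivially.

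In the inductive case $\ctxe = \efflet{x}{\ctxe'}{\tmb}$ the term $\ctxe[\tma]$ is a let-expression. Inspecting the rules in Figure~\ref{fig:typing}, the only rule whose conclusion can have a $\calcwd{let}$ as subject is \textsc{T-Let}, since \textsc{T-Suspend} and all the other rules fix a specific term former. So $\derivd$ must end with \textsc{T-Let}, with premises:
\begin{itemize}
\item $\mtseqst{\tyenv}{\tyc}{\sta}{\ctxe'[\tma]}{\tya'}{\sta_2}$, where the precondition $\sta$ is inherited unchanged from the conclusion;
\item $\mtseqst{\tyenv, x:\tya'}{\tyc}{\sta_2}{\tmb}{\tya}{\stb}$.
\end{itemize}
Applying the induction hypothesis to the first premise, which is a strict subderivation, yields a subderivation $\derivd'$ concluding $\mtseqst{\tyenv}{\tyc}{\sta}{\tma}{\tyb}{\sta'}$ at the position of the hole in $\ctxe'$. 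That $\derivd'$ is also a subderivation of $\derivd$, and its position is the hole of $\ctxe$, because the hole of $\ctxe$ is exactly the hole of $\ctxe'$ placed in the left-hand premise of the let.

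Two features make the statement go through. First, the environment $\tyenv$ is unchanged along the spine: $x$ is bound only in $\tmb$, never in the context side. Second, the precondition $\sta$ is threaded unchanged into the leftmost premise each time. There is no real obstacle; the only point to state explicitly is the inversion step, namely that \textsc{T-Let} is the unique rule applicable to a let-expression, which guarantees that \textsc{T-Suspend} cannot intervene along the spine.
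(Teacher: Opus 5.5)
Your proof is correct and takes the same approach as the paper, which states only ``by induction on the structure of $\ctxe$.'' Your inversion step, that \textsc{T-Let} is the only rule whose conclusion is a $\calcwd{let}$ binding a computation, so \textsc{T-Suspend} cannot occur along the spine, is exactly the detail that one-line proof leaves implicit.
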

\begin{proof}
    By induction on the structure of $\ctxe$.
\end{proof}

\begin{lemma}[Replacement]\label{lem:replacement}
    If:

    \begin{enumerate}
        \item $\derivd$ is a derivation of
            $\mtseqst{\tyenv}{\tyc}{\sta}{\ctxe[\tma]}{\tya}{\stb}$
        \item $\derivd'$ is a subderivation of $\derivd$ concluding
            $\mtseqst{\tyenv}{\tyc}{\sta}{\tma}{\tyb}{\stb'}$ where the position of
            $\derivd'$ in $\derivd$ corresponds to that of the hole in $\ctxe$
        \item $\mtseqst{\tyenv}{\tyc}{\sta'}{\tmb}{\tyb}{\stb'}$
    \end{enumerate}
    then $\mtseqst{\tyenv}{\tyc}{\sta'}{\ctxe[\tmb]}{\tya}{\stb}$.
\end{lemma}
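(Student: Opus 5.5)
We prove the Replacement lemma by induction on the structure of the evaluation context $\ctxe$, which by the grammar is either $[~]$ or $\efflet{x}{\ctxe'}{\tma'}$. Throughout we use that the subderivation $\derivd'$ is located at the hole; Lemma~\ref{lem:subterm-typability} guarantees that such a subderivation exists and that its precondition coincides with the precondition of the whole term. Note that the statement as given changes the precondition from $\sta$ to $\sta'$. This is exactly what is required when the redex performs communication, for example under \textsc{E-Send} or \textsc{E-React}. Since evaluation contexts only ever place the hole in the first position of a let, the precondition of the hole is always the precondition of the whole term, so changing it is harmless.

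\emph{Base case} $\ctxe = [~]$. Then $\ctxe[\tma] = \tma$, and $\derivd'$ is $\derivd$ itself, so $\tyb = \tya$ and $\stb' = \stb$. The hypothesis $\mtseqst{\tyenv}{\tyc}{\sta'}{\tmb}{\tyb}{\stb'}$ is then literally the goal $\mtseqst{\tyenv}{\tyc}{\sta'}{\tmb}{\tya}{\stb}$.

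\emph{Inductive case} $\ctxe = \efflet{x}{\ctxe'}{\tma'}$. The typing rules are syntax-directed for let, since \textsc{T-Suspend} and \textsc{T-Raise} apply only to their own syntactic forms. Hence $\derivd$ must end in \textsc{T-Let}, with premises
\[
\mtseqst{\tyenv}{\tyc}{\sta}{\ctxe'[\tma]}{\tya''}{\sta_2}
\quad\text{and}\quad
\mtseqst{\tyenv, x : \tya''}{\tyc}{\sta_2}{\tma'}{\tya}{\stb}.
\]
Because the hole lies within $\ctxe'$, the subderivation $\derivd'$ sits inside the derivation of the first premise, at the position of the hole of $\ctxe'$. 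Applying the induction hypothesis to that first premise gives $\mtseqst{\tyenv}{\tyc}{\sta'}{\ctxe'[\tmb]}{\tya''}{\sta_2}$. The intermediate postcondition $\sta_2$ is preserved, which is the point that makes the argument work. Reapplying \textsc{T-Let} with the unchanged second premise then yields $\mtseqst{\tyenv}{\tyc}{\sta'}{\efflet{x}{\ctxe'[\tmb]}{\tma'}}{\tya}{\stb}$, as required.

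The main obstacle is the bookkeeping: checking that the precondition is threaded correctly, so that the hole's precondition equals the outer precondition in every case. This holds because the hole is always in the head position of a let and never in its body. We must also confirm that no non-syntax-directed rule can interpose at a let node. Once those two points are verified, the remainder is routine.
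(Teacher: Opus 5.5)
Your proof is correct and takes the same approach as the paper, whose entire proof is ``by induction on the structure of $\ctxe$''. Your base case and your \textsc{T-Let} case fill in what that one line leaves implicit. In the let case, the hole's precondition is the outer precondition, and applying the induction hypothesis to the head premise leaves the intermediate postcondition unchanged.
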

\begin{proof}
    By induction on the structure of $\ctxe$.
\end{proof}

Since type environments are unrestricted, we also obtain a weakening result.

\begin{lemma}[Weakening]\label{lem:weakening}
    \begin{enumerate}
        \item If $\vseq{\tyenv}{\vala}{\tyb}$ and $x \not\in \dom{\tyenv}$, then
            $\vseq{\tyenv, x : \tya}{\vala}{\tyb}$.
      \item If $\mtseqst{\tyenv}{\tyc}{\sta}{\tma}{\tyb}{\stb}$ and $x \not\in
          \dom{\tyenv}$, then $\mtseqst{\tyenv, x : \tya}{\tyc}{\sta}{\tma}{\tyb}{\stb}$.
      \item If $\hstateseq{\tyenv}{\rtenv}{\tyc}{\hstate}$ and $x \not\in
          \dom{\tyenv}$, then $\hstateseq{\tyenv, x : \tya}{\rtenv}{\tyc}{\hstate}$.
      \item If $\istateseq{\tyenv}{\rtenv}{\tyc}{\istate}$ and $x \not\in
          \dom{\tyenv}$, then $\istateseq{\tyenv, x : \tya}{\rtenv}{\tyc}{\istate}$.
      \item If $\cseq{\tyenv}{\rtenv}{\config{C}}$ and $x \not\in \dom{\tyenv}$, then
          $\cseq{\tyenv, x : \tya}{\rtenv}{\config{C}}$.
    \end{enumerate}
\end{lemma}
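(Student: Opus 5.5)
We prove each of the five clauses of Lemma~\ref{lem:weakening} by induction on the given derivation, in the order stated. Later clauses reuse earlier ones: value typing is mutually defined with computation typing (via \textsc{TV-Lam}, \textsc{TV-Rec}, \textsc{TV-Handler}), so clauses (1) and (2) are proved together by mutual induction. Clauses (3)--(5) then follow by induction on the handler-state, initialisation-state and configuration derivations, appealing to (1) and (2) at the leaves.

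For (1) and (2), most cases are immediate. \textsc{TV-Var} holds because $x \notin \dom{\tyenv}$, so $y : \tyb \in \tyenv$ implies $y : \tyb \in \tyenv, x : \tya$. \textsc{TV-Const} is trivial. The structural computation rules (\textsc{T-Return}, \textsc{T-App}, \textsc{T-Send}, \textsc{T-Suspend}, \textsc{T-Spawn}, \textsc{T-NewAP}, \textsc{T-Register}, \textsc{T-If}) follow directly from the induction hypotheses on their premises; the session pre- and postconditions and the state type are untouched. The only care needed is in the binding rules: \textsc{TV-Lam}, \textsc{TV-Rec}, \textsc{TV-Handler}, \textsc{T-Let} and \textsc{T-LetPair}. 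Here we work up to $\alpha$-equivalence and rename bound variables (the $x_i$, $\var{st}$, $f$, $y$) away from the weakening variable $x$. We then apply the induction hypothesis to the extended environment and use exchange, which holds trivially because $\tyenv$ is an unordered finite map with unrestricted entries.

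Clauses (3) and (4) are by induction on the derivations of $\hstateseq{\tyenv}{\rtenv}{\tyc}{\hstate}$ and $\istateseq{\tyenv}{\rtenv}{\tyc}{\istate}$. The empty cases are trivial. \textsc{TH-Handler} and \textsc{TI-Callback} use clause (1) on the stored value and the induction hypothesis on the rest; the runtime environment $\rtenv$ is unchanged.

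For (5), we induct on the configuration typing derivation:
\begin{itemize}
\item \textsc{T-Par} and \textsc{T-Actor} use the induction hypotheses together with clauses (1)--(4). The thread-state judgements \textsc{TT-Idle}, \textsc{TT-Sess} and \textsc{TT-NoSess} reduce to (1) and (2).
\item \textsc{T-ConsQueue} uses (1), and \textsc{T-EmptyQueue} is immediate.
\item \textsc{T-AP} needs only that membership $\apname : \apty{\cdots} \in \tyenv$ survives extension; the access-point judgement does not mention $\tyenv$.
\item \textsc{T-SessionName}, \textsc{T-InitName} and \textsc{T-ActorName} affect only $\rtenv$, so the induction hypothesis applies directly.
\item \textsc{T-APName} extends $\tyenv$ with $\apname$, so we $\alpha$-rename $\apname$ apart from $x$ if necessary before applying the induction hypothesis.
\end{itemize}

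The main obstacle is bookkeeping rather than mathematics. We must ensure that variable capture is avoided in every binding rule, and note that $x$ ranges over term variables, or runtime names placed in $\tyenv$, disjoint from anything bound in the derivation.
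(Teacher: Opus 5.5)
Your proposal is correct and takes essentially the same approach as the paper, which proves the lemma by mutual induction on the typing derivations. You spell out the cases the paper leaves implicit: you stratify (1)--(2) before (3)--(5), which is valid because value and computation typing never mention the runtime judgements, and you make explicit the $\alpha$-renaming and exchange needed at binders such as \textsc{TV-Lam}, \textsc{T-Let} and \textsc{T-APName}.
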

\begin{proof}
    By mutual induction on all premises.
\end{proof}

\begin{lemma}[Preservation (terms)]\label{lem:term-pres}
    If $\mtseqst{\tyenv}{\sta}{\tyc}{\tma}{\tya}{\stb}$ and $\tma \teval \tmb$, then
    $\mtseqst{\tyenv}{\sta}{\tyc}{\tmb}{\tya}{\stb}$.
\end{lemma}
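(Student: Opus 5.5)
We prove Lemma~\ref{lem:term-pres} by induction on the derivation of $\tma \teval \tmb$, with an inner case analysis on the typing derivation of $\tma$. The statement writes the judgement with the arguments of $\sta$ and $\tyc$ swapped relative to the official form; we read it as $\mtseqst{\tyenv}{\tyc}{\sta}{\tma}{\tya}{\stb}$ throughout. The computation typing rules are not syntax-directed only because of \textsc{T-Suspend}, whose subject is syntactically a $\calcwd{suspend}$ term. No redex of $\teval$ has that form, so in each base case the final rule of the derivation is determined by the shape of $\tma$ and can be inverted directly.

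\emph{Base cases.}
\begin{itemize}
\item \emph{Let-return.} For $\efflet{x}{\effreturn{\vala}}{\tma'} \teval \tma'\{\vala/x\}$, inversion of \textsc{T-Let} gives $\mtseqst{\tyenv}{\tyc}{\sta}{\effreturn{\vala}}{\tyb}{\sta_2}$ and $\mtseqst{\tyenv, x:\tyb}{\tyc}{\sta_2}{\tma'}{\tya}{\stb}$. Inversion of \textsc{T-Return} then gives $\vseq{\tyenv}{\vala}{\tyb}$ and, crucially, $\sta_2 = \sta$, since \textsc{T-Return} does not change the session type. Lemma~\ref{lem:substitution} yields the result.
\item \emph{Lambda application.} For $(\fun{x}{\tma'})\app\vala$, inversion of \textsc{T-App} gives $\vseq{\tyenv}{\fun{x}{\tma'}}{\sttyfun{\tyb}{\tya}{\sta}{\stb}{\tyc}}$ and $\vseq{\tyenv}{\vala}{\tyb}$. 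Inverting \textsc{TV-Lam}, which is the only value rule for a $\lambda$, gives $\mtseqst{\tyenv,x:\tyb}{\tyc}{\sta}{\tma'}{\tya}{\stb}$, and Lemma~\ref{lem:substitution} concludes.
\item \emph{Recursive application.} The argument is the same, inverting \textsc{TV-Rec}. We substitute twice: once for $f$, using the \textsc{TV-Rec} derivation itself as the typing of $\rec{f}{x}{\tma'}$, and once for $x$. Lemma~\ref{lem:weakening} may be needed to reorder the environment so that each substitution matches the form of Lemma~\ref{lem:substitution}.
\item \emph{Conditionals.} Inversion of \textsc{T-If} gives both branches at exactly the pre- and postconditions $\sta, \stb$ and type $\tya$, so either reduct is immediately well typed.
\end{itemize}

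\emph{Congruence case.} Here $\ctxe[\tma_1] \teval \ctxe[\tmb_1]$ with $\tma_1 \teval \tmb_1$. By Lemma~\ref{lem:subterm-typability}, there is a subderivation $\mtseqst{\tyenv}{\tyc}{\sta}{\tma_1}{\tyb}{\sta'}$ at the hole. The induction hypothesis gives $\mtseqst{\tyenv}{\tyc}{\sta}{\tmb_1}{\tyb}{\sta'}$. Lemma~\ref{lem:replacement}, instantiated with the same precondition $\sta$, then gives $\mtseqst{\tyenv}{\tyc}{\sta}{\ctxe[\tmb_1]}{\tya}{\stb}$.

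\emph{Main obstacle.} The delicate step is the inversion in the presence of the non-syntax-directed \textsc{T-Suspend}, together with the fact that \textsc{T-Return} forces the intermediate session type to equal the precondition. The let-return case depends on this: if the environment between $\effreturn{\vala}$ and $\tma'$ could change, substitution would not apply. We need a short generation lemma stating that a derivation for $\effreturn{\vala}$, an application, or an $\calcwd{if}$ must end in \textsc{T-Return}, \textsc{T-App}, or \textsc{T-If} respectively. This holds because the value and computation typing rules have no subsumption and no structural rule other than the syntax-specific ones. Equi-recursive identification of session types is harmless, since all equalities are read up to unfolding.
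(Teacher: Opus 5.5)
Your proof is correct and follows the same approach as the paper: the paper's proof is a one-line sketch ("a standard induction on the derivation of $\tma \teval \tmb$, noting that functional reduction does not modify the session type"). Your case analysis makes that sketch explicit. In particular, you observe that \textsc{T-Return} forces the intermediate session type in the let-return case to equal the precondition, and you handle the congruence case via the subterm-typability and replacement lemmas.
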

\begin{proof}
    A standard induction on the derivation of $\tma \teval \tmb$, noting that
    functional reduction does not modify the session type.
\end{proof}

Next, we introduce some MPST-related lemmas that are helpful for proving
preservation of configuration reduction. We often make use of these lemmas
implicitly.

\begin{lemma}\label{lem:split-safe}
    If $\safe{\rtenv, \rtenv'}$, then $\safe{\rtenv}$.
\end{lemma}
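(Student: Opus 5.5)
Since $\mkwd{safe}$ is the greatest property closed under its three clauses, the natural approach is coinduction. I will define the relation
\[
\mathcal{S} = \set{\rtenv \mid \exists \rtenv'.\ \safe{\rtenv, \rtenv'}}
\]
and show that $\mathcal{S}$ satisfies the three defining clauses of safety. Every member of $\mathcal{S}$ is then safe, and in particular $\rtenv$ is safe whenever $\safe{\rtenv, \rtenv'}$. Two facts are used throughout. First, environments are linear, so $\rtenv, \rtenv'$ has disjoint domains. Second, the queue congruence $\equiv$ acts componentwise on the queue entry $s : \qty$ wherever that entry lives.

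\textbf{First clause.} Suppose $\rtenv = \rtenv_0, \roleidx{s}{\prole} : \localoffer{q}{\msg{\ell_i}{\tya_i}.\sta_i}_{i \in I}, s : \qty$ with $\qty \equiv \qentry{\qrole}{\prole}{\ell_j}{\tyb_j} \cdot \qty'$. Both the endpoint and the queue $s$ also appear in $\rtenv, \rtenv'$, so the first clause of $\safe{\rtenv, \rtenv'}$ gives $j \in I$ and $\tyb_j = \tya_j$.

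\textbf{Second clause.} Suppose $\rtenv = \rtenv_0, \roleidx{s}{\prole} : \recty{\recvar}{\sta}$. By the second clause of safety for the larger environment, $\safe{\rtenv_0, \roleidx{s}{\prole} : \sta\{\recty{\recvar}{\sta}/\recvar\}, \rtenv'}$ holds. Hence the unfolded environment lies in $\mathcal{S}$, witnessed by the same $\rtenv'$.

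\textbf{Third clause.} This is the main step: we must show $\rtenv \equivsynceval \rtenv_1$ implies $\rtenv_1 \in \mathcal{S}$. The plan is to prove a frame lemma: if $\rtenv \lbleval{\ltslbl} \rtenv_1$ and $\rtenv, \rtenv'$ is well-formed, then $\rtenv, \rtenv' \lbleval{\ltslbl} \rtenv_1, \rtenv'$. I will prove this by induction on the derivation, checking each LTS rule.
\begin{itemize}
\item \textsc{Lbl-Send}, \textsc{Lbl-Recv} and \textsc{Lbl-End} each touch only entries present in $\rtenv$: the endpoint $\roleidx{s}{\prole}$ and, for send and receive, the queue $s : \qty$. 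These entries sit unchanged inside $\rtenv, \rtenv'$, so the same rule fires there, with $\rtenv'$ absorbed into the rule's ambient context.
\item \textsc{Lbl-Rec} follows directly from the induction hypothesis.
\item The surrounding $\equiv$ steps extend in the same way, because congruence on queue types is local to the queue entry.
\end{itemize}
With the frame lemma, $\rtenv, \rtenv' \equivsynceval \rtenv_1, \rtenv'$. The third clause of $\safe{\rtenv, \rtenv'}$ then gives $\safe{\rtenv_1, \rtenv'}$, so $\rtenv_1 \in \mathcal{S}$.

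The only delicate point I anticipate is that the queue $s : \qty$ consulted by a send or receive step must actually belong to $\rtenv$, not $\rtenv'$, in order for $\rtenv$ to make that step at all. This holds by the form of the rules, since the step is a transition of $\rtenv$ itself. This closes the coinduction.
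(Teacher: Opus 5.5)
Your proof is correct and takes the same approach as the paper, just made rigorous. The paper's sketch says only that splitting an environment can only remove reductions and that safety can only be violated by adding them; that is exactly your frame lemma (every step of $\rtenv$ is also a step of $\rtenv, \rtenv'$) together with the inheritance of the local receive check, and your coinductive candidate $\mathcal{S} = \{\rtenv \mid \exists \rtenv'.\ \safe{\rtenv, \rtenv'}\}$ turns that intuition into a proof under the greatest-fixed-point reading of safety.
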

\begin{proof}[Proof sketch]
    Splitting a context only removes potential reductions.
    Only by adding reductions could we violate safety.
\end{proof}

\begin{lemma}\label{lem:split-safety}
    If $\safe{\rtenv_1, \rtenv_2}$ and $\rtenv_1 \equivsynceval \rtenv'_1$, then
    $\safe{\rtenv'_1, \rtenv_2}$.
\end{lemma}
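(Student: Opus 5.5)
We must show that if $\safe{\rtenv_1, \rtenv_2}$ and $\rtenv_1 \equivsynceval \rtenv'_1$, then $\safe{\rtenv'_1, \rtenv_2}$. The plan is to reduce this to the closure clause in the definition of safety: a single labelled step of $\rtenv_1$ can be replayed in the larger environment $\rtenv_1, \rtenv_2$. Once we establish $\rtenv_1, \rtenv_2 \equivsynceval \rtenv'_1, \rtenv_2$, the third clause of safety (safety is preserved by $\equivsynceval$) immediately gives $\safe{\rtenv'_1, \rtenv_2}$.

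The key step is a frame lemma for the environment LTS. If $\rtenv_1 \equiv \rtenv_1'' \lbleval{\ltslbl} \rtenv_1''' \equiv \rtenv'_1$ and $\rtenv_1, \rtenv_2$ is well formed (its domains are disjoint), then $\rtenv_1, \rtenv_2 \equiv \rtenv_1'', \rtenv_2 \lbleval{\ltslbl} \rtenv_1''', \rtenv_2 \equiv \rtenv'_1, \rtenv_2$. We prove this by induction on the derivation of $\lbleval{\ltslbl}$, with cases \textsc{Lbl-Send}, \textsc{Lbl-Recv}, \textsc{Lbl-End} and \textsc{Lbl-Rec}.

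\begin{itemize}
\item \textsc{Lbl-Send}, \textsc{Lbl-Recv} and \textsc{Lbl-End} are axioms of the form $\rtenv, \text{entries} \lbleval{\ltslbl} \rtenv, \text{entries}'$. The residual context $\rtenv$ is arbitrary, so we instantiate it with the original residual extended by $\rtenv_2$.
\item \textsc{Lbl-Rec} follows from the induction hypothesis applied to the unfolded environment.
\end{itemize}

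Structural congruence on queue types is a congruence on environments, so it also lifts through the comma with $\rtenv_2$.

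The main obstacle is a subtle one in the send and receive cases. Both rules mention the queue entry $s : \qty$, and this entry must be in the same component as the endpoint $\roleidx{s}{\prole}$ for the step to have fired in $\rtenv_1$ at all. Since it did fire in $\rtenv_1$, both entries lie in $\rtenv_1$, and $\rtenv_2$ is untouched. We must also check that $\rtenv'_1, \rtenv_2$ is still well formed, i.e.\ its domains remain disjoint. This holds because a reduction never introduces new names in the domain: \textsc{Lbl-End} only removes an entry, and the other rules rewrite the types of existing entries. With the frame lemma in hand, the result follows from safety of $\rtenv_1, \rtenv_2$ in one line.
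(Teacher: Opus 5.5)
Your proof is correct, but it takes a different route from the paper. The paper argues directly, by cases on the environment step, that the step cannot create a fresh violation of the first safety clause. For \textsc{Lbl-Send}, the appended entry $\qentry{p}{q}{\ell_k}{\tya_k}$ cannot be permuted ahead of an existing $\prole$-to-$\qrole$ message, because of the side condition on queue congruence; \textsc{Lbl-Recv} is handled ``similarly''.

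You instead observe that the LTS rules are stated with an arbitrary residual environment. Hence the step of $\rtenv_1$ is also a step $\rtenv_1, \rtenv_2 \equivsynceval \rtenv'_1, \rtenv_2$; domains only shrink, so the composite stays well formed. The closure clause in the definition of $\safe{-}$ then finishes the argument in one line.

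Your route is shorter and more uniform. It establishes every clause of safety, and safety of all subsequent reductions, at once, and it needs no reasoning about queue reordering. It also covers, with no extra work, the situation where no $\prole$-to-$\qrole$ message is yet queued. The paper's ordering argument does not treat that case separately, and it can only be discharged by the same closure property you invoke. What the paper's version offers is operational intuition: per-pair FIFO order is what stops an appended message from clashing with the receiver. Given the (coinductive) definition of safety, though, that reasoning is already subsumed by the hypothesis $\safe{\rtenv_1, \rtenv_2}$.
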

\begin{proof}
    By induction on the derivation of $\rtenv_1 \equiv \synceval{\synclbl}
    \equiv \rtenv'_1$.

    It suffices to consider the cases where reduction could potentially make the
    combined environments unsafe.

    In the case of \textsc{Lbl-Sync-Send}, the
resulting reduction adds a message $\qentry{\prole}{\qrole}{\ell_i}{\tya_i}$ to a
    queue $Q$.

    The only way this could violate safety is if there were some entry
    $\roleidx{s}{\qrole} : \localoffer{\prole}{\msg{\ell_i}{\tya_i}\then{\sta_i}}_{i \in I}$,
    and $Q \equiv \qentry{\prole}{\qrole}{\ell_j}{\tya_j} \cdot Q'$ where $j \in I$,
    but $(Q \cdot \qentry{\prole}{\qrole}{\ell_k}{\tya_k} \equiv
    \qentry{\prole}{\qrole}{\ell_k}{\tya_k} \cdot Q''$ with $k \not\in I$.
However, this is impossible since it is not possible to permute this message
    ahead of the existing message because of the side-conditions on queue
    equivalence.

    A similar argument applies for \textsc{Lbl-Sync-Recv}.
\end{proof}

\begin{lemma}\label{lem:queue}
    If $\cseq{\tyenv}{\rtenv, s : Q}{\qproc{s}{\sigma}}$ and
    $\vseq{\tyenv}{V}{\tya}$, then
    $\cseq
        {\tyenv}
        {\rtenv, s : (Q \cdot\qentry{p}{q}{\ell}{\tya})}
        {\qproc{s}{\sigma \cdot \qentry{p}{q}{\ell}{\vala}}}$
\end{lemma}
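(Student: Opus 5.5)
I will prove Lemma~\ref{lem:queue} by induction on the structure of the queue contents $\sigma$; equivalently, by induction on the derivation of $\cseq{\tyenv}{\rtenv, s : Q}{\qproc{s}{\sigma}}$. I first observe that the only rules that can conclude a judgement whose subject is a bare queue $\qproc{s}{\sigma}$ are \textsc{T-EmptyQueue} and \textsc{T-ConsQueue}. Both assign the singleton runtime environment $s : Q$. So by inversion $\rtenv$ must be empty, and the statement reduces to showing that $\cseq{\tyenv}{s : Q}{\qproc{s}{\sigma}}$ implies $\cseq{\tyenv}{s : Q \cdot \qentry{p}{q}{\ell}{\tya}}{\qproc{s}{\sigma \cdot \qentry{p}{q}{\ell}{\vala}}}$. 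If one instead reads $\rtenv$ as being carried along, it is threaded through unchanged and the argument is the same.

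\textbf{Base case} ($\sigma = \epsilon$). Inversion gives $Q = \epsilon$. Then $\sigma \cdot \qentry{p}{q}{\ell}{\vala} = \qentry{p}{q}{\ell}{\vala} \cdot \epsilon$. Apply \textsc{T-ConsQueue} with the hypothesis $\vseq{\tyenv}{V}{\tya}$ and \textsc{T-EmptyQueue}; this yields the environment $s : \qentry{p}{q}{\ell}{\tya} \cdot \epsilon$, which is exactly $Q \cdot \qentry{p}{q}{\ell}{\tya}$.

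\textbf{Inductive case} ($\sigma = \qentry{p'}{q'}{\ell'}{\valb} \cdot \sigma'$). Inversion of \textsc{T-ConsQueue} gives $Q = \qentry{p'}{q'}{\ell'}{\tyb} \cdot Q'$, together with $\vseq{\tyenv}{\valb}{\tyb}$ and $\cseq{\tyenv}{s : Q'}{\qproc{s}{\sigma'}}$. The induction hypothesis gives $\cseq{\tyenv}{s : Q' \cdot \qentry{p}{q}{\ell}{\tya}}{\qproc{s}{\sigma' \cdot \qentry{p}{q}{\ell}{\vala}}}$. Reapplying \textsc{T-ConsQueue} with $\valb$ then yields the goal, using associativity of queue concatenation.

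The only delicate point is the inversion step, because typing is stated up to structural congruence on queue types. Here, however, the judgement is syntax-directed on the queue term, and the environment produced by the rules is literally $s : Q$. So no congruence reasoning is needed. If $Q$ were only given up to $\equiv$, I would first normalise it to the syntactic form produced by the derivation; this is harmless, since concatenation respects $\equiv$ on queue types.
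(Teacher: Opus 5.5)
Your proof is correct and is essentially the paper's own argument: the paper proves this lemma by a straightforward induction on the derivation of the queue typing judgement, which is exactly your case analysis on \textsc{T-EmptyQueue} and \textsc{T-ConsQueue}. Your observation that inversion forces $\rtenv$ to be empty, since both queue rules assign a singleton environment, is a correct reading of the rules as stated.
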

\begin{proof}
    A straightforward induction on the derivation of
    $\cseq{\tyenv}{\rtenv, s : Q}{\qproc{s}{\sigma}}$.
\end{proof}

\begin{lemma}\label{lem:separate-envs-safe}
    If $\safe{\rtenv_1}$ and $\safe{\rtenv_2}$ for environments $\rtenv_1,
    \rtenv_2$ such that $\snames{\rtenv_1} \cap \snames{\rtenv_2} = \emptyset$
    and where $\rtenv_1, \rtenv_2$ is defined, then $\safe{\rtenv_1, \rtenv_2}$.
\end{lemma}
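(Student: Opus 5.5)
The plan is a coinductive argument. Safety is defined as the largest property satisfying its three clauses, so I will treat $\mkwd{safe}$ as a greatest fixed point. Define the candidate set
\[
\mathcal{S} = \set{ \rtenv_1, \rtenv_2 \mid \safe{\rtenv_1} \wedge \safe{\rtenv_2} \wedge \snames{\rtenv_1} \cap \snames{\rtenv_2} = \emptyset \wedge \rtenv_1, \rtenv_2 \text{ defined} }.
\]
I will show that $\mathcal{S}$ is closed under the three clauses of the definition. By coinduction, every element of $\mathcal{S}$ is then safe, which is exactly the statement.

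\emph{First clause.} Suppose $\rtenv_1, \rtenv_2 = \rtenv', \roleidx{s}{\prole} : \localoffer{q}{\msg{\ell_i}{\tya_i}.\sta_i}_{i \in I}, s : \qty$. The endpoint $\roleidx{s}{\prole}$ lies in some $\rtenv_k$, so $s \in \snames{\rtenv_k}$. By disjointness, $s \notin \snames{\rtenv_{3-k}}$, and therefore the queue entry $s : \qty$ must also lie in $\rtenv_k$. The required property of $\qty$ (that a head message $\qentry{q}{p}{\ell_j}{\tyb_j}$ has $j \in I$ and $\tyb_j = \tya_j$) then follows from the first clause of $\safe{\rtenv_k}$.

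\emph{Second clause.} If a recursive entry $\roleidx{s}{\prole} : \recty{\recvar}{\sta}$ occurs, it lies in some $\rtenv_k$. The second clause of $\safe{\rtenv_k}$ shows the unfolded $\rtenv_k$ is safe. Unfolding does not change $\snames{}$, so the unfolded combination is again in $\mathcal{S}$.

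\emph{Third clause.} The key auxiliary fact is a localisation lemma:
\begin{quote}
If $\rtenv_1, \rtenv_2 \equivsynceval \rtenv''$ with disjoint session names, then either $\rtenv'' = \rtenv_1', \rtenv_2$ with $\rtenv_1 \equivsynceval \rtenv_1'$, or symmetrically for $\rtenv_2$.
\end{quote}
I will prove this by induction on the derivation of the labelled transition, with \textsc{Lbl-Rec} handled by the induction hypothesis on the unfolded environment. Each base rule (\textsc{Lbl-Send}, \textsc{Lbl-Recv}, \textsc{Lbl-End}) touches only entries for a single session $s$: an endpoint, and possibly the queue $s : \qty$. By the same disjointness reasoning as in the first clause, all of these entries lie in the same $\rtenv_k$. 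Structural congruence on environments only permutes messages within a single queue, so it also acts componentwise.

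Given the lemma, $\safe{\rtenv_1'}$ follows from the third clause of $\safe{\rtenv_1}$. Also $\snames{\rtenv_1'} \subseteq \snames{\rtenv_1}$, because reductions never introduce new session names (\textsc{Lbl-End} only removes one). Hence $\rtenv_1', \rtenv_2 \in \mathcal{S}$.

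The main obstacle I expect is stating the localisation lemma precisely, in two respects. First, the congruence $\equiv$ surrounding $\lbleval{\ltslbl}$ must be shown to decompose along the split $\rtenv_1, \rtenv_2$. Second, the \textsc{Lbl-Rec} induction must be arranged so that unfolding inside $\rtenv_k$ commutes with the split. Both are routine but need care in the case analysis on which component contains the acting endpoint.
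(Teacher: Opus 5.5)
Your proposal is correct and is essentially the paper's argument, whose proof is just ``by inspection of the definition of $\safe{-}$ and the environment reduction rules, noting that each are only defined on a single session''. Your coinductive candidate set and localisation lemma spell that observation out explicitly. The localisation lemma is essentially Lemma~\ref{lem:env-reductions-separate}, and you are right to prove it without that lemma's safety hypothesis, which would be circular here.

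One detail to tighten is the one you already flag. The surrounding $\equiv$ may permute queues in the untouched component, so what you actually get is $\rtenv'' = \rtenv_1', \rtenv_2^{\ast}$ with $\rtenv_2^{\ast} \equiv \rtenv_2$. This is harmless, because $\mkwd{safe}$ is closed under $\equiv$: all three of its clauses are invariant under it.
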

\begin{proof}
    By inspection of the definition of $\safe{-}$ and the environment reduction
    rules, noting that each are only defined on a single session.
\end{proof}

\begin{lemma}\label{lem:env-reductions-separate}
    Given environments $\rtenv_1, \rtenv_2$ such that $\safe{\rtenv_1,
    \rtenv_2}$ and $\snames{\rtenv_1} \cap \snames{\rtenv_2} = \emptyset$
    and $\rtenv_1, \rtenv_2 \equivsynceval^? \rtenv'$ such that
    $\safe{\rtenv'}$, either:

    \begin{enumerate}
        \item $\rtenv' = \rtenv$; or
        \item $\rtenv' = \rtenv'_1, \rtenv_2$ such that $\rtenv_1 \equivsynceval
            \rtenv'_1$ and $\safe{\rtenv'_1}$; or
        \item $\rtenv' = \rtenv_1, \rtenv'_2$ such that $\rtenv_2 \equivsynceval
            \rtenv'_2$ and $\safe{\rtenv'_2}$.
    \end{enumerate}
\end{lemma}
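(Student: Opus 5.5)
The plan is to proceed by case analysis on the single step $\rtenv_1, \rtenv_2 \equivsynceval^? \rtenv'$. Since $\equivsynceval^?$ is the reflexive closure, either no step is taken, or there is exactly one step $\rtenv_1, \rtenv_2 \equiv \lbleval{\ltslbl} \equiv \rtenv'$. In the reflexive case we are in case (1), reading $\rtenv$ as $\rtenv_1, \rtenv_2$.

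In the non-reflexive case, the first step is to observe that structural congruence on runtime environments only permutes entries within a single queue type $s : \qty$. It therefore preserves the split $\rtenv_1, \rtenv_2$: each queue $s : \qty$ lies entirely in whichever of $\rtenv_1$ or $\rtenv_2$ contains $s$, because $\snames{\rtenv_1} \cap \snames{\rtenv_2} = \emptyset$. So we may write the congruent environment as $\rtenv_1^\equiv, \rtenv_2^\equiv$ with $\rtenv_i^\equiv \equiv \rtenv_i$.

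Next we inspect the labelled rule that fires: \textsc{Lbl-Send}, \textsc{Lbl-Recv}, \textsc{Lbl-End}, or \textsc{Lbl-Rec}, the last by induction on its premise. Each rule mentions entries of a single session $s$ only. \textsc{Lbl-Send} and \textsc{Lbl-Recv} involve an endpoint $\roleidx{s}{\prole}$ and the queue $s : \qty$, while \textsc{Lbl-End} and \textsc{Lbl-Rec} involve only one endpoint. Because the session names are disjoint, all entries touched by the redex lie in one side, say $\rtenv_1^\equiv$; the untouched side is carried along unchanged as part of the frame $\rtenv$ in the rule. This yields either $\rtenv' = \rtenv'_1, \rtenv_2$ with $\rtenv_1 \equivsynceval \rtenv'_1$ (closing up to $\equiv$ on $\rtenv_1$ only), or the symmetric decomposition with $\rtenv_2$.

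Safety of the reduced component, say $\safe{\rtenv'_1}$, then follows from $\safe{\rtenv'}$ by Lemma~\ref{lem:split-safe}, since $\rtenv' = \rtenv'_1, \rtenv_2$. It could alternatively be obtained from $\safe{\rtenv_1}$, which holds by Lemma~\ref{lem:split-safe}, together with the third clause of the definition of safety.

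The main obstacle is the bookkeeping around $\equiv$. We must show rigorously that the congruence steps before and after the labelled transition never mix entries of $\rtenv_1$ and $\rtenv_2$, and that the redex of \textsc{Lbl-Send}/\textsc{Lbl-Recv}, which pairs an endpoint with a queue, cannot straddle the split. Both facts rest on the disjointness of $\snames{-}$: an endpoint $\roleidx{s}{\prole}$ and the queue $s : \qty$ share the name $s$, so they must lie on the same side.
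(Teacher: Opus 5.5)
Your proposal is correct and follows the same route as the paper. The paper's proof is the one-line observation that each rule of $\equivsynceval$ affects a single session and that the session names of $\rtenv_1$ and $\rtenv_2$ are disjoint; you spell out the $\equiv$ bookkeeping and obtain $\safe{\rtenv'_1}$ via Lemma~\ref{lem:split-safe}. One small caveat: the congruence steps may also permute queues inside $\rtenv_2$, so the untouched component is recovered only up to $\equiv$, not literally as $\rtenv_2$ (the paper implicitly treats environments up to $\equiv$ in the same way).
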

\begin{proof}
    By inspection of the reduction rules for $\equivsynceval$, noting that
    reduction only affects a single session and that the session names in
    $\rtenv_1, \rtenv_2$ are disjoint.
\end{proof}

\begin{lemma}[Preservation (Equivalence)]\label{lem:equiv-pres}
    If $\cseq{\tyenv}{\rtenv}{\config{C}}$ and $\config{C} \equiv \config{D}$
    then there exists some $\rtenv' \equiv \rtenv$ such that $\cseq{\tyenv}{\rtenv'}{\config{D}}$.
\end{lemma}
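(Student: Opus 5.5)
We argue by induction on the derivation of $\config{C} \equiv \config{D}$. Because structural congruence is the least congruence closed under the axioms of Figure~\ref{fig:semantics}, the cases are reflexivity, symmetry, transitivity, congruence under the configuration constructors, and the individual axioms. We strengthen the statement slightly so that symmetry goes through: for each axiom we prove \emph{both} directions of the implication. The congruence cases then follow from the induction hypothesis by re-applying the corresponding typing rule. For \textsc{T-Par} we use the fact that if $\rtenv_1 \equiv \rtenv'_1$ then $\rtenv'_1, \rtenv_2 \equiv \rtenv_1, \rtenv_2$. For \textsc{T-SessionName} we use the fact that compliance is invariant under $\equiv$ on queue types. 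This holds because $\safe{-}$ and $\df{-}$ are both stated up to $\equiv$: the first clause of safety quantifies over $\qty \equiv \ldots$, and $\equivsynceval$ is defined modulo $\equiv$. Transitivity composes the two witnesses, using transitivity of $\equiv$ on environments.

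The axioms for commutativity and associativity of $\parallel$ are immediate. We invert \textsc{T-Par} and reassemble, since comma on runtime environments is commutative and associative. Swapping restrictions $(\nu \nma_1)(\nu \nma_2)$ requires inverting the two name rules and applying them in the opposite order. Here the side condition $s \not\in \snames{\rtenv}$ of \textsc{T-SessionName} must be rechecked. It holds because the other binder introduces a distinct name: either a different session, or a non-session name, which does not affect $\snames{-}$. Extensions of $\tyenv$ from \textsc{T-APName} and \textsc{T-ActorName} commute trivially. The garbage-collection axiom $(\nu s)(\qproc{s}{\epsilon}) \parallel \config{C} \equiv \config{C}$ goes left to right as follows. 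Inverting gives a session environment $\rtenv'$ that must consist only of $s : \epsilon$ together with endpoints $\roleidx{s}{\prole_i}$. Because $\qproc{s}{\epsilon}$ is typed by \textsc{T-EmptyQueue} under exactly $s : \epsilon$, and $s$ is bound (so $\config{C}$ cannot use it), there are no endpoints, and the environment of $\config{C}$ is unchanged. For right to left we choose $\rtenv' = s : \epsilon$, which is trivially compliant: it cannot reduce, and it has the required terminal form. We also use that $s$ is fresh for $\rtenv$.

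The queue-permutation axiom is handled pointwise. Inverting \textsc{T-ConsQueue} repeatedly yields a queue type whose two adjacent entries can be swapped under the same side condition by the queue-type congruence. So $\rtenv' \equiv \rtenv$, and we rebuild the derivation with the permuted type. This is the only case where $\rtenv'$ genuinely differs from $\rtenv$, which explains why the statement is phrased up to $\equiv$.

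The main obstacle is scope extrusion, $\config{C} \parallel (\nu \nma)\config{D} \equiv (\nu \nma)(\config{C} \parallel \config{D})$ with $\nma \not\in \fn{\config{C}}$. We split on the sort of $\nma$. For access point and actor names, Lemma~\ref{lem:weakening} extends $\tyenv$ for $\config{C}$. For initialisation tokens and sessions, we must show that the linear entries introduced by the binder ($\inittokpm$, or the session environment for $s$) end up entirely in the $\config{D}$ half of the split in the right-to-left direction. For this we need an auxiliary lemma: if $\cseq{\tyenv}{\rtenv}{\config{C}}$, then every runtime name in $\dom{\rtenv}$ occurs free in $\config{C}$. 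This is proved by a routine induction on typing. It ensures that the split places none of the bound entries with $\config{C}$, and that $s \not\in \snames{\rtenv_{\config{C}}}$, so the side condition of \textsc{T-SessionName} is re-established.
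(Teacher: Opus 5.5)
Your proposal is correct and takes the same route as the paper: induction on the derivation of $\config{C} \equiv \config{D}$, with queue reordering as the only case that changes the runtime environment (mirrored by the queue-type congruence). You also fill in what the paper leaves implicit: the two-directional induction hypothesis for symmetry, invariance of compliance under $\equiv$, typing the garbage $(\nu s)(\qproc{s}{\epsilon})$ with session environment $s : \epsilon$, and the free-name lemma for scope extrusion. One small addition is needed in the right-to-left direction of scope extrusion: for an access-point name, weakening is not enough, and you need a routine strengthening lemma to drop $\apname$ from $\tyenv$ when typing $\config{C}$; for both access-point and actor names, you need your free-name lemma to place the bound linear entry in $\config{D}$'s half of the split.
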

\begin{proof}
    By induction on the derivation of $\config{C} \equiv \config{D}$. The only
    case that causes the type environment to change is queue message reordering,
    which can be made typable by mirroring the change in the queue type.
\end{proof}

\begin{lemma}[Preservation (Configuration reduction)]\label{lem:ceval-pres}
If $\cseq{\tyenv}{\rtenv}{\config{C}}$ with $\safe{\rtenv}$
            and $\config{C} \ceval \config{D}$,
            then there exists some $\rtenv'$ such that $\rtenv \equivsynceval^?
            \rtenv'$ such that $\safe{\rtenv'}$
            and $\cseq{\tyenv}{\rtenv'}{\config{D}}$.
\end{lemma}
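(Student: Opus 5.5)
The proof goes by induction on the derivation of $\config{C} \ceval \config{D}$, with a case analysis on the last rule applied.

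\emph{Structural rules.} For \textsc{E-Struct}, we use Lemma~\ref{lem:equiv-pres} on both sides. Structural congruence only permutes queue entries, and safety is invariant under $\equiv$ of environments. The inductive hypothesis then supplies the middle step.

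For \textsc{E-Par}, we split the typing by \textsc{T-Par} into $\rtenv_1, \rtenv_2$. Lemma~\ref{lem:split-safe} gives $\safe{\rtenv_1}$, so the hypothesis applies to the reducing component. We then recombine the environments using Lemma~\ref{lem:split-safety}, which shows that a step of $\rtenv_1$ keeps $\rtenv'_1, \rtenv_2$ safe.

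For \textsc{E-Nu}, we case on the kind of name.
\begin{itemize}
\item For actor, access point, and token names, the environment extension has no session entries, so the result is immediate.
\item For a session name $s$, \textsc{T-SessionName} gives $\rtenv, \rtenv_s$ with $\compliant{\rtenv_s}$. Since $s \notin \snames{\rtenv}$, Lemma~\ref{lem:separate-envs-safe} shows the combined environment is safe. Applying the hypothesis and then Lemma~\ref{lem:env-reductions-separate}, the resulting step either touches only $\rtenv$ (which we keep) or only $\rtenv_s$. In the latter case the label is hidden and the outer environment is unchanged. We must re-establish $\compliant{\rtenv'_s}$: safety is closed under $\equivsynceval$ by definition, and deadlock-freedom is closed because every $\equivsynceval^*$-path from $\rtenv'_s$ extends one from $\rtenv_s$.
\end{itemize}

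\emph{Axioms.} The base cases are the axioms. The non-communicating rules leave session endpoints untouched, so in each we take $\rtenv' = \rtenv$.
\begin{itemize}
\item \textsc{E-Lift} uses Lemmas~\ref{lem:subterm-typability}, \ref{lem:term-pres} and \ref{lem:replacement}.
\item \textsc{E-Spawn} types the new actor from \textsc{T-Spawn} via \textsc{TT-NoSess} under a fresh $b$ added by \textsc{T-ActorName}.
\item \textsc{E-NewAP} uses the compliance premise of \textsc{T-NewAP} for \textsc{T-AP}, with empty token sets typed by \textsc{TA-Empty}.
\item \textsc{E-Register} adds $\inittokpos : \stb_j$ to the actor via \textsc{TI-Callback} and $\inittokneg : \stb_j$ to the access point via \textsc{TA-Entry}, bound by \textsc{T-InitName}.
\item \textsc{E-Reset} moves $\tlthreadctx[\effreturn{\vala}]$ to \textsc{TT-Idle}. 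In the session case the thread is typed with $\roleidx{s}{p} : \stb$ and must have postcondition $\localend$, so $\stb = \localend$ and we take the \textsc{Lbl-End} step to discard the endpoint.
\item \textsc{E-Suspend} uses Lemma~\ref{lem:subterm-typability} and \textsc{T-Suspend}: the precondition at the hole is $\stin$, so the endpoint entry $\roleidx{s}{p} : \stin$ moves unchanged from the thread environment into the handler environment via \textsc{TH-Handler}.
\end{itemize}

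\emph{Communication and initialisation.} These are the cases where the environment genuinely steps.
\begin{itemize}
\item \textsc{E-Send}: subterm typability gives precondition $\localselect{q}{\msg{\ell_i}{\tya_i}.\sta_i}$ and a payload of type $\tya_j$. \textsc{Lbl-Send} steps the environment, and Lemma~\ref{lem:queue} types the extended queue. Replacement with $\effreturn{()}$ at precondition $\sta_j$ retypes the thread.
\item \textsc{E-React}: the handler type gives $\roleidx{s}{p} : \localoffer{q}{\dots}$ in the handler environment, and the queue type has head $\qentry{q}{p}{\ell}{\tyb}$. This is where safety of $\rtenv$ is essential: it gives $\ell = \ell_j$ with $j \in I$ and $\tyb = \tya_j$. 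We step by \textsc{Lbl-Recv}, then retype the body by \textsc{TV-Handler} and Lemma~\ref{lem:substitution}, using the queue value's type and the idle state's type, under \textsc{TT-Sess}. Care is needed in rearranging the queue up to $\equiv$ so that the relevant message is at the head.
\item \textsc{E-Init}, the main obstacle: the removed tokens $\inittokpos, \inittokneg : \sta_i$ must be reassembled into a fresh session environment $\set{\roleidxx{s}{\prole_i} : \sta_i}, s : \epsilon$. The access point's \textsc{T-AP} premise gives compliance of the protocol, hence of this environment, as required by \textsc{T-SessionName}. Each callback's type, $\sta_i \to \localend$ via \textsc{TI-Callback}, applied to its idle state by \textsc{T-App}, gives \textsc{TT-Sess}. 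The difficulty is bookkeeping. We must show that the positive token for role $\prole_i$ carries the same type $\sta_i$ that \textsc{TA-Entry} fixes for the role, which follows because \textsc{T-InitName} introduces both polarities with one type. We must also reorder the restrictions by scope extrusion; this yields $\rtenv' = \rtenv$ at the outer level.
\end{itemize}
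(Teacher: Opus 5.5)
Your proposal is correct and follows the paper's proof essentially case for case. The structure is the same: induction on the reduction derivation, with \textsc{Lbl-Send}, \textsc{Lbl-Recv} (using safety) and \textsc{Lbl-End} steps for \textsc{E-Send}, \textsc{E-React} and session-mode \textsc{E-Reset}; compliance inherited from \textsc{T-AP} for \textsc{E-Init}; and the separation lemmas for \textsc{E-Par} and the session-name subcase of \textsc{E-Nu}. The only differences are cosmetic: you type the \textsc{E-Init} contractum directly by \textsc{T-App} (matching the rule's $\vala_i \app \valb_i$) where the paper substitutes into $\lambda$-callbacks, and you spell out why deadlock-freedom is closed under $\equivsynceval$ and why \textsc{E-Par} needs $\safe{\rtenv_1}$, which the paper leaves implicit.
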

\begin{proof}
    By induction on the derivation of $\config{C} \ceval \config{D}$.
    In each case where $\rtenv \equivsynceval \rtenv'$ for some $\rtenv'$, by
    the definition of safety it follows that $\safe{\rtenv'}$.

    \begin{proofcase}{E-Send}

        \begin{mathpar}
            \inferrule
            {}
            {
                \actor
                { \sessthread{s}{p}{\ctxe[\send{q}{\ell}{\vala}]} }
                { \hstate }
                { \istate }
                \parallel
                \qproc{s}{\qcontents}
\ceval
\actor
                { \sessthread{s}{p}{\ctxe[\effreturn{()}]} }
                { \hstate }
                { \istate }
                \parallel
                \qproc{s}{\qcontents \cdot \qentry{p}{q}{\ell}{\vala}}
            }
        \end{mathpar}

        Assumption:
        {\small
        \begin{mathpar}
            \inferrule*
            {
                \inferrule*
                {
                    \inferrule*
                    {
                        \mtseqst
                            {\tyenv}
                            {\tyc}
                            {\sta}
                            {\ctxe[\send{q}{\ell}{\vala}]}
                            {\tyc}
                            {\localend}
                    }
                    {
                        \threadseq
                            {\tyenv}
                            {\roleidx{s}{\prole} : \sta}
                            {\tyc}
                            {\sessthread{s}{p}{\ctxe[\send{q}{\ell}{\vala}]}}
                    }
                    \\
                    \hstateseq{\tyenv}{\rtenv_2}{\tyc}{\hstate} \\
                    \istateseq{\tyenv}{\rtenv_3}{\tyc}{\istate} \\
                }
                { \cseq
                    {\tyenv}
                    {\roleidx{s}{\prole} : \sta, \rtenv_2, \rtenv_3, a}
                    {
                        \actor
                            { \sessthread{s}{p}{\ctxe[\send{q}{\ell}{\vala}]} }
                            { \hstate }
                            { \istate }
                    }
                }
                \\
                \cseq{\tyenv}{s : Q}{\qproc{s}{\qcontents}}
            }
            {
                \cseq
                {\tyenv}
                { \roleidx{s}{\prole} : S, \rtenv_2, \rtenv_3, s : Q, a }
                {
                    \actor
                        { \sessthread{s}{p}{\ctxe[\send{q}{\ell}{\vala}]} }
                        { \hstate }
                        { \istate }
                    \parallel
                        \qproc{s}{\qcontents}
                }
            }
        \end{mathpar}
    }

        By Lemma~\ref{lem:subterm-typability} we have that
        $\mtseqst
            {\tyenv}
            {\tyc}
            {\localselect{\qrole}{\msg{\ell_i}{\tya_i} : \stb_i}_{i \in I}}
            {\send{q}{\ell_j}{\vala}}
            {\one}
            {\stb_j}$
        and therefore that
        $\sta = {\localselect{\qrole}{\msg{\ell_i}{\tya_i} : \stb_i}_{i \in I}}$.

        Since
        $\mtseqst
            {\tyenv}
            {\tyc}
            {\stb_j}
            {\effreturn{()}}
            {\one}
            {\stb_j}$,
        we can show
        by Lemma~\ref{lem:replacement} we have that
        $\mtseqst
            {\tyenv}
            {\tyc}
            {\stb_j}
            {\ctxe[\effreturn{()}]}
            {\tyc}
            {\localend}$.

            By Lemma~\ref{lem:queue},
            $\cseq
                {\tyenv}
                {s : Q \cdot \qentry{p}{q}{\ell_j}{\tya_j}}
                {\qproc{s}{\qcontents \cdot \qentry{p}{q}{\ell_j}{\vala}}}$.

        Therefore, recomposing:

        {\footnotesize
        \begin{mathpar}
            \inferrule*
            {
                \inferrule*
                {
                    \inferrule*
                    {
                        \mtseqst{\tyenv}{\tyc}{\stb_j}{\ctxe[\effreturn{()}]}{\tyc}{\localend}
                    }
                    {
                        \threadseq
                            {\tyenv}
                            {\roleidx{s}{\prole} : \stb_j}
                            {\tyc}
                            {\sessthread{s}{p}{\ctxe[\effreturn{()}]}}
                    }
                    \\
                    {\bl
                    \hstateseq{\tyenv}{\rtenv_2}{\tyc}{\hstate} \\
                    \istateseq{\tyenv}{\rtenv_3}{\tyc}{\istate} \\
                    \el
                    }
                }
                { \cseq
                    {\tyenv}
                    {\roleidx{s}{\prole} : \stb_j, \rtenv_2, \rtenv_3, a}
                    {
                        \actor
                            { \sessthread{s}{p}{\ctxe[\effreturn{()}]} }
                            { \hstate }
                            { \istate }
                    }
                }
                \\
                \cseq
                    {\tyenv}
                    {s : Q \cdot \qentry{\prole}{\qrole}{\ell_j}{\tya_j}}
                    {\qproc{s}{\qcontents \cdot \qentry{\prole}{\qrole}{\ell_j}{\vala}}}
            }
            {
                \cseq
                {\tyenv}
                { \roleidx{s}{\prole} : \stb_j, \rtenv_2, \rtenv_3, s : Q \cdot
                \qentry{\prole}{\qrole}{\ell_j}{\tyb_j}, a }
                {
                    \actor
                        { \sessthread{s}{p}{\ctxe[\effreturn{()}]} }
                        { \hstate }
                        { \istate }
                    \parallel
                    \qproc
                        {s}
                        {\qcontents \cdot \qentry{\prole}{\qrole}{\ell_j}{\vala}}
                }
            }
        \end{mathpar}
    }

    Finally,

       $ \roleidx{s}{\prole} : \localselect{\qrole}{\msg{\ell_i}{\tya_i} :
       \stb_i}_{i \in I}, \rtenv_2, \rtenv_3, s : Q, a \equivsynceval
         \roleidx{s}{\prole} : \stb_j, \rtenv_2, \rtenv_3, s : Q \cdot
         \qentry{\prole}{\qrole}{\ell_j}{\tyb_j}, a$ by \textsc{Lbl-Send} as
         required.

\end{proofcase}

\begin{proofcase}{E-React}
    \begin{mathpar}
        \inferrule
        {
            \ell(x) \mapsto \tma \in \seq{H}
        }
        {
            \actor
            { \idle{\valc} }
            { \hstate[\storedhandler{s}{p}{\handler{q}{\var{st}}{\seq{H}}}] }
            { \istate }
            \parallel
            \qproc{s}{\qentry{q}{p}{\ell}{\vala} \cdot \qcontents}
\ceval
\actor
            { \sessthread{s}{\prole}{\tma \{ V / x, \valc / \var{st} \}} }
            { \hstate }
            { \istate }
            \parallel
            \qproc{s}{\qcontents}
        }
    \end{mathpar}

    For simplicity (and equivalently) let us refer to $\ell$ as $\ell_j$.

    Let $\derivd$ be the following derivation:

    {\footnotesize
    \begin{mathpar}
        \inferrule*
        {
            \inferrule*
            {
                \inferrule*
                {
                    (\mtseqst
                        {\tyenv, x_i : \tyb_i, \var{st} : \tyc}
                        {\tyc}
                        {\sta_i}
                        {\tma_i}
                        {\tyc}
                        {\localend})_{i \in I}
                }
                {
                    \vseq
                        {\tyenv }
                        { \handler{q}{\var{st}}{(\msg{\ell_i}{x_i} \mapsto \tma_i)_{i \in I}}}
                        { \handlerty{\stin}{\tyc} }
                }
                \\
                \hstateseq{\tyenv}{\rtenv_2}{\tyc}{\hstate}
            }
            { \hstateseq
                {\tyenv}
                {\rtenv_2, \roleidx{s}{p} : \stin }
                {\tyc}
                { \hstate[\storedhandler{s}{p}{\handler{q}{\var{st}}{\seq{H}}}] }
            }  \\
            \inferrule*
            { \vseq{\tyenv}{\valc}{\tyc} }
            { \threadseq{\tyenv}{\tyc}{\cdot}{\idle{\valc}} }
            \\
            \istateseq{\tyenv}{\rtenv_3}{\tyc}{\istate}
        }
        {
            \cseq
            { \tyenv }
            { \rtenv_2, \rtenv_3, \roleidx{s}{p} : \stin, a }
            {
                \actor
                    { \idle{\valc} }
                    { \hstate[\storedhandler{s}{p}{\handler{q}{\var{st}}{\seq{H}}}] }
                    { \istate }
            }
        }
    \end{mathpar}
    }

    Assumption:

    {\small
    \begin{mathpar}
        \inferrule*
        {
            \derivd
            \\
            \inferrule*
            {
                \vseq{\tyenv}{\vala}{\tya}
                \\
                \cseq
                { \tyenv }
                { s : Q }
                { \qproc{s}{\qcontents} }
            }
            {
                \cseq
                { \tyenv }
                { \roleidx{s}{p} : \stin, s : (\qentry{q}{p}{\ell_j}{\tya} \cdot Q) }
                { \qproc{s}{\qentry{q}{p}{\ell_j}{\vala} \cdot \qcontents} }
            }
        }
        {
            \cseq
            { \tyenv }
            { \rtenv_2, \rtenv_3, \roleidx{s}{p} : \stin, s :
            (\qentry{q}{p}{\ell_j}{\tya} \cdot Q), a }
            {
                \actor
                    { \idle{\valc} }
                    { \hstate[\storedhandler{s}{p}{\handler{q}{\var{st}}{\seq{H}}}] }
                    { \istate }
                    \parallel
                    \qproc{s}{\qentry{q}{p}{\ell_j}{\vala} \cdot \qcontents}
            }
        }
    \end{mathpar}
    }

    where $\stin = \localoffer{\prole}{\msg{\ell_i}{\tyb_i}.\sta_i}_{i \in I}$.

    Since
    $\safe{\rtenv_2, \rtenv_3, \roleidx{s}{p} : \stin, s :
    (\qentry{q}{p}{\ell_j}{\tya} \cdot Q)}, a$ we have that $j \in I$ and $\tya = \tyb_j$.

    Similarly since $\msg{\ell_j}{x_j} \mapsto \tma \in \seq{H}$ we have that
    $\mtseqst{\tyenv, x_j : \tyb_j, \var{st} : \tyc}{\tyc}{\sta_j}{\tma}{\tyc}{\localend}$.

    By Lemma~\ref{lem:substitution}, $\mtseqst{\tyenv}{\tyc}{\sta_j}{\tma \{
    \vala / x_j, \valc / \var{st} \}}{\tyc}{\localend}$.

    Let $\derivd'$ be the following derivation:

    \begin{mathpar}
        \inferrule*
        {
            \inferrule*
            {
                \mtseqst{\tyenv}{\sta_j}{\tyc}{\tma \{ \vala / x_j, \valc /
                \var{st} \}}{\tyc}{\localend}
            }
            { \threadseq
                {\tyenv}
                {\roleidx{s}{\prole} : \sta_j}
                {\tyc}
                { \sessthread{s}{p}{\tma \{ \vala / x_j, \valc / \var{st} \}} }
            } \\
            \hstateseq{\tyenv}{\rtenv_2}{\tyc}{\hstate} \\
            \istateseq{\tyenv}{\rtenv_3}{\tyc}{\istate}
        }
        {
            \cseq
            { \tyenv }
            { \rtenv_2, \rtenv_3, \roleidx{s}{p} : \sta_j, a }
            {
                \actor
                    { \sessthread{s}{p}{\tma \{ \vala / x_j, \valc / \var{st} \}} }
                    { \hstate }
                    { \istate }
            }
        }
    \end{mathpar}

    Recomposing:
    \begin{mathpar}
        \inferrule*
        {
            \derivd'
            \\
            \cseq{\tyenv}{s : \qty}{\qproc{s}{\qcontents}}
        }
        {
            \cseq
            { \tyenv }
            { \rtenv_2, \rtenv_3, \roleidx{s}{p} : \sta_j, s : \qty, a }
            {
                \actor
                    { \sessthread{s}{\prole}{\tma \{ V / x_j, \valc / \var{st} \}} }
                    { \hstate }
                    { \istate }
                \parallel
                    \qproc{s}{\qcontents}
            }
        }
    \end{mathpar}

    Finally, we note that
$\rtenv_2, \rtenv_3, \roleidx{s}{p} : \stin, s :
    (\qentry{q}{p}{\ell_j}{\tya} \cdot Q), a
    \equivsynceval
    \rtenv_2, \rtenv_3, \roleidx{s}{p} : \sta_j, s : \qty, a$ by \textsc{Lbl-Recv}
    as required.

\end{proofcase}

\begin{proofcase}{E-Suspend}

    \begin{mathpar}
       \inferrule
        {}
        { \actor
            {\sessthread{s}{\prole}{\ctxe[\suspend{\vala}{\valb}]}}
            {\hstate}
            {\istate}
\ceval
\actor
            {\idle{\valb}}
            {\hstate[\storedhandler{s}{\prole}{\vala}]}
            {\istate}
        }
    \end{mathpar}

    Assumption:

    \begin{mathpar}
        \inferrule
        {
            \inferrule
            { \mtseqst{\tyenv}{\tyc}{\sta}{\ctxe[\suspend{\vala}{\valb}]}{\tyc}{\localend} }
            {  \threadseq
                    {\tyenv}
                    {\roleidx{s}{\prole} : \sta}
                    {\tyc}
                    {\sessthread{s}{\prole}{\ctxe[\suspend{\vala}{\valb}]}}
            }
            \\
            \hstateseq{\tyenv}{\rtenv_2}{\tyc}{\hstate} \\
            \istateseq{\tyenv}{\rtenv_3}{\tyc}{\istate} \\
        }
        {
            \cseq
            {
                \tyenv
            }
            {
                \roleidx{s}{\prole} : \sta, \rtenv_2, \rtenv_3, a
            }
            {
                \actor
                {\sessthread{s}{\prole}{\ctxe[\suspend{\vala}{\valb}]}}
                {\hstate}
                {\istate}
            }
        }
    \end{mathpar}

    By Lemma~\ref{lem:subterm-typability} we have that:
    \begin{mathpar}
        \inferrule
        {
            \vseq{\tyenv}{\vala}{\handlerty{\stin}{\tyc}} \\
            \vseq{\tyenv}{\valb}{\tyc}
        }
        { \mtseqst{\tyenv}{\tyc}{\stin}{\suspend{\vala}{\valb}}{\tya}{\stb} }
    \end{mathpar}

    for any arbitrary $\tya, \stb$, and showing that $\sta = \stin$.

    Recomposing:

    {\small
    \begin{mathpar}
        \inferrule
        {
            \inferrule*
            { \vseq{\tyenv}{\valb}{\tyc} }
            { \threadseq{\tyenv}{\cdot}{\tyc}{\idle{\valb}} }
            \\
            \inferrule*
            { \vseq{\tyenv}{\vala}{\handlerty{\stin}{\tyc}} \\
            \hstateseq{\tyenv}{\rtenv_2}{\tyc}{\hstate} }
            { \hstateseq
                {\tyenv}
                {\rtenv_2, \roleidx{s}{\prole} : \stin}
                {\tyc}
                {\hstate[\roleidx{s}{\prole} \mapsto \vala]} } \\
                \istateseq{\tyenv}{\rtenv_3}{\tyc}{\istate}
        }
        {
            \cseq
            {
                \tyenv
            }
            {
                \roleidx{s}{\prole} : \stin, \rtenv_2, \rtenv_3, a
            }
            {
                \actor
                {\idle{\valb}}
                {\hstate[\roleidx{s}{\prole} \mapsto \vala]}
                {\istate}
            }
        }
    \end{mathpar}
    }

    as required.

    \end{proofcase}

\begin{proofcase}{E-Spawn}

    \begin{mathpar}
        \inferrule
        {}
        { \actor
            { \threadctx[\spawn{M}] }
            { \hstate }
            { \istate }
\ceval
(\nu b)(
          \actor
            { \threadctx[\effreturn{()}] }
            { \hstate }
            { \istate }
          \parallel
          \actor
            [b]
            { M }
            { \epsilon }
            { \epsilon }
            )
        }
    \end{mathpar}
    (with $b$ fresh)

    There are two subcases based on whether $\threadctx = \ctxe[-]$ or
    $\threadctx = \sessthread{s}{\prole}{\ctxe[-]}$.
    Both are similar so we will prove the latter case.

    Assumption:

    \begin{mathpar}
        \inferrule*
        {
            \inferrule*
            { \mtseqst{\tyenv}{\tyc}{\sta}{\ctxe[\spawn{\tma}]}{\tyc}{\localend} }
            { \threadseq
                {\tyenv}
                {\roleidx{s}{\prole} : \sta}
                {\tyc}
                {\sessthread{s}{\prole}{\ctxe[\spawn{\tma}]} }
            }
            \\
            {
            \begin{array}{l}
            \hstateseq{\tyenv}{\rtenv_2}{\tyc}{\hstate} \\
            \istateseq{\tyenv}{\rtenv_3}{\tyc}{\istate} \\
            \end{array}
            }
        }
        {
            \cseq
            { \tyenv }
            { \rtenv_2, \rtenv_3, \roleidx{s}{\prole} : \sta, a }
            {
                \actor
                { \sessthread{s}{\prole}{\ctxe[\spawn{M}]} }
                { \hstate }
                { \istate }
            }
        }
    \end{mathpar}

    By Lemma~\ref{lem:subterm-typability}:

    \begin{mathpar}
        \inferrule
        { \mtseqst{\tyenv}{\tya}{\localend}{\tma}{\tya}{\localend} \\
        }
        { \mtseqst{\tyenv}{\tyc}{\sta}{\spawn{\tma}}{\one}{\sta} }
    \end{mathpar}

    By Lemma~\ref{lem:replacement},
    $\mtseqst{\tyenv}{\tyc}{\sta}{\ctxe[\effreturn{()}]}{\tyc}{\localend}$.

    Thus, recomposing:

    {\footnotesize
    \begin{mathpar}
        \inferrule*
        {
            \inferrule*
            {
                \inferrule*
                {
                    \inferrule*
                    {
                        \mtseqst{\tyenv}{\tyc}{\sta}{\ctxe[\effreturn{()}]}{\tyc}{\localend}
                    }
                    { \threadseq{\tyenv}{\roleidx{s}{\prole} : \sta}{\tyc}{\sessthread{s}{\prole}{\ctxe[\effreturn{()}]}} }
                    \\
                    {
                        \begin{array}{l}
                            \hstateseq{\tyenv}{\rtenv_2}{\tyc}{\hstate} \\
                            \istateseq{\tyenv}{\rtenv_3}{\tyc}{\istate} \\
                        \end{array}
                    }
                }
                {
                    \cseq
                    { \tyenv }
                    { \rtenv_2, \rtenv_3, \roleidx{s}{\prole} : \sta, a }
                    {
                        \actor
                            { \sessthread{s}{\prole}{\ctxe[\effreturn{()}]} }
                            { \hstate }
                            { \istate }
                    }
                }
                \\
                \inferrule*
                {
                    \inferrule*
                    {
                        \mtseqst{\tyenv}{\tya}{\localend}{\tma}{\tya}{\localend}
                    }
                    { \threadseq{\tyenv}{\cdot}{\tya}{\tma} }
                    \\
                    {
                        \begin{array}{l}
                            \hstateseq{\tyenv}{\cdot}{\tya}{\epsilon} \\
                            \istateseq{\tyenv}{\cdot}{\tya}{\epsilon} \\
                        \end{array}
                    }
                }
                {
                    \cseq
                    { \tyenv }
                    { b }
                    { \actor[b]{\tma}{\epsilon}{\epsilon} }
                }
            }
            {
                \cseq
                { \tyenv }
                { \rtenv_2, \rtenv_3, \roleidx{s}{\prole} : \sta, a, b }
                {
                    \actor
                    { \sessthread{s}{\prole}{\ctxe[\effreturn{()}]} }
                    { \hstate }
                    { \istate }
                    \parallel
                    \actor
                    [b]
                    { \tma }
                    { \epsilon }
                    { \epsilon }
                }
            }
        }
        {
            \cseq
                { \tyenv }
                { \rtenv_2, \rtenv_3, \roleidx{s}{\prole} : \sta, a }
                {
                    (\nu b)(
                    \actor
                    { \sessthread{s}{\prole}{\ctxe[\effreturn{()}]} }
                    { \hstate }
                    { \istate }
                    \parallel
                    \actor
                    [b]
                    { \tma }
                    { \epsilon }
                    { \epsilon }
                    )
                }
        }
    \end{mathpar}
    }

    as required.
    \end{proofcase}

\begin{proofcase}{E-Reset}
    \begin{mathpar}
        \inferrule
        {}
        {
            \actor
              { \tlthreadctx[\effreturn{\vala}] }
              { \hstate }
              { \istate }
\ceval
\actor
              { \idle{\vala} }
              { \hstate }
              { \istate }
        }
    \end{mathpar}

    There are two subcases based on whether $\tlthreadctx = [-]$ or $\tlthreadctx =
    \sessthread{s}{p}{[-]}$. We prove the latter case; the former is similar but
    does not require a context reduction.

    Assumption:

    \begin{mathpar}
        \inferrule
        {
            \inferrule*
            {
                \inferrule*
                {
                    \vseq{\tyenv}{\vala}{\tyc}
                }
                {
                    \mtseqst
                        {\tyenv}
                        { \tyc }
                        {\localend}
                        {\effreturn{\vala}}
                        {\tyc}
                        {\localend}
                }
            }
            {
                \threadseq
                    {\tyenv}
                    {\roleidx{s}{\prole} : \localend}
                    { \tyc }
                    {\sessthread{s}{p}{\effreturn{\vala}}}
            }
            \\
            {\bl
                \hstateseq{\tyenv}{\rtenv_2}{\tyc}{\hstate} \\
                \istateseq{\tyenv}{\rtenv_3}{\tyc}{\istate}
            \el
            }
        }
        {
            \cseq
            { \tyenv }
            { \rtenv_2, \rtenv_3, \roleidx{s}{\prole} : \localend, a}
            {
                \actor
                { \sessthread{s}{p}{\effreturn{\vala}}}
                { \hstate }
                { \istate }
            }
        }
    \end{mathpar}

    We can show that
    $\rtenv_2, \rtenv_3, \roleidx{s}{\prole} : \localend, a
    \synceval{\lblsyncend{s}{\prole}} \rtenv_2, \rtenv_3, a$, so we can
    reconstruct:

    \begin{mathpar}
        \inferrule
        {
            \inferrule*
            { \vseq{\tyenv}{\vala}{\tyc} }
            { \threadseq{\tyenv}{\cdot}{\tyc}{\idle{\vala}} }  \\
            \hstateseq{\tyenv}{\rtenv_2}{\tyc}{\hstate} \\
            \istateseq{\tyenv}{\rtenv_3}{\tyc}{\istate}
        }
        {
            \cseq
            { \tyenv }
            { \rtenv_2, \rtenv_3, a }
            {
                \actor
                { \idle{\vala} }
                { \hstate }
                { \istate }
            }
        }
    \end{mathpar}

    as required.
    \end{proofcase}

\begin{proofcase}{E-NewAP}
    \begin{mathpar}
        \inferrule
        { \apname \text{ fresh}}
        {
          \actor
            { \threadctx[\newap{(\role{p}_i : \sta_i )_{i \in I}}] }
            { \hstate }
            { \istate }
\ceval
(\nu \apname)(
              \actor
                { \threadctx[\effreturn{\apname}] }
                { \hstate }
                { \istate }
\parallel
\ap{\apname}{(\prole_i \mapsto \epsilon)_{i \in I}}
            )
        }
    \end{mathpar}

    As usual we prove the case where $\threadctx = \sessthread{s}{\prole}{\ctxe[-]}$; the
    case where $\threadctx = (\ctxe[-])$ is similar.

    Assumption:
    \begin{mathpar}
        \inferrule*
        {
            \inferrule*
            {
                \inferrule*
                {}
                { \mtseqst
                    {\tyenv}
                    { \tyc }
                    {\stb}
                    {\ctxe[\newap{(\role{p}_i : \sta_i )_{i \in I}}]}
                    {\tyc}
                    {\localend}
                }
            }
            { \threadseq
                { \tyenv }
                { \roleidx{s}{\prole} : \stb }
                {\tyc}
                {\sessthread{s}{p}{\ctxe[\newap{(\role{p}_i : \sta_i )_{i \in I}}]}} }
            \\
            {
                \begin{array}{l}
                    \hstateseq{\tyenv}{\rtenv_2}{\tyc}{\hstate} \\
                    \istateseq{\tyenv}{\rtenv_3}{\tyc}{\istate}
                \end{array}
            }
        }
        {
            \cseq
            { \tyenv }
            { \rtenv_2, \rtenv_3, a }
            {
                \actor
                    { \sessthread{s}{p}{\ctxe[\newap{(\role{p}_i : \sta_i )_{i \in I}}]} }
                    { \hstate }
                    { \istate }
            }
        }
    \end{mathpar}

    By Lemma~\ref{lem:subterm-typability}:
    \begin{mathpar}
        \inferrule
        {
            \comp{(\role{p}_i : \sta_i)_{i \in I}}
        }
        { \mtseqst
            {\tyenv}
            { \tyc }
            { \stb }
            {\newap{(\role{p}_i : \sta_i )_{i \in I}}}
            {\apty{(\role{p}_i : \sta_i )_{i \in I}}}
            { \stb }
        }
    \end{mathpar}

    By Lemma~\ref{lem:replacement},
    $\mtseqst
        {\tyenv, \apname : \apty{(\role{p}_i : \sta_i )_{i \in I}}}
        { \tyc }
        {\stb}
        {\ctxe[\effreturn{\apname}]}
        {\tyc}
        {\localend}$.

    Let $\tyenv' = \tyenv, \apname : \apty{(\role{p}_i : \sta_i )_{i \in I}}$.

    By Lemma~\ref{lem:weakening}, since $\apname$ is fresh we have that
    $\hstateseq{\tyenv'}{\rtenv_2}{\tyc}{\hstate}$ and
    $\istateseq{\tyenv'}{\rtenv_3}{\tyc}{\istate}$.

    Recomposing:

    {\footnotesize
    \begin{mathpar}
        \inferrule*
        {
            \inferrule*
            {
                \inferrule*
                {
                    \inferrule*
                    {
                        \inferrule*
                        {}
                        { \mtseqst
                            {\tyenv'}
                            {\tyc}
                            {\stb}
                            {\ctxe[\effreturn{\apname}]}
                            {\tyc}
                            {\localend}
                        }
                    }
                    { \threadseq
                        { \tyenv' }
                        { \roleidx{s}{\prole} : \stb }
                        {\tyc}
                        {\sessthread{s}{p}{\ctxe[\effreturn{\apname}]}} }
                    \\
                    {
                        \begin{array}{l}
                            \hstateseq{\tyenv'}{\rtenv_2}{\tyc}{\hstate} \\
                            \istateseq{\tyenv'}{\rtenv_3}{\tyc}{\istate}
                        \end{array}
                    }
                }
                {
                    \cseq
                    { \tyenv' }
                    { \rtenv_2, \rtenv_3, \roleidx{s}{p} : \stb, a }
                    {
                        \actor
                            { \sessthread{s}{p}{\ctxe[\effreturn{\apname}]} }
                            { \hstate }
                            { \istate }
                    }
                }
                \\
                \inferrule
                {
                    \apname : \apty{(\prole_i: S_i)_{i \in I}} \in \tyenv' \\
                    (\inittokseq{\cdot}{\epsilon}{\sta_i})_{i \in I} \\\\
                    \comp{(\prole_i: \sta_i)_{i \in I}} \\
                }
                { \cseq
                    {\tyenv'}
                    {\apname}
                    {\ap{\apname}{(\maptwo{\prole_i}{\epsilon})_{i \in I}}}
                }
            }
            {
                \cseq
                { \tyenv' }
                { \rtenv_2, \rtenv_3, \roleidx{s}{p} : \stb, a, \apname }
                {
                    \actor
                        { \sessthread{s}{p}{\ctxe[\effreturn{\apname}] } }
                        { \hstate }
                        { \istate }
                    \parallel
                        \ap{\apname}{(\prole_i \mapsto \epsilon)_{i \in I}}
                }
            }
        }
        {
            \cseq
            { \tyenv }
            { \rtenv_2, \rtenv_3, \roleidx{s}{p} : \stb}
            {
                (\nu \apname)
                (
                    \actor
                    { \sessthread{s}{p}{\ctxe[\effreturn{\apname}]} }
                        { \hstate }
                        { \istate }
                    \parallel
                        \ap{\apname}{(\prole_i \mapsto \epsilon)_{i \in I}}
                )
            }
        }
    \end{mathpar}
    }

    as required.
\end{proofcase}

\begin{proofcase}{E-Register}
    {\small
    \begin{mathpar}
        \inferrule
        { \inittok \text{ fresh} }
        { \actor
                {\threadctx[\register{\apname}{\prole}{\vala}]}
                {\hstate}
                {\istate}
                \parallel
            \ap{\apname}{\apstate[\prole \mapsto \setseq{\inittok'}]}
            \ceval
            (\nu \inittok)
            (\actor
                {\threadctx[\effreturn{()}]}
                {\hstate}
                {\istate[\maptwo{\inittok}{\vala}]} \parallel
            \ap{\apname}{\apstate[\prole \mapsto \setseq{\inittok'} \cup
            \set{\inittok}]})
        }
    \end{mathpar}
    }

    Again, we prove the case where $\threadctx = \sessthread{s}{\qrole}{\ctxe[-]}$
    and let $\prole = \prole_j$ for some $j$.

    Let $\rtenv = \rtenv_2, \rtenv_3, \rtenv_4, \setseq{\inittokneg_j : \sta_j},
    \roleidx{s}{p} : \stb, a, \apname$.

    Let $\derivd$ be the following derivation:

    \begin{mathpar}
        \inferrule*
        {
            \inferrule*
            {
                \mtseqst
                    {\tyenv}
                    {\tyc}
                    {\stb}
                    {\ctxe[\register{\apname}{\prole_j}{\vala}]}
                    {\tyc}
                    {\localend}
            }
            { \threadseq
                {\tyenv}
                {\roleidx{s}{q} : \stb}
                {\tyc}
                {\sessthread{s}{q}{\ctxe[\register{\apname}{\prole_j}{\vala}]}}
            }
            \\
            {
            \begin{array}{l}
                \hstateseq{\tyenv}{\rtenv_2}{\tyc}{\hstate} \\
                \istateseq{\tyenv}{\rtenv_3}{\tyc}{\istate}
            \end{array}
            }
        }
        {
            \cseq
            { \tyenv }
            { \rtenv_2, \rtenv_3, \roleidx{s}{q} : \stb, a }
            {
                \actor
                    {\sessthread{s}{q}{\ctxe[\register{\apname}{\prole_j}{\vala}]}}
                    {\hstate}
                    {\istate}
            }
        }
    \end{mathpar}

    Assumption:

    {\footnotesize
    \begin{mathpar}
        \inferrule*
        {
            \derivd
            \\
            \inferrule*
            {
                \inferrule*
                {
                    \apseq
                        {(\prole_i: S_i)_{i \in 1..n}}
                        { \rtenv_4 }
                        { \apstate }
                }
                { \apseq
                    {(\prole_i: S_i)_{i \in 1..n}}
                    {\rtenv_4, \seq{\inittoknegprime_j : \sta_j}}
                    {\apstate[\prole_j \mapsto \setseq{\inittok'}]}
                }
                \\
                {
                    \begin{array}{l}
                        \apname : \apty{(\prole_i: S_i)_{i \in 1..n}} \in \tyenv \\
                        \comp{(\prole_i: S_i)_{i \in 1..n}} \\
                    \end{array}
                }
            }
            {
                \cseq
                { \tyenv }
                { \rtenv_4, \seq{\inittoknegprime_j : \sta_j}, \apname }
                { \ap{\apname}{\apstate[\prole_{j} \mapsto \setseq{\inittok'}]} }
            }
        }
        {
            \cseq
                { \tyenv }
                { \rtenv }
                {
                    \actor
                        {\sessthread{s}{q}{\ctxe[\register{\apname}{\prole_j}{\vala}]}}
                        {\hstate}
                        {\istate}
                    \parallel
                    \ap{\apname}{\apstate[\prole_j \mapsto \setseq{\inittok'}]}
                }
        }
    \end{mathpar}
    }

    By Lemma~\ref{lem:subterm-typability}:
    \begin{mathpar}
        \inferrule*
        {
            \vseq{\tyenv}{\apname}{\apty{(\prole_i : \sta_i)_i}} \\
            \vseq{\tyenv}{\vala}{\sttyfun{\tyc}{\tyc}{\sta_j}{\localend}{\tyc}}
}
        {
            \mtseqst
            { \tyenv }
            { \tyc }
            { \stb }
            { \register{\apname}{\prole_j}{\vala} }
            { \one }
            { \stb }
        }
    \end{mathpar}

    By Lemma~\ref{lem:replacement},
    $\mtseqst{\tyenv}{\tyc}{\stb}{\ctxe[\effreturn{()}]}{\tyc}{\localend}$.

    Now, let $\derivd'$ be the following derivation:

    {\footnotesize
    \begin{mathpar}
        \inferrule*
        {
            \inferrule*
            {
                \mtseqst{\tyenv}{\tyc}{\stb}{\ctxe[\effreturn{()}]}{\tyc}{\localend}
            }
            { \threadseq{\tyenv}{\roleidx{s}{q} : \stb}{\tyc}{\sessthread{s}{q}{\ctxe[\effreturn{()}]}} }
            \\
            \inferrule*
            {
                \vseq{\tyenv}{\vala}{\sttyfun{\tyc}{\tyc}{\sta_j}{\localend}{\tyc}}
                \\
                \istateseq{\tyenv}{\rtenv_3}{\tyc}{\istate}
            }
            { \istateseq
                {\tyenv}
                {\rtenv_3, \inittokpos : \sta_j }
                { \tyc }
                {\istate[\inittokpos \mapsto \vala]}
            }
            \\
            \hstateseq{\tyenv}{\rtenv_2}{\tyc}{\hstate}
        }
        {
            \cseq
            { \tyenv }
            { \rtenv_2, \rtenv_3, \roleidx{s}{q} : \sta, \inittokpos : \sta_j, a }
            {
                \actor
                    {\sessthread{s}{q}{\ctxe[\effreturn{()}]}}
                    {\hstate}
                    {\istate}
            }
        }
    \end{mathpar}
    }

    Finally, we can recompose:

    {\footnotesize
    \begin{mathpar}
        \inferrule*
        {
            \inferrule*
            {
                \derivd
                \\
                \inferrule*
                {
                    \inferrule*
                    {
                        \apseq
                            {(\prole_i: S_i)_{i \in 1..n}}
                            { \rtenv_4 }
                            { \apstate }
                    }
                    { \apseq
                        {(\prole_i: S_i)_{i \in 1..n}}
                        {\rtenv_4, \seq{\inittoknegprime_j : \sta_j},
                        \inittokneg : \sta_j}
                        {\apstate[\prole_j \mapsto \setseq{\inittok'} \cup \set{\inittok}]}
                    }
                    \\
                    {
                        \begin{array}{l}
                            \apname : \apty{(\prole_i: S_i)_{i \in 1..n}} \in \tyenv \\
                            \comp{(\prole_i: S_i)_{i \in 1..n}}
                        \end{array}
                    }
                }
                {
                    \cseq
                    { \tyenv }
                    { \rtenv_4, \seq{\inittoknegprime_j : \sta_j}, \inittokneg : \sta_j, \apname }
                    { \ap{\apname}{\apstate[\prole_{j} \mapsto \setseq{\inittok'} \cup \set{\inittok}]}}
                }
            }
            {
                \cseq
                    { \tyenv }
                    { \rtenv, \inittokpos : \sta_j, \inittokneg : \sta_j }
                    {
                        \actor
                            {\sessthread{s}{q}{\ctxe[\effreturn{()}]}}
                            {\hstate}
                            {\istate}
                        \parallel
                        \ap{\apname}{\apstate[\prole_j \mapsto
                        \setseq{\inittok'} \cup \set{\inittok}]}
                    }
            }
        }
        {
            \cseq
            { \tyenv }
            { \rtenv }
            {
                (\nu \inittok)
                (
                    \actor
                        {\sessthread{s}{q}{\ctxe[\effreturn{()}]}}
                        {\hstate}
                        {\istate}
                    \parallel
                        \ap{\apname}{\apstate[\prole_j \mapsto
                        \setseq{\inittok'} \cup \set{\inittok}]}
                )
            }
        }
    \end{mathpar}
    }

    as required.
\end{proofcase}

\begin{proofcase}{E-Init}

    \begin{mathpar}
        \inferrule
    { \sessname \text{ fresh} }
    {
        (\nu \inittok_{\prole_i})_{i \in 1..n}(
        \ap
            {\apname}
            {(\prole_i \mapsto \setseq{\inittok'_{\prole_i}} \cup \set{\inittok_{\prole_i}})_{i \in 1..n} }
        \parallel
        \actor
            [a_i]
            {\idle{\valc_i}}
            {\hstate_i}
            {\istate_i[\inittok_{\prole_i} \mapsto (\fun{\var{st}_i}{\tma_i})]}_{i \in 1..n})
\cevalann{\tau}\\\\
(\nu \sessname)(
            \ap{\apname}{(\prole_i \mapsto \setseq{\inittok'_{\prole_i}})_{i \in 1..n}}
            \parallel
            \qproc{s}{\epsilon}
            \parallel
            \actor
                [a_i]
                {\sessthreadd{\sessname}{\prole_i}{\subst{\tma_i}{\valc_i}{\var{st}_i}}}
                {\hstate_i}
                {\istate_i}_{i \in 1..n})
    }
    \end{mathpar}
    For each actor composed in parallel we have:

    {\footnotesize
    \begin{mathpar}
        \inferrule
        {
            \inferrule*
            {
                \vseq
                    {\tyenv}
                    {\fun{\var{st}_i}{\tma_i}}
                    {\sttyfun{\tyc_i}{\tyc_i}{\sta_i}{\localend}}{\tyc_i}  \\
                    \istateseq{\tyenv}{\rtenv_{i_3}}{\tyc_i}{\istate}
            }
            { \istateseq
                {\tyenv}
                {\tyc_i}
                {\rtenv_{i_3}, \inittokpos_i : \sta_i}
                {\istate_i[\inittok_{\prole_i} \mapsto \fun{\var{st}_i}{\tma_i} ]}
            }
            \\
            \inferrule*
            { \vseq{\tyenv}{\valc_i}{\tyc_i}}
            { \threadseq{\tyenv}{\cdot}{\tyc_i}{\idle{\valc_i}}}
            \\
            \hstateseq
                {\tyenv}
                {\rtenv_{i_2}}
                {\tyc_i}
                {\hstate_i}
        }
        { \cseq
            { \tyenv }
            { \rtenv_{i_2}, \rtenv_{i_3}, \inittokpos_i : \sta_i, \aname_i }
            { \actor
                [\aname_i]
                {\idle{\valc_i}}
                {\hstate_i}
                {\istate_i[\inittok_{\prole_i} \mapsto (\fun{\var{st}_i}{\tma_i})]}
            }
        }
    \end{mathpar}
    }

    Let:

    \begin{itemize}
        \item $\rtenv_{\var{tok+}} = \inittokpos_1 : \sta_1, \ldots, \inittokpos_n : \sta_n$
        \item $\rtenv_{\var{tok-}} = \inittokneg_1 : \sta_1, \ldots, \inittokneg_n : \sta_n$
        \item $\rtenv_{\var{a}} = \rtenv_{1_2}, \rtenv_{1_3}, \ldots,
            \rtenv_{n_2}, \rtenv_{n_3}, \aname_1, \ldots, \aname_n$
        \item $\rtenv_{\var{b}} = \rtenv_{\var{a}}, \rtenv_{\var{tok+}}$
    \end{itemize}

    Then by repeated use of \textsc{TC-Par} we have that
    $\cseq
        {\tyenv}
        { \rtenv_{\var{a}}, \rtenv_{\var{tok+}} }
        { (\actor
            {\idle{\valc_i}}
            {\hstate_i}
            {\istate_i[\inittok_{\prole_i} \mapsto \fun{\var{st}_i}{\tma_i}]}
          )_{ i \in 1..n}
        }
    $

    Assumption (given some $\rtenv$):

    {\footnotesize
    \begin{mathpar}
        \inferrule*
        {
            \inferrule*
            {
                \inferrule*
                {
                    \apname : \apty{(\prole_i : \sta_i)_i} \in \tyenv \\\\
                    \apseq
                        {\prole_i : \sta_i}
                        {\rtenv, \rtenv_{\var{tok-}}}
                        {(\prole_i \mapsto \setseq{\inittok'_{\prole_i}} \cup \set{\inittok_{\prole_i}})_{i \in 1..n} } \\\\
                        \comp{(\prole_i : \sta_i)_{i \in 1..n}} \\
                }
                {
                    \cseq
                        { \tyenv }
                        {\rtenv, \rtenv_{\var{tok-}}}
                        {
                            \ap
                                {\apname}
                                {(\prole_i \mapsto \setseq{\inittok'_{\prole_i}} \cup \set{\inittok_{\prole_i}})_{i \in 1..n} }
                        }
                }
                \\
                \inferrule*
                {}
                {
                    \cseq
                        {\tyenv}
                        { \rtenv_{\var{a}}, \rtenv_{\var{tok+}} }
                        {(\actor
                            {\idle{\valc_i}}
                            {\hstate_i}
                            {\istate_i[\inittok_{\prole_i} \mapsto \fun{\var{st}_i}{\tma_i}]})_{i \in 1..n} }
                }
            }
            {
                \cseq
                { \tyenv }
                { \rtenv, \rtenv_{\var{a}}, \rtenv_{\var{tok+}}, \rtenv_{\var{tok-}} }
                {
                    \ap
                        {\apname}
                        {(\prole_i \mapsto \setseq{\inittok'_{\prole_i}} \cup \set{\inittok_{\prole_i}})_{i \in 1..n} }
                    \parallel
                    (\actor
                        {\idle{\valc_i}}
                        {\hstate_i}
                        {\istate_i[\inittok_{\prole_i} \mapsto \fun{\var{st}_i}{\tma_i}]})_{i \in 1..n}
                }
            }
        }
        {
            \cseq
                {\tyenv}
                { \rtenv, \rtenv_{\var{a}} }
                { (\nu \inittok_1) \cdots (\nu \inittok_n)
                    (
                        \ap
                            {\apname}
                            {(\prole_i \mapsto \setseq{\inittok'_{\prole_i}} \cup \set{\inittok_{\prole_i}})_{i \in 1..n} }
                        \parallel
                        (\actor
                            {\idle{\valc_i}}
                            {\hstate_i}
                            {\istate_i[\inittok_{\prole_i} \mapsto \fun{\var{st}_i}{\tma_i}]})_{i \in 1..n}
                    )
                }
        }
    \end{mathpar}
    }

    By Lemma~\ref{lem:substitution} we can show that for each callback function
    $\fun{\var{st}_i}{\tma_i}$, it is the case that
    $\mtseqst{\tyenv}{\tyc_i}{\sta_i}{\subst{\tma_i}{\valc_i}{\var{st}_i}}{\tyc_i}{\localend}$.

    Through the access point typing rules we can show that we can remove each $\inittok_{\prole_i}$ from the access point:
    $
    \cseq
                    { \tyenv }
                    {\rtenv }
                    {
                        \ap
                            {\apname}
                            {(\prole_i \mapsto \setseq{\inittok'_{\prole_i}})_{i \in 1..n}}
                    }
    $.

    Similarly, for each actor composed in parallel we can construct:

    \begin{mathpar}
        \inferrule
        {
            \inferrule*
            {
                \mtseqst{\tyenv}{\tyc_i}{\sta_i}{\subst{\tma_i}{\valc_i}{\var{st}_i}}{\tyc_i}{\localend}
            }
            {
                \threadseq
                    {\tyenv}
                    {\roleidxx{s}{\prole_i} : \sta_i}
                    { \tyc_i }
                    {\sessthreadd{s}{\prole_i}{\subst{\tma_i}{\valc_i}{\var{st}_i}}}
            } \\
            \hstateseq{\tyenv}{\rtenv_{i_2}}{\tyc_i}{\hstate_i} \\
            \istateseq{\tyenv}{\rtenv_{i_3}}{\tyc_i}{\istate_i} \\
        }
        { \cseq
            { \tyenv }
            { \rtenv_{i_2}, \rtenv_{i_3}, \roleidxx{s}{\prole_i} : \sta_i }
            { \actor
                {\sessthreadd{s}{\prole_i}{\subst{\tma_i}{\valc_i}{\var{st}_i}}}
                {\hstate_i}
                {\istate_i}
            }
        }
    \end{mathpar}

    Let $\rtenv_{s} = \roleidxx{s}{\prole_1} : \sta_1, \ldots, \roleidxx{s}{\prole_n} : \sta_n$

    Then by repeated use of \textsc{TC-Par} we have that
    $\cseq
        {\tyenv}
        {\rtenv_a, \rtenv_{\var{s}} }
        {{\actor{\sessthreadd{s}{\prole_i}{\subst{\tma_i}{\valc_i}{\var{st}_i}}}{\hstate_i}{\istate_i}}_{ i \in 1..n}}$.

    Recomposing:

    {\footnotesize
    \begin{mathpar}
        \inferrule*
        {
            \inferrule*
            {
                {
                    \bl
                        \comp{\rtenv_{\var{s}}} \\
                        \cseq
                            { \tyenv }
                            { \rtenv }
                            {
                                \ap
                                    {\apname}
                                    {(\prole_i \mapsto \setseq{\inittok'_{\prole_i}})_{i \in 1..n} }
                            }
                    \el
                }
                \\
                \inferrule*
                {
                    \inferrule*
                    { }
                    { \cseq{\tyenv}{s : \epsilon}{\qproc{s}{\epsilon}} }
                    \\
                    \cseq
                        {\tyenv}
                        { \rtenv_{\var{a}}, \rtenv_{\var{s}} }
                        {(\actor
                            {\sessthreadd{s}{\prole_i}{\subst{\tma_i}{\valc_i}{\var{st}_i}}}
                            {\hstate_i}
                            {\istate_i})_{i \in 1..n}
                        }
                }
                {
                    \cseq
                        {\tyenv}
                        { \rtenv_{\var{a}}, \rtenv_{\var{s}}, s : \epsilon }
                        {\qproc{s}{\epsilon} \parallel
                            (\actor
                                {\sessthreadd{s}{\prole_i}{\subst{\tma_i}{\valc_i}{\var{st}_i}}}
                                {\hstate_i}
                                {\istate_i})_{i \in 1..n} }
                }
            }
            {
                \cseq
                { \tyenv }
                { \rtenv, \rtenv_{\var{a}}, \rtenv_{\var{s}}, s : \epsilon}
                {
                    \ap
                        {\apname}
                        {(\prole_i \mapsto \setseq{\inittok'_{\prole_i}})_{i \in 1..n} }
                    \parallel
                        \qproc{s}{\epsilon}
                    \parallel
                    {(\actor{\sessthreadd{s}{\prole_i}{\subst{\tma_i}{\valc_i}{\var{st}_i}}}{\hstate_i}{\istate_i})_{i \in 1..n}}
                }
            }
        }
        {
            \cseq
                {\tyenv}
                { \rtenv, \rtenv_{\var{a}} }
                { (\nu s)
                    (
                        \ap
                            {\apname}
                            {(\prole_i \mapsto \setseq{\inittok'_{\prole_i}})_{i \in 1..n} }
                        \parallel
                            \qproc{s}{\epsilon}
                        \parallel
                        (\actor{\sessthreadd{s}{\prole_i}{\subst{\tma_i}{\valc_i}{\var{st}_i}}}{\hstate_i}{\istate_i})_{i
                        \in 1..n}
                    )
                }
        }
    \end{mathpar}
    }
    as required.
    \end{proofcase}

\begin{proofcase}{E-Lift}
    Immediate by Lemma~\ref{lem:term-pres}.
\end{proofcase}

\begin{proofcase}{E-Nu}

    There are different subcases based on whether $\alpha$ is an access point
    name, initialisation token name, actor name, or session name. All except
    session names follow from a straightforward application of the induction
    hypothesis so we prove the case where $\alpha = s$ for some session name
    $s$.

    Assumption:

    \begin{mathpar}
        \inferrule
        { \rtenv_s = \{ \roleidxx{s}{\role{p}_i} : S_{\role{p}_{i}} \}_{i \in 1..n}, s : \qty \\
            \compliant{\rtenv_s} \\
            s \not\in \snames{\rtenv} \\\\
          \cseq{\tyenv}{\rtenv, \rtenv_s}{\config{C}} \\
        }
        { \cseq{\tyenv}{\rtenv}{(\nu s)\config{C}} }
    \end{mathpar}
    with $\config{C} \ceval \config{C}'$.

Since $\compliant{\rtenv_s}$ we have that $\safe{\rtenv_s}$ and $\df{\rtenv_s}$.

Since $s \not \in \rtenv$ and therefore $\snames{\rtenv} \cap \snames{\rtenv_s} = \emptyset$,
by Lemma~\ref{lem:separate-envs-safe} we have that $\safe{\rtenv, \rtenv_s}$. 

By the IH we have that there exists some $\rtenv'$ such that
$\rtenv, \rtenv_s \equivsynceval^? \rtenv'$,
where $\safe{\rtenv'}$ and $\cseq{\tyenv}{\rtenv'}{\config{C}'}$.

By Lemma~\ref{lem:env-reductions-separate}, there are three subcases: 

\begin{itemize}
    \item $\rtenv' = \rtenv$, which follows trivially.
    \item $\rtenv' = \rtenv'', \rtenv_s$ where $\rtenv \equivsynceval \rtenv''$
        with $\safe{\rtenv''}$ and we can therefore show:

        \begin{mathpar}
            \inferrule
            { \rtenv_s = \{ \roleidxx{s}{\role{p}_i} : S_{\role{p}_{i}} \}_{i \in 1..n}, s : \qty \\
                \compliant{\rtenv_s} \\
                s \not\in \snames{\rtenv''} \\\\
              \cseq{\tyenv}{\rtenv'', \rtenv_s}{\config{C}'} \\
            }
            { \cseq{\tyenv}{\rtenv''}{(\nu s)\config{C}'} }
        \end{mathpar}
        as required.

    \item $\rtenv' = \rtenv, \rtenv'_s$ where
        $\rtenv_s \equivsynceval  \rtenv'_s$ and $\safe{\rtenv'_s}$.
        It follows from the definition of progress that $\df{\rtenv'_s}$ and
        thus $\compliant{\rtenv'_s}$. We can therefore show:

        \begin{mathpar}
            \inferrule
            { \rtenv'_s = \{ \roleidxx{s}{\role{p}_i} : S'_{\role{p}_{i}} \}_{i \in 1..m}, s : \qty' \\
                \compliant{\rtenv'_s} \\
                s \not\in \snames{\rtenv} \\\\
              \cseq{\tyenv}{\rtenv, \rtenv'_s}{\config{C}'} \\
            }
            { \cseq{\tyenv}{\rtenv}{(\nu s)\config{C}'} }
        \end{mathpar}
        as required.
\end{itemize}

\end{proofcase}

\begin{proofcase}{E-Par}
    Immediate by the IH and Lemma~\ref{lem:split-safety}.
\end{proofcase}

\begin{proofcase}{E-Struct}
    Immediate by the IH and Lemma~\ref{lem:equiv-pres}.
\end{proofcase}

\end{proof}

\preservation*
\begin{proof}
    Immediate from Lemmas~\ref{lem:equiv-pres}
    and~\ref{lem:ceval-pres}.
\end{proof}

\clearpage

\subsection{Progress}

Let $\tyenvrt$ be a type environment containing only references to access
points:
\[
    \tyenvrt ::= \cdot \midspace \tyenvrt, p : \apty{(\prole_i : \sta_i)_i}
\]

Functional reduction satisfies progress.

\begin{lemma}[Term Progress]\label{lem:term-progress}
    If $\mtseq{\tyenvrt}{\sta_1}{\tma}{\tya}{\sta_2}$ then either:
    \begin{itemize}
        \item $\tma = \effreturn{\vala}$ for some value $\vala$; or
        \item there exists some $\tmb$ such that $\tma \teval \tmb$; or
        \item $\tma$ can be written $\ctxe[\tma']$ where $\tma'$ is a communication or concurrency construct, i.e.
            \begin{itemize}
                \item $\tma = \spawn{\tmb}$ for some $\tmb$; or
                \item $\tma = \send{p}{\ell}{\vala}$ for some role $\prole$ and
                    message $\msg{\ell}{\vala}$; or
                \item $\tma = \suspend{\vala}{\valb}$ or some $\vala, \valb$; or
                \item $\tma = \newap{(\prole_i : \stb_i)}$ for some collection of
                    participants $(\prole_i : \stb_i)$
                \item $\tma = \register{\vala}{\prole}{\valb}$ for some values
                    $\vala, \valb$
                    and role $\prole$
            \end{itemize}
    \end{itemize}
\end{lemma}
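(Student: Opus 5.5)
The proof goes by induction on the derivation of $\mtseq{\tyenvrt}{\sta_1}{\tma}{\tya}{\sta_2}$, with a case for each computation typing rule. The restriction of the environment to $\tyenvrt$ is what drives the argument: it ensures that every value in a closed term is either a genuine value or an access point name. From this we first establish a \emph{canonical forms} lemma for values under $\tyenvrt$, proved by inspecting the value typing rules.
\begin{itemize}
\item A value of type $\sttyfun{\tya}{\tyb}{\sta}{\stb}{\tyc}$ is either $\fun{x}{\tmb}$ or $\rec{f}{x}{\tmb}$.
\item A value of type $\tybool$ is $\ttrue$ or $\ffalse$.
\item A value of pair type is $(\vala, \valb)$.
\end{itemize}
In each case the variable rule \textsc{TV-Var} cannot apply to produce one of these types, because the only variables in $\tyenvrt$ are access point names, and those have type $\apty{-}$.

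The base cases are then immediate.
\begin{itemize}
\item \textsc{T-Return} gives the first disjunct.
\item \textsc{T-App}: canonical forms makes $\vala$ a $\lambda$ or $\calcwd{rec}$, so a $\beta$-rule fires.
\item \textsc{T-If}: $\vala$ is a boolean, so one of the two conditional rules fires.
\item \textsc{T-LetPair}: $\vala$ is a pair. Strictly, the omitted term reduction rules need a pair-destructuring rule; we note this and assume the standard one.
\item \textsc{T-Spawn}, \textsc{T-Send}, \textsc{T-Suspend}, \textsc{T-NewAP} and \textsc{T-Register}: the term is itself a communication or concurrency construct, so the third disjunct holds with $\ctxe = [~]$.
\end{itemize}

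The only inductive case is \textsc{T-Let}, $\tma = \efflet{x}{\tma_1}{\tmb}$. We apply the induction hypothesis to $\tma_1$, which is typed under the same $\tyenvrt$.
\begin{itemize}
\item If $\tma_1 = \effreturn{\vala}$, the let-$\beta$ rule fires.
\item If $\tma_1 \teval \tma_1'$, the congruence rule $\ctxe[\tma_1] \teval \ctxe[\tma_1']$ with $\ctxe = \efflet{x}{[~]}{\tmb}$ gives a step.
\item If $\tma_1 = \ctxe'[\tma']$ with $\tma'$ a communication construct, then $\tma = (\efflet{x}{\ctxe'}{\tmb})[\tma']$, which is again an evaluation context.
\end{itemize}

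The only real subtlety is canonical forms: we must check that $\tyenvrt$ excludes variables of arrow, pair or boolean type. That is exactly why the lemma is stated over $\tyenvrt$ rather than an arbitrary $\tyenv$, and the rest is routine.
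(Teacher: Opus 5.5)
Your proof is correct and follows the same route as the paper, whose proof is a one-line ``standard induction on the typing derivation,'' observing that $\beta$-rules cover every functional construct and so only values and communication/concurrency constructs remain. Your explicit canonical-forms step under $\tyenvrt$, your restriction of the induction hypothesis to the \textsc{T-Let} premise typed under $\tyenvrt$, and your remark that the listed term reduction rules lack a pair-destructuring rule for \textsc{T-LetPair} are all details the paper leaves implicit (its ``$\beta$-reduction rules for all STLC terms'' tacitly assumes that rule).
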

\begin{proof}
    A standard induction on the derivation of
    $\mtseq{\tyenvrt}{\sta_1}{\tma}{\tya}{\sta_2}$;
    there are $\beta$-reduction rules for all STLC terms, leaving only values
    and communication / concurrency terms.
\end{proof}

The key \emph{thread progress} lemma shows that each actor is either idle, or
can reduce; the proof is by inspection of $\threadt$, noting there are reduction
rules for each construct; the runtime typing rules ensure the presence of any
necessary queues or access points.

\begin{restatable}[Thread Progress]{lemma}{threadprog}\label{lem:thread-progress}
    Let $\config{C} = \confctx[\actor{\threadt}{\hstate}{\istate}]$.
    If $\cseq{\cdot}{\cdot}{\config{C}}$ then
    either $\threadt = \idle{\vala}$ for some value $\vala$,
    or there exist $\confctx', \threadt', \hstate', \istate', \vala'$ such that
    $\config{C} \ceval \confctx'[\actor{\threadt'}{\hstate'}{\istate'}]$ is a
    thread reduction for $a$.
\end{restatable}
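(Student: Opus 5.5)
We proceed by case analysis on the thread state $\threadt$. If $\threadt = \idle{\vala}$ we are done. Otherwise $\threadt$ is either a standalone term $\tma$ or $\sessthread{s}{p}{\tma}$. First we invert the typing derivation of $\cseq{\cdot}{\cdot}{\config{C}}$ through the context $\confctx$. Each binder in $\confctx$ contributes either a runtime-environment entry (via \textsc{T-ActorName}, \textsc{T-InitName}, \textsc{T-SessionName}) or an access-point assumption in $\tyenv$ (via \textsc{T-APName}). Consequently, at the actor the value environment has the restricted shape $\tyenvrt$. \textsc{T-Actor} together with \textsc{TT-Sess} or \textsc{TT-NoSess} then gives $\mtseqst{\tyenvrt}{\tyc}{\sta}{\tma}{\tyc}{\localend}$ for some $\sta$, with $\sta = \localend$ in the standalone case. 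This puts us in a position to apply Lemma~\ref{lem:term-progress} to $\tma$.

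Lemma~\ref{lem:term-progress} gives three cases. (i) If $\tma = \effreturn{\vala}$, then $\threadt = \tlthreadctx[\effreturn{\vala}]$ and \textsc{E-Reset} fires. (ii) If $\tma \teval \tmb$, then \textsc{E-Lift} fires. (iii) Otherwise $\tma = \ctxe[\tma']$ with $\tma'$ a communication or concurrency construct, and we split on $\tma'$:
\begin{itemize}
\item $\spawn{\tmb}$ reduces by \textsc{E-Spawn} and $\newap{\cdot}$ by \textsc{E-NewAP}; neither needs anything outside the actor.
\item $\suspend{\vala}{\valb}$: by Lemma~\ref{lem:subterm-typability} its precondition must be an input type, so $\sta \ne \localend$. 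The thread is therefore not standalone but session-tagged, and \textsc{E-Suspend} applies.
\item $\send{q}{\ell}{\vala}$: similarly the precondition is a selection type, so we are in $\sessthread{s}{p}{-}$ and the runtime environment contains $\roleidx{s}{p}$. We must locate the queue $\qproc{s}{\qcontents}$ so that \textsc{E-Send} can fire (via \textsc{E-Par}, \textsc{E-Nu}, \textsc{E-Struct}).
\item $\register{\vala}{\prole}{\valb}$: here $\vala$ is a closed value typed under $\tyenvrt$ at an $\mathsf{AP}$ type. By canonical forms it is an access-point name $\apname \in \dom{\tyenvrt}$, and we must locate $\ap{\apname}{\apstate}$ to fire \textsc{E-Register}.
\end{itemize}
In every case the actor appears in both redex and contractum with the same name, so the reduction is a thread reduction for $a$ in the sense of the definition.

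The main obstacle is the two "locate the partner" steps, since both are global arguments about the closed configuration. For the queue, $\roleidx{s}{p}$ occurs in the runtime environment of the actor. Because the top-level environment is empty, $s$ must be bound by some $(\nu s)$ in $\confctx$. \textsc{T-SessionName} places exactly one $s : \qty$ entry in the environment, and linearity of runtime environments forces that entry to be consumed by a leaf. The only leaf rules producing $s : \qty$ are \textsc{T-EmptyQueue} and \textsc{T-ConsQueue}, so a queue process for $s$ exists. For the access point, $\apname$ enters $\tyenv$ only via \textsc{T-APName}, which simultaneously adds the linear entry $\apname$ to $\rtenv$. Only \textsc{T-AP} consumes such an entry, so $\ap{\apname}{\apstate}$ occurs in $\config{C}$. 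It has an entry for $\prole$ because the types of $\vala$ and of the access point coincide, so $\prole$ is in the protocol domain. I would state each of these as a small auxiliary lemma, proved by induction on the configuration typing derivation, showing that each linear entry is realised by a unique syntactic component. The scope-extrusion congruence, justified by Lemma~\ref{lem:equiv-pres}, then lets us bring the actor and its partner into a parallel redex under the common binders.
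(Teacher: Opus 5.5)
Your proposal is correct and follows essentially the same route as the paper's proof: case analysis on the thread, then Lemma~\ref{lem:term-progress}, then a per-construct check, with \textsc{E-Register} and \textsc{E-Send} enabled by locating the access point via \textsc{T-APName}/\textsc{T-AP} and the session queue via \textsc{T-SessionName}/\textsc{T-EmptyQueue}/\textsc{T-ConsQueue}. You are in fact slightly more careful than the paper in two places: it never checks that the value environment has the shape $\tyenvrt$ that Lemma~\ref{lem:term-progress} requires, and it says $\calcwd{suspend}$ fires by \textsc{E-Suspend} without noting, as you do, that the input-type precondition forces the thread to be session-tagged.
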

\begin{proof}
    If $\threadt = \idle{\vala}$ then the theorem is satisfied, so consider the cases
    where $\threadt = \tma$ or $\threadt = \sessthread{s}{p}{\tma}$.
By Lemma~\ref{lem:term-progress}, either $\tma$ can reduce (and the
    configuration can reduce via \textsc{E-Lift}), $\tma$ is a value (and the
    thread can reduce by \textsc{E-Reset}), or $\tma$ is a state, communication or
    concurrency construct. Of these:
    \begin{itemize}
        \item $\calcwd{get}$ and $\calcwd{set}$ can reduce by \textsc{E-Get} and
            \textsc{E-Set} respectively
        \item $\spawn{\tmb}$ can reduce by \textsc{E-Spawn}
        \item $\suspend{\vala}$ can reduce by \textsc{E-Suspend}
        \item $\newap{(\prole_i : \sta_i)_i}$ can reduce by \textsc{E-NewAP}
    \end{itemize}
    Next, consider $\register{p}{\prole}{\tma}$. Since we begin with a closed
    environment, it must be the case that $p$ is $\nu$-bound so by
    \textsc{T-APName} and \textsc{T-AP} there must exist some subconfiguration
    $\ap{p}{\apstate}$ of $\ctxg$; the configuration can therefore reduce by
    \textsc{E-Register}.

    Finally, consider $\tma = \send{\qrole}{\ell}{\vala}$.
It cannot be the case that $\threadt = \send{\qrole}{\ell}{\vala}$ since by
    \textsc{T-Send} the term must have an output session type as a precondition,
    whereas \textsc{TT-NoSess} assigns a precondition of $\localend$. Therefore,
    it must be the case that $\threadt =
    \sessthread{s}{p}{\send{\qrole}{\ell}{\vala}}$ for some $s, \prole$.
Again since the initial runtime typing environment is empty, it must be the
    case that $s$ is $\nu$-bound and so by \textsc{T-SessionName} and
    \textsc{T-EmptyQueue}/\textsc{T-ConsQueue} there must be some session queue
    $\qproc{s}{\qcontents}$. The thread must therefore be able to reduce by
    \textsc{E-Send}.
\end{proof}

\begin{proposition}\label{prop:canonical-forms}
If $\cseq{\tyenv}{\rtenv}{\config{C}}$ then there exists a $\config{D}
    \equiv \config{C}$ where $\config{D}$ is in canonical form.
\end{proposition}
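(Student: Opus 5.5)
The plan is to prove the statement by induction on the derivation of $\cseq{\tyenv}{\rtenv}{\config{C}}$, strengthening the hypothesis so that each subderivation produces a configuration of the form $(\nu \tilde{\nma})\config{P}$. Here $\tilde{\nma}$ is a sequence of restricted names, and $\config{P}$ is a parallel composition of ``atoms'': access points $\ap{\apname}{\apstate}$, queues $\qproc{s}{\qcontents}$ and actors $\actor{\threadt}{\hstate}{\istate}$. The canonical form is obtained at the end by reordering these binders and atoms. Since the typing rules are syntax-directed on configurations, the cases follow the configuration syntax.

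\textbf{Base cases.} For \textsc{T-AP}, \textsc{T-Actor}, \textsc{T-EmptyQueue} and \textsc{T-ConsQueue}, the configuration is already a single atom with an empty prefix of binders, so there is nothing to do.

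\textbf{Restriction cases.} For \textsc{T-APName}, \textsc{T-InitName}, \textsc{T-SessionName} and \textsc{T-ActorName}, the induction hypothesis applied to the premise gives $\config{C} \equiv (\nu \tilde{\nma})\config{P}$. Prepending the outer binder gives the required shape, because $\equiv$ is a congruence.

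\textbf{Parallel composition.} For \textsc{T-Par}, the induction hypothesis gives $\config{C} \equiv (\nu \tilde{\nma}_1)\config{P}_1$ and $\config{D} \equiv (\nu \tilde{\nma}_2)\config{P}_2$. I will first $\alpha$-rename the bound names so that $\tilde{\nma}_1$ and $\tilde{\nma}_2$ are disjoint from each other and from the free names of both components. Scope extrusion, together with commutativity and associativity of $\parallel$, then yields $(\nu \tilde{\nma}_1)(\nu \tilde{\nma}_2)(\config{P}_1 \parallel \config{P}_2)$. Renaming bound names is implicit in the congruence, as is standard.

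\textbf{Final reordering.} From the form $(\nu \tilde{\nma})\config{P}$, I use the binder-swapping axiom to sort $\tilde{\nma}$ into blocks in the prescribed order: initialisation tokens, then access points, then sessions, then actors. Commutativity and associativity of $\parallel$ then reorder the atoms of $\config{P}$ into access points, then queues, then actors. This gives exactly the canonical form.

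\textbf{Main obstacle.} The only delicate step is the $\alpha$-renaming and freshness side condition in the \textsc{T-Par} case. I handle it by invoking the Barendregt convention, so that bound names are always distinct from free names. Everything else is routine use of the structural congruence axioms.

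Typing plays no essential role in this argument beyond restricting the syntactic forms that can occur. If one also wants the canonical form to remain typable, Lemma~\ref{lem:equiv-pres} supplies a typing derivation under some $\rtenv' \equiv \rtenv$.
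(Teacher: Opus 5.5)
The gap is in your last step, where sorting binders and atoms is said to give ``exactly the canonical form''. The paper's canonical form uses the same index sets for restrictions and atoms:
\begin{itemize}
\item $(\nu \apname_{i \in 1..l})$ is paired with $\ap{\apname_i}{\apstate_i}_{i \in 1..l}$;
\item $(\nu \sessname_{j \in 1..m})$ is paired with one queue $\qproc{s_j}{\qcontents_j}$ per $s_j$;
\item $(\nu \aname_{k \in 1..n})$ is paired with one actor named $a_k$ per binder.
\end{itemize}
So the canonical form asserts a bijection. Every restricted access-point, session and actor name owns exactly one atom of the matching kind, and every atom is named by such a restriction. A prenex-and-sort argument cannot deliver this. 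It goes through unchanged on $(\nu s)\actor{\threadt}{\hstate}{\istate}$ with no queue for $s$, on $(\nu s)(\qproc{s}{\epsilon} \parallel \qproc{s}{\epsilon})$, and on atoms with free names. Your induction never uses a premise of any typing rule, so it is really an induction on $\config{C}$. Your closing remark that typing is inessential is therefore exactly where the argument falls short. The bijection is what the progress proof later relies on (e.g.\ that each session's queue is present), and it comes only from typing.

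To repair it, type the sorted prenex form and read the pairing off linearity. By Lemma~\ref{lem:equiv-pres} the prenex form is typable under some $\rtenv' \equiv \rtenv$. Inverting \textsc{T-InitName}, \textsc{T-APName}, \textsc{T-SessionName} and \textsc{T-ActorName} yields a body environment containing each $\apname_i$, each $s_j : \qty_j$ and each $a_k$ exactly once. Three facts then give the pairing:
\begin{itemize}
\item \textsc{T-Par} splits runtime environments disjointly.
\item No rule weakens or contracts runtime environments.
\item The only rules consuming $\apname$, $s : \qty$ and $a$ are \textsc{T-AP}, \textsc{T-EmptyQueue}/\textsc{T-ConsQueue} and \textsc{T-Actor} respectively, each for the atom bearing that very name.
\end{itemize}
Hence each binder is matched by exactly one atom.

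The converse, that no atom carries a free name, additionally needs $\rtenv = \cdot$. Without it the statement fails as written: $\cseq{\cdot}{a}{\actor{\idle{()}}{\epsilon}{\epsilon}}$ is derivable, but $\equiv$ preserves free names, so no congruent configuration restricts $a$. The proposition should therefore be read for $\cseq{\cdot}{\cdot}{\config{C}}$, which is the only way it is used (in Theorem~\ref{thm:progress}). This typing step is presumably why the paper's one-line justification appeals to Theorem~\ref{thm:preservation} rather than only to the congruence rules.
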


\progress*
\begin{proof}
    By Proposition~\ref{prop:canonical-forms} $\config{C}$ can be written in
    canonical form:
\[
        (\nu \tilde{\inittok})
        (\nu \apname_{i \in 1..l})
        (\nu \sessname_{j \in 1..m})
        (\nu \aname_{k \in 1..n})
        (
        \ap{\apname_i}{\apstate_i}_{i \in 1..l} \parallel
        (\qproc{s_j}{\qcontents_j})_{j \in 1..m} \parallel 
        \actor[\aname_k]{\threadt_k}{\hstate_k}{\istate_k}_{k \in 1..n}
        )
    \]
By repeated applications of Lemma~\ref{lem:thread-progress}, either the
    configuration can reduce or all threads are idle:
\[
        (\nu \tilde{\inittok})
        (\nu \apname_{i \in 1..l})
        (\nu \sessname_{j \in 1..m})
        (\nu \aname_{k \in 1..n})
        (
        \ap{\apname_i}{\apstate_i}_{i \in 1..l} \parallel
        (\qproc{s_j}{\qcontents_j})_{j \in 1..m} \parallel 
        \actor[\aname_k]{\idle{\valc_k}}{\hstate_k}{\istate_k}_{k \in 1..n}
        )
    \]

By the linearity of runtime type environments $\rtenv$, each role endpoint
    $\roleidx{s}{p}$  must be contained in precisely one actor. There are two
    ways an endpoint can be used: either by \textsc{TT-Sess} in order to run a
    term in the context of a session, or by \textsc{TH-Handler} to record a
    receive session type as a handler. Since all threads are idle, it must be
    the case the only applicable rule is \textsc{TH-Handler} and therefore each
    role must have an associated stored handler.

    Since the types for each session must satisfy progress, the collection of
    local types must reduce. Since all session endpoints must have a receive
    session type, the only type reductions possible are through
    \textsc{Lbl-Sync-Recv}. Since all threads are idle we can pick the top
    message from any session queue and reduce the actor with the associated
    stored handler by \textsc{E-React}.

    The only way we could not do such a reduction is if there were to be no
    sessions, leaving us with a configuration of the form:
\[
        (\nu \tilde{\inittok})
        (\nu \apname_{i \in 1..m})
        (\nu \aname_{j \in 1..n})
        (
        \ap{\apname_i}{\apstate_i}_{i \in 1..m} \parallel
        \actor[\aname_j]{\idle{\valc_j}}{\hstate_j}{\istate_j}_{j \in 1..n}
        )
    \]

\end{proof}

\subsection{Global Progress}

\independence*
\begin{proof}
By inspection of the reduction rules. The only non-administrative rules that can
be thread reductions are \textsc{E-Send}, \textsc{E-Spawn},
\textsc{E-Reset}, \textsc{E-NewAP}, \textsc{E-Register}, and
\textsc{E-Lift}. It suffices to show that reduction in one actor does not inhibit
reduction in another. Pick two arbitrary actors $a$ and $b$ in $\config{C}$ and
proceed by case analysis on pairs of thread reductions.  The left-hand-sides of
rules \textsc{E-Spawn}, \textsc{E-Reset}, \textsc{E-NewAP}, and \textsc{E-Lift} only depend on a single actor and
are therefore unaffected by other rules. \textsc{E-Send} places no restrictions on the
session queue, and so other sends or receives to the queue do not affect the
ability of \textsc{E-Send} to fire; similar reasoning applies for
\textsc{E-Register}.
\end{proof}

\idlereduces*
\begin{proof}
    Follows as a direct consequence of Theorem~\ref{thm:progress}: in an
    idle configuration, typing ensures that every ongoing session has a
    suspended handler, and compliance ensures that the session queue has a
    message that can activate each handler. Thus the configuration can
    reduce by \textsc{E-React} for each ongoing session in the system.
\end{proof}

\globalprogress*
\begin{proof}
    Pick an arbitrary active session $s \in
    \activesessions{\config{C}}$.
    By the structural congruence rules and Theorem~\ref{thm:preservation}, $\mathcal{C} \equiv (\nu s) \mathcal{D}$
    and $\cseq{\cdot}{\cdot}{(\nu s)\config{C}}$.
    By Corollary~\ref{cor:gp:reduction-idle},
    $(\nu s)\config{D}$ can reduce to an idle configuration
$\config{D}'$.
If that reduction sequence contains a reduction on session $s$
then the property is satisfied and we are done; otherwise we conclude by
Lemma~\ref{lem:gp:idle-reduces}.
\end{proof}

 \clearpage
\section{Supplement to Section~\ref{sec:failure}}\label{ap:extensions-proofs}

This appendix details the full formal development and proofs for
$\langnamezap$ (Section~\ref{sec:failure}).

First, it is useful to show that safety is preserved even if several roles are
cancelled; we use this lemma implicitly throughout the preservation proof.

Let us write $\roles{\rtenv} = \set{\prole \midspace \roleidx{s}{p} : \sta \in \rtenvzap}$
to retrieve the roles from an environment.
Let us also define the operation $\zaproles{\rtenvzap}{\setseq{\prole}}$ that
cancels any role in the given set, i.e., $\zaproles{\roleidxx{s}{\prole_1} :
\sta_1, \roleidxx{s}{\prole_2} : \sta_2, a}{\set{\prole_1}} =
\roleidxx{s}{\prole_1} : \zapped, \roleidxx{s}{\prole_2} : \sta_2, a$.
\begin{lemma}
    If $\safe{\rtenvzap}$ then $\safe{\zaproles{\rtenvzap}{\setseq{\prole}}}$
    for any $\setseq{\prole} \subseteq \roles{\rtenvzap}$.
\end{lemma}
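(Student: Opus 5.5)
We prove the statement coinductively. Safety is the largest predicate closed under three clauses: receive/queue agreement, unfolding, and closure under $\equivsynceval$, where $\equivsynceval$ now includes \textsc{Lbl-ZapMsg} and \textsc{Lbl-ZapRecv}. So it suffices to exhibit a relation $\mathcal{R}$ that contains our pairs and satisfies those clauses. We take
\[
\mathcal{R} = \set{ \zaproles{\rtenvzap}{\setseq{\prole}} \mid \safe{\rtenvzap} \wedge \setseq{\prole} \subseteq \roles{\rtenvzap} },
\]
closed under $\equiv$, and check each clause for an element $\rtenvzap_z = \zaproles{\rtenvzap}{\setseq{\prole}}$.

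\emph{Receive clause.} Take an entry $\roleidx{s}{q} : \localoffer{r}{\ldots}$ of $\rtenvzap_z$. Since it is not zapped, $\qrole \notin \setseq{\prole}$, so the same entry occurs in $\rtenvzap$. The queue is unchanged by $\zaproles{-}{-}$, so the head-of-queue condition follows from $\safe{\rtenvzap}$. \emph{Unfolding clause.} Zapping commutes with unfolding a non-zapped endpoint, because $\zaproles{-}{-}$ only touches zapped roles. Hence $\zaproles{\rtenvzap}{\setseq{\prole}}$ with one endpoint unfolded equals $\zaproles{\rtenvzap'}{\setseq{\prole}}$, where $\rtenvzap'$ is $\rtenvzap$ with that endpoint unfolded, and $\rtenvzap'$ is safe by the unfolding clause for $\rtenvzap$.

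\emph{Reduction clause.} Suppose $\rtenvzap_z \equivsynceval \rtenvzap''$; we case on the label.
\begin{itemize}
\item \textsc{Lbl-Send} or \textsc{Lbl-Recv} by a non-zapped role. The same step is available in $\rtenvzap$, giving $\rtenvzap \equivsynceval \rtenvzap'$, and $\rtenvzap'' = \zaproles{\rtenvzap'}{\setseq{\prole}}$. Note that queue equivalence is unaffected by zapping.
\item \textsc{Lbl-End} on a non-zapped endpoint. This is handled the same way, removing that role from the zap set if needed.
\item \textsc{Lbl-ZapMsg}. This only removes from the queue a message destined for a zapped role. We do not match it with a step of $\rtenvzap$. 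Instead we enlarge $\mathcal{R}$ to $\mathcal{R}^+$: environments obtained from some $\zaproles{\rtenvzap}{\setseq{\prole}}$ by deleting queue entries whose receiver is zapped. Deleting such entries never changes the head of a queue seen by a live receiver, because entries for distinct sender--receiver pairs commute under $\equiv$. So the receive clause still holds for $\mathcal{R}^+$.
\item \textsc{Lbl-ZapRecv}. This zaps a further role $\prole'$, giving $\zaproles{\rtenvzap}{\setseq{\prole} \cup \set{\prole'}}$ possibly with entries deleted. That environment is again in $\mathcal{R}^+$.
\end{itemize}
We therefore run the whole argument with $\mathcal{R}^+$ throughout, and re-check the send/recv/end cases for it. They go through because deleted entries are invisible to live receivers.

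The main obstacle is the bookkeeping in the \textsc{Lbl-ZapMsg}/\textsc{Lbl-ZapRecv} cases. The zapped environment has transitions with no counterpart in $\rtenvzap$, which is why $\mathcal{R}$ must be generalised to $\mathcal{R}^+$. Within that, the key point is that erasing messages to zapped roles commutes with the queue congruence and with the live transitions. We would discharge it with a small auxiliary lemma: projecting a queue onto the pairs $(\prole,\qrole)$ with $\qrole$ live is a congruence-respecting homomorphism. This mirrors the argument of Lemma~\ref{lem:split-safety}.
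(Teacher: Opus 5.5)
Your proof is correct. It elaborates the paper's one-line justification rigorously rather than pursuing a different idea. The paper only asserts that cancelling roles cannot affect safety, because the only way to violate safety is by adding unsafe communication reductions: an informal monotonicity claim.

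Your proof reads $\safe{-}$ coinductively and actually discharges its three clauses by exhibiting a closed relation. In doing so it shows that the paper's slogan is not literally accurate. The cancelled environment \emph{does} acquire reductions with no counterpart in the original, namely \textsc{Lbl-ZapMsg} and \textsc{Lbl-ZapRecv}. The lemma is also not just the closure clause of safety, because \textsc{Lbl-Zap} is deliberately kept out of $\equivsynceval$. Generalising from $\mathcal{R}$ to $\mathcal{R}^+$ is exactly what handles these extra steps: you close under deleting entries addressed to cancelled endpoints and under cancelling further endpoints. Live receivers only ever inspect the head of their own sender--receiver subqueue, and those subqueues are untouched. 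The paper's sketch buys brevity; yours buys a genuine check of every clause, including the new cancellation transitions.

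One small bookkeeping fix: index your invariant by sets of endpoints $\roleidx{s}{p}$ rather than role names. As defined, $\zaproles{-}{-}$ cancels a role name wherever it occurs, while \textsc{Lbl-ZapRecv} cancels a single endpoint. So in a multi-session environment the result of a \textsc{Lbl-ZapRecv} step need not literally be $\zaproles{\rtenvzap}{\setseq{\prole} \cup \set{\prole'}}$. With endpoint-indexed zap sets, the lemma as stated is a special case of your invariant, and the remark about removing roles from the zap set in the \textsc{Lbl-End} case becomes unnecessary.
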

\begin{proof}
    Zapping a role does not affect safety; the only way to violate safety is by
    \emph{adding} further unsafe communication reductions.
\end{proof}

\zappres*
\begin{proof}
    Preservation of typability by structural congruence is straightforward, so
    we concentrate on preservation of typability by reduction. We proceed by
    induction on the derivation of $\config{C} \ceval \config{D}$, concentrating
    on the new rules rather than the adapted rules (which are straightforward
    changes to the existing proof).

    \begin{proofcase}{E-Monitor}
        \[
            \zapactor{\threadctx[\monitor{b}{\vala}]}{\hstate}{\istate}{\monstate} 
            \cevaltau
            \zapactor{\threadctx[\effreturn{()}]}{\hstate}{\istate}{\monstate
            \cup \set{\metapair{b}{\vala}}}
        \]

        We consider the case where $\threadctx = \ctxe[-]$ for some $\ctxe$; the
        case in the context of a session is similar.

        Assumption:
        \begin{mathpar}
            \inferrule
            {
                \inferrule
                { \mtseqst{\tyenv}{\sta}{\tyc}{\ctxe[\monitor{b}{\vala}]}{\tyc}{\localend} }
                { \threadseq{\tyenv}{\cdot}{\tyc}{\ctxe[\monitor{b}{\vala}]} }
                \\
                \hstateseq{\tyenv}{\rtenvzap_1}{\tyc}{\hstate} \\
                \istateseq{\tyenv}{\rtenvzap_2}{\tyc}{\istate}
            }
            { \cseq
                { \tyenv }
                { \rtenvzap_1, \rtenv_2, \aname }
                { \zapactor
                    {\ctxe[\monitor{b}{\vala}]}
                    {\hstate}
                    {\istate}
                    {\monstate}
                }
            }
        \end{mathpar}
        where
        $\forall \metapair{b}{\valb} \in \monstate .\;
                \vseq{\tyenv}{b}{\typid} \,\wedge\,
                \vseq{\tyenv}{\valb}{\sttyfun{\tyc}{\tyc}{\localend}{\localend}{\tyc}}$.

        By Lemma~\ref{lem:subterm-typability}, we know:

        \begin{mathpar}
            \inferrule
            {
                \vseq{\tyenv}{b}{\typid} \\
                \vseq{\tyenv}{\vala}{\sttyfun{\tyc}{\tyc}{\localend}{\localend}{\tyc}}
            }
            { \mtseqst{\tyenv}{\tyc}{\sta}{\monitor{b}{\vala}}{\one}{\sta} }
        \end{mathpar}

        By Lemma~\ref{lem:replacement} we know
        $\mtseqst{\tyenv}{\tyc}{\sta}{\ctxe[\effreturn{()}]}{\tyc}{\localend}$.

        Recomposing:
        \begin{mathpar}
            \inferrule
            {
                \inferrule
                { \mtseqst{\tyenv}{\tyc}{\sta}{\ctxe[\effreturn{()}]}{\tyc}{\localend} }
                { \threadseq{\tyenv}{\cdot}{\tyc}{\ctxe[\effreturn{()}]} }
                \\
                \hstateseq{\tyenv}{\rtenvzap_1}{\tyc}{\hstate} \\
                \istateseq{\tyenv}{\rtenvzap_2}{\tyc}{\istate}
            }
            { \cseq
                { \tyenv }
                { \rtenvzap_1, \rtenv_2, \aname }
                { \swactor
                    {\ctxe[\effreturn{()}]}
                    {\hstate}
                    {\istate}
                    {\monstate \cup \metapair{b}{\vala}}
                }
            }
        \end{mathpar}
        noting that $\monstate \cup \metapair{b}{\vala}$ is well-typed since 
        $\vseq{\tyenv}{b}{\typid}$ and
        $\vseq{\tyenv}{\vala}{\sttyfun{\tyc}{\tyc}{\localend}{\localend}{\tyc}}$, as
        required.
    \end{proofcase}

    \begin{proofcase}{E-InvokeM}
        \[
            \zapactor{\idle{\valc}}{\hstate}{\istate}{\monstate \cup
            \set{\metapair{b}{\vala}}} \parallel \zap{b}
                \cevaltau
            \zapactor{\vala \app \valc}{\hstate}{\istate}{\monstate} \parallel \zap{b}
         \]

         Assumption:

         {\footnotesize
         \begin{mathpar}
             \inferrule*
             {
              \inferrule*
                {
                    \inferrule*
                    { \vseq{\tyenv}{\valc}{\tyc} }
                    { \threadseq{\tyenv}{\cdot}{\tyc}{\idle{\valc}} }
                    \\
                    \hstateseq{\tyenv}{\rtenvzap_1}{\tyc}{\hstate} \\
                    \istateseq{\tyenv}{\rtenvzap_2}{\tyc}{\istate}
                }
                { \cseq
                    { \tyenv }
                    { \rtenvzap_1, \rtenvzap_2, \aname }
                    { \zapactor
                        {\idle{\valc}}
                        {\hstate}
                        {\istate}
                        {\monstate \cup \set{\metapair{b}{\vala}}}
                    }
                }
                \\
                \inferrule*
                { }
                { \cseq{\tyenv}{b}{\zap{b}} }
            }
            {
                \cseq
                { \tyenv }
                { \rtenvzap_1, \rtenvzap_2, \aname, b}
                {
                    \zapactor
                        {\idle{\valc}}
                        {\hstate}
                        {\istate}
                        {\monstate \cup \set{\metapair{b}{\vala}}}
                    \parallel
                    \zap{b}
                }
            }
        \end{mathpar}
        }

        where
        $\forall \metapair{a'}{\valb} \in \monstate \cup \set{\metapair{b}{\vala}} .\; \vseq{\tyenv}{b}{\typid}
        \,\wedge\, \vseq{\tyenv}{\valb}{\sttyfun{\tyc}{\tyc}{\localend}{\localend}{\tyc}}$.
        
        Recomposing:

        \begin{mathpar}
            \inferrule*
            {
                \inferrule*
                {
                    \inferrule*
                    { 
                        \inferrule*
                        {
                            \vseq{\tyenv}{\vala}{\sttyfun{\tyc}{\tyc}{\localend}{\localend}{\tyc}}
                            \\
                            \vseq{\tyenv}{\valc}{\tyc}
                        }
                        { \mtseqst{\tyenv}{\tyc}{\localend}{\vala \app \valc}{\tyc}{\localend} }
                    }
                    { \threadseq{\tyenv}{\cdot}{\tyc}{\vala \app \valc} } \\
                    \hstateseq{\tyenv}{\rtenvzap_1}{\valc}{\hstate} \\
                    \istateseq{\tyenv}{\rtenvzap_2}{\valc}{\istate}
                }
                { \cseq
                    { \tyenv }
                    { \rtenvzap_1, \rtenvzap_2, \aname }
                    { \swactor
                        {\vala \app \valc}
                        {\hstate}
                        {\istate}
                        {\monstate}
                    }
                }
                \\
                \inferrule*
                { }
                { \cseq{\tyenv}{b}{\zap{b}} }
            }
            {
                \cseq
                { \tyenv }
                { \rtenvzap_1, \rtenvzap_2, \aname, b }
                {
                    \zapactor
                        {\vala \app \valc}
                        {\hstate}
                        {\istate}
                        {\monstate}
                    \parallel
                    \zap{b}
                }
            }
        \end{mathpar}
        as required.

    \end{proofcase}

    \begin{proofcase}{E-Raise}
        Similar to \textsc{E-RaiseS}.
    \end{proofcase}

    \begin{proofcase}{E-RaiseS}
        \[
            \zapactor{\sessthread{s}{p}{\ctxe[\raiseexn]}}{\hstate}{\istate}{\monstate} \cevaltau
             \zap{\aname} \parallel \zap{\roleidx{s}{p}} \parallel \zap{\hstate} \parallel \zap{\istate}
        \]

        \begin{mathpar}
           \inferrule
             {
                 \inferrule*
                 {
                     \mtseqst{\tyenv}{\tyc}{\sta}{\ctxe[\raiseexn]}{\one}{\localend}
                 }
                 {
                     \threadseq{\tyenv}{\roleidx{s}{p} : \sta}{\tyc}{\sessthread{s}{p}{\ctxe[\raiseexn]}}
                 }
                 \\
                 \hstateseq{\tyenv}{\rtenvzap_1}{\tyc}{\hstate} \\
                 \istateseq{\tyenv}{\rtenvzap_2}{\tyc}{\istate} \\
                 \vseq{\tyenv}{\valc}{\tyc}
             }
             { \cseq
                 { \tyenv }
                 { \rtenvzap_1, \rtenvzap_2, \roleidx{s}{p} : {\sta}, \aname }
                 { \swactor
                     {\sessthread{s}{p}{\ctxe[\raiseexn]}}
                     {\hstate}
                     {\istate}
                     {\monstate}
                 }
             } 
        \end{mathpar}

        Let us write $\zapenv{\rtenvzap} = \set{ \roleidx{s}{p} : \zapped \mid
        \roleidx{s}{p} : \sta \in \rtenvzap}$. It follows that for a given
        environment, $\rtenvzap \zaplbleval{}^{*} \zapenv{\rtenvzap}$.

        The result follows by noting that due to \textsc{TH-Handler} and
        \textsc{TI-Callback} we have that $\fn{\rtenvzap_1} = \fn{\hstate}$ and
        $\fn{\rtenvzap_2} = \fn{\istate}$. Thus:
        \begin{itemize}
            \item $\cseq{\tyenv}{\zapenv{\rtenvzap_1}}{\zap{\hstate}}$, 
            \item $\cseq{\tyenv}{\zapenv{\rtenvzap_2}}{\zap{\istate}}$, 
            \item $\cseq{\tyenv}{\zapenv{\rtenvzap_1}, \zapenv{\rtenvzap_2}, \roleidx{s}{p} : \zapped, a}{\zap{a}
        \parallel \zap{\roleidx{s}{p}} \parallel \zap{\hstate} \parallel
        \zap{\istate}}$
        \end{itemize}

        with the environment reduction:
        \[
            \rtenvzap_1, \rtenvzap_2, \roleidx{s}{p} : {\sta}, \aname
            \zaplbleval{}^{+}
            \zapenv{\rtenvzap_1}, \zapenv{\rtenvzap_2}, \roleidx{s}{p} :
            \zapped, \aname
        \]

        as required.
    \end{proofcase}

    \begin{proofcase}{E-CancelMsg}
        \[
             \qproc{s}{\qentry{p}{q}{\ell}{\vala} \cdot \qcontents} \parallel \zap{\roleidx{s}{q}}
             \cevaltau \qproc{s}{\qcontents} \parallel \zap{\roleidx{s}{q}}
        \]

        Assumption:

        \begin{mathpar}
            \inferrule*
            {
                \inferrule*
                {
                    \vseq{\tyenv}{\vala}{\tya} \\
                    \cseq{\tyenv}{s : \qty}{\qproc{s}{\qcontents}}
                }
                { \cseq
                    {\tyenv}
                    {s : \qentry{p}{q}{\ell}{\tya} \cdot \qty}
                    { \qproc{s}{\qentry{p}{q}{\ell}{\vala} \cdot \qcontents}}
                }
                \\
                \cseq{\tyenv}{\roleidx{s}{q} : \zapped}{\zap{\roleidx{s}{q}}}
            }
            {
                \cseq
                {\tyenv}
                {
                    \roleidx{s}{q} : \zapped,
                    s : \qentry{p}{q}{\ell}{\vala} \cdot \qty
                }
                { \qproc{s}{\qentry{p}{q}{\ell}{\vala} \cdot \qcontents}
                \parallel \zap{\roleidx{s}{q}} }
            }
        \end{mathpar}

        Recomposing, we have:

        \begin{mathpar}
            \inferrule*
            {
                \cseq{\tyenv}{s : \qty}{\qproc{s}{\qcontents}}
                \\
                \cseq{\tyenv}{\roleidx{s}{q} : \zapped}{\zap{\roleidx{s}{q}}}
            }
            {
                \cseq
                {\tyenv}
                {
                    \roleidx{s}{q} : \zapped,
                    s : \qty
                }
                { \qproc{s}{\qcontents} \parallel \zap{\roleidx{s}{q}} }
            }
        \end{mathpar}
        with
        \[
\roleidx{s}{q} : \zapped, s : \qentry{p}{q}{\ell}{\vala} \cdot \qty
\lbleval{\lblzapmsg{s}{p}{q}{\ell}}
\roleidx{s}{q} : \zapped, s : \qty
        \]
        as required.
    \end{proofcase}

    \begin{proofcase}{E-CancelAP}
\[
     (\nu \inittok)(\ap{p}{\apstate[\prole \mapsto \setseq{\inittok'} \cup \set{\inittok}]} \parallel \zap{\inittok})
     \cevaltau
      \ap{p}{\apstate[\prole \mapsto \setseq{\inittok'}]}
\]

Assumption:
{\footnotesize
\begin{mathpar}
    \inferrule*
    {
        \inferrule*
        {
            \inferrule*
            {
                p : \apty{\prole_i : \sta_i}_i
                \\
                \inferrule*
                {
                    \apseq{(\prole_i : \sta_i)_i}{\rtenvzap}{\apstate}
                }
                {
                    \apseq
                        {(\prole_i : \sta_i)_i}
                        {\rtenvzap, \setseq{\inittok^{'-} : \sta_j}, \inittokneg : \sta_j}
                        {\apstate[\prole_j \mapsto \setseq{\inittok'} \cup \set{\inittok}] }
                }
            }
            { \cseq
                {\tyenv}
                {\rtenvzap, \setseq{\inittok^{'-} : \sta_j}, \inittokneg : \sta_j}
                {\ap{p}{\apstate[\prole_j \mapsto \setseq{\inittok'} \cup \set{\inittok}]}}
            }
            \\
            \inferrule*
            { }
            { \cseq{\tyenv}{\inittokpos : \sta_j}{\zap{\inittok}} }
        }
        {
            \cseq
                {\tyenv}
                {\rtenvzap, \setseq{\inittok^{'-} : \sta_j}, \inittokpos :
                    \sta_j, \inittokneg : \sta_j, p}
                {\ap{p}{\apstate[\prole_j \mapsto \setseq{\inittok'} \cup \set{\inittok}]} \parallel \zap{\inittok}}
        }
    }
    {
      \cseq
        {\tyenv}
        {\rtenvzap, \setseq{\inittok^{'-} : \sta_j}, p}
        {
          (\nu \inittok)(\ap{p}{\apstate[\prole_j \mapsto \setseq{\inittok'} \cup \set{\inittok}]} \parallel \zap{\inittok})
        }
    }
\end{mathpar}
}

Recomposing:

\begin{mathpar}
    \inferrule*
    {
        p : \apty{\prole_i : \sta_i}_i
        \\
        \inferrule*
        {
            \apseq{(\prole_i : \sta_i)_i}{\rtenvzap}{\apstate}
        }
        {
            \apseq
                {(\prole_i : \sta_i)_i}
                {\rtenvzap, \setseq{\inittok^{'-} : \sta_j}}
                {\apstate[\prole_j \mapsto \setseq{\inittok'}] }
        }
    }
    { \cseq
        {\tyenv}
        {\rtenvzap, \setseq{\inittok^{'-} : \sta_j}, p}
        {\ap{p}{\apstate[\prole_j \mapsto \setseq{\inittok'}]}}
    }
\end{mathpar}

as required.
\end{proofcase}

    \begin{proofcase}{E-CancelH}
        \[
            \bl
            \zapactor
            {\idle{\valc}}
                {\hstate[\roleidx{s}{p} \mapsto \metapair{\vala}{\valb} }
                {\istate}
                {\monstate}
                \parallel
                \qproc{s}{\qcontents}
                \parallel \zap{\roleidx{s}{q}}
                \cevaltau
                \\
                \quad
                \zapactor
                    {\valb \app \valc}
                    {\hstate}
                    {\istate}
                    {\monstate}
                    \parallel
                    \qproc{s}{\qcontents}
                    \parallel
                    \zap{\roleidx{s}{q}} \parallel \zap{\roleidx{s}{p}}
\quad \text{if }
                                  \messages{\qrole}{\prole}{\qcontents} =
                                  \emptyset
              \el
        \]

        Let $\derivd$ be the following derivation:
        
    {\footnotesize
        \begin{mathpar}
            \inferrule
                {
                    \inferrule*
                    { \vseq{\tyenv}{\valc}{\tyc} }
                    { \threadseq{\tyenv}{\cdot}{\tyc}{\idle{\valc}} }
                    \\
                    \inferrule*
                    {
                        \stb = \localoffer{q}{\msg{\ell_i}{x_i}\mapsto \sta_i}_i \\
                        \vseq{\tyenv}{\vala}{\handlerty{\stb}} \\\\
                        \vseq{\tyenv}{\valb}{\sttyfun{\tyc}{\tyc}{\localend}{\localend}{\tyc}}
                        \\
                        \hstateseq{\tyenv}{\rtenvzap_1}{\tyc}{\hstate}
                    }
                    { \hstateseq
                        {\tyenv}
                        {\rtenvzap_1, \roleidx{s}{p} : \stb }
                        {\tyc}
                        { \hstate[\roleidx{s}{p} \mapsto \metapair{\vala}{\valb}] }
                    }
                    \\
                    \istateseq{\tyenv}{\rtenvzap_2}{\tyc}{\istate} \\
                    \vseq{\tyenv}{\valc}{\tyc}
}
                { \cseq
                    { \tyenv }
                    { \rtenvzap_1, \rtenvzap_2, \roleidx{s}{p} : \stb, \aname }
                    { \zapactor
                        {\idle{\valc}}
                        {\hstate[\roleidx{s}{p}\mapsto \metapair{\vala}{\valb}]}
                        {\istate}
                        {\monstate}
                    }
                }
        \end{mathpar}
    }

        Assumption:

        {\footnotesize
        \begin{mathpar}
            \inferrule
            {
                \derivd
                \\
                \inferrule*
                {
                    \cseq{\tyenv}{s : \qty}{\qproc{s}{\qcontents}}
                    \\
                    \cseq{\tyenv}{\roleidx{s}{p} : \zapped}{\zap{\roleidx{s}{p}}}
                }
                { \cseq
                    {\tyenv}
                    {s : \qty, \roleidx{s}{p} : \zapped}
                    {\qproc{s}{\qcontents} \parallel \zap{\roleidx{s}{p}}}
                }
            }
            {
                { \cseq
                    { \tyenv }
                    { \rtenvzap_1, \rtenvzap_2, \roleidx{s}{p} : \stb,
                    s : \qty, \roleidx{s}{q} : \zapped, \aname }
                    { \zapactor
                        {\idle{\valc}}
                        {\hstate[\roleidx{s}{p}\mapsto \metapair{\vala}{\valb}]}
                        {\istate}
                        {\monstate}
                        \parallel
                      \qproc{s}{\qcontents} \parallel \zap{\roleidx{s}{p}}
                    }
                }
            }
        \end{mathpar}
        }

        We can recompose as follows. Let $\derivd'$ be the following
        derivation:

        \begin{mathpar}
            \inferrule
                {
                    \inferrule*
                    {
                        \inferrule*
                        {
                          \vseq{\tyenv}{\valb}{\sttyfun{\tyc}{\tyc}{\localend}{\localend}{\tyc}} \\
                          \vseq{\tyenv}{\valc}{\tyc}
                        }
                        { \mtseqst{\tyenv}{\tyc}{\localend}{\valb \app \valc}{\tyc}{\localend} }
                    }
                    { \threadseq{\tyenv}{\cdot}{\tyc}{\valb \app \valc} }
                    \\
                    \hstateseq{\tyenv}{\rtenvzap_1}{\tyc}{\hstate} \\
                    \istateseq{\tyenv}{\rtenvzap_2}{\tyc}{\istate}
                }
                { \cseq
                    { \tyenv }
                    { \rtenvzap_1, \rtenvzap_2, \aname }
                    { \zapactor
                        {\valb \app \valc}
                        {\hstate}
                        {\istate}
                        {\monstate}
                    }
                }
        \end{mathpar}

        Then we can construct the remaining derivation:
        {\footnotesize
        \begin{mathpar}
            \inferrule
            {
                \derivd'
                \\
                \inferrule*
                {
                    \inferrule*
                    {}
                    { \cseq
                        {\tyenv}
                        {s : \qty}
                        {\qproc{s}{\qcontents} }
                    }
                    \\
                    \inferrule*
                    {
                        \inferrule*
                        { }
                        { \cseq{\tyenv}{\roleidx{s}{p} : \zapped}{\zap{\roleidx{s}{p}}} }
                        \\
                        \inferrule*
                        { }
                        { \cseq{\tyenv}{\roleidx{s}{q} : \zapped}{\zap{\roleidx{s}{q}}} }
                    }
                    {
                        \cseq
                        { \tyenv }
                        { \roleidx{s}{p} : \zapped, \roleidx{s}{q} : \zapped }
                        { \zap{\roleidx{s}{p}} \parallel \zap{\roleidx{s}{q}} }
                    }
                }
                {
                    \cseq
                        {\tyenv}
                        {s : \qty, \roleidx{s}{p} : \zapped, \roleidx{s}{q} : \zapped}
                        {
                            \qproc{s}{\qcontents} \parallel \zap{\roleidx{s}{p}}
                                                  \parallel \zap{\roleidx{s}{q}}
                        }
                }
            }
            {
                { \cseq
                    { \tyenv }
                    { \rtenvzap_1, \rtenvzap_2,
                    s : \qty, \roleidx{s}{p} : \zapped, \roleidx{s}{q} :
                \zapped, \aname }
                    { \zapactor
                        {\valb \app \valc}
                        {\hstate}
                        {\istate}
                        {\monstate}
                        \parallel
                      \qproc{s}{\qcontents} \parallel \zap{\roleidx{s}{p}}
                      \parallel \zap{\roleidx{s}{q}}
                    }
                }
            }
        \end{mathpar}
        }

        Finally, we need to show environment reduction:

        \[
            \rtenvzap_1, \rtenvzap_2, \roleidx{s}{p} : \stb,
                    s : \qty, \roleidx{s}{q} : \zapped, \aname
\lbleval{\lblzapdequeue{s}{p}{q}}
\rtenvzap_1, \rtenvzap_2,
                    s : \qty, \roleidx{s}{p} : \zapped, \roleidx{s}{q} :
                \zapped, \aname
        \]
        as required.
    \end{proofcase}
\end{proof}

\subsection{Progress}

Thread progress needs to change to take into account the possibility of an
exception due to \textsc{E-Raise} or \textsc{E-RaiseExn}:

\begin{lemma}[Thread Progress]\label{lem:zap-thread-progress}
    Let $\config{C} = \confctx[\actor{\threadt}{\hstate}{\istate}]$.
    If $\cseq{\cdot}{\cdot}{\config{C}}$ then
    either:
    \begin{itemize}
        \item $\threadt = \idle{\vala}$, or
        \item there exist $\confctx', \threadt', \hstate',
            \istate'$ such that $\config{C} \ceval
            \confctx'[\actor{\threadt'}{\hstate'}{\istate'}]$, or
        \item $\config{C} \ceval \confctx'[\zap{\aname} \parallel \zap{\hstate}
            \parallel \zap{\istate}]$ if $\threadt = \ctxe[\raiseexn]$, or
        \item $\config{C} \ceval \confctx'[\zap{\aname} \parallel
            \zap{\roleidx{s}{p}} \parallel \zap{\hstate} \parallel
            \zap{\istate}]$ if $\threadt = \sessthread{s}{p}{\ctxe[\raiseexn]}$.
    \end{itemize}
\end{lemma}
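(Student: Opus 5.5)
The argument follows the proof of Lemma~\ref{lem:thread-progress}, extended with the new \langnamezap constructs. If $\threadt = \idle{\vala}$ the first disjunct holds, so suppose $\threadt = \tma$ or $\threadt = \sessthread{s}{p}{\tma}$. Inverting \textsc{T-Actor} and \textsc{TT-Sess}/\textsc{TT-NoSess} gives a computation typing for $\tma$ under an environment containing only access point and actor names. Extending Lemma~\ref{lem:term-progress} to \langnamezap adds three cases to the final disjunct: $\monitor{\vala}{\valb}$, $\raiseexn$, and the three-argument $\calcwd{suspend}$. This extension needs only an additional case in the induction, since none of these constructs $\beta$-reduce. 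Hence $\tma$ is a returned value, reduces functionally, or has the form $\ctxe[\tma']$ with $\tma'$ a communication, concurrency or failure construct.

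We then do a case analysis mirroring Lemma~\ref{lem:thread-progress}.
\begin{itemize}
\item If $\tma$ is a returned value, \textsc{E-Reset} applies.
\item If $\tma$ reduces functionally, \textsc{E-Lift} applies.
\item $\spawn{\tmb}$ reduces by the modified \textsc{E-Spawn}, and $\newap{\cdot}$ by \textsc{E-NewAP}.
\item $\monitor{b}{\vala}$ reduces by \textsc{E-Monitor}, which needs no other party.
\item $\suspendexn{\valc}{\vala}{\valb}$ reduces by the modified \textsc{E-Suspend}. Here the typing rule \textsc{T-Suspend} forces an input precondition, which rules out \textsc{TT-NoSess} (whose precondition is $\localend$). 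So the thread must be of the form $\sessthread{s}{p}{-}$, exactly as in the send case.
\item $\register{\apname}{\prole}{\valb}$ reduces by \textsc{E-Register}. The closed typing and \textsc{T-APName}/\textsc{T-AP} guarantee that the access point $\ap{\apname}{\apstate}$ is present. We must also check that $\apname$ is a genuine access point name rather than an actor name, which holds because of the separate $\typid$ type.
\item $\send{q}{\ell}{\vala}$ reduces by \textsc{E-Send}. As before, the precondition must be an output type, so the thread is a session thread. \textsc{T-SessionName} with the queue typing rules then guarantees that $\qproc{s}{\qcontents}$ exists.
\end{itemize}
In each of these cases the contractum still contains an actor with name $a$, so the second disjunct holds.

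The remaining case is $\tma' = \raiseexn$. If $\threadt = \ctxe[\raiseexn]$, \textsc{E-Raise} fires and yields $\zap{\aname} \parallel \zap{\hstate} \parallel \zap{\istate}$ in place of the actor, which is the third disjunct. If $\threadt = \sessthread{s}{p}{\ctxe[\raiseexn]}$, \textsc{E-RaiseS} yields the fourth disjunct. Both rules are local to the actor, so we lift them to $\config{C}$ via \textsc{E-Par}, \textsc{E-Nu} and \textsc{E-Struct}, taking $\confctx'$ to be $\confctx$ (up to structural congruence).

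The main obstacle is the bookkeeping that guarantees the needed partner components (the queue for $s$, the access point $\apname$) exist somewhere in $\confctx$ once the extended runtime environments $\rtenvzap$ are involved. The concern is that cancellation might have removed the queue. However, a live $\sessthread{s}{p}{-}$ types $\roleidx{s}{p}$ with a session type rather than $\zapped$, so the session restriction is still present. The only structural congruence that removes a queue requires all endpoints to be zapped, and that is not the case here. Given this observation, the argument is the same as in the base calculus.
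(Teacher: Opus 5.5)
Your proposal is correct and takes the same route as the paper. The paper's proof reuses Lemma~\ref{lem:thread-progress} and adds only that $\calcwd{monitor}$ reduces by \textsc{E-Monitor} and $\raiseexn$ by \textsc{E-Raise} or \textsc{E-RaiseS}; your further checks (that \textsc{T-Suspend}'s input precondition forces a session thread, and that closed typing guarantees the queue and access point are present) make explicit what the paper leaves implicit.
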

\begin{proof}
    As with Lemma~\ref{lem:thread-progress} but taking into account that:
    \begin{itemize}
        \item $\monitor{b}{\vala}$ can always reduce by \textsc{E-Monitor};
        \item $\raiseexn$ can always reduce by either \textsc{E-Raise} or
            \textsc{E-RaiseS}.
    \end{itemize}
\end{proof}

As before, all well-typed configurations can be written in canonical form; as usual the proof relies on the fact that structural congruence is type-preserving.

\begin{lemma}\label{lem:zap-canonical-forms}
    If $\cseq{\tyenv}{\rtenvzap}{\config{C}}$ then there exists a $\config{D}
    \equiv \config{C}$ where $\config{D}$ is in canonical form.
\end{lemma}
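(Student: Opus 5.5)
The plan is to normalise the configuration using only the structural congruence rules, inducting on the typing derivation $\cseq{\tyenv}{\rtenvzap}{\config{C}}$ and relying on the fact (the $\equiv$ part of Theorem~\ref{thm:zap-pres}, as for Lemma~\ref{lem:equiv-pres}) that congruence preserves typability. The induction hypothesis will be that every well-typed subconfiguration is congruent to a single block of name restrictions followed by a flat parallel composition of \emph{leaves}. A leaf is an access point, a queue, an actor $\zapactor{\threadt}{\hstate}{\istate}{\monstate}$, or a zapper thread $\zap{a}$, $\zap{\roleidx{s}{p}}$ or $\zap{\inittok}$.

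The base cases are the leaf rules: \textsc{T-AP}, \textsc{T-Actor}, \textsc{T-EmptyQueue}/\textsc{T-ConsQueue}, \textsc{T-ZapActor}, \textsc{T-ZapRole} and \textsc{T-ZapTok}. Each of these is already in the required shape, with no binders.

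For the restriction rules \textsc{T-APName}, \textsc{T-InitName}, \textsc{T-SessionName} and \textsc{T-ActorName}, I apply the induction hypothesis to the body and prepend the new binder. For \textsc{T-Par}, I apply the induction hypothesis to both sides and then $\alpha$-rename the bound names so that they are disjoint from each other and from the free names of the opposite side. Scope extrusion then lifts every binder to the top, and associativity and commutativity of $\parallel$ flatten the parallel composition. Commutativity of restrictions reorders the binder prefix into the order $\tilde{\inittok}$, access point names, session names, actor names. Commutativity of $\parallel$ groups the leaves as access points, then queues, then live actors, then zapper threads $\setseq{\zap{\nma}}$.

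The one step that needs care is matching the actor-name binders to their occupants. For each bound actor name $a_k$ I must show that it is realised either by a live actor or by a zapper $\zap{a_k}$, so that the canonical form's condition that $(\zap{a_k})_{k \in n'..n}$ lies in $\setseq{\zap{\nma}}$ holds. This follows from linearity of $\rtenvzap$:
\begin{itemize}
\item \textsc{T-ActorName} places $a$ linearly in the environment;
\item the only rules that consume an actor-name entry are \textsc{T-Actor} and \textsc{T-ZapActor};
\item \textsc{T-Par} splits the environment disjointly.
\end{itemize}
Hence exactly one leaf consumes each $a$. Having established this, I permute the live actors to indices $1..n'-1$ and the actor zappers to $n'..n$. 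By the same linearity argument, every session queue name is bound, which fixes the count $m$ of session names.

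I expect this correspondence to be the only real obstacle. It will be stated as a small auxiliary observation proved by induction on derivations. Everything else is routine congruence bookkeeping, so I would state it briefly, by analogy with Proposition~\ref{prop:canonical-forms}.
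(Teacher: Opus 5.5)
This is the paper's route. Its only justification is an appeal to the structural congruence axioms and to preservation of typing under $\equiv$. Your prenex normalisation, together with the linearity argument pairing each bound actor name (and likewise each bound access-point and session name) with a unique occupant, supplies exactly the bookkeeping the paper omits.

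One correction is needed. Your step ``every session queue name is bound'', and the tacit claim that every live actor and access point lies under a binder, follow from linearity only when the top-level runtime environment is empty. For a general $\rtenvzap$ the lemma is false as stated. For example, $\cseq{\cdot}{a}{\zapactor{\idle{()}}{\epsilon}{\epsilon}{\emptyset}}$ is derivable, but $\equiv$ preserves free names, so that actor can never be brought under a binder. You should state and prove the lemma for closed configurations $\cseq{\cdot}{\cdot}{\config{C}}$, which is the only form in which the Progress theorem for \langnamezap uses it.
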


It is also useful to see that the progress property on environments is preserved
even if some roles become cancelled.

\begin{lemma}
    If $\dfzap{\rtenvzap}$ then $\dfzap{\zaproles{\rtenvzap}{\setseq{\prole}}}$
    for any $\setseq{\prole} \subseteq \roles{\rtenvzap}$.
\end{lemma}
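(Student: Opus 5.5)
The plan is to prove the contrapositive-style statement directly: assume $\dfzap{\rtenvzap}$, fix $\setseq{\prole} \subseteq \roles{\rtenvzap}$, and show that every maximal $\equivsynceval$-reduction sequence from $\rtenvzap^\lightning \defeq \zaproles{\rtenvzap}{\setseq{\prole}}$ ends in $s : \emptyq$ or in an environment consisting only of cancelled endpoints and an empty queue. The key device is a \emph{simulation} relating $\rtenvzap^\lightning$-reductions to reductions of $\rtenvzap$ with additional cancellations. I will define a relation $\rtenvzap_1 \preceq \rtenvzap_2$ holding when $\rtenvzap_1$ is obtained from some environment reachable from $\rtenvzap_2$ by zapping a set of roles and deleting from the queue exactly the messages addressed to zapped roles. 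Clearly $\rtenvzap^\lightning \preceq \rtenvzap$.

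Step one: show that $\preceq$ is a simulation up to reachability. By case analysis on the rule used for $\rtenvzap_1 \equivsynceval \rtenvzap_1'$ I will show one of two things. The reduction may be mirrored by the same labelled step from the related environment; this covers \textsc{Lbl-Send} and \textsc{Lbl-Recv} between unzapped roles, using that the queue equivalence lets us ignore dropped messages to other recipients. Alternatively, it may be absorbed into $\preceq$ without any step of $\rtenvzap_2$: \textsc{Lbl-ZapMsg} simply removes a message that the relation already permits to be deleted, and \textsc{Lbl-ZapRecv} zaps one more role, which the relation also allows. \textsc{Lbl-End} is mirrored directly, and \textsc{Lbl-Rec} is handled by unfolding on both sides.

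Step two, the terminal analysis, is the main obstacle. Suppose $\rtenvzap_1' \not\equivsynceval$ with $\rtenvzap_1' \preceq \rtenvzap_2'$, where $\rtenvzap_2'$ is reachable from $\rtenvzap$. Any remaining unzapped endpoint is either an output type, which can fire \textsc{Lbl-Send} and so cannot occur at a stuck state, or \localend, which can fire \textsc{Lbl-End}, or an input from some $\qrole$. In the input case, if $\qrole$ is zapped, then stuckness means a message from $\qrole$ is pending, yet the head message from $\qrole$ would be receivable or would contradict safety. So assume $\qrole$ is live. I then take $\rtenvzap_2'$ and run it, using $\dfzap{\rtenvzap}$, to a terminal state. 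The input of $\prole$ must be consumed there, so some live role must eventually send to $\prole$; I will argue inductively, on the length of that run, that this forces a live step in $\rtenvzap_1'$. The first step of the run that involves only unzapped roles is enabled in $\rtenvzap_1'$, because the zapped roles' behaviour cannot enable it. Steps that send to zapped roles are invisible, while steps that require receiving from a zapped role sender can be replaced in $\rtenvzap_1'$ by \textsc{Lbl-ZapRecv}. This yields a contradiction. I expect the delicate point to be ensuring that this ``first relevant step'' exists, and I would argue it by induction on run length, commuting independent steps.

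Step three: at a stuck $\rtenvzap_1'$ all endpoints are therefore zapped or removed, and the queue is empty. Any residual message would be addressed either to a zapped role, and so be dropped by \textsc{Lbl-ZapMsg}, or to a removed role, which contradicts the deadlock-freedom of the corresponding $\rtenvzap_2'$-run. Hence $\rtenvzap_1'$ has the required form $(\roleidxx{s}{\prole_i} : \zapped)_{i}, s : \emptyq$ or $s : \emptyq$.
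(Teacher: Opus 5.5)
There is a genuine gap in Step two, at ``I then take $\rtenvzap_2'$ and run it, using $\dfzap{\rtenvzap}$, to a terminal state''. Step three relies on the same move.

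Deadlock-freedom only constrains stuck environments that are actually reached; it does not say that any stuck environment is reachable. Suppose the roles in $\setseq{\prole}$ are exactly those that keep $\rtenvzap$ moving forever. Then every run from $\rtenvzap_2'$ is infinite, so there is no terminal state in which the input of $\prole$ ``must be consumed'', and nothing to contradict.

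This cannot be patched, because the lemma is false as stated. Take
\[
\begin{array}{l}
\rtenvzap = \roleidx{s}{r} : \recty{X}{\localselectsingle{t}{\ell}{\one} \then X},\;
\roleidx{s}{t} : \recty{X}{\localoffersingle{r}{\ell}{\one} \then X}, \\
\qquad \roleidx{s}{p} : \localoffersingle{q}{m}{\one} \then \localend,\;
\roleidx{s}{q} : \localoffersingle{p}{m}{\one} \then \localend,\;
s : \emptyq
\end{array}
\]
Every environment reachable from $\rtenvzap$ can still fire \textsc{Lbl-Send} for $\role{r}$. So none is stuck, and $\dfzap{\rtenvzap}$ holds vacuously. $\rtenvzap$ is also safe, hence compliant, so the failure is not an artefact of dropping safety.

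Now zap $\role{r}$. In $\zaproles{\rtenvzap}{\set{\role{r}}}$, \textsc{Lbl-ZapRecv} cancels $\role{t}$. The result has $\roleidx{s}{p}$ and $\roleidx{s}{q}$ waiting on each other over an empty queue: it is stuck but has neither permitted shape.

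The paper's one-line justification has the same blind spot. It accounts for receivers whose \textsc{Lbl-Recv} is disabled by zapping. It ignores that the zapped roles' own \textsc{Lbl-Send}/\textsc{Lbl-End} moves disappear, and that is exactly what exposes the $\role{p}$--$\role{q}$ deadlock here.

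So you need an extra hypothesis, for example that from every environment reachable from $\rtenvzap$ some stuck environment is reachable. If \textsc{Lbl-Zap} were counted in $\equivsynceval$ the lemma would be immediate, since then $\rtenvzap \equivsynceval^{*} \zaproles{\rtenvzap}{\setseq{\prole}}$, but the paper deliberately keeps it separate.

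Under the termination hypothesis your simulation plan goes through, and Step two needs no commutation argument:
\begin{enumerate}
\item Start from the reachable original state underlying a stuck $\rtenvzap_1'$, follow a terminating run, and take the first step performed by a role still live in $\rtenvzap_1'$.
\item That step must be an input, since a stuck $\rtenvzap_1'$ has no live outputs and no live $\localend$.
\item No live role moved earlier, so the sender-to-receiver queue in $\rtenvzap_1'$ either has the same head message or, if the sender is cancelled in $\rtenvzap_1'$, is empty. Either way $\rtenvzap_1'$ enables \textsc{Lbl-Recv} or \textsc{Lbl-ZapRecv}, a contradiction.
\item If there is no such step, the terminal state still contains every live role of $\rtenvzap_1'$ and every message addressed to a live or ended role. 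By $\dfzap{\rtenvzap}$ there are none, and messages to cancelled roles are excluded by stuckness via \textsc{Lbl-ZapMsg}.
\end{enumerate}

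Two smaller repairs are also needed:
\begin{itemize}
\item $\preceq$ must allow deleting \emph{some} rather than \emph{exactly} the messages addressed to zapped roles. Otherwise $\zaproles{\rtenvzap}{\setseq{\prole}} \preceq \rtenvzap$ already fails when the queue holds such messages, and \textsc{Lbl-ZapRecv} breaks the invariant.
\item Your appeal to safety in Step two is not available, since only $\dfzap{\rtenvzap}$ is assumed. Under the termination hypothesis it is unnecessary anyway.
\end{itemize}
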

\begin{proof}
    Zapping a role may prevent \textsc{Lbl-Recv} from firing, but in this case
    would enable either a \textsc{Lbl-ZapRecv} or \textsc{Lbl-ZapMsg} reduction.
\end{proof}

\zapprog*
\begin{proof}
    The reasoning is similar to that of Theorem~\ref{thm:progress}.
    By Lemma~\ref{lem:zap-canonical-forms}, $\config{C}$ can be written in
    canonical form:
\[
        (\nu \tilde{\inittok})
        (\nu \apname_{i \in 1..l})(\nu \sessname_{j \in 1..m})
        (\nu \aname_{k \in 1..n})(
        \ap{\apname_i}{\apstate_i}_{i \in 1..l} \parallel 
        (\qproc{s_j}{\qcontents_j})_{j \in 1..m} \parallel
        \zapactor[a_k]{\threadt_k}{\hstate_k}{\istate_k}{\monstate_k}_{k \in 1..{n' - 1}} \parallel
        \setseq{\zap{\nma}}
        )
    \]

    with $(\zap{a_k})_{k \in n' .. n}$ contained in $\setseq{\zap{\nma}}$.

    By repeated applications of Lemma~\ref{lem:zap-thread-progress}, either the
    configuration can reduce or all threads are idle:
\[
        (\nu \tilde{\inittok})
        (\nu \apname_{i \in 1..l})(\nu \sessname_{j \in 1..m})
        (\nu \aname_{k \in 1..n})(
        \ap{\apname_i}{\apstate_i}_{i \in 1..l} \parallel 
        (\qproc{s_j}{\qcontents_j})_{j \in 1..m} \parallel
        \zapactor[a_k]{\idle{\valc_k}}{\hstate_k}{\istate_k}{\monstate_k}_{k \in 1..{n' - 1}} \parallel
        \setseq{\zap{\nma}}
        )
    \]

    By the linearity of runtime type environments $\rtenv$, each role endpoint
    $\roleidx{s}{p}$  must either be contained in an actor, or exist as a zapper
    thread $\zap{\roleidx{s}{p}} \in \setseq{\zap{\nma}}$.
Let us first consider the case that the endpoint is contained in an actor;
    we know by previous reasoning that each role must have an associated stored
    handler.

    Since the types for each session must satisfy progress, the collection of
    local types must reduce.
    There are two potential reductions: either \textsc{Lbl-Sync-Recv} in the
    case that the queue has a message, or \textsc{Lbl-ZapRecv} if the sender is
    cancelled and the queue does not have a message. In the case of
    \textsc{Lbl-Sync-Recv}, since all actors are idle we can reduce using
    \textsc{E-React} as usual. In the case of \textsc{Lbl-ZapRecv} typing
    dictates that we have a zapper thread for the sender and so can reduce by
    \textsc{E-CancelH}.

    It now suffices to reason about the case where all endpoints are zapper
    threads (and thus by linearity, where all handler environments are empty).
    In this case we can repeatedly reduce with \textsc{E-CancelMsg} until all
    queues are cleared, at which point we have a configuration of the form:

    \[
        (\nu \tilde{\inittok})
        (\nu \apname_{i \in 1..l})
        (\nu \sessname_{j \in 1..m})
        (\nu \aname_{k \in 1..n})(
        \ap{\apname_i}{\apstate_i}_{i \in 1..l} \parallel 
        (\qproc{s_j}{\epsilon})_{j \in 1..m} \parallel
        \zapactor[a_k]{\idle{\valc_k}}{\epsilon}{\istate_k}{\monstate_k}_{k \in 1..{n' - 1}} \parallel
        \setseq{\zap{\nma}}
        )
    \]

    We must now account for the remaining zapper threads.
    If there exists a zapper thread $\zap{\aname}$ where $\aname$ is contained
    within some monitoring environment $\monstate$ then we can reduce with
    \textsc{E-InvokeM}. If $\aname$ does not occur free in any initialisation
    callback or monitoring callback then we can eliminate it using the garbage
    collection congruence $(\nu \aname)(\zap{\aname}) \parallel \config{C}
    \equiv \config{C}$.

    Next, we eliminate all zapper threads for initialisation tokens using
    \textsc{E-CancelAP}.

    Finally, we can eliminate all failed sessions $(\nu
    s)(\zap{\roleidxx{s}{\prole_1}} \parallel \cdots \parallel
    \zap{\roleidxx{s}{\prole_n}} \parallel \qproc{s}{\epsilon})$, and we are
    left with a configuration of the form:

\[
    (\nu \tilde{\inittok})(\nu \apname_{i \in 1..m})(\nu \aname_{j \in 1..n})
    (
    \ap{\apname_1}{\apstate_1}_{i \in 1..m} \parallel
    \zapactor[a_j]{\idle{\valc_k}}{\epsilon}{\istate_j}{\monstate_j}_{j \in 1..{n'-1}} \parallel
    (\zap{a_j})_{j \in {n'..n}}
    )
\]
as required.
\end{proof}

\subsubsection{Global Progress}

A modified version of global progress holds: for every active session, in a
finite number of reductions, either the session can make a communication action,
or all endpoints become cancelled and can be garbage collected.

\zapgprog*
\begin{proof}
    Follows the same structure as the proof of
    Corollary~\ref{cor:global-progress}, the main difference being that instead
    of \textsc{E-React} firing, it may be the case that \textsc{E-CancelH} fires
    to propagate a failure. In this case, if all session endpoints for an active
    session $s$ are cancelled, then it would be possible to use the garbage
    collection congruence to eliminate the failed session.
\end{proof}
 \clearpage
\section{Formal Model of Session Switching Extension}\label{ap:switching}

\begin{figure}[t]
    {\footnotesize
    \header{Modified syntax}
    \[
        \begin{array}{lrcl}
        \text{Session names} & \sname{s}, \sname{t} \\
        \text{Values} & \vala, \valb & ::= & \cdots \midspace (\vala, \valb) \\
        \text{Computations} & \tma, \tmb & ::= &
        \cdots \midspace \letin{(x, y)}{\tma}{\tmb} \\
                            & & \mid &
            \suspendsend{s}{\vala}{\valb} \midspace
            \suspendrecv{\vala}{\valb}
            \midspace \become{s}{\vala} \\
        \text{Send-suspended sessions} & \suspentry & ::= & \metapair{\roleidx{s}{p}}{\vala} \\
        \text{Handler state} & \hstate & ::= & \epsilon \midspace
        \hstate, \roleidx{s}{p} \mapsto \vala \midspace \hstate, \sname{s}
        \mapsto \seq{\suspentry} \\
        \text{Switch request queue} & \swstate & ::= & \epsilon \midspace \swstate \cdot \metapair{\sname{s}}{\vala} \\
        \text{Configurations} & \config{C}, \config{D} & ::= & \cdots \midspace
        \swactor{\threadt}{\hstate}{\istate}{\vala}{\swstate} \\
        \end{array}
    \]
    \headertwo
        {Modified typing rules}
        {\framebox{$\vseq{\tyenv}{\vala}{\tya}$}~\framebox{$\mtseqst{\tyenv}{\tyc}{\sta}{\tma}{\tya}{\stb}$}}

    \begin{mathpar}
    \inferrule
    {
        \vseq{\tyenv}{\vala}{\tya} \\
        \vseq{\tyenv}{\valb}{\tyb}
    }
    { \vseq{\tyenv}{(\vala, \valb)}{(\tya \times \tyb)} }

    \inferrule
    {
        \vseq{\tyenv}{\vala}{(\tya_1 \times \tya_2)}
        \\
        \mtseqst{\tyenv}{\tyc}{\sta}{\tma}{\tyb}{\stb}
    }
    { \mtseqst{\tyenv}{\tyc}{\sta}{\letin{(x, y)}{\vala}{\tma}}{\tyb}{\stb} }

    \inferrule
    [T-Suspend$_{?}$]
    { \vseq{\tyenv}{\vala}{\handlerty{\stin}{\tyc}} \\
      \vseq{\tyenv}{\valb}{\tyc}
    }
    { \mtseqst{\tyenv}{\tyc}{\stin}{\suspendrecv{\vala}{\valb}}{\tya}{\stb} }

    \inferrule
    [T-Suspend$_{!}$]
    { \siglookup{s}{\stout}{\tya} \\\\
      \vseq{\tyenv}{\vala}{\sttyfun{(\tya \times \tyc)}{\tyc}{\stout}{\localend}{\tyc}} }
    { \mtseqst{\tyenv}{\tyc}{\stout}{\suspendsend{s}{\vala}}{\tyb}{\stb} }

    \inferrule
    [T-Become]
    {
        \siglookup{s}{\stb}{\tya} \\
        \vseq{\tyenv}{\vala}{\tya}
    }
    { \mtseqst{\tyenv}{\tyc}{\sta}{\become{s}{\vala}}{\one}{\sta} }
    \end{mathpar}

    \headertwo
        {Modified configuration typing rules}
        {\framebox{$\cseq{\tyenv}{\rtenv}{\config{C}}$}~
            \framebox{$\hstateseq{\tyenv}{\rtenv}{\tyc}{\hstate}$}~
            \framebox{$\swstateseq{\tyenv}{\swstate}$}
        }

    \begin{mathpar}
       \inferrule
       [T-Actor]
       {
           \threadseq{\tyenv}{\rtenv_1}{\valc}{\threadt} \\
           \hstateseq{\tyenv}{\rtenv_2}{\valc}{\hstate} \\\\
           \istateseq{\tyenv}{\rtenv_3}{\valc}{\istate} \\
           \swstateseq{\tyenv}{\swstate}
       }
       { \cseq
           { \tyenv }
           { \rtenv_1, \rtenv_2, \rtenv_3, \aname }
           { \swactor
               {\threadt}
               {\hstate}
               {\istate}
               {\swstate}
           }
       }

       \inferrule
       [TH-SendHandler]
       {
           \hstateseq{\tyenv}{\rtenv}{\tyc}{\hstate} \\\\
           \siglookup{s}{\stout}{\tya} \\
           (\vseq{\tyenv}{\vala_i}{\sttyfun{(\tya \times \tyc)}{\tyc}{\stout}{\localend}{\tyc}})_i
       }
       { \hstateseq
           {\tyenv}
           {\rtenv, (\roleidxx{s_i}{\prole_i}: \stout)_i}
           {\tyc}
           {\hstate, \sname{s} \mapsto \metapair{\roleidxx{s_i}{\prole_i}}{\vala_i}_i }
       }

       \inferrule
       [TR-Empty]
       { }
       { \swstateseq{\tyenv}{\epsilon} }

       \inferrule
       [TR-Request]
       {
           \swstateseq{\tyenv}{\swstate} \\
           \siglookup{s}{\stout}{\tya} \\
           \vseq{\tyenv}{\vala}{\tya}
       }
       { \swstateseq{\tyenv}{\swstate \cdot \metapair{\sname{s}}{\vala}} }
    \end{mathpar}

    \headersig{Modified reduction rules}{$\config{C} \ceval \config{D}$}

    \[
        \begin{array}{lrcl}
            \textsc{E-Suspend}_{!}\textsc{-1}
            &
            \swactor
                 {\sessthread{s}{\prole}{\ctxe[\suspendsend{s}{\vala}{\valb}]}}
                 {\hstate}
                 {\istate}
                 {\swstate}
            & \cevalann{\tau} &
             \swactor
                 {\idle{\valb}}
                 {\hstate[\sname{s} \mapsto \metapair{\roleidx{s}{\prole}}{\vala} ]}
                 {\istate}
                 {\swstate}
                 \quad (\sname{s} \not\in \dom{\sigma})
            \\
            \textsc{E-Suspend}_{!}\textsc{-2}
            &
            \swactor
                 {\sessthread{s}{\prole}{\ctxe[\suspendsend{s}{\vala}{\valb}]}}
                 {\hstate[\sname{s} \mapsto \seq{\suspentry}]}
                 {\istate}
                 {\swstate}
            & \cevalann{\tau} &
             \swactor
                 {\idle{\valb}}
                 {\hstate[\sname{s} \mapsto \seq{\suspentry} \cdot \metapair{\roleidx{s}{\prole}}{\vala} ]}
                 {\istate}
                 {\swstate}
            \\
\textsc{E-Become}
            &
            \swactor{\threadctx[\become{s}{\vala}]}{\hstate}{\istate}{\swstate}
            & \cevalann{\tau} &
            \swactor{\threadctx[\effreturn{()}]}{\hstate}{\istate}{\swstate
            \cdot \metapair{\sname{s}}{\vala}}
            \\
\textsc{E-Activate} &
        \swactor
         {\idle{\valc}}
         {\hstate[\sname{s} \mapsto \sendentry{\roleidx{s}{\prole}}{\vala} \cdot \seq{\suspentry}]}
         {\istate}
         {\sendentry{\sname{s}}{\valb} \cdot \swstate}
            & \cevalann{\tau} &
       \swactor
         {\sessthread{s}{\prole}{\vala (\valb, \valc)}}
         {\hstate[\sname{s} \mapsto \seq{\suspentry}]}
         {\istate}
         {\swstate}
        \end{array}
    \]
}
    \caption{\langnamesw: Modified syntax, typing, and reduction rules}
    \label{fig:extensions-switching}
\end{figure}

In Section~\ref{sec:implementation} we described the implementation of \langname
to support proactive switching between sessions. In this appendix we introduce a
formal model of a similar feature that switches between sessions by queueing
requests to invoke a send-suspended session, and activates send-suspended
sessions when the event loop reverts to being idle.

Suppose that we want to adapt our Shop example to maintain a long-running
session with a supplier and request a delivery whenever an item runs out of stock.
The key difference to our original example is that we need to \emph{switch} to the
Restock session \emph{as a consequence} of receiving a \mkwd{buy} message in a 
customer session.

We can
describe a \mkwd{Restock} session with the following simple local types:

 \begin{minipage}{0.45\textwidth}
 {\footnotesize
 \[
     \bl
     \mkwd{ShopRestock} \defeq \\
     \quad \rectyone{\mkwd{loop}} \\
     \qquad \localselectsingle{Supplier}{\mkwd{order}}{([\mkwd{ItemID}] \times
     \mkwd{Quantity})} \then \\
     \qquad \localoffersingle{Supplier}{\mkwd{ordered}}{\mkwd{Quantity}} \then
     \mkwd{loop}
     \el
 \]
 }
 \end{minipage}
 \begin{minipage}{0.45\textwidth}
 {\footnotesize
 \[
     \bl
     \mkwd{SupplierRestock} \defeq \\
     \quad \rectyone{\mkwd{loop}} \\
     \qquad \localoffersingle{Shop}{\mkwd{order}}{([\mkwd{ItemID}] \times
     \mkwd{Quantity})} \then \\
     \qquad \localselectsingle{Shop}{\mkwd{ordered}}{\mkwd{Quantity}} \then
     \mkwd{loop}
     \el
 \]
 }
 \end{minipage}

Whereas before we only needed to suspend an actor in a \emph{receiving} state,
this workflow requires us to also suspend an actor in a \emph{sending} state,
and switch into the session at a later stage. We call this extension
$\langnamesw$.
Below, we can see the extension of the shop example with the ability to switch
into the restocking session; the new constructs are shaded.

{\small
\[
    \bl
\mkwd{ShopRestock} \defeq \\
    \quad \rectyone{\mkwd{loop}} \\
    \qquad \localselectsingle{Supplier}{\mkwd{order}}{([\mkwd{ItemID}] \times
    \mkwd{Quantity})} \then \\
    \qquad \localoffersingle{Supplier}{\mkwd{ordered}}{\mkwd{Quantity}} \then
    \mkwd{loop} \\ \\
\mkwd{custReqHandler} \defeq \\
    \:
        \bl
        \handlerone{Customer}{\var{st}} \\
        \quad \msg{\mkwd{getItemInfo}}{\var{itemID}} \mapsto [\ldots] \\
        \quad \msg{\mkwd{checkout}}{(\var{itemIDs}, \var{details})} \mapsto \\
        \qquad \letintwo{\var{items}}{\stget} \\
        \qquad \ithen{\mkwd{inStock}(\var{itemIDs}, \var{items})} \: [\ldots] \\
        \qquad \calcwd{else} \\
        \qqquad \send{Customer}{\mkwd{outOfStock}}{}; \\
        \qqquad \shade{\become{Restock}{\var{itemIDs}}}; \\
        \qqquad \suspendrecv{\mkwd{custReqHandler}}{\var{st}} \\
        \}
        \el
        \\ \\
\mkwd{shop} \defeq \lambda (\var{custAP}, \var{restockAP}) . \\
        \:
        \bl
\registertwo{\var{custAP}}{\role{Shop}}  \\
            \quad {
                (\lambda \var{st}.
                \mkwd{shop} \: (\var{custAP}, \role{Shop}) \;
                \fun{\var{st}}{\suspendrecv{\mkwd{itemReqHandler}}{\var{st}}})}; \\
\register
                {\var{restockAP}}
                {\role{Shop}}
                {(\lambda \var{st} . \;
                    {\begin{array}[t]{l}
                        \shade{\suspendsend{Restock}{\mkwd{restockHandler}}{\var{st}}});
                        \\
                    \mkwd{initialStock}
                    \el}
                }\\
        \el \\ \\
        \mkwd{restockHandler} \defeq \lambda (\var{itemIDs}, \var{st}) \then \\
    \:
    \bl
    \send{Supplier}{\mkwd{order}}{(\var{itemIDs}, 10)}; \\
    \suspendrecvzero \: ( \\
    \quad \handlerone{Supplier}{\var{st}} \\
    \qquad \msg{\mkwd{ordered}}{\var{quantity}} \mapsto \\
    \qqquad \mkwd{increaseStock}(\var{itemIDs}, \var{quantity}); \\
    \qqquad \shade{\suspendsend{Restock}{\mkwd{restockHandler}}{\var{st}}}
    \})
    \el
\el
\]
}

The program is implicitly parameterised by a mapping from static names like
$\sname{Restock}$ to pairs of session types and payload types 
(in our
scenario, $\sname{Restock}$ maps to $(\mkwd{ShopRestock}, [\mkwd{ItemID}])$ to
show that an actor can suspend when its session type is \mkwd{ShopRestock}, and
must provide a list of \mkwd{ItemID}s when switching back into the session).
We split the $\calcwd{suspend}$ construct into
$\suspendrecv{\vala}$ (to suspend awaiting an incoming message, as previously),
and $\suspendsend{s}{\vala}$ (to suspend session with name $\sname{s}$ given a
function $\vala$, until switched into), and introduce the $\become{s}{\vala}$
construct to switch into a suspended session. 
Specifically, $\become{s}{\vala}$ queues $\sname{s}$ to run when the actor is
next idle.
We modify the $\mkwd{shop}$ definition to also
register with the $\var{restockAP}$ access point, suspending the session (in a
state that is ready to send) with the $\mkwd{restockHandler}$. The
\mkwd{restockHandler} takes an item ID, sends an \mkwd{order} message to the
supplier, and suspends again.

\paragraph*{Metatheory.}
$\langnamesw$ satisfies preservation.  Since (by design) $\calcwd{become}$
operations are dynamic and not encoded in the protocol (for example, we might
wish to queue two invocations of a send-suspended session to be executed in
turn), there is no type-level mechanism of guaranteeing that a send-suspended
session is invoked, so $\langnamesw$ instead enjoys progress up-to invocation of
send-suspended sessions (see Appendix~\ref{ap:extensions-proofs}).

Our extension to allow session switching is shown in
Figure~\ref{fig:extensions-switching}. We introduce a set of distinguished
\emph{session identifiers} $\sname{s}$; each session identifier is associated
with a local type and a payload in an environment $\Sigma$, i.e., for each
$\sname{s}$ we have $\siglookup{s}{\stout}{\tya}$ for some $\stout, \tya$.
We then split the $\calcwd{suspend}$ construct into two:
$\suspendrecv{\vala}{\valb}$ (which, as before, installs a message handler
$\vala$ and suspends
an actor with updated state $\valb$) and
$\suspendsend{s}{\vala}{\valb}$, which suspends a session in a
\emph{send} state, installing a function $\vala$ taking a payload of the given type.
Finally we introduce a $\become{s}{\vala}$ construct that queues a request for
the event loop to invoke $\sname{s}$ next time the actor is idle and a
send-suspended session is available.

\subsection{Metatheory}

\subsubsection{Preservation}
As would be expected, \langnamesw satisfies preservation.

\begin{restatable}[Preservation]{theorem}{preservationsw}\label{thm:preservation-sw}
    \hspace{-0.5em}
    Preservation (as defined in Theorem~\ref{thm:preservation}) continues to
    hold in {\langnamesw}.
\end{restatable}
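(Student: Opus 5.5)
The plan is to reuse the structure of the proofs of Lemma~\ref{lem:equiv-pres} and Lemma~\ref{lem:ceval-pres}. For the ($\equiv$) half nothing new is needed: the extra actor component $\swstate$ and the new handler-state entries $\sname{s} \mapsto \seq{\suspentry}$ are untouched by structural congruence. The ($\rightarrow$) half is again by induction on the derivation of $\config{C} \ceval \config{D}$. The adapted cases \textsc{E-Send}, \textsc{E-React}, \textsc{E-Spawn}, \textsc{E-Reset}, \textsc{E-NewAP}, \textsc{E-Register}, \textsc{E-Init}, \textsc{E-Lift}, \textsc{E-Nu}, \textsc{E-Par} and \textsc{E-Struct} go through as before, threading $\swstate$ through \textsc{T-Actor} unchanged; \textsc{E-Suspend} becomes the receive-suspension case and uses \textsc{T-Suspend$_{?}$} in place of \textsc{T-Suspend}. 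Beforehand, I will check that the auxiliary Lemmas~\ref{lem:substitution}, \ref{lem:subterm-typability}, \ref{lem:replacement}, \ref{lem:weakening} and \ref{lem:term-pres} extend to the new computation forms; this is routine, since $\calcwd{become}$ and both suspends are typed by axioms over values.

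The new cases are handled as follows.
\begin{itemize}
\item \textbf{E-Become.} Lemma~\ref{lem:subterm-typability} gives $\siglookup{s}{\stb}{\tya}$ and $\vseq{\tyenv}{\vala}{\tya}$. Lemma~\ref{lem:replacement} then types $\threadctx[\effreturn{()}]$ at the same pre-condition, since \textsc{T-Become} leaves the session type unchanged. \textsc{TR-Request} types $\swstate \cdot \metapair{\sname{s}}{\vala}$, and $\rtenv$ is unchanged.
\item \textbf{E-Suspend$_{!}$-1 and E-Suspend$_{!}$-2.} Subterm typability forces the pre-condition to be $\stout$ with $\siglookup{s}{\stout}{\tya}$, and gives $\vala$ the function type required by \textsc{TH-SendHandler}. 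The endpoint $\roleidx{s}{p} : \stout$, previously accounted for by \textsc{TT-Sess}, moves into the handler-state environment, either via a fresh \textsc{TH-SendHandler} entry or by extending the existing one by one element. The runtime environment is therefore unchanged up to reordering, and $\idle{\valb}$ is typed by \textsc{TT-Idle}.
\item \textbf{E-Activate.} Inverting \textsc{TH-SendHandler} on the head entry gives $\roleidx{s}{p} : \stout$ and $\vseq{\tyenv}{\vala}{\sttyfun{(\tya \times \tyc)}{\tyc}{\stout}{\localend}{\tyc}}$. Inverting \textsc{TR-Request} on the head of the switch queue gives $\vseq{\tyenv}{\valb}{\tya}$, for the same $\tya$ because $\Sigma$ is a function. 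Together with $\vseq{\tyenv}{\valc}{\tyc}$ from \textsc{TT-Idle}, \textsc{T-App} gives $\mtseqst{\tyenv}{\tyc}{\stout}{\vala\,(\valb,\valc)}{\tyc}{\localend}$, so \textsc{TT-Sess} types the new thread with $\roleidx{s}{p}:\stout$ transferred back from the handler state. The remaining entries $\seq{\suspentry}$ are retyped by \textsc{TH-SendHandler}, or by dropping the binding if the list is empty. In all three groups of cases $\rtenv' = \rtenv$, so safety is trivially preserved.
\end{itemize}

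I expect the main obstacle to be bookkeeping in the handler-state judgement rather than any MPST reasoning. \textsc{TH-SendHandler} bundles a whole list of endpoints under a single $\sname{s}$, so I will first prove a small inversion/splitting lemma: $\hstateseq{\tyenv}{\rtenv}{\tyc}{\hstate[\sname{s} \mapsto \metapair{\roleidx{s}{p}}{\vala} \cdot \seq{\suspentry}]}$ holds iff $\rtenv = \rtenv_0, \roleidx{s}{p}:\stout$ with $\hstateseq{\tyenv}{\rtenv_0}{\tyc}{\hstate[\sname{s} \mapsto \seq{\suspentry}]}$, where an empty list is identified with the absence of the binding. I will also need to confirm the side condition $\sname{s} \notin \dom{\hstate}$ in \textsc{E-Suspend$_{!}$-1}, so that the two suspend rules are covered exhaustively. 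Finally, I must make sure the \textsc{E-React} case still finds only receive handlers under keys of the form $\roleidx{s}{p}$. This holds because send-suspended endpoints are stored under session identifiers rather than endpoint keys, and linearity of $\rtenv$ keeps the two kinds of entry disjoint.
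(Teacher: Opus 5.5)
Your proposal is correct and follows the paper's proof. Both argue by induction on the reduction derivation and discharge the new cases \textsc{E-Suspend}$_{!}$\textsc{-1}/\textsc{-2}, \textsc{E-Become} and \textsc{E-Activate} by subterm typability, replacement and recomposition, with the endpoint $\roleidx{s}{p} : \stout$ simply migrating between \textsc{TT-Sess} and \textsc{TH-SendHandler} so that the runtime environment is unchanged. Your splitting lemma and the disjointness remark for \textsc{E-React} only make explicit bookkeeping the paper leaves implicit. One small point: typing $\idle{\valb}$ needs $\vseq{\tyenv}{\valb}{\tyc}$, which the paper's proof takes as a premise of \textsc{T-Suspend$_{!}$} even though the displayed rule omits it.
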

\begin{proof}
    Preservation of typing under structural congruence follows
    straightforwardly.

    For preservation of typing under reduction, we proceed by induction on the
    derivation of $\config{C} \ceval \config{D}$.

    \begin{proofcase}{\textsc{E-Suspend}$_{!}$\textsc{-1}}
        Similar to {\textsc{E-Suspend}$_{!}$\textsc{-2}}.
    \end{proofcase}

    \begin{proofcase}{\textsc{E-Suspend}$_{!}$\textsc{-2}}
        \[
            \swactor
                 {\sessthread{s}{\prole}{\ctxe[\suspendsend{s}{\vala}{\valb}]}}
                 {\hstate[\sname{s} \mapsto \seq{\suspentry}]}
                 {\istate}
                 {\swstate}
\cevalann{\tau}
\swactor
                 {\idle{\valb}}
                 {\hstate[\sname{s} \mapsto \seq{\suspentry} \cdot \metapair{\roleidx{s}{\prole}}{\vala} ]}
                 {\istate}
                 {\swstate}
        \]

        Assumption:

        {\footnotesize
        \begin{mathpar}
            \inferrule*
            {
                \inferrule
                {
                    \mtseqst
                        {\tyenv}
                        {\tyc}
                        {\sta}
                        {\ctxe[\suspendsend{s}{\vala}{\valb}]}
                        {\tyc}
                        {\localend}
                }
                { \threadseq
                    {\tyenv}
                    { \roleidx{s}{p} : \sta}
                    {\tyc}
                    {\sessthread{s}{p}{\ctxe[\suspendsend{s}{\vala}{\valb}]}}
                }
                \\
                \inferrule
                {
                    \hstateseq{\tyenv}{\rtenv_1}{\tyc}{\hstate} \\
                   \siglookup{s}{\stout}{\tya} \\\\
                   (\vseq{\tyenv}{\valb_i}{\sttyfun{(\tya \times \tyc)}{\one}{\stout}{\localend}{\tyc}})_i
                }
                { \hstateseq
                    {\tyenv}
                    {\rtenv_1,
                        (\roleidxx{s_i}{\qrole_i}: \stout)_i
                    }
                    {\tyc}
                    {\hstate[\sname{s} \mapsto \metapair{\roleidxx{s_i}{\qrole_i}}{\valb_i}_i]}
                }
                \\
                {\bl
                    \istateseq{\tyenv}{\rtenv_2}{\tyc}{\istate}\\
                    \swstateseq{\tyenv}{\swstate}
                \el}
            }
            {
                \cseq
                { \tyenv }
                { \rtenv_1, \rtenv_2, \roleidx{s}{p} : \sta,
                (\roleidxx{s_i}{\qrole_i}: \stout)_i, \aname}
                {
                    \swactor
                         {\ctxe[\suspendsend{s}{\vala}{\valb}]}
                         {\hstate[\sname{s} \mapsto \metapair{\roleidxx{s_i}{\qrole_i}}{\valb_i}_i]}
                         {\istate}
                         {\swstate}
                }
            }
        \end{mathpar}
        }

        Consider the subderivation 
        $
                    \mtseq
                        {\tyenv}
                        {\sta}
                        {\ctxe[\suspendsend{s}{\vala}{\valb}]}
                        {\one}
                        {\localend}
        $.
By Lemma~\ref{lem:subterm-typability} there exists a subderivation:

        \begin{mathpar}
            \inferrule
            {
               \siglookup{s}{\stout}{\tya} \\
               \vseq{\tyenv}{\vala}{\sttyfun{(\tya \times \tyc)}{\tyc}{\stout}{\localend}{\tyc}}
               \\
               \vseq{\tyenv}{\valb}{\tyc}
            }
            { \mtseq{\tyenv}{\stout}{\suspendsend{s}{\vala}{\valb}}{\tyb}{\localend} }
        \end{mathpar}

        Therefore we have that $\sta = \stout$.

        Recomposing:

        {\footnotesize
        \begin{mathpar}
            \inferrule*
            {
                \inferrule
                { \vseq{\tyenv}{\valb}{\tyc} }
                { \threadseq
                    {\tyenv}
                    { \cdot }
                    {\tyc}
                    { \idle{\valb} }
                }
                \\
                \inferrule
                {
                    \hstateseq{\tyenv}{\rtenv_1}{\tyc}{\hstate} \\
                   \siglookup{s}{\stout}{\tya} \\\\
                   (\vseq{\tyenv}{\valb_i}{\sttyfun{(\tya \times \tyc)}{\tyc}{\stout}{\localend}{\tyc}})_i \\
                   \vseq{\tyenv}{\vala}{\sttyfun{(\tya \times \tyc)}{\tyc}{\stout}{\localend}{\tyc}}
                }
                { \hstateseq
                    {\tyenv}
                    {\rtenv_1,
                        (\roleidxx{s_i}{\qrole_i}: \stout)_i, \roleidx{s}{\prole} : \stout
                    }
                    {\tyc}
                    {\hstate[\sname{s} \mapsto \metapair{\roleidxx{s_i}{\qrole_i}}{\valb_i}_i \cdot \metapair{\roleidx{s}{p}}{\vala}]}
                }
                \\
                {\bl
                    \istateseq{\tyenv}{\rtenv_2}{\tyc}{\istate}\\
                \swstateseq{\tyenv}{\swstate}
                \el}
            }
            {
                \cseq
                { \tyenv }
                { \rtenv_1, \rtenv_2, \roleidx{s}{p} : \sta,
                (\roleidxx{s_i}{\qrole_i}: \stout)_i, \aname}
                {
                    \swactor
                         {\idle{\valb}}
                         {\hstate[\sname{s} \mapsto \metapair{\roleidxx{s_i}{\qrole_i}}{\valb_i}_i \cdot \metapair{\roleidx{s}{p}}{\vala}]}
                         {\istate}
                         {\swstate}
                }
            }
        \end{mathpar}
        }

        as required.
    \end{proofcase}

    \begin{proofcase}{\textsc{E-Become}}

        \[
            \swactor{\threadctx[\become{s}{\vala}]}{\hstate}{\istate}{\swstate}
            \cevalann{\tau}
            \swactor{\threadctx[\effreturn{()}]}{\hstate}{\istate}{\swstate \cdot \metapair{\sname{s}}{\vala}}
        \]

        Assumption (considering the case that $\threadctx = \ctxe[-]$ for some
        $\ctxe$; the case in the context of a session is identical):

        \begin{mathpar}
          \inferrule*
          {
              \inferrule*
              { \mtseqst{\tyenv}{\sta}{\tyc}{\ctxe[\become{s}{\vala}]}{\tyc}{\localend} }
              { \threadseq{\tyenv}{\cdot}{\tyc}{\ctxe[\become{s}{\vala}]} } \\
              {\bl
                  \hstateseq{\tyenv}{\rtenv_1}{\tyc}{\hstate} \\
                  \istateseq{\tyenv}{\rtenv_2}{\tyc}{\istate} \\
                  \swstateseq{\tyenv}{\swstate}
              \el
              }
          }
          { \cseq
              { \tyenv }
              { \rtenv_1, \rtenv_2, \aname }
              { \swactor
                  {\threadt}
                  {\hstate}
                  {\istate}
                  {\swstate}
              }
          }
        \end{mathpar}

        By Lemma~\ref{lem:subterm-typability} we have:

        \begin{mathpar}
            \inferrule
            {
                \siglookup{s}{\stb}{\tya} \\
                \vseq{\tyenv}{\vala}{\tya}
            }
            {
                \mtseqst{\tyenv}{\sta}{\tyc}{\become{s}{\vala}}{\one}{\sta}
            }
        \end{mathpar}

        By Lemma~\ref{lem:replacement} we can show that
        $\mtseqst{\tyenv}{\sta}{\tyc}{\ctxe[\effreturn{()}]}{\tyc}{\localend}$.

        Recomposing:

        {\footnotesize
        \begin{mathpar}
          \inferrule*
          {
              \inferrule*
              { \mtseqst{\tyenv}{\tyc}{\sta}{\ctxe[\effreturn{()}]}{\one}{\localend} }
              { \threadseq{\tyenv}{\cdot}{\tyc}{\ctxe[\effreturn{()}]} } \\
              {\bl
              \hstateseq{\tyenv}{\rtenv_1}{\tyc}{\hstate} \\
              \istateseq{\tyenv}{\rtenv_2}{\tyc}{\istate}
              \el
              }\\
\inferrule*
              {
                   \swstateseq{\tyenv}{\swstate} \\
                   \siglookup{s}{\stout}{\tya} \\
                   \vseq{\tyenv}{\vala}{\tya}
              }
              { \swstateseq{\tyenv}{\swstate \cdot \metapair{\sname{s}}{\vala}} }
          }
          { \cseq
              { \tyenv }
              { \rtenv_1, \rtenv_2, \aname }
              { \swactor
                  {\ctxe[\effreturn{()}]}
                  {\hstate}
                  {\istate}
                  {\swstate \cdot \metapair{\sname{s}}{\vala} }
              }
          }
        \end{mathpar}        
        }
        as required.
    \end{proofcase}

    \begin{proofcase}{\textsc{E-Activate}}
        \[
         \swactor
         {\idle{\valc}}
         {\hstate[\sname{s} \mapsto \sendentry{\roleidx{s}{\prole}}{\vala} \cdot \seq{\suspentry}]}
         {\istate}
         {\sendentry{\sname{s}}{\valb} \cdot \swstate}
            \cevalann{\tau}
       \swactor
         {\sessthread{s}{\prole}{\vala \app (\valb, \valc)}}
         {\hstate[\sname{s} \mapsto \seq{\suspentry}]}
         {\istate}
         {\swstate}
        \]

        Let $\derivd$ be the subderivation:
        \begin{mathpar}
            \inferrule*
            {
                \hstateseq{\tyenv}{\rtenv_1}{\tyc}{\hstate} \\\\
                \siglookup{s}{\stout}{\tya} \\
                \vseq{\tyenv}{\vala}{\sttyfun{(\tya \times \tyc)}{\tyc}{\stout}{\localend}{\tyc}} \\
                (\vseq{\tyenv}{\vala_i}{\sttyfun{(\tya \times \tyc)}{\tyc}{\stout}{\localend}{\tyc}})_i
            }
            { \hstateseq
                {\tyenv}
                {\rtenv_1, \roleidx{s}{p} : \stout, (\roleidxx{s_i}{\prole_i}: \stout)_i}
                {\tyc}
                {\hstate, \sname{s} \mapsto
                    \metapair
                        {\roleidx{s}{\prole}}
                        {\vala}
                    \cdot
                    \metapair{\roleidxx{s_i}{\prole_i}}{\vala_i}_i }
            }
        \end{mathpar}

        Assumption:
        {\footnotesize
        \begin{mathpar}
            \inferrule*
            {
                \inferrule*
                { \vseq{\tyenv}{\valc}{\tyc} }
                { \threadseq{\tyenv}{\cdot}{\tyc}{\idle{\valc}} }
                \\
                \derivd
                \\
                \istateseq{\tyenv}{\rtenv_2}{\tyc}{\istate}
                \\
                \inferrule*
                {
                    \swstateseq{\tyenv}{\swstate} \\
                    \siglookup{s}{\stout}{\tya} \\
                    \vseq{\tyenv}{\valb}{\tya}
                }
                { \swstateseq{\tyenv}{
                    \metapair{\sname{s}}{\valb}
                    \cdot
                    \swstate} }
            }
            {
              \cseq
              {\tyenv}
              {\rtenv_1, \rtenv_2, \roleidx{s}{p} : \stout,
              (\roleidxx{s_i}{\prole_i}: \stout)_i, a}
              {
              \swactor
              {\idle{\valc}}
                  {\hstate[\sname{s} \mapsto \sendentry{\roleidx{s}{\prole}}{\vala} \cdot
                      \metapair{\roleidxx{s_i}{\prole_i}}{\vala_i}_i] }
                  {\istate}
                  {\sendentry{\sname{s}}{\valb} \cdot \swstate}
              }
            }
        \end{mathpar}
        }

        Let $\derivd'$ be the subderivation:

        {\small
        \begin{mathpar}
            \inferrule*
            {
                \inferrule*
                {
                    \vseq{\tyenv}{\vala}{\sttyfun{\tya}{\tyc}{\stout}{\localend}{\tyc}} \\
                    \inferrule*
                    { \vseq{\tyenv}{\valb}{\tya} \\ \vseq{\tyenv}{\valc}{\tyc} }
                    { \vseq{\tyenv}{(\valb, \valc)}{(\tya \times \tyc)} }
                }
                { \mtseqst{\tyenv}{\tyc}{\stout}{\vala \app (\valb, \valc)}{\tyc}{\localend} }
            }
            { \threadseq{\tyenv}{\roleidx{s}{p} :
            \stout}{\tyc}{\sessthread{s}{p}{\vala \app (\valb, \valc)}} }
        \end{mathpar}
        }

        Recomposing:

        {\footnotesize
        \begin{mathpar}
            \inferrule*
            {
                \derivd
                \\
                \inferrule*
                {
                    \hstateseq{\tyenv}{\rtenv_1}{\tyc}{\hstate} \\\\
                    \siglookup{s}{\stout}{\tya} \\
                    (\vseq{\tyenv}{\vala_i}{\sttyfun{(\tya \times \tyc)}{\tyc}{\stout}{\localend}{\tyc}})_i
                }
                { \hstateseq
                    {\tyenv}
                    {\rtenv_1, (\roleidxx{s_i}{\prole_i}: \stout)_i}
                    {\tyc}
                    {\hstate, \sname{s} \mapsto
                        \metapair{\roleidxx{s_i}{\prole_i}}{\vala_i}_i }
                }
                \\
                \istateseq{\tyenv}{\rtenv_2}{\tyc}{\istate}
                \\
                \swstateseq{\tyenv}{\swstate}
            }
            {
              \cseq
              {\tyenv}
              {\rtenv_1, \rtenv_2, \roleidx{s}{p} : \stout,
              (\roleidxx{s_i}{\prole_i}: \stout)_i, a}
              {
              \swactor
                  {\sessthread{s}{p}{\vala \app (\valb, \valc)}}
                  {\hstate[\sname{s} \mapsto
                  \metapair{\roleidxx{s_i}{\prole_i}}{\vala_i}_i] }
                  {\istate}
                  {\swstate}
              }
            }
        \end{mathpar}
        }
         as required.
    \end{proofcase}
\end{proof}

\subsubsection{Progress}
Since (by design) $\calcwd{become}$ operations are dynamic and not
encoded in the protocol (for example, we might wish to queue two invocations of
a send-suspended session to be executed in turn), there is no type-level mechanism
of guaranteeing that a send-suspended session is ever invoked. Although all
threads can reduce as before, \langnamesw satisfies a weaker version of progress
where non-reducing configurations can contain send-suspended sessions.

\begin{restatable}
    [Progress (\langnamesw)]
    {theorem}
    {progresssw}\label{thm:progress-sw}
If $\cseq[\dfprop]{\cdot}{\cdot}{\config{C}}$, then either there
exists some $\config{D}$ such that $\config{C} \ceval \config{D}$, or
$\config{C}$ is structurally congruent to the following canonical form:
\[
        (\nu \tilde{\inittok})
        (\nu \apname_{i \in 1..l})
        (\nu \sessname_{j \in 1..m})
        (\nu \aname_{k \in 1..n})(
        \ap{\apname_i}{\apstate_i}_{i \in 1..l} \parallel
        (\qproc{s_j}{\qcontents_j})_{j \in 1..m} \parallel
        \swactor[a_k]{\idle{\valc_k}}{\hstate_k}{\istate_k}{\swstate_k}_{k \in 1..n}
        )
    \]
    where for each session $s_j$ there exists some mapping
    $\roleidx{s_j}{\prole} \mapsto (\sname{s}, \vala)$ (for some role $\prole$,
    static session name $\sname{s}$, and callback $\vala$)
    contained in some $\hstate_k$ where $\swstate_k$ does not contain any
    requests for $\sname{s}$.
\end{restatable}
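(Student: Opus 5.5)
The plan follows the proof of Theorem~\ref{thm:progress}, with one new ingredient: the last step accounts for send-suspended endpoints and the switch request queues $\swstate_k$.

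\textbf{Step 1: canonical form.} Since structural congruence preserves typing (Theorem~\ref{thm:preservation-sw}), Proposition~\ref{prop:canonical-forms} extends to \langnamesw. We therefore write $\config{C}$ in canonical form, with actors $\swactor[a_k]{\threadt_k}{\hstate_k}{\istate_k}{\swstate_k}$.

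\textbf{Step 2: thread progress.} We re-run Lemma~\ref{lem:thread-progress} for the new constructs.
\begin{itemize}
\item $\become{s}{\vala}$ always reduces by \textsc{E-Become}.
\item $\suspendsend{s}{\vala}{\valb}$ reduces by \textsc{E-Suspend}$_!$\textsc{-1} or \textsc{-2}, depending on whether $\sname{s} \in \dom{\hstate}$.
\item $\suspendrecv{\vala}{\valb}$ behaves exactly like the old $\calcwd{suspend}$.
\end{itemize}
So either $\config{C}$ reduces or every thread is $\idle{\valc_k}$.

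\textbf{Step 3: classify endpoints in an idle configuration.} Runtime environments are linear, every thread is idle, and \textsc{TT-Sess} is the only thread rule that consumes an endpoint. Hence each endpoint $\roleidx{s_j}{\prole}$ is recorded in exactly one $\hstate_k$, in one of two ways:
\begin{itemize}
\item by \textsc{TH-Handler}, with an input type; or
\item by \textsc{TH-SendHandler}, inside an entry $\sname{s} \mapsto \seq{\suspentry}$, with an output type $\stout$ where $\siglookup{s}{\stout}{\tya}$.
\end{itemize}
Fix an ongoing session $s_j$. Its session environment is compliant (\textsc{T-SessionName}), so by deadlock-freedom it admits some type reduction.
\begin{itemize}
\item If that reduction is \textsc{Lbl-Recv} on an endpoint stored by \textsc{TH-Handler}, then \textsc{E-React} fires, as in Theorem~\ref{thm:progress}.
\item Otherwise, the only enabled type reductions are \textsc{Lbl-Send} steps. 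These come from endpoints whose entries are send-suspended. This produces the required mapping $\roleidx{s_j}{\prole} \mapsto (\sname{s}, \vala)$ in some $\hstate_k$.
\end{itemize}
A session with no endpoints left can be garbage-collected by the congruence rules once its queue is empty, so it need not be listed.

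\textbf{Step 4 (main obstacle): show $\swstate_k$ contains no request for $\sname{s}$.} We argue by contradiction. Suppose $\swstate_k$ contains some $\metapair{\sname{s}}{\valb}$, and use that $\hstate_k(\sname{s})$ is non-empty.
\begin{itemize}
\item If that request is at the head of $\swstate_k$, then \textsc{E-Activate} fires and $\config{C}$ reduces, a contradiction.
\item If it is not at the head, the head is some request $\metapair{\sname{t}}{\valb'}$. If $\hstate_k(\sname{t})$ is non-empty, \textsc{E-Activate} fires on it instead.
\item The only remaining case is $\sname{t} \notin \dom{\hstate_k}$, or an empty entry. Here a request for $\sname{t}$ blocks a later request for $\sname{s}$.
\end{itemize}
I expect this head-blocking case to be the crux. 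My first attempt is an auxiliary invariant, proved alongside preservation, relating $\swstate_k$ to $\hstate_k$. If that invariant fails, I would argue that the rule \textsc{E-Activate} should be read up to choosing any request whose static name has a send-suspended entry (the request queue treated as order-insensitive across distinct names, mirroring the queue congruence). Under that reading, any request for $\sname{s}$ activates and the contradiction is immediate.

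If neither works, the exact statement needs \textsc{E-Activate} to be generalised in this way, and I would record that dependency explicitly in the proof.
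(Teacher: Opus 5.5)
Your Steps~1--3 match the paper's proof of Theorem~\ref{thm:progress-sw}: canonical form, thread progress extended by \textsc{E-Become} and \textsc{E-Suspend}$_!$, and the observation that in an idle configuration every endpoint is stored by \textsc{TH-Handler} or \textsc{TH-SendHandler}. Compliance therefore forces either an \textsc{E-React} step or a send-suspended endpoint. The paper stops exactly there (``this can only happen when the sending reduction is send-suspended, as required'') and never examines $\swstate_k$. So the clause your Step~4 targets is not argued in the paper at all.

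Step~4 cannot be completed, because the head-blocking case is a counterexample to the statement under the rules as given.
\begin{itemize}
\item \textbf{Setup.} Let $\Sigma$ map two distinct names $\sname{t}, \sname{u}$ to $(\localselectsingle{q}{\ell}{\one}\then\localend, \one)$. Let $s$ have the compliant environment $\roleidx{s}{p} : \localselectsingle{q}{\ell}{\one}\then\localend$, $\roleidx{s}{q} : \localoffersingle{p}{\ell}{\one}\then\localend$, $s : \epsilon$. For suitably typed $f$ and $H$, take
\[
(\nu s)(\nu a)(\nu b)\bigl(\qproc{s}{\epsilon} \parallel \swactor{\idle{()}}{\sname{u} \mapsto \metapair{\roleidx{s}{p}}{f}}{\epsilon}{\metapair{\sname{t}}{()} \cdot \metapair{\sname{u}}{()}} \parallel \swactor[b]{\idle{()}}{\storedhandler{s}{q}{H}}{\epsilon}{\epsilon}\bigr).
\]
\item \textbf{Well typed.} \textsc{TR-Request} checks only the payload against $\Sigma$, never that $\hstate$ has an entry for the name.
\item \textbf{Stuck.} The queue is empty. \textsc{E-Activate} inspects only the head of $\swstate$, whose name $\sname{t}$ has no entry. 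There is no congruence on switch queues.
\item \textbf{Clause fails.} The only send-suspended endpoint of $s$ is filed under $\sname{u}$, and $a$'s request queue contains a request for $\sname{u}$.
\item \textbf{Reachable.} Let $a$ play $\prole$ with callback $\fun{\var{st}}{\become{t}{()};\ \become{u}{()};\ \suspendsend{u}{f}{\var{st}}}$, and let $b$ play $\qrole$ and suspend with $H$.
\end{itemize}
Since the configuration is reachable, no invariant proved alongside preservation can exclude it. In particular your first-attempt invariant (every request names a non-empty entry) is not preserved. \textsc{E-Become} appends unconditionally. \textsc{E-Activate} can also consume the last entry for a name while further requests for it remain.

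Your fallback is therefore forced, not optional. You have two options:
\begin{itemize}
\item Generalise \textsc{E-Activate} to fire on the first request whose name has a non-empty entry (equivalently, allow swapping adjacent requests with distinct names). Then your contradiction argument is immediate and the stated clause holds.
\item Keep the rules and weaken the clause to what Steps~1--3 actually deliver: each $s_j$ has an endpoint send-suspended under some $\sname{s}$ in some $\hstate_k$, and the head of $\swstate_k$, if any, names a static session with no non-empty entry in $\hstate_k$.
\end{itemize}
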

\begin{proof}
    The proof follows that of Theorem~\ref{thm:progress}. Thread progress
    (Lemma~\ref{lem:thread-progress}) holds as before, since we can always
    evaluate $\calcwd{become}$ by \textsc{E-Become}, and we can always evaluate
    $\calcwd{suspend}_{!}$ by E-Suspend-!$_1$ or E-Suspend-!$_2$.

    Following the same reasoning as Theorem~\ref{thm:progress} we can write
    $\config{C}$ in canonical form, where all threads are idle:

    \[
        (\nu \tilde{\inittok})
        (\nu \apname_{i \in 1..l})
        (\nu \sessname_{j \in 1..m})
        (\nu \aname_{k \in 1..n})
        (
        \ap{\apname_i}{\apstate_i}_{i \in 1..l} \parallel
        (\qproc{s_j}{\qcontents_j})_{j \in 1..m} \parallel 
        \swactor[\aname_k]{\idle{\vala_k}}{\hstate_k}{\istate_k}{\swstate_k}_{k \in 1..n}
        )
    \]

    However, there are now \emph{three} places each role endpoint
    $\roleidx{s}{p}$ can be used: either by \textsc{TT-Sess} to run a term in
    the context of a session or by \textsc{TH-Handler} to record a
    receive-suspended session type as before, but now also by
    \textsc{TH-SendHandler} to record a send-suspended session type. As before,
    the former is impossible as all threads are idle, so now we must consider
    the cases for \textsc{TH-Handler}.

    Following the same reasoning as Theorem~\ref{thm:progress}, we can reduce
    any handlers that have waiting messages. Thus we are finally left with the
    scenario where the session type LTS can reduce, but not the configuration:
    this can only happen when the sending reduction is send-suspended, as
    required.
\end{proof}

\end{document}